\definecolor{mylinkcolor}{RGB}{0,0,140}
\renewenvironment{itemize}[1]{\begin{compactitem}#1}{\end{compactitem}}
\renewenvironment{enumerate}[1]{\begin{compactenum}#1}{\end{compactenum}}
\renewcommand{\cite}{\citep}
\newcommand{\phantomsubfigure}[1]{\begin{subfigure}[b]{0.1\textwidth}\phantomcaption\label{#1}\end{subfigure}}
\newcommand{\xhdr}[1]{\vspace{1.7mm}\noindent{{\bf #1.}}}
\newcommand{\hide}[1]{}
\newcommand{\dataset}[1]{\textsc{#1}}
\newcommand{\given}{\;\vert\;}
\newcommand{\mycomment}[1]{{\sffamily /* #1 */}}
\newcommand{\algoptions}{
  \SetKw{KwTo}{in}\SetKwFor{For}{for}{\string:}{}%
  \SetFuncSty{emph}
  \SetKwProg{myproc}{Procedure}{}{}
  \DontPrintSemicolon
}  
\newtheorem{theorem}{Theorem}
\newtheorem{lemma}[theorem]{Lemma}
\newtheorem{proposition}[theorem]{Proposition}
\newtheorem{corollary}[theorem]{Corollary}
\newtheorem{definition}[theorem]{Definition}
\newcommand*\dualcaption[2]{\caption[#1]{\textit{\sffamily #1.} #2}}
\newcommand{\honctitle}{Higher-order network clustering}
\newcommand{\cond}[1]{\phi(#1)}
\newcommand{\mcond}[2]{\phi_{#1}(#2)}
\newcommand{\mmcond}[1]{\phi_{M}(#1)}
\newcommand{\gcond}[2]{\phi^{(#1)}(#2)}
\newcommand{\gmmcond}[2]{\phi^{(#1)}_{M}(#2)}
\newcommand{\vol}[1]{\textnormal{vol}(#1)}
\newcommand{\mvol}[2]{\textnormal{vol}_{#1}(#2)}
\newcommand{\mmvol}[1]{\textnormal{vol}_{M}(#1)}
\newcommand{\gvol}[2]{\textnormal{vol}^{(#1)}(#2)}
\newcommand{\gmmvol}[2]{\textnormal{vol}^{(#1)}_{M}(#2)}
\newcommand{\cut}[1]{\textnormal{cut}(#1)}
\newcommand{\mcut}[2]{\textnormal{cut}_{#1}(#2)}
\newcommand{\mmcut}[1]{\textnormal{cut}_{M}(#1)}
\newcommand{\gcut}[2]{\textnormal{cut}^{(#1)}(#2)}
\newcommand{\gmmcut}[2]{\textnormal{cut}^{(#1)}_{M}(#2)}
\newcommand{\anchorset}{\mathcal{A}}
\newcommand{\anchornodes}{\chi_{\anchorset}(v)}
\newcommand{\minstance}{(v, \anchornodes)}
\newcommand{\motifweightedij}{(W_M)_{ij}}
\newcommand{\normmotiflap}{N_M}
\newcommand{\diag}[1]{\text{diag}\left( #1 \right)}
\newcommand{\allones}{e}
\newcommand{\mbifan}{M_{\textnormal{bifan}}}
\newcommand{\medge}{M_{\textnormal{edge}}}
\newcommand{\setof}[1]{\textnormal{set}(#1)}
\DeclareMathOperator{\truth}{\textnormal{Ind}}
\newcommand{\indicator}[1]{\truth[#1]}
\newcommand{\ppr}{\texttt{PPR}}
\newcommand{\dblp}{\dataset{com-DBLP}}
\newcommand{\amazon}{\dataset{com-Amazon}}
\newcommand{\youtube}{\dataset{com-Youtube}}
\newcommand{\lj}{\dataset{com-LiveJournal}}
\newcommand{\orkut}{\dataset{com-Orkut}}
\newcommand{\friendster}{\dataset{com-Friendster}}
\newcommand{\wikicat}{\dataset{wiki-cats}}
\newcommand{\mtri}{M_{\textnormal{tri}}}
\newcommand{\mcyc}{M_{\textnormal{cyc}}}
\newcommand{\mffl}{M_{\textnormal{ffl}}}
\newcommand{\hoccfstitle}{Higher-order clustering coefficients}
\newcommand{\gccf}[1]{C_{#1}}   
\newcommand{\lccf}[2]{C_{#1}(#2)}    
\newcommand{\accf}[1]{\bar{C}_{#1}}  
\newcommand{\clique}[1]{K_{#1}}  
\newcommand{\cliqueL}[2]{K_{#1}(#2)}  
\newcommand{\onehopnou}{N_u}
\newcommand{\onehopu}{N'_u}
\newcommand{\onehopv}{N'_v}
\newcommand{\p}[2]{p_{#1}(#2)}  
\newcommand{\condmat}{\dataset{ca-CondMat}}
\newcommand{\harvard}{\dataset{fb-Harvard1}}
\newcommand{\enron}{\dataset{email-Enron}}
\newcommand{\google}{\dataset{web-Google}}
\newcommand{\emaileucore}{\dataset{email-Eu-core}}
\newcommand{\astroph}{\dataset{ca-AstroPh}}
\newcommand{\stanford}{\dataset{fb-Stanford3}}
\newcommand{\er}{\text{Erd\H{o}s-R\'enyi}}
\newcommand{\expect}[1]{\mathbb{E}\left[#1\right]}
\newcommand{\expectover}[2]{\mathbb{E}_{#1}\left[#2\right]}
\newcommand{\prob}[1]{\mathbb{P}\left[#1\right]}
\newcommand{\tmtitle}{Motifs in temporal networks}
\newcommand{\minuseq}{\mathrel{-}=}
\newcommand{\pluseq}{\mathrel{+}=}
\newcommand{\triangleenum}{\Delta_{\textnormal{enum}}}
\newcommand{\multigraph}{K}
\newcommand{\concat}{\textnormal{concat}}
\newcommand{\counts}{\textnormal{counts}}
\newcommand{\keys}{\textnormal{keys}}
\newcommand{\prefix}{\textnormal{prefix}}
\newcommand{\reverse}{\textnormal{reverse}}
\newcommand{\suffix}{\textnormal{suffix}}
\newcommand{\tend}{\textnormal{end}}
\newcommand{\tstart}{\textnormal{start}}
\newcommand{\emaileu}{\dataset{email-Eu}}
\newcommand{\phone}{\dataset{Phonecall-Eu}}
\newcommand{\sms}{\dataset{SMS-A}}
\newcommand{\messages}{\dataset{CollegeMsg}}
\newcommand{\stackoverflow}{\dataset{StackOverflow}}
\newcommand{\mathoverflow}{\dataset{MathOverflow}}
\newcommand{\superuser}{\dataset{SuperUser}}
\newcommand{\askubuntu}{\dataset{AskUbuntu}}
\newcommand{\fbwall}{\dataset{FBWall}}
\newcommand{\bitcoin}{\dataset{Bitcoin}}
\newcommand{\wikitalk}{\dataset{WikiTalk}}
\newcommand{\phonecallme}{\dataset{Phonecall-ME}}
\newcommand{\smsme}{\dataset{SMS-ME}}
\newcommand{\dir}{\textnormal{dir}}
\newcommand{\globalall}{\textnormal{count}}
\newcommand{\globalmid}{\textnormal{peri\_count}}
\newcommand{\globalpost}{\textnormal{post\_count}}
\newcommand{\globalpre}{\textnormal{pre\_count}}
\newcommand{\middlesum}{\textnormal{peri\_sum}}
\newcommand{\muorv}{\textnormal{1-u\_or\_v}}
\newcommand{\nbr}{\textnormal{nbr}}
\newcommand{\nodecount}{\textnormal{node\_count}}
\newcommand{\postnodecount}{\textnormal{post\_nodes}}
\newcommand{\postsum}{\textnormal{post\_sum}}
\newcommand{\prenodecount}{\textnormal{pre\_nodes}}
\newcommand{\presum}{\textnormal{pre\_sum}}
\newcommand{\sumtxt}{\textnormal{sum}}
\newcommand{\uorv}{\textnormal{u\_or\_v}}
\newcommand{\utov}{\textnormal{u\_to\_v}}
\begin{document}


\title{Tools for higher-order network analysis}
\author{Austin Reilley Benson}
\icmethesis
\dept{Institute for Computational and Mathematical Engineering}
\principaladviser{Jure Leskovec}
\firstreader{David F.~Gleich}
\secondreader{Johan Ugander}


\submitdate{June 6, 2017}

\beforepreface

\prefacesection{Abstract}

Networks are a fundamental model of complex systems throughout the sciences, and
network datasets are typically analyzed through lower-order connectivity
patterns described at the level of individual nodes and edges.  However,
higher-order connectivity patterns captured by small subgraphs, also called
network motifs, describe the fundamental structures that control and mediate the
behavior of many complex systems.  We develop three tools for network analysis
that use higher-order connectivity patterns to gain new insights into network
datasets: (1) a framework to cluster nodes into modules based on joint
participation in network motifs; (2) a generalization of the clustering
coefficient measurement to investigate higher-order closure patterns; and (3) a
definition of network motifs for temporal networks and fast algorithms for
counting them.  Using these tools, we analyze data from
biology,
ecology,
economics,
neuroscience,
online social networks,
scientific collaborations,
telecommunications,
transportation,
and the World Wide Web.

\prefacesection{Acknowledgments}

I owe a great debt of gratitude to many folks for my ability to finally complete
this dissertation.
First, the content of this dissertation is comprised of collaborative research.  Thank
you to my co-authors
Ashwin Paranjape,
David Gleich,
Hao Yin, and
Jure Leskovec.
Second, the research would not have been possible without generous funding from
an Office of Technology Licensing Stanford Graduate Fellowship and from DARPA
SIMPLEX.  The SGF program was especially beneficial during my scientific
wayfinding in the early years at Stanford.

Throughout my time at Stanford, I was fortunate to receive guidance from a number of
more senior folks.  Thanks especially to
Andrew Tomkins,
David Gleich,
Grey Ballard,
Johan Ugander,
Jack Poulson,
Jure Leskovec,
Lexing Ying,
Margot Gerritsen,
Michael Saunders,
Moses Charikar,
Ravi Kumar, and
Rob Schreiber.
Additional thanks to Johan for reading and signing off on this thesis, to Moses
for serving on my defense committee, to Lexing for chairing my defense committee,
and to Margot for making ICME such a great place.

Jure has been an amazingly supportive and dedicated advisor.  He is always
pushing me to turn algorithmic ideas into meaningful applications, which has
shaped the way that I think about research.  I value his ability to
provide guidance while simultaneously letting me pursue my own interests and ideas.

David also deserves special recognition.  He has been an incredible collaborator
and even better mentor ever since I started working with him while I was an
undergraduate at Berkeley.  My career and this thesis would not have been
possible without his support and guidance.

Next, thank you to all of the ICME and SNAP students for all of the wonderful
interactions.  Special thanks to
Anil,
Brad,
David,
Eileen,
Hima,
Jason,
Milinda,
Nolan,
Rikel,
Ron,
Ryan,
Sven,
Tim,
Victor,
Will,
Xiaotong,
Yingzhou, and
Yuekai.

My family has been a tremendous support network.  Thank you to my parents---Karen
and Craig Benson---and to my Davis family---Marianne, Brian, Oak, and Bob Hallet---for
all of your help.

Finally, thank you to the amazing Catherine Hallet Benson for her continued
support, which keeps me going.

\afterpreface

\chapter{Higher-order thinking in network analysis}
\label{ch:intro}

\newcommand*{\intropath}{introduction}

A network is a model for a system of connections between things.  The core of a
network model is a graph, a mathematical object consisting of nodes (modeling
the things) and edges (modeling the connections between things).  This
thesis provides new graph theoretic methods to analyze datasets associated with
network models.  Such network datasets show up in a wide variety of scientific
disciplines, and this thesis alone analyzes networks from the following domains.

\begin{itemize}
\item \emph{Ecological systems --} Nodes are species and edges are who-eats-whom
  relationships in food web models.  (Here, the edges are \emph{directed} to
  represent asymmetry---sharks regularly consume sardines, but sharks are not a
  part of a sardine's diet!)  We examine the graph structure behind aquatic
  layers in a food web of Florida Bay.
  
\item \emph{Human communication systems --} Nodes are people and edges are
  messages between people.  Again, there is a natural direction to the edges.
  Our network datasets also have temporal information associated with the graph
  to denote when messages are sent.  We analyze the differences between communication
  behaviors in e-mail, SMS texting, and phone call networks.
  
\item \emph{Neural systems --} Nodes are neurons and edges are synapses.  We
  examine the patterns of connections in the complete neural network of the
  nematode worm \emph{C.~elegans}.
  
\item \emph{Payment systems --} Nodes are people, businesses, or accounts and
  edges are payments between them.  We analyze transactions between addresses on
  Bitcoin.  We again have temporal information corresponding to when payments
  are made.
  
\item \emph{Transcription regulation networks --} Nodes are groups of genes and
  edges represent which groups of genes regulate genetic transcription in which
  other groups of genes.  The edges in the network data also carry information
  about whether the regulation type is activation or suppression (in this case,
  we say that the edges are \emph{signed}).  We analyze the transcription regulation network
  of the yeast \emph{S.~cerevisiae}.
  
\item \emph{Transportation systems --} Nodes are locations and edges represent
  connectivity.  We analyze an air travel reachability network, where nodes are
  cities and edges represent how long it takes to travel from one city to
  another using commercial airline flights.  Our analysis also incorporates
  additional geographical data available for the nodes.
    
\item \emph{Scientific collaborations --} Nodes are scientists and there is an
  edge between two scientists if they have co-authored a paper.  We look at
  collaboration patterns between physicists.

\item \emph{Social networks --} Nodes are people and edges are social
  relationships.  We examine Facebook friendships, Twitter followers, and Stack
  Overflow question-and-answer interactions.

\item \emph{The World Wide Web --} Nodes are web pages and edges are hyperlinks.
  We analyze the structure of the Stanford web graph and Wikipedia.
\end{itemize}

The above descriptions illustrate how graphs, which consist of nodes and edges,
are a natural mathematical structure for network models.  The graph is the
backbone of the network datasets, even though we may have additional information
such as timestamps, signed edges, geographical data, etc.  Consequently, we often frame our
analysis of and questions about networks in terms of nodes and edges.  For
example, we might be interested in the number of edges that a node is in (how
many different species does the shark consume in the food web?), the number of
edges between subsets of the nodes (how many Facebook friendships are there
between Stanford students?), or whether or not there exists a sequence of edges
to traverse to go from one node to another (can I get from Palo Alto to Berkeley
on public transit?).


\begin{figure}[t]
\centering
\phantomsubfigure{fig:intro_fflA}
\phantomsubfigure{fig:intro_fflB}
\phantomsubfigure{fig:intro_fflC}
\includegraphics[width=\textwidth]{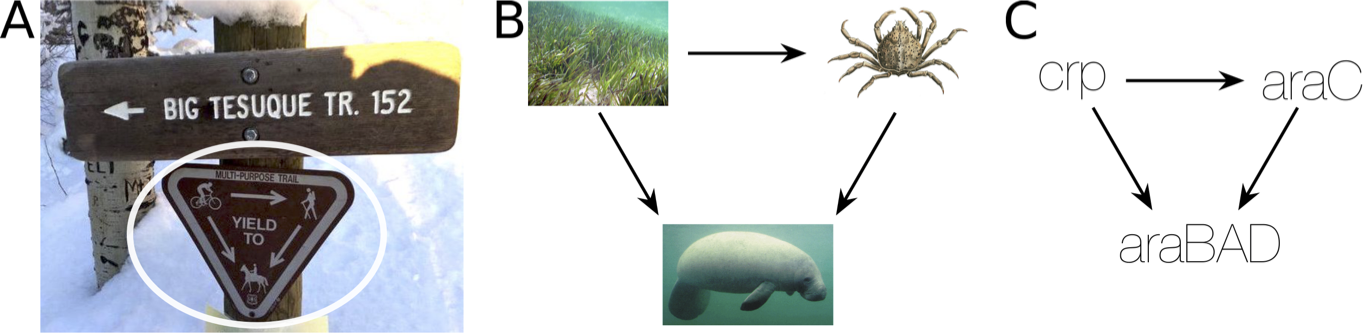}
\dualcaption{Example higher-order network structure (network motif) appearing
in different domains}{%
{\bf A:}
Traffic yielding pattern on the Big Tesuque Trail---bikers yield to hikers and
equestrians, and hikers also yield to equestrians (photo taken by the author
after presenting the research in \Cref{ch:honc} at the Santa Fe Institute in
December, 2015).
{\bf B:} 
The same pattern is called an omnivory chain in
ecology~\cite{camacho2007quantitative,bascompte2009disentangling}.  Here we show
an example in the Florida Bay food web, where the
seagrass producer (thalassia) is a food source for crabs and manatees, and crabs
are also a food source for the manatees~\cite{ulanowicz1998network}.
{\bf C:}
In transcription regulation networks, the pattern is called a feedforward loop (figure reproduced from
\cite{shen2002network}).  In \emph{E. coli}, the operon \textsf{crp} encodes
transcription factors that regulate operons \textsf{araC} and \textsf{araBAD}, and
\textsf{araC} also encodes a transcription factor that regulates
\textsf{araBAD}.  Feedforward loops occur more frequently than one would expect
by chance in the \emph{E. coli} and \emph{S.~cerevisiae} transcription networks~\cite{milo2002network} and function as delay units in
transcription~\cite{mangan2003coherent,mangan2003structure}.
}
\label{fig:intro_ffl}
\end{figure}

While analysis in terms of nodes and edges is natural, there is substantial
evidence that \emph{higher-order structures}, or small subgraph patterns between
a few nodes (\cref{fig:intro_ffl}), are essential to the behavior of many
complex systems modeled by
networks~\cite{milo2002network,yaverouglu2014revealing}.  This idea has a long
history in sociology.  In the early 1900s, Simmel theorized that triangles (a
complete graph on three nodes) form in social networks because two friends of
the same individual get opportunities to meet and become friends
themselves~\cite{simmel1908sociology}, an idea later popularized by
\citet{granovetter1973strength} as \emph{triadic closure}.
\Citet{bavelas1950communication} surveyed several studies that examined how
small communication patterns used between individuals in a group affects their
ability to jointly perform tasks, and \citet{davis1971structure} found that
certain directed triads were less common than predicted by a null model,
providing numerical evidence for the ranked groups social network theory of
\citet{homans1950human}.  In more recent work, \citet{ugander2013subgraph} use
small subgraph frequencies within induced Facebook friendship subgraphs to
identify social processes.

In network analysis more broadly, higher-order structure is often described
through the idea of a \emph{network motif}.  \Citet{shen2002network}
introduced the term when they analyzed the frequency of higher-order
interactions through subgraph patterns, which they called network motifs, in
the transcription regulation network of \emph{E. coli}.\footnote{\Citet{shen2002network} borrowed the phrase motif from gene
  sequence analysis, where motifs are ``short, recurring patterns in DNA that
  are presumed to have a biological function''~\cite{d2006dna}.}  This style of
analysis was popularized in a landmark paper of \citet{milo2002network}
(published in \emph{Science} the same year), where they dubbed network motifs the
``building blocks of complex networks'' and identified motifs common in gene
regulation networks, food webs, and electronic circuits.  In follow-up work,
\citet{milo2004superfamilies} showed that the profile of z-scores (with
respect to samples of a configuration model with the same degree distribution)
for the frequencies of the 13 connected, 3-node directed subgraphs were
sufficient to distinguish biological, neurological, social, and linguistic
networks.  In subsequent research, important motifs have been identified in a
variety of domains, including brain science~\cite{sporns2004motifs},
ecology~\cite{camacho2007quantitative}, biology~\cite{wuchty2003evolutionary},
and social network analysis~\cite{leskovec2010signed}.

\Citet{milo2002network} defined network motifs as ``patterns of interconnections
occurring in complex networks at numbers that are significantly higher than
those in randomized networks.''  The ``significant'' qualifier in this
definition has caused some confusion around the terminology.  Some take the
qualifier to heart and use the term \emph{graphlet} to specify any small network
pattern and reserve the motif designation for graphlets that occur significantly
more frequently with respect to some null model and some measure of statistical
significance~\cite{prvzulj2004modeling}.\footnote{Presumably, the term graphlets
  is a play on wavelets, which are used in signal processing to localize time
  and frequency, but I have not found this explicitly mentioned in the
  literature.}  However, one could just as well call a graphlet a small
subgraph, and indeed, many
do~\cite{ugander2013subgraph,demeyer2013index,lahiri2007structure,pinar2017escape}.
Usually, there is an implicit assumption that motifs and graphlets are
\emph{connected} patterns, whereas general subgraphs do not carry this
assumption.  In this thesis, we just use ``motifs'' to specify patterns in
networks and sometimes use the term subgraph when discussing the more
mathematical points.  In \cref{ch:honc}, we also add to the vernacular by
defining an ``anchored motif'', which provides a formalism for specifying a
subset of the nodes in a motif relevant to certain computations.  We also use
``higher-order structures'' as a catch-all phrase.

Operating under the assumption that higher-order structures are important to
networks, the algorithmic data mining community has developed a litany of
methods for efficiently finding and counting them.  These include, to name a
few, algorithms for enumerating triangles that perform especially well given the
power-law degree distributions typical of many real-world
networks~\cite{schank2005finding,latapy2008main,berry2014why}, fast algorithms
for enumerating general motifs via principled
heuristics~\cite{wernicke2006fanmod,demeyer2013index,houbraken2014index,ahmed2015efficient},
algorithms that approximate the total number of
triangles~\cite{tsourakakis2009doulion,seshadhri2014wedge,lim2015mascot,de2016triest},
and algorithms that approximate the total number of motifs composed of 4 or more
nodes~\cite{jha2015path,bressan2017counting,wang2014efficiently,slota2013fast}.
For the computations in this thesis, we only need the method of \citet[Algorithm 1]{schank2005finding}
for enumerating triangles and the method of
\citet{chiba1985arboricity} for enumerating cliques.  We use these algorithms
because they are straightforward to implement and are fast for the datasets
that we analyze.

We summarize the setup---networks consisting of nodes and edges model a broad
range of systems, higher-order interactions (patterns between a small number of
nodes) are important to many networks, and we have algorithms adept at finding
and counting these higher-order structures.  However, higher-order structures
have not been well integrated into the analyses, models, and algorithms that we
actually use to study the structure of complex networks.  This thesis develops
new methods, or \emph{tools}, for analyzing network data based on higher-order
structures, i.e., \emph{for higher-order network analysis}.  Consequently, the
network analyst is able to examine data in terms of
the higher-order interactions that are important to him or her.

We demonstrate how our tools use higher-order interactions as a primitive in order
to gain new insights into complex systems.  For example, our analysis of the neural
network for the nematode worm \emph{C. elegans} in \cref{sec:honc_celegans}
finds a cluster of 20 neurons in the frontal section of the worm that is a
plausible control of nictation, a type of worm movement.  We discover this group
through a new algorithm that looks for subsets of the graph in which a certain
4-node ``bi-fan'' motif is contained.  The higher-order structure is key to the
discovery---we need to optimize an objective function for sets of nodes that
models the bi-fan motif---and \emph{not} simply edges, as is typically done---in
order to find this group.  Later, in \cref{sec:ccfs_empirical}, we find that
under a generalized ``closure'' model for the presence of edges in the network,
3-cliques (triangles) in \emph{C. elegans} are more common than expected, while
4-cliques are less common than expected, which challenges a longstanding view of
how edges tend to cluster together in network models of complex
systems~\cite{watts1998collective}.  In this case, the consideration of
higher-order structure at the level of 4-cliques was necessary for the finding.

Higher-order analysis is a small part of network science, but
higher-order thinking has still shown up in several contexts.  For example, motifs
have been used to generate features for both prediction
problems~\cite{milenkovic2008uncovering,bonato2014dimensionality,ugander2013subgraph,chakraborty2014automatic}
and unsupervised learning
tasks~\cite{yaverouglu2014revealing,henderson2012rolx}, motif frequencies are
used to fit parameters of random graph
models~\cite{gleich2012moment,benson2014learning,bickel2011method,wasserman1996logit},
and motif modeling is used to improve network alignment
algorithms~\cite{milenkovic2010optimal,mohammadi2016triangular}.  In addition,
there are several higher-order approaches that do not necessarily use motif
structure but are similar in spirit.  Examples include ``meta paths'' to
represent multi-relational networked data~\cite{zhang2014meta,sun2011pathsim},
higher-order and variable-order Markov chain
models~\cite{rosvall2014memory,chierichetti2012web,benson2017spacey,wu2016general}, and
``higher-order network'' models for aggregating temporal paths in networks where
edges have
timestamps~\cite{scholtes2014causality,scholtes2017network,scholtes2016higher}.

The new ideas in \cref{ch:honc,ch:hoccfs} also align with recent higher-order
network analyses that directly generalize classical ideas to account for
higher-order structures in the network.  For example, \citet{tsourakakis2015k}
generalized the (edge) densest subgraph problem to the $k$-clique densest
subgraph problem, \citet{zhang2012extracting} generalized (edge-based) $k$-cores
to triangle-based $k$-cores, and \citet{sariyuce2015finding} generalized
$k$-core and $k$-truss decompositions to account for clique containment with the
nucleus decomposition.  We describe our contributions in more detail in the
following section.

\section{Contributions}

The core contributions of this thesis are three new tools for higher-order
network analysis.  We contextualize and summarize them in
\cref{tab:contributions}.

\newcolumntype{P}[1]{>{\raggedright\let\newline\\\arraybackslash\hspace{0pt}}p{#1}}
\begin{table}[h]
\centering
\dualcaption{Tools for higher-order network analysis developed in this thesis}{}
\scalebox{0.92}{
\begin{tabular}{P{0.25\columnwidth} P{0.4\columnwidth} P{0.35\columnwidth}}
\toprule
{\small
Concept
}
& 
{\small
Classical tools
}
&
{\small
New tool
}
\\ \midrule
{\footnotesize
Discover modules (clusters of nodes) in a network using only the network topology.
}
&
{\footnotesize
Algorithms for optimizing edge-based objective functions such as modularity,
conductance, and normalized cut.

\cite{newman2004finding,von2007tutorial,leskovec2009community,andersen2006local}
}
&
{\footnotesize
Algorithms for optimizing \textbf{higher-order}, motif-based objective functions. 

\Cref{ch:honc}; \cite{benson2016higher,yin2017local}
}
\\ \midrule
{\footnotesize
Measure the tendency of edges in a network to cluster.
}
&
{\footnotesize
The clustering coefficient based on relative triangle frequencies.

\cite{watts1998collective,barrat2000properties}
}
&
{\footnotesize
\textbf{Higher-order} clustering coefficients based on
relative clique frequencies.

\Cref{ch:hoccfs}; \cite{yin2017higher,yin2017local}
}
\\ \midrule
{\footnotesize
Count the frequency of \textbf{higher-order} structures
in a network.
}
&
{\footnotesize
Motifs in static networks.

\cite{milo2002network,prvzulj2004modeling}
}
&
{\footnotesize
Motifs in temporal networks.

\cref{ch:tm}; \cite{paranjape2017motifs}
}
\\
\bottomrule
\end{tabular}
}
\label{tab:contributions}
\end{table}

In addition to developing the tools, we make a concerted effort to use the tools
to gain new insights into a variety of network datasets.  Thus, there is a
considerable data mining component to the chapters ahead, which are also
meaningful contributions.  The next few subsections provide additional details
on the tools and the insights that they provide.

\subsection{\cref{ch:honc} -- Higher-order network clustering}

The first tool is a framework for graph clustering based on motifs.  Graph
clustering is a broad research problem that assigns nodes in a graph to
clusters, where the clusters are meant to represent some module in the network.
Typically, the objective function modeling what it means to have good clusters
involves a combination of the number of \emph{edges} contained within the
clusters and the number of \emph{edges} that go between clusters.  We re-define
what it means to be a good cluster with an objective function that considers the
number of \emph{motifs} contained within the clusters and the number of
\emph{motifs} that go between clusters.  Thus, if a particular motif is
important for some domain, we can run an algorithm to optimize an objective
function that accounts for that motif.

Our main theoretical contributions include (i) defining the new objective
function, \emph{motif conductance}, which is a generalization of the classical
conductance measure of cluster quality; and (ii) a spectral algorithm for
finding sets with small motif conductance.  The algorithm is accompanied by an
approximation guarantee on the quality of the output clusters in terms of motif
conductance.  The algorithm and quality guarantee are generalizations of the
Fiedler partition or sweep cut procedure and the Cheeger inequality.

Using this framework we show that
\begin{itemize}
\item a particular directed triangular motif reveals aquatic layers in a food
  web;
\item in the \emph{C. elegans} neural system, a cluster we find based
  on a previously studied 4-node motif (the ``bi-fan'')
  is a plausible control mechanism for nictation, a type of movement in the worm;
\item clustering with length-2 paths automatically reveals the hub structure and
  geography of an air travel transportation reachability network;
\item clustering with the feedforward loop (\cref{fig:intro_fflC}) reveals known
  modules in transcription regulation networks with higher accuracy than
  edge-based methods; and
\item clustering with particular directed triangular motifs reveal anomalous clusters in
  the English Wikipedia web graph and the Twitter follower network.
\end{itemize}

We then extend the algorithm to handle the problem of \emph{localized
  clustering} or \emph{targeted clustering}, where the goal is to find a (small) cluster
of nodes containing a given seed node.  Again, existing approaches optimize the
classical conductance, and we instead optimize motif conductance.  Our
approach is a generalization of the approximate Personalized PageRank
method~\cite{andersen2006local}.

The algorithmic framework and applications appeared as a publication in
\emph{Science}~\cite{benson2016higher},\footnote{See the accompanying
  perspective piece by \citet{prvzulj2016network} for a broader context for this
  work.} which was joint work with David Gleich and Jure Leskovec.  The
extension to localized clustering will appear in a paper in the proceedings of
the 2017 KDD conference~\cite{yin2017local}, which was joint work with Hao Yin,
David, and Jure.  Hao implemented the localized algorithm, ran the experiments for the
results in \cref{sec:local}, and came up with the idea to look at recovery in the
planted partition model (\cref{sec:local_synth}).

\subsection{\cref{ch:hoccfs} -- \hoccfstitle}

The second tool is a measurement for the extent to which nodes in a network
cluster together.  This is a generalization of the classical clustering
coefficient that measures the fraction of length-2 paths that induce a triangle.
We reinterpret the clustering coefficient as a clique expansion and closure
process---a 2-clique (an edge) \emph{expands} with an adjacent edge and we check
if this structure \emph{closes} by forming a 3-clique (a triangle).  We
generalize this by considering an $\ell$-clique that expands with an adjacent
edge and checking if this structure closes by forming an $(\ell + 1)$-clique.
We call the fraction of ($\ell$-clique, adjacent edge) pairs that induce an
$(\ell + 1)$-clique the $\ell$th-order clustering coefficient.

We theoretically analyze the higher-order clustering coefficient in the
Erd\H{o}s-R\'enyi and small-world random graph models and empirically analyze
the higher-order clustering coefficient on three real-world networks.  We find that
although the \emph{C. elegans} neural network has high clustering in the
traditional sense, it does not have high third-order clustering (3-cliques tend
not to expand into 4-cliques).  This could arise from the fact that 4-cliques represent
redundant processing units and their absence leads to a more efficient neural
architecture.

We also make a connection between higher-order clustering coefficients and motif
conductance.  We show that if a network has a large $\ell$th-order clustering
coefficient, then there is a node whose 1-hop neighborhood subgraph has small motif
conductance for the $\ell$-clique motif.  This is a generalization of the $\ell = 2$
case studied by \citet{gleich2012vertex}.  In fact, I came up with the
definition for the higher-order clustering coefficient when I was trying to
generalize a lemma from \citet{gleich2012vertex} to the $\ell = 3$ case (for
triangle conductance).

The definitions and analysis of the higher-order clustering coefficient is based
on a paper with Hao Yin and Jure Leskovec that has been submitted for
publication and is currently on arXiv~\cite{yin2017higher}.  The connection to
motif conductance is based on part of a paper with Hao, Jure, and
David Gleich that will appear in the proceedings of the 2017 KDD
conference~\cite{yin2017local}.  Hao helped a substantial amount on the proofs
of \cref{prop:ccf_er,prop:ccf_sw,thm:nbrhd_main}.

\subsection{\cref{ch:tm} -- \tmtitle}

The third tool is a definition of motifs in temporal networks along with
algorithms for efficiently counting them.  The goal of the research in
\cref{ch:tm} is to provide the foundations of higher-order network analysis---a
definition for higher-order structures and efficient algorithms for finding
them---for a broader class of network datasets, i.e., those that contain temporal
information.  We consider a temporal network to be a collection of $(u, v, t)$
tuples (temporal edges), where $u$ and $v$ are elements of a node set $V$ and $t
\in \mathbb{R}_{+}$ is a timestamp.  We define a temporal motif by a multigraph,
an ordering on the edges in the multigraph, and a time window $\delta$, and we
define an instance of the temporal motif in a temporal network to be a subset of
the temporal edges that match the edge pattern of the multigraph, appear in the
specified order, and all occur within $\delta$ time units of each other.  We
provide a general algorithm for efficiently counting the number of instances of
temporal motifs in a given temporal network along with specialized fast
algorithms for certain classes of motifs.  We also show some basic higher-order
analyses on several temporal network datasets.

\Cref{ch:tm} is based on a paper with Ashwin Paranjape and Jure Leskovec that
appeared in the proceedings of the 2017 WSDM
conference~\cite{paranjape2017motifs}.  Ashwin helped with the design,
implementation, and analysis of the algorithms and executed the experiments in
\cref{sec:tm_empirical}.

\subsection{Additional artifacts and impact}

In addition to the content of this thesis and the associated publications, other
artifacts of this research include the following.

\begin{itemize}
\item Implementations of the motif-based spectral clustering algorithm
 \begin{itemize}
 \item for the SNAP software library, which is available at\\
   \url{https://github.com/snap-stanford/snap};
 \item in Julia, which is available at\\
   \url{https://github.com/arbenson/higher-order-organization-julia}; and
 \item in MATLAB, which is available at\\
   \url{https://github.com/arbenson/higher-order-organization-matlab}.
 \end{itemize}
\item Implementations of the temporal motif counting algorithms for the
  SNAP software library, which is available at\\
  \url{https://github.com/snap-stanford/snap}. 
 \item Julia notebooks to reproduce the results in the main text of
   \citet{benson2016higher}, which are available at\\
   \url{https://github.com/arbenson/higher-order-organization-julia}.
\item Metadata for the nodes in the transportation reachability network of
  \citet{frey2007clustering} (city latitudes, longitudes, and metropolitan
  populations), which is available at\\
  \url{http://snap.stanford.edu/data/reachability.html}.
\item Metadata for the nodes in the Florida Bay food web
  \citet{frey2007clustering} (group classification), which is available at\\
  \url{http://snap.stanford.edu/data/Florida-bay.html}.  
\item The $\emaileucore$ network dataset with department
labels for all of the nodes, which is available at\\
  \url{http://snap.stanford.edu/data/email-Eu-core.html}.  
\item The $\wikicat$ network dataset with article names and category
  classifications for all of the nodes, which is available at\\
  \url{http://snap.stanford.edu/data/wiki-topcats.html}.  
\item The $\stackoverflow$ temporal network dataset, which is available at\\
  \url{http://snap.stanford.edu/data/sx-stackoverflow.html}.
\item The $\mathoverflow$ temporal network dataset, which is available at\\
  \url{http://snap.stanford.edu/data/sx-mathoverflow.html}.
\item The $\superuser$ temporal network dataset, which is available at\\
  \url{http://snap.stanford.edu/data/sx-superuser.html}.
\item The $\askubuntu$ temporal network dataset, which is available at\\
  \url{http://snap.stanford.edu/data/sx-askubuntu.html}.
\item The $\emaileu$ temporal network dataset and 4 department-level
  subnetworks, which are available at\\
  \url{http://snap.stanford.edu/data/email-Eu-core-temporal.html}.
\item The $\wikitalk$ temporal network dataset, which is available at\\
  \url{http://snap.stanford.edu/data/wiki-talk-temporal.html}.
\end{itemize}

Furthermore, the ideas of this thesis have already had impact on the broader
research community.  As an example, \citet{meier2016motif} used the motif-based
clustering algorithm in functional connectivity networks of brains and found
symmetry patterns between the hemispheres of the brain.
In their article ``Network analytics in the age of big data''
published in \emph{Science}, \citet{prvzulj2016network} contextualize the
framework of \cref{ch:honc} as an important step towards understanding
the large-scale datasets coming from a ``complex world of interconnected
entities.''  As discussed at the beginning of this introduction, network models
and analysis are fundamental because of their wide applicability, and
\Citet{prvzulj2016network} emphasize this point when (generously) discussing our work:
``The importance of this result lies in its applicability to a broad range of
networked data that we must understand to answer fundamental questions facing
humanity today, from climate change and impacts of genetically modified
organisms, to the environment, to food security, human migrations, economic and
societal crises, understanding diseases, aging, and personalizing medical
treatments.''
Indeed, I hope that the ideas that follow are a step towards tackling these problems.

\chapter{\honctitle}
\label{ch:honc}

\section{Organization via higher-order structures}

Networks are a standard representation of data throughout the sciences, and
higher-order connectivity patterns are essential to understanding the
fundamental structures that control and mediate the behavior of many complex
systems~\cite{milo2002network,mangan2003coherent,yang2014overlapping,holland1970method,rosvall2014memory,prvzulj2004modeling}.
The most common higher-order structures are small network subgraphs, which we
refer to as network motifs (\cref{fig:honc_basics_motifs}).  Network motifs are
considered building blocks for complex
networks~\cite{milo2002network,yaverouglu2014revealing}.  For example,
feedforward loops (\cref{fig:honc_basics_motifs}, $M_{5}$) have proven
fundamental to understanding transcriptional regulation
networks~\cite{mangan2003structure}, triangular motifs
(\cref{fig:honc_basics_motifs}, $M_{1}$--$M_{7}$) are crucial for social
networks~\cite{holland1970method,davis1971structure}, open bidirectional wedges
(\cref{fig:honc_basics_motifs}, $M_{13}$) are key to structural hubs in the
brain~\cite{honey2007network}, and two-hop paths (\cref{fig:honc_basics_motifs},
$M_{8}$--$M_{13}$) are essential to understanding air traffic
patterns~\cite{rosvall2014memory}.  While network motifs have been recognized as
fundamental units of networks, the higher-order \emph{organization} of networks
at the level of network motifs has largely remained an open question.

In this chapter, we use higher-order network structures to gain new insights into the
organization of complex systems.  We develop a framework that identifies
clusters of network motifs.  For each network motif
(\cref{fig:honc_basics_motifs}), a different higher-order clustering may be
revealed (\cref{fig:honc_basics_cluster}), which means that different
organizational patterns are exposed depending on the chosen motif.

\begin{figure}[h]
\centering
\phantomsubfigure{fig:honc_basics_motifs}
\phantomsubfigure{fig:honc_basics_cluster}
\phantomsubfigure{fig:honc_basics_framework}
\includegraphics[width=\textwidth]{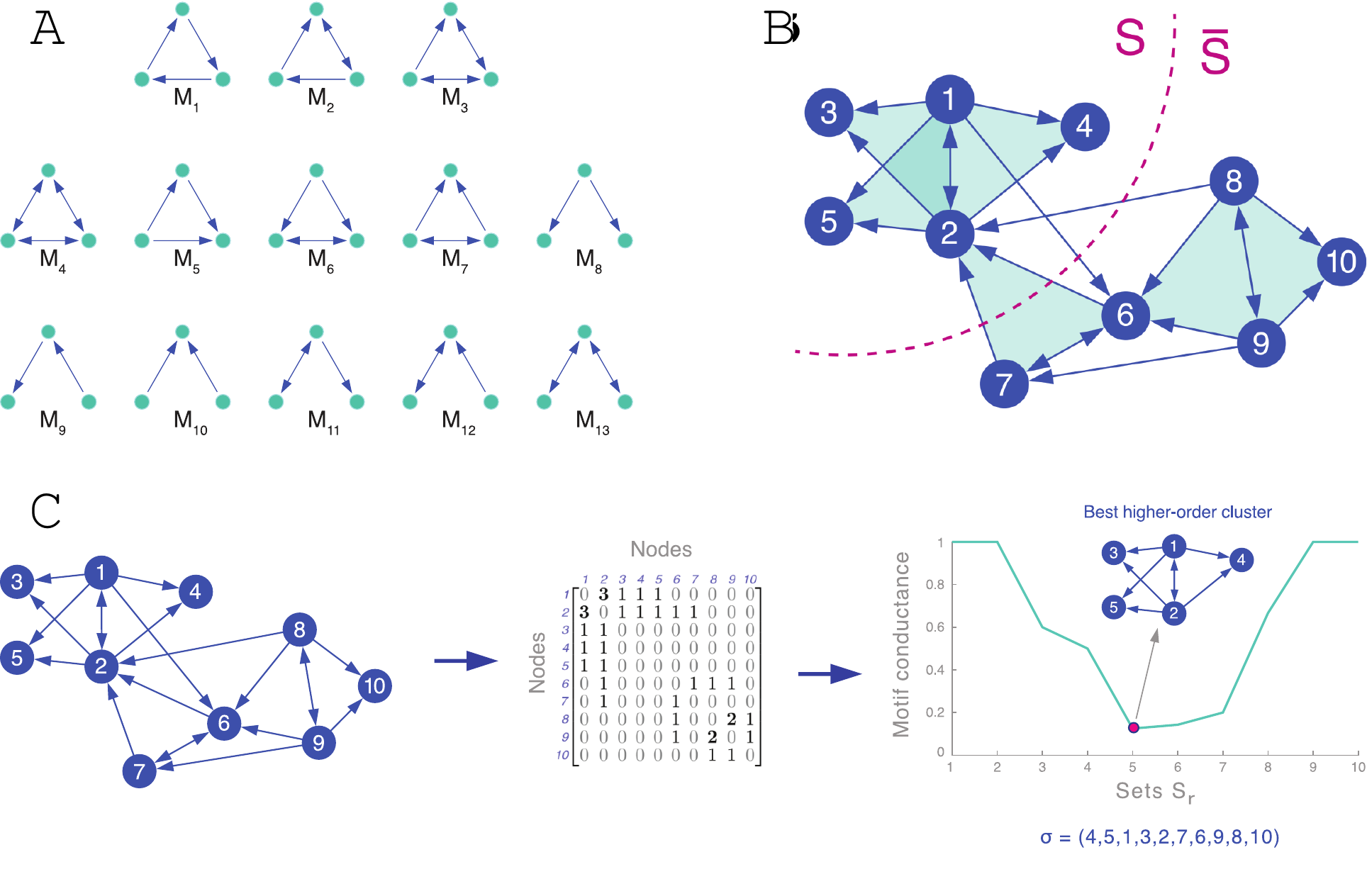}
\dualcaption{Higher-order network structures and the higher-order network clustering framework}{%
{\bf A:}
Higher-order structures are captured by network motifs.  For example, all 13
connected three-node directed motifs are shown here.
{\bf B:} 
Clustering of a network based on motif $M_7$.  For a given motif $M$, our
framework aims to find a set of nodes $S$ that minimizes motif conductance,
$\phi_M(S)$, which we define as the ratio of the number of motifs cut (filled
triangles cut) to the minimum number of nodes in instances of the motif in
either $S$ or $\bar{S}$ (\cref{sec:honc_derivation} provides the formalities).
In this case, there is one motif cut.
{\bf C:}
The higher-order network clustering framework.  Given a graph and a motif of
interest (in this case, $M_7$), the framework forms a motif adjacency matrix
($W_M$) by counting the number of times two nodes co-occur in an instance of the
motif.  An eigenvector of a Laplacian transformation of the motif adjacency
matrix is then computed.  The ordering $\sigma$ of the nodes provided by the
components of the eigenvector produces nested sets $S_r = \{\sigma_1, \ldots,
\sigma_r\}$ of increasing size $r$.
We prove in \cref{sec:motif_cheeger} that the set $S_r$ with the smallest
motif-based conductance, $\phi_M(S_r)$, is a near-optimal higher-order cluster.
}
\label{fig:honc_basics}
\end{figure}

\clearpage

Conceptually, given a network motif $M$, our framework searches for a cluster of
nodes $S$ with two goals.  First, the nodes in $S$ should participate in many
instances of $M$.  Second, the set $S$ should avoid cutting instances of $M$,
which occurs when only a subset of the nodes from a motif are in the set $S$
(\cref{fig:honc_basics_cluster}).  More precisely, given a motif $M$, the
higher-order clustering framework aims to find a cluster (defined by a set of
nodes $S$) that minimizes the following ratio:
\begin{equation}\label{eqn:mcond}
\mmcond{S} = \mmcut{S, \bar{S}} / \min(\mmvol{S}, \mmvol{\bar{S}}),
\end{equation}
where $\bar{S}$ denotes the remainder of the nodes (the complement of $S$),
$\mmcut{S, \bar{S}}$ is the number of instances of motif $M$ with at least one
node in $S$ and one in $\bar{S}$, and $\mmvol{S}$ is the number of nodes in
instances of $M$ that reside in $S$.  \Cref{eqn:mcond} is a
generalization of the conductance metric in spectral graph theory, one of the
most useful graph partitioning scores~\cite{schaeffer2007graph}. We refer to
$\mmcond{S}$ as the motif conductance of $S$ with respect to $M$.
(The formal mathematics of these definitions are discussed later in \cref{sec:motif_def,sec:motif_cond}).

Finding the exact set of nodes $S$ that minimizes the motif conductance is
NP-hard~\cite{wagner1993between}.  To approximately minimize \cref{eqn:mcond}
and hence identify higher-order clusters, we develop an optimization framework
that provably finds near-optimal clusters (\cref{sec:honc_derivation}).  We
extend the spectral graph clustering methodology, which is based on the
eigenvalues and eigenvectors of matrices associated with the
graph~\cite{schaeffer2007graph}, to account for higher-order structures in
networks.  The resulting method maintains the properties of traditional spectral
graph clustering: computational efficiency, ease of implementation, and
mathematical guarantees on the near-optimality of obtained clusters.
Specifically, a cluster $S$ identified by our higher-order clustering framework
satisfies a motif Cheeger inequality:
\begin{equation}
\mmcond{S} \le 2\sqrt{\phi_M^*},
\end{equation}
where $\phi_M^* = \min_{T \subset V} \mmcond{T}$ is the optimal motif
conductance over all possible sets $T$.  In other words, our optimization
framework can find a cluster that is at most a quadratic factor away from
optimal.  We prove this inequality in \cref{sec:motif_cheeger}.

The algorithm (illustrated in \cref{fig:honc_basics_framework}) efficiently
identifies a cluster of nodes $S$ as follows:

\begin{enumerate}
    \item Given a network and a motif $M$ of interest, form the motif
          adjacency matrix $W_M$ whose entries $(i,j)$ are the
          co-occurrence counts of nodes $i$ and $j$ in the motif $M$
          \begin{equation}\label{eqn:weighting}
            (W_M)_{ij}
            = \text{number of instances of $M$ that contain nodes $i$ and $j$}.
          \end{equation}
                   
    \item Compute the spectral ordering $\sigma$ of the nodes from the
          normalized motif Laplacian matrix constructed via $W_M$.  The
          normalized motif Laplacian matrix is
          $\normmotiflap = D^{-1/2} (D - W_M) D^{-1/2}$,
          where $D$ is a diagonal matrix
          with the row-sums of $W_M$ on the diagonal ($D_{ii} = \sum_j (W_M)_{ij}$),
          and $D^{-1/2}$ is the same matrix with the
          inverse square-roots on the diagonal
          ($D^{-1/2}_{ii} = 1/\sqrt{\sum_{j} (W_M)_{ij}}$).
          The spectral ordering $\sigma$ is the by-value ordering of
          $D^{-1/2}z$, where $z$ is the eigenvector corresponding to
          the second smallest eigenvalue of $\normmotiflap$, i.e.,
          $\sigma_i$ is the index of $D^{-1/2}z$ with the $i$th
          smallest value.
 	
    \item Find the prefix set of
          $\sigma$ with the smallest motif conductance, formally:\
          $S := \arg\min_{r} \mmcond{S_r}$, where
          $S_r = \{\sigma_1, \ldots, \sigma_r\}$.
\end{enumerate}

For triangular motifs, the algorithm scales to networks with billions of edges
and typically only takes several hours to process graphs of such size
(Section~\ref{sec:honc_scalability}).  On smaller networks with hundreds of
thousands of edges, the algorithm can process some motifs up to size 9
(specifically, we look at cliques up to size 9).  While the worst-case
computational complexity of the algorithm for triangular motifs scales
as $m^{1.5}$, where $m$ is the number of edges in the network, in practice
the algorithm is much faster.  Analyzing 16 real-world networks where the
number of edges $m$ ranges from 159,000 to 2 billion we found the computational
complexity to scale as $m^{1.2}$.  Moreover, the algorithm can easily be
parallelized and sampling techniques can be used to further improve
performance~\cite{seshadhri2014wedge}.

The framework can be applied to directed, undirected, and weighted networks as
well as motifs.  Moreover, it can also be applied to networks with positive and
negative signs on the edges, which are common in social networks (friend vs.~foe
or trust vs.~distrust edges) and metabolic networks (edges signifying activation
vs.~inhibition).  For example, in \cref{sec:honc_yeast}, we analyze a transcriptional regulation network
with signed edges where the sign corresponds to activation or suppression of genetic
transcription.  The framework can be used to
identify higher-order structure in networks where domain knowledge suggests the
motif of interest.  This is the case for our analysis of a transcriptional
regulation network (\cref{sec:honc_yeast}), a network of neural connections in
the nematode worm \emph{C. elegans} (\cref{sec:honc_celegans}), and a
transportation reachability network (\cref{sec:honc_airports}).  We also show
that when a domain-specific higher-order pattern is not known in advance, the
framework can also serve to identify which motifs are important for the modular
organization of a given network; for example, in \cref{sec:honc_foodweb}, we
identify a motif that organizes a food web.

Such a general framework allows for a study of complex higher-order
organizational structures in a number of different networks using individual
motifs and sets of motifs.  The framework and mathematical theory immediately
extend to other spectral methods such as algorithms for finding
overlapping clusters~\cite{whang2015non} and localized algorithms that find clusters
around a seed node~\cite{andersen2006local} (we develop the localized clustering extension
in \cref{sec:local}).  To find several clusters, one can use
embeddings from multiple eigenvectors and $k$-means
clustering~\cite{ng2001spectral} or apply recursive
bi-partitioning~\cite{boley1998principal} (see \cref{sec:honc_multiple_clusters}
for details).

Our higher-order network clustering framework unifies motif analysis and network
partitioning---two fundamental tools in network science---and reveals new
organizational patterns and modules in complex systems.  Prior efforts along
these lines do not provide worst-case performance guarantees on the obtained
clustering~\cite{serrour2011detecting}, do not reveal which motifs organize the
network~\cite{michoel2011enrichment}, or rely on expanding the size of the
network~\cite{benson2015tensor,krzakala2013spectral}.  Our theoretical results
in \cref{sec:motif_cheeger} also explain why classes of hypergraph partitioning
methods are more general than previously assumed and how motif-based
clustering provides a rigorous framework for the special
case of partitioning directed graphs (\cref{sec:honc_discussion}).

\section{The motif-based spectral clustering algorithm}
\label{sec:honc_derivation}

We now cover the background and theory for deriving and understanding our
method.  We start by reviewing the graph Laplacian and cut, volume, and
conductance measures for sets of vertices in a graph.  We then define network
motifs in and generalize the notions of cut, volume, and conductance to motifs.
From this, we present the spectral algorithm for finding sets with small motif
conductance and the associated theory (motif Cheeger inequality).  We then
analyze the complexity of the algorithm and discuss some extensions of the
framework.

\subsection{Review of cuts, volumes, conductance and the graph Laplacian for weighted, undirected graphs}

Consider a weighted, undirected graph $G = (V, E)$, with $\lvert V \rvert = n$.
Further assume that $G$ has no isolated nodes.  Let $W$ encode the weights of
the graph:
\begin{equation}
W_{ij} = W_{ji} = \text{weight of edge (i, j)}.
\end{equation}
The diagonal degree matrix $D$ is defined by $D_{ii} = \sum_{j=1}^{n} W_{ij}$,
and the graph Laplacian is defined as $L = D - W$.  We now relate these matrices
to the conductance $\cond{S}$ of a subset $S \subset V$ of nodes in $G$.
\begin{align}
\cond{S} &= \cut{S, \bar{S}} / \min(\vol{S}, \vol{\bar{S}}), \\
\cut{S, \bar{S}} &= \sum_{i \in S,\;j \in \bar{S}} W_{ij}, \\
\vol{S} &= \sum_{i \in S} D_{ii}
\end{align}
Here, $\bar{S} = V \backslash S$ is the complement of the set $S$.
(Note that conductance is a symmetric measure in $S$ and $\bar{S}$, i.e.,
$\cond{S} = \cond{\bar{S}}$.)
Conceptually, the cut and volume measures are defined as follows:
\begin{align}
\cut{S, \bar{S}} &= \text{weighted sum of edges that are cut} \label{eqn:cut_words} \\
\vol{S} &= \text{weighted number of edge end points in $S$} \label{eqn:vol_words}.
\end{align}
In general, sets with low conductance are considered good
clusters~\cite{schaeffer2007graph}.  They model the conceptual definition that
good clusters are isolated from the rest of the network (there are
few edges leaving $S$ when the cut is small) and not too small (there is enough
volume in $S$).

Since we have assumed $G$ has no isolated nodes, $\vol{S} > 0$.  If $G$ is
disconnected, it is easy to see that $\cond{C} = 0$ for any connected component
$C$.  Thus, we usually consider breaking $G$ into connected components as a
pre-processing step for algorithms that try to find low-conductance sets.

We now relate the cut metric to a quadratic form on $L$.  Later, we will derive
a similar form for a motif cut measure.  For any vector $y \in \mathbb{R}^{n}$,
\begin{equation}
y^TLy = \sum_{(i, j) \in E}w_{ij}(y_i - y_j)^2.
\end{equation}
Now, define $x$ to be an indicator vector for a set of nodes $S$
i.e., $x_i = 1$ if node $i$ is in $S$ and $x_i = -1$ if node $i$ is in
$\bar{S}$.  If an edge $(i, j)$ is cut, then $x_i$ and $x_j$ take
different values and $(x_i - x_j)^2 = 4$; otherwise, $(x_i - x_j)^2 = 0$.  Thus,
\begin{equation}\label{eqn:quad_cut}
x^TLx = 4 \cdot \cut{S, \bar{S}}.
\end{equation}

Over the next few sections, we generalize these ideas for motifs.

\subsection{Definition of network motifs}
\label{sec:motif_def}

We now define network motifs as used in this chapter.  We consider
motifs to be a pattern of edges on a small number of nodes.  Formally, we define
a motif on $k$ nodes by a tuple $(B, \anchorset)$, where $B$ is a $k \times k$
binary matrix and $\anchorset \subset \{1, 2, \ldots, k\}$ is a set of anchor
nodes.  The matrix $B$ encodes the edge pattern between the $k$ nodes, and
$\anchorset$ labels a relevant subset of nodes for defining motif conductance.
(More specifically, the anchored nodes are the ones that contribute to the cut
and volume counts; we make this formal later.)
In many cases, $\anchorset$ is just the entire set of nodes. As an example, motif $M_4$ (the directed triangle with all bidirectional links;
\cref{fig:honc_basics_motifs}), where all nodes are anchors is denoted by
\[
(B, \anchorset) =
\left(
\begin{bmatrix} 0 & 1 & 1 \\ 1 & 0 & 1 \\ 1 & 1 & 0 \end{bmatrix},
\{1, 2, 3\}
\right)
\]

Let $\chi_{\anchorset}$ be a selection function that takes the subset of a $k$-tuple
indexed by $\anchorset$, and let $\text{set}(\cdot)$ be the operator that takes
an (ordered) tuple to an (unordered) set.  Specifically,
\[
\text{set}((v_1, v_2, \ldots, v_k)) = \{v_1, v_2, \ldots, v_k\}.
\]
The set of all motifs in an unweighted (possibly directed) graph with node set $V$
and adjacency matrix $A$, denoted $\mathcal{M}(B, \anchorset)$, is defined by
\begin{align}
\mathcal{M}(B, \anchorset) =
\{
&(\setof{v}, \setof{\chi_{\anchorset}(v)}) \;\mid\; \label{eqn:motif_set} \\
& v \in V^k,\; v_1, \ldots, v_k \text{ distinct},\; A_{v, v} = B \nonumber
\},
\end{align}
where $A_{v, v}$ is the $k \times k$ adjacency matrix on the subgraph induced by
the $k$ nodes of the \emph{ordered} vector $v$.  Sometimes, a distinction is
made between a \emph{functional} and a \emph{structural}
motif~\cite{sporns2004motifs} (or a subgraph and an induced
subgraph~\cite{inokuchi2000apriori}) to distinguish whether a motif specifies
simply the existence of a set of edges (functional motif or subgraph) or the
existence and non-existence of edges (structural motif or induced subgraph).  By
asserting the equivalency $A_{v, v} = B$, we refer to structural motifs in this
work.  However, a function motif is just a union of structural motifs.  Our
clustering framework allows for the simultaneous consideration of several motifs
(see \cref{sec:honc_extensions}), so we have not lost any generality and can
cluster based on structural motifs.

\Cref{fig:motifs_formal} illustrates these definitions.  The set operator is
a convenient way to avoid duplicates when defining $\mathcal{M}(B, \anchorset)$
for motifs exhibiting symmetries.  Henceforth, we will just use $\minstance$ to
denote $(\setof{v}, \setof{\chi_{\anchorset}(v)})$ when discussing elements of
$\mathcal{M}(B,\anchorset)$.  Furthermore, we call any
$\minstance \in \mathcal{M}(B, \anchorset)$ a \emph{motif instance}.  When $B$
and $\anchorset$ are arbitrary or clear from context, we will simply denote the
motif set by $\mathcal{M}$.

We call motifs where $\anchornodes = v$ \emph{simple motifs}
(\cref{fig:motifs_formalB}) and motifs where $\anchornodes \neq v$ \emph{anchored
motifs} (\cref{fig:motifs_formalC}).  Existing motif analysis has only analyzed
simple motifs.  However, the anchored motif provides us with a more general
framework, and we use an anchored motif for the analysis of the transportation
reachability network in \cref{sec:honc_airports}.


\definecolor{mygreen}{RGB}{81,194,166}
\begin{figure}[tb]
\centering
\phantomsubfigure{fig:motifs_formalA}
\phantomsubfigure{fig:motifs_formalB}
\phantomsubfigure{fig:motifs_formalC}
\includegraphics[width=\columnwidth]{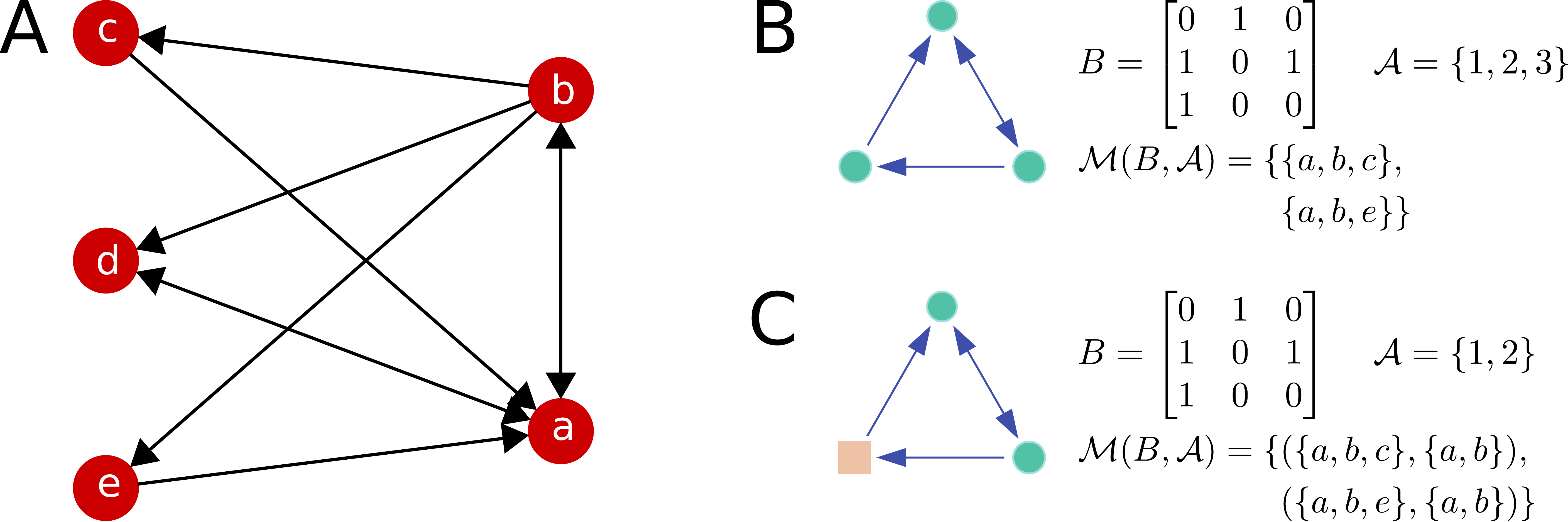}
\dualcaption{Formal motif definition}{%
A motif consist of a $k \times k$ binary matrix
$B$ and a set of anchor nodes $\mathcal{A} \subseteq \{1, \ldots, k\}$.
{\bf A:}
Example network.
{\bf B:}
Example simple motif, where the anchor nodes are the entire index set.  In this
case, we can think of motifs as just subsets of nodes.  The set
$\mathcal{M}(B, \mathcal{A})$ lists all of the subsets of nodes where
the induced subgraph is isomorphic to the graph defined by the matrix $B$.
{\bf C:}
Example anchored motif, where the \textcolor{mygreen}{green} nodes are the
anchor nodes.  There are
two instances of the anchored motif in the graph in A.  The tuple
$(\{a, b, d\}, \{a, b\})$
is not included in the set of motif instances because the induced subgraph on
the nodes $a$, $b$, and $d$ is not isomorphic to the graph defined by the matrix $B$.
}
\label{fig:motifs_formal}
\end{figure}

\clearpage

\subsection{Motif conductance}
\label{sec:motif_cond}

Recall that the key definitions for defining conductance are the notions of cut
and volume.  For an unweighted graph, these are
\begin{align}
\cond{S, \bar{S}} &= \cut{S, \bar{S}} / \min(\vol{S}, \vol{\bar{S}}), \\
\cut{S, \bar{S}} &= \text{number of edges cut}, \\
\vol{S} &= \text{number of edge end points in $S$}.
\end{align}

Our syntactic definition of motif conductance simply replaces an edge with a
motif instance of type $M$:
\begin{align}
\mmcond{S} &= \mmcut{S, \bar{S}} / \min(\mmvol{S}, \mmvol{\bar{S}}), \\
\mmcut{S, \bar{S}} &= \text{number of motif instances cut}, \\
\mmvol{S} &= \text{number of motif instance end points in $S$}.
\end{align}

We say that a motif instance is cut if there is at least one anchor node in $S$
and at least one anchor node in $\bar{S}$ and we say that an ``end point in $S$''
is any anchor node of a motif instance in $S$.  We can formalize this when given a
motif set $\mathcal{M}$ as in \cref{eqn:motif_set}:
\begin{align}
\mmcut{S, \bar{S}} &= 
\sum \nolimits_{\minstance \in \mathcal{M}}\indicator{\exists\; i, j  \in \anchornodes \;\mid\; i \in S, j \in \bar{S}} \label{eqn:motif_cut},  \\
\mmvol{S} &= \sum \nolimits_{\minstance \in \mathcal{M}} \sum \nolimits_{i \in \anchornodes} \indicator{i \in S},
\end{align}
where $\indicator{s}$ is the truth-value indicator function on $s$, i.e.,
$\indicator{s}$ takes the value $1$ if the statement $s$ is true and $0$
otherwise.  The motif cut measure only counts an instance of a
motif as cut if the anchor nodes are separated, and the motif volume counts the
number of anchored nodes in the set.  However, two nodes in an anchor set may be a
part of several motif instances.  Specifically, following the definition in
\cref{eqn:motif_set}, there may be many different $v$ with the
same $\anchornodes$, and the nodes in $\anchornodes$ still get counted
proportional to the number of motif instances.

\subsection{The motif adjacency matrix and the motif Laplacian}
\label{sec:motif_adjacency}

Given an unweighted, directed graph and a motif set $\mathcal{M}$, we conceptually
define the motif adjacency matrix by
\begin{align}
\motifweightedij = &\text{ number of motif instances in $\mathcal{M}$ where} \label{eqn:informal_weighting} \\
&\text{ $i$ and $j$ participate as anchor nodes in the motif}. \nonumber
\end{align}
Or, formally,
\begin{equation}\label{eqn:formal_weighting}
\motifweightedij = 
\begin{cases}
\displaystyle \sum_{\minstance \in \mathcal{M}} \indicator{\{i, j\} \subset \chi_{\anchorset}(v)}, & i \neq j \\ 
0 & \text{otherwise} \\
\end{cases}
\end{equation}
Note that weight is added to $\motifweightedij$ only if $i$ and
$j$ appear in the anchor set.  This is important for the transportation
reachability network analyzed in \cref{sec:honc_airports}, where weight is
added between cities $i$ and $j$ based on the number of intermediary cities that
can be traversed between them.

Next, we define the motif diagonal degree matrix as
\[
D_M = \text{diag}(W_Me),
\]
where $e$ is the vector of all ones and the motif Laplacian as
\[
L_M = D_M - W_M.
\]
Finally, we define the normalized motif Laplacian as
\[
\normmotiflap = D_M^{-1/2}L_MD_M^{-1/2} = I - D_M^{-1/2}W_MD_M^{-1/2}.
\]
In the next section, we develop our clustering algorithm based on computing
an eigenvector of $\normmotiflap$.

\subsection{The spectral algorithm for finding a single cluster}
\label{sec:honc_alg}

We are now ready to describe the algorithm for finding a single cluster in a
graph, which we present in \cref{alg:motif_fiedler}.
The algorithm finds a partition of the nodes into $S$ and $\bar{S}$.
The motif conductance is symmetric in the sense that
$\mmcond{S} = \mmcond{\bar{S}}$, so either set of nodes ($S$ or $\bar{S}$) could be
interpreted as a cluster.  However, in practice, it is common that one set is
substantially smaller than the other.  We consider this smaller set to represent
a module in the network.

\begin{algorithm}[h]\algoptions
  \KwIn{Directed, unweighted graph $G$ and motif $M$}
  \KwOut{Motif-based cluster (subset of nodes in $G$)}
   $(W_M)_{ij} \leftarrow \text{number of instances of $M$ that contain nodes $i$ and $j$}.$\;
   $D_M \leftarrow \diag{W_M \allones}$.\;
   $\normmotiflap \leftarrow I - D_M^{-1/2}W_MD_M^{-1/2}$.\;
   $z \leftarrow $ eigenvector of second smallest eigenvalue for $\normmotiflap$\;
  $\sigma_i \leftarrow$ index of $D_M^{-1/2}z$ with $i$th smallest value\;
  \mycomment{Sweep procedure}\;
   $S_l^* \leftarrow \arg\min_{S_l} \cond{S_l}$, where $S_l = \{\sigma_1, \ldots, \sigma_l\}$ \label{line:sweep}\;
  \eIf{$\lvert S_l^* \rvert < \lvert \bar{S}_l^* \rvert$}{
      \Return $S_l^*$\;
  }{
      \Return $\bar{S}^*_l$\;
  }      
  \dualcaption{Motif-based clustering algorithm for finding a single
    cluster}{The algorithm forms a weighted undirected graph and
    then runs a classical spectral partitioning method on the weighted
    graph.  Our theory in \cref{sec:motif_cheeger} gives bounds on
    the cluster quality in terms of motif conductance.}
  \label{alg:motif_fiedler}
\end{algorithm}

The algorithm itself is simple.  First, the algorithm pre-processes the
(possibly) directed graph into the weighted adjacency matrix.  Second, it uses a
well-known spectral graph partitioning method on the weighted graph
graph~\cite{alon1985lambda1,alon1986eigenvalues,mihail1989conductance,fiedler1973algebraic,chung2007four}.
The novel insight here is that this process provides theoretical results in
terms of motif conductance, as we will show in \cref{sec:motif_cheeger}.  Specifically,
we show that when the motif $M$ has three nodes, the output set $S$ satisfies
\begin{equation}
\mmcond{S} \le 2\sqrt{\phi_M^*},
\end{equation}
where $\phi_M^* = \min_{T \subset V}\mmcond{T}$ is the smallest motif
conductance over all possible sets of nodes.  This is a generalization of
the celebrated Cheeger inequality for graphs~\cite{cheeger1970lower,alon1985lambda1,mihail1989conductance}.
Hence, we call this bound a \emph{motif Cheeger inequality}, which we
will prove and discuss in \cref{sec:motif_cheeger}

Finally, we find that it is often informative to look at all conductance values found from
the ``sweep'' procedure in this algorithm (\cref{line:sweep}), where the
algorithm sweeps over several sets and picks out the one with smallest motif
conductance.  We refer to a plot of $\cond{S_l}$ versus $l$ as a \emph{sweep
profile plot}.  In the following subsection, we show that when the motif has
three nodes, the conductance of any set $S$ in the weighted graph equals the
motif conductance in the original graph $G$.  Thus, in this case, the sweep
profile shows how motif conductance varies with the size of the sets in
\cref{alg:motif_fiedler}.

\subsection{Interlude for matrix computations}
\label{sec:honc_matrix}

Before providing the theoretical analysis for \cref{alg:motif_fiedler}, we use
this subsection to show how, for several motifs, the motif adjacency matrix
$W_M$ (\cref{eqn:informal_weighting}) has a clean formula in terms of simple
matrix computations.  Let $A$ be the adjacency matrix for the original graph,
and let $U$ and $B$ be the adjacency matrix of the unidirectional and
bidirectional links of $G$.  Formally, $B = A \circ A^T$ and $U = A - B$, where
$\circ$ denotes the Hadamard (entry-wise) product.
\Cref{tab:matrix_interp} lists the formula of $W_M$ for all 7 of
the directed triangle motifs in terms of matrix computations involving $U$ and $B$.

\begin{table}[h]
\centering
\dualcaption{Matrix computations for the weighted motif adjacency matrix}{For
the adjacency matrix $A$, $B = A \circ A^T$ and $U = A - B$.}
%
%
\begin{tabular}{c @{\hskip 1cm} l @{\hskip 1.25cm} l}
\toprule
Motif & Matrix computations & $W_M =$ \\ \midrule
$M_{1}$  & $C = (U \cdot U) \circ U^T$ & $C + C^T$ \\
$M_{2}$  & $C = (B \cdot U) \circ U^T + (U \cdot B) \circ U^T + (U \cdot U) \circ B$ & $C + C^T$ \\
$M_{3}$  & $C = (B \cdot B) \circ U + (B \cdot U) \circ B + (U \cdot B) \circ B$ & $C + C^T$ \\
$M_{4}$  & $C = (B \cdot B) \circ B$ & $C$ \\
$M_{5}$  & $C = (U \cdot U) \circ U + (U \cdot U^T) \circ U + (U^T \cdot U) \circ U$ & $C + C^T$ \\
$M_{6}$  & $C = (U \cdot B) \circ U + (B \cdot U^T) \circ U^T + (U^T \cdot U) \circ B$ & $C$ \\
$M_{7}$  & $C = (U^T \cdot B) \circ U^T + (B \cdot U) \circ U + (U \cdot U^T) \circ B$ & $C$ \\
\bottomrule
\end{tabular}
\label{tab:matrix_interp}
\end{table}

The central computational kernel in these computations is $(X \cdot Y) \circ Z$.
\Citet{azad2015parallel} developed and analyzed parallel algorithms for
these computations when $X$, $Y$, and $Z$ are sparse (and if the
graph is sparse, these matrices will be sparse).  This matrix-based
formulation has worse computational complexity than the fast algorithms
discussed later in \cref{sec:honc_basic_complexity}---the computation $X \cdot Y$
counts length-2 paths, which is prohibitively expensive for large, sparse
real-world networks.  However, the implementation is simple and elegant and
works well for small graphs.  To demonstrate this, \cref{fig:julia_spectral}
contains a complete implementation of \cref{alg:motif_fiedler} for motifs
$M_{1}$--$M_{7}$ in fewer than 50 lines of readable Julia code.


\lstdefinelanguage{Julia}%
  {morekeywords={abstract,break,case,catch,const,continue,do,else,elseif,%
      end,export,false,for,function,immutable,import,importall,if,in,%
      macro,module,otherwise,quote,return,switch,true,try,type,typealias,%
      using,while},%
   sensitive=true,%
   alsoother={$},%
   morecomment=[l]\#,%
   morecomment=[n]{\#=}{=\#},%
   morestring=[s]{"}{"},%
   morestring=[m]{'}{'},%
}[keywords,comments,strings]%

\lstset{%
    language         = Julia,
    basicstyle       = \footnotesize \ttfamily,
    keywordstyle     = \bfseries\color{blue},
    stringstyle      = \color{magenta},
    commentstyle     = \color{ForestGreen},
    showstringspaces = false,
    numbers=left,
}
\begin{figure}[tb]
\centering
\scalebox{0.92}{\lstinputlisting{motif_spectral.jl}}
\dualcaption{Julia implementation of the motif-based spectral clustering algorithm}{Using
the matrix computations in \cref{tab:matrix_interp}, \cref{alg:motif_fiedler}
for motifs $M_{1}$--$M_{7}$ can be implemented in fewer than 50 lines of Julia.
This code is available at\\
\url{https://gist.github.com/arbenson/a7bb06fb74977cddbb455e824519a55e}.}
\label{fig:julia_spectral}
\end{figure}

\clearpage

Another matrix-based interpretation comes from the inner product of the
motif-node incidence matrix.  Specifically, let
$\mathcal{M}(B, \anchorset)$ be a motif set and number the instances of the motif
$1, \ldots, \lvert \mathcal{M} \rvert$,
so that $(v_i, \chi_{\mathcal{A}}(v_i))$ is the $i$th motif.
Define the $\lvert M \rvert \times n$ motif-node incidence matrix by 
\[
(A_M)_{ij} = \indicator{j \in \chi_{\mathcal{A}}(v_i)}.
\]
Then the motif adjacency matrix may be written as
\begin{equation*}
W_M = A_M^TA_M - \text{diag}(A_M^TA_M)
\end{equation*}
This provides a convenient algebraic formulation for defining and thinking about
the weighted motif adjacency matrix.  However, in practice, we do not use this formulation
for any computations.

\subsection{Motif Cheeger inequality for network motifs with three nodes}
\label{sec:motif_cheeger}

In spectral graph theory, the Cheeger inequality is a well-known relationship
between the second-smallest eigenvalue of the normalized Laplacian and the
conductance of a graph.  The name of the inequality is attributed to Jeff
Cheeger, who actually proved a result about the eigenvalues of Laplacians on
manifolds~\cite{cheeger1970lower}.  Later on, the bound was generalized to the
normalized Laplacian on graphs, which we state here.
\begin{theorem}[Cheeger inequality for graphs~\cite{alon1985lambda1,alon1986eigenvalues}]\label{thm:cheeger}
Let $\lambda_2$ be the second smallest eigenvalue of the normalized laplacian $N$
for any weighted graph $W$ with node set $V$.  Then
\begin{equation}\label{eq:cheeger}
\lambda_2 / 2 \le \phi^* \le \sqrt{2\lambda_2},
\end{equation}\label{eq:cheeger}
where $\phi^* = \min_{S \subset V}\cond{S}$.
\end{theorem}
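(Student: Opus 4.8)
The statement is the classical normalized Cheeger inequality, and I would prove it as two separate bounds: the ``easy'' lower bound $\lambda_2/2 \le \phi^*$ and the ``hard'' upper bound $\phi^* \le \sqrt{2\lambda_2}$. Throughout write $L = D - W$ for the combinatorial Laplacian, so $N = D^{-1/2} L D^{-1/2}$. Since the bottom eigenvector of $N$ is $D^{1/2}e$ (where $e$ is the all-ones vector) with eigenvalue $0$, the substitution $x = D^{-1/2}z$ converts the Courant--Fischer characterization of the second eigenvalue into
\[
\lambda_2 = \min_{x \ne 0,\; x^T D e = 0} \frac{x^T L x}{x^T D x}, \qquad x^T L x = \sum_{(i,j) \in E} w_{ij}(x_i - x_j)^2, \qquad x^T D x = \sum_i D_{ii} x_i^2,
\]
reusing exactly the quadratic form for $L$ recorded earlier in the excerpt. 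Everything flows from this.

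For the lower bound, let $S$ attain $\phi^*$, with $\vol{S} \le \vol{\bar S}$ without loss of generality, and test the Rayleigh quotient on the vector $x$ defined by $x_i = 1/\vol{S}$ for $i \in S$ and $x_i = -1/\vol{\bar S}$ for $i \in \bar S$. A short computation gives $x^T D e = 0$, $x^T D x = 1/\vol{S} + 1/\vol{\bar S}$, and $x^T L x = \cut{S,\bar S}\,(1/\vol{S} + 1/\vol{\bar S})^2$, since only cut edges contribute to $x^T L x$. Hence $\lambda_2 \le \cut{S,\bar S}\,(1/\vol{S} + 1/\vol{\bar S}) \le 2\,\cut{S,\bar S}/\vol{S} = 2\phi^*$, which is the claim.

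The upper bound is the real content and proceeds by rounding the Fiedler vector. Take $x = D^{-1/2}z$ for $z$ an eigenvector of $\lambda_2$, so $x^T L x = \lambda_2\, x^T D x$ and $x^T D e = 0$. Shift to $y = x - c\,e$, where $c$ is a volume-weighted median of the coordinates of $x$, so that both $\{i : y_i > 0\}$ and $\{i : y_i < 0\}$ have volume at most $\vol{V}/2$; because $Le = 0$ and $x^T D e = 0$, the shift leaves $y^T L y = x^T L x$ but only increases the denominator, so $y^T L y / y^T D y \le \lambda_2$. Now split $y = y_+ - y_-$ into its nonnegative and nonpositive parts. The pointwise inequality $(a-b)^2 \ge (a_+-b_+)^2 + (a_--b_-)^2$ gives $y^T L y \ge y_+^T L y_+ + y_-^T L y_-$, while the disjointness of the two supports gives $y^T D y = y_+^T D y_+ + y_-^T D y_-$; averaging the two ratios, one of $y_+, y_-$ — call it $g$ — is a nonnegative vector, supported on a set of volume at most $\vol{V}/2$, with $g^T L g / g^T D g \le \lambda_2$.

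It remains to show that some superlevel set of such a $g$ has conductance at most $\sqrt{2\,(g^T L g / g^T D g)}$, and this threshold-rounding step is where I expect the work to be. The plan is to draw $t$ uniformly at random from $(0, (\max_i g_i)^2)$ and set $S_t = \{i : g_i^2 > t\}$ (a prefix set of the $g$-values, hence a sweep-cut set of the original eigenvector). Then $\expect{\vol{S_t}} = (\max g)^{-2}\, g^T D g$ and $\expect{\cut{S_t,\bar{S_t}}} = (\max g)^{-2}\sum_{(i,j)\in E} w_{ij}\,|g_i^2 - g_j^2|$, and the key estimate bounds the latter sum by factoring $|g_i^2 - g_j^2| \le |g_i - g_j|(g_i + g_j)$ and applying Cauchy--Schwarz, giving $\sum_{(i,j)} w_{ij}|g_i-g_j|(g_i+g_j) \le \sqrt{g^T L g}\cdot\sqrt{\sum_{(i,j)} w_{ij}(g_i+g_j)^2} \le \sqrt{g^T L g}\cdot\sqrt{2\, g^T D g}$, using $(g_i+g_j)^2 \le 2(g_i^2+g_j^2)$. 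Writing $R = g^T L g/g^T D g$, this yields $\expect{\cut{S_t,\bar{S_t}} - \sqrt{2R}\,\vol{S_t}} \le 0$, so some realized $t$ produces a nonempty set $S_t$ contained in the support of $g$ — hence with $\min(\vol{S_t},\vol{\bar{S_t}}) = \vol{S_t}$ — satisfying $\cond{S_t} = \cut{S_t,\bar{S_t}}/\vol{S_t} \le \sqrt{2R} \le \sqrt{2\lambda_2}$. Therefore $\phi^* \le \sqrt{2\lambda_2}$, and as a bonus $S_t$ is exactly one of the sweep sets used by the algorithm.
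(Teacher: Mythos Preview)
Your proof is correct and is the standard argument for the normalized Cheeger inequality. However, the paper does not actually prove this theorem: it is stated as a classical result with citations to Alon~\cite{alon1985lambda1,alon1986eigenvalues} and used as a black box. The paper even points (in a footnote after the companion \cref{thm:sweep}) to Spielman's lecture notes for a clean proof of the sweep-cut upper bound, which is essentially the randomized-threshold argument you wrote out. So there is nothing to compare against---you have supplied a complete and correct proof where the paper simply cites the literature.
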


While we cannot expect to find the set with minimium conductance due to the
NP-hardness of minimizing conductance~\cite{wagner1993between}, we might hope to
realize the upper bound of the Cheeger inequality by finding some set $S$ for
which $\cond{S} \le \sqrt{2\lambda_2}$.  In other words, if we could construct
some $S$ with this property, then we get the upper bound in \cref{eq:cheeger}.
By the lower bound of \cref{eq:cheeger}, this set $S$ would then satisfy
$\cond{S} \le \sqrt{4\phi^*} = 2 \sqrt{\phi^*}$.  The sweep procedure
in \cref{alg:motif_fiedler} finds such a set $S$.\footnote{For
a clean proof of this result, see Dan Spielman's lecture notes:
\url{http://www.cs.yale.edu/homes/spielman/561/lect06-15.pdf}.}
\begin{theorem}[Sweep Cheeger inequality~\cite{mihail1989conductance}]\label{thm:sweep}
The set $S$ output by \cref{alg:motif_fiedler} satisfies
\begin{equation}
\cond{S} \le \sqrt{2\lambda_2} \le 2\sqrt{\phi^*},
\end{equation}
where conductance is measured in terms of the weighted graph $W$.
\end{theorem}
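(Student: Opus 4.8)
The first thing to notice is that, despite its placement, this statement is not really about motifs: the set $S$ is produced by running an ordinary spectral sweep on the \emph{weighted undirected} graph $W$ formed in line~1 of \Cref{alg:motif_fiedler}, and the quantities $\cond{S}$, $\lambda_2$, $\phi^*$ in the statement all refer to that weighted graph. (The motif content enters only in the next subsection, where $\cond{\cdot}$ on $W$ is identified with $\mmcond{\cdot}$ on $G$.) So the plan is simply to reprove the classical sweep-cut form of Cheeger's inequality for an arbitrary weighted graph $W=(V,E)$ with degree matrix $D$, Laplacian $L=D-W$, and normalized Laplacian $N=D^{-1/2}LD^{-1/2}$, under the standing assumption of no isolated nodes. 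The second, easy half, $\sqrt{2\lambda_2}\le 2\sqrt{\phi^*}$, is just the lower bound $\lambda_2/2\le\phi^*$ of \Cref{thm:cheeger}: from $\lambda_2\le 2\phi^*$ we get $2\lambda_2\le 4\phi^*$, hence $\sqrt{2\lambda_2}\le 2\sqrt{\phi^*}$. So all the work is in $\cond{S}\le\sqrt{2\lambda_2}$.

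For that, I would first recall the Courant--Fischer description of $\lambda_2$: since $N$ is symmetric with null vector $D^{1/2}e$ at eigenvalue $0$, $\lambda_2=\min\{y^TNy/y^Ty : y\perp D^{1/2}e\}$, and substituting $x=D^{-1/2}y$ turns this into $\lambda_2=\min\{\sum_{(i,j)\in E}w_{ij}(x_i-x_j)^2/\sum_i D_{ii}x_i^2 : \sum_i D_{ii}x_i=0\}$, attained by $x:=D^{-1/2}z$ for the eigenvector $z$ the algorithm computes. Then I would sort the vertices by the coordinates of $x$, choose a volume-balanced median vertex $r$ so that the vertices strictly below and strictly above $r$ each carry volume at most $\vol{V}/2$, subtract the constant $x_r$ to form $\tilde x$, and split $\tilde x = x^+ - x^-$ with $x^+=\max(\tilde x,0)\ge 0$ and $x^-=\max(-\tilde x,0)\ge 0$ having disjoint supports. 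A short computation shows the shift leaves $\tilde x^TL\tilde x = x^TLx$ and only increases $\tilde x^TD\tilde x$ (using $\sum_i D_{ii}x_i=0$), and that an edge crossing zero only helps, so the combined Rayleigh quotient $((x^+)^TLx^+ + (x^-)^TLx^-)/((x^+)^TDx^+ + (x^-)^TDx^-)$ is still $\le\lambda_2$; a mediant inequality then gives one of $x^+,x^-$ — call it $v$ — with $\rho:=\sum_{(i,j)}w_{ij}(v_i-v_j)^2/\sum_i D_{ii}v_i^2\le\lambda_2$, and by the choice of $r$ the support of $v$ has volume at most $\vol{V}/2$.

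The core step is the sweep lemma for such a $v$: some threshold set $S(t)=\{i:v_i^2>t\}$ satisfies $\cond{S(t)}\le\sqrt{2\rho}$. I would prove it in two moves. The analytic move is Cauchy--Schwarz: $\sum_{(i,j)}w_{ij}|v_i^2-v_j^2|=\sum_{(i,j)}w_{ij}|v_i-v_j|\,|v_i+v_j|\le\big(\sum_{(i,j)}w_{ij}(v_i-v_j)^2\big)^{1/2}\big(\sum_{(i,j)}w_{ij}(v_i+v_j)^2\big)^{1/2}$, combined with $\sum_{(i,j)}w_{ij}(v_i+v_j)^2\le 2\sum_{(i,j)}w_{ij}(v_i^2+v_j^2)=2\sum_i D_{ii}v_i^2$, giving $\sum_{(i,j)}w_{ij}|v_i^2-v_j^2|\le\sqrt{2\rho}\sum_i D_{ii}v_i^2$. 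The combinatorial move is a layer-cake identity: $\int_0^\infty \cut{S(t),\overline{S(t)}}\,dt=\sum_{(i,j)}w_{ij}|v_i^2-v_j^2|$ and $\int_0^\infty \vol{S(t)}\,dt=\sum_i D_{ii}v_i^2$, and since $\vol{\text{supp}(v)}\le\vol{V}/2$ we have $\vol{S(t)}=\min(\vol{S(t)},\vol{\overline{S(t)}})$. Combining these, $\int_0^\infty \cut{S(t),\overline{S(t)}}\,dt\le\sqrt{2\rho}\int_0^\infty \vol{S(t)}\,dt$, so $\cut{S(t),\overline{S(t)}}\le\sqrt{2\rho}\,\vol{S(t)}$ — i.e. $\cond{S(t)}\le\sqrt{2\rho}\le\sqrt{2\lambda_2}$ — must hold for at least one $t$, since otherwise the integrals would contradict. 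Finally I would do the bookkeeping connecting this to the algorithm: a threshold set of $v=x^-$ is exactly a prefix $\{\sigma_1,\dots,\sigma_l\}$ of the sorted order, while a threshold set of $v=x^+$ is the complement of such a prefix; since $\cond{\cdot}$ is invariant under complementation and line~\ref{line:sweep} of \Cref{alg:motif_fiedler} minimizes $\cond{S_l}$ over all prefixes, the returned $S$ has $\cond{S}\le\sqrt{2\lambda_2}$.

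I expect the fiddly parts to be the bookkeeping ones rather than the Cauchy--Schwarz/layer-cake computations: pinning down the median vertex and the shift, handling ties in the sorted order, checking that the sign split is lossless on edges crossing zero, keeping straight which side of $S(t)$ has the smaller volume (so that $\cut{S(t),\overline{S(t)}}/\min(\vol{S(t)},\vol{\overline{S(t)}})$ is really $\cut{S(t),\overline{S(t)}}/\vol{S(t)}$), and verifying that the threshold sets of $x^+$ and $x^-$ are genuinely among the prefix sets (or their complements) that the sweep in \Cref{alg:motif_fiedler} scans. None of this is deep, but it is where a proof of this kind has to be careful.
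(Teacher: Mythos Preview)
The paper does not actually prove this theorem: it is stated as a known result with a citation to \cite{mihail1989conductance}, and the footnote immediately preceding it points the reader to Spielman's lecture notes for a clean proof. Your proposal is precisely the standard Cauchy--Schwarz/layer-cake argument those notes contain, so you have correctly reconstructed the intended proof; there is nothing to compare against in the paper itself.
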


\Cref{alg:motif_fiedler} has guarantees on the quality of the output set
$S$, but these guarantees are in terms of the (edge) conductance of sets in the
weighted adjacency graph.  Instead, we would like guarantees in terms of motif
conductance.  In the rest of this section, we derive the same guarantees for
motif conductance with simple three-node motifs, or in general, motifs with 2 or
3 anchor nodes.  The crux of this result is deriving a relationship between the
motif conductance function and the weighted motif adjacency matrix, from which
the motif Cheeger inequality is essentially a corollary.

For the rest of this section and the next, we use the following notation.  Given an
unweighted, directed graph $G$ and a motif $M$, the corresponding weighted graph
defined by \cref{eqn:formal_weighting} is denoted by $G_M$.  For conductance,
cut, and volume, we will also specify the graph in which these measurements are
made.  Specifically, $\gmmcond{G}{S}$, $\gmmcut{G}{S}$, and $\gmmvol{G}{S}$
denote the motif conductance, motif cut, and motif volume of set $S$ in graph
$G$ with respect to motif $M$.

The following Lemma relates the motif volume to the volume in the weighted
graph.  This lemma applies to any anchor set $\anchorset$ consisting of at least
two nodes.  For our main result, we will apply the lemma assuming
$\vert \anchorset \vert = 3$.  However, we will apply the lemma more generally
when discussing four-node motifs in \cref{sec:fournode}.
\begin{lemma}\label{lem:motif_vol}
Let $G = (V, E)$ be a directed, unweighted graph and let $G_M$ be the
corresponding weighted graph for a motif $M$ on $k$ nodes and $\lvert  \anchorset \rvert$
anchor nodes, $2 \le \lvert \anchorset \rvert \le k$.  Then for any subset $S \subset V$,
\[
\gmmvol{G}{S} = \frac{1}{\vert \anchorset \vert  - 1}\gvol{G_M}{S}
\]
\end{lemma}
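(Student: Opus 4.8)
The plan is a direct double-counting argument. I would expand the weighted volume $\gvol{G_M}{S}$ using the formula for the motif adjacency matrix in \cref{eqn:formal_weighting}, interchange the order of the (finitely many) sums so that the sum over motif instances $\minstance\in\mathcal{M}$ sits on the outside, and then observe that each instance contributes exactly $(\lvert\anchorset\rvert-1)$ times the number of its anchor nodes lying in $S$. Dividing out the constant $\lvert\anchorset\rvert-1$ then yields the motif volume.

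Concretely, using $D_M=\diag{W_M\allones}$ and the convention $(W_M)_{ii}=0$, I would write
\[
\gvol{G_M}{S} \;=\; \sum_{i\in S}(D_M)_{ii} \;=\; \sum_{i\in S}\ \sum_{\substack{j\in V\\ j\neq i}} \motifweightedij \;=\; \sum_{i\in S}\ \sum_{\substack{j\in V\\ j\neq i}}\ \sum_{\minstance\in\mathcal{M}} \indicator{\{i,j\}\subset\anchornodes},
\]
the last step being \cref{eqn:formal_weighting}. Moving $\sum_{\minstance\in\mathcal{M}}$ to the front and using, for $i\neq j$, the factorization $\indicator{\{i,j\}\subset\anchornodes}=\indicator{i\in\anchornodes}\,\indicator{j\in\anchornodes}$, the contribution of a fixed instance is the number of ordered pairs of distinct anchor nodes whose first coordinate lies in $S$, namely $\lvert S\cap\anchornodes\rvert\cdot(\lvert\anchornodes\rvert-1)$. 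Now comes the one structural input: by \cref{eqn:motif_set} the representative tuple $v\in V^k$ has pairwise distinct coordinates, so $\anchornodes$ is a set of exactly $\lvert\anchorset\rvert$ vertices; hence $\lvert\anchornodes\rvert-1=\lvert\anchorset\rvert-1$ is constant across the sum and can be pulled out:
\[
\gvol{G_M}{S} \;=\; (\lvert\anchorset\rvert-1)\sum_{\minstance\in\mathcal{M}}\lvert S\cap\anchornodes\rvert \;=\; (\lvert\anchorset\rvert-1)\sum_{\minstance\in\mathcal{M}}\ \sum_{i\in\anchornodes}\indicator{i\in S} \;=\; (\lvert\anchorset\rvert-1)\,\gmmvol{G}{S},
\]
where the last equality is the definition of motif volume. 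Dividing by $\lvert\anchorset\rvert-1\ge 1$ (valid since $\lvert\anchorset\rvert\ge 2$) finishes the proof.

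There is no real obstacle here; the argument is purely combinatorial and the only care needed is bookkeeping. One should keep in mind that $\mathcal{M}$ is a \emph{set} of deduplicated instances $(\setof{v},\setof{\anchornodes})$, so each instance is counted once and its anchor set is a genuine $\lvert\anchorset\rvert$-element set — this is exactly where the distinctness of the coordinates of $v$ is used, and it is also why the same underlying node set can legitimately appear in several instances of an anchored motif with different anchor sets. Likewise, the convention $(W_M)_{ii}=0$ is what licenses replacing $\sum_{j\in V}$ by $\sum_{j\neq i}$, so that only pairs of \emph{distinct} anchor nodes are counted, and the hypothesis $\lvert\anchorset\rvert\ge 2$ is precisely what makes the normalizing factor $\lvert\anchorset\rvert-1$ nonzero (for $\lvert\anchorset\rvert=2$ the identity simply says the motif volume equals the ordinary weighted volume, as expected for an ``edge-like'' motif).
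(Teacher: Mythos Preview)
Your proof is correct and follows essentially the same double-counting idea as the paper: each motif instance contributes $\lvert\anchorset\rvert-1$ to the weighted degree of each of its anchor nodes, so summing over anchors in $S$ gives the factor $\lvert\anchorset\rvert-1$ times the motif volume. The paper's proof is a terse one-paragraph sketch of exactly this, whereas you have written it out with explicit sum interchanges and more careful bookkeeping about distinctness of anchors.
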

\begin{proof}
Consider an instance $\minstance$ of a motif.  Let
$(v_1, \ldots, v_{\vert \anchorset \vert}) = \chi_{\anchorset}(v)$.
By \cref{eqn:formal_weighting}, $(W_M)_{v_1, j}$ is incremented by one
for $j = v_2, \ldots, v_{\vert \anchorset \vert}$.  Since 
$(D_M)_{v_1, v_1} = \sum_{j} {(W_M)_{v_1,j}}$,
the motif end point $v_1$ is counted $\lvert \anchorset \rvert - 1$ times.
\end{proof}

The next lemma states that the truth value for determining whether three binary
variables in $\{-1, 1\}$ are not all equal is a quadratic function of the
variables.  Because this function is quadratic, we will be able to relate motif
cuts on three nodes to a quadratic form on the motif Laplacian.

\begin{lemma}\label{lem:ind3}
Let $x_i, x_j, x_k \in \{-1, 1\}$.  Then
\begin{equation}\label{eqn:ind3}
4\cdot\indicator{x_i, x_j, x_k \text{ not all the same}}
= x_i^2 + x_j^2 + x_k^2 - x_ix_j - x_jx_k - x_kx_i. \nonumber
\end{equation}
\end{lemma}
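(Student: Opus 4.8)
The plan is to reduce \cref{eqn:ind3} to a two-line case check by first using that each variable squares to $1$. Since $x_i, x_j, x_k \in \{-1,1\}$, we have $x_i^2 = x_j^2 = x_k^2 = 1$, so the right-hand side of \cref{eqn:ind3} equals $3 - (x_ix_j + x_jx_k + x_kx_i)$. It therefore suffices to show that this quantity is $0$ when $x_i = x_j = x_k$ and $4$ otherwise, which is exactly what $4\cdot\indicator{x_i, x_j, x_k \text{ not all the same}}$ does.

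First I would dispatch the ``all the same'' case: if $x_i = x_j = x_k$ then each pairwise product is $1$, so $3 - (x_ix_j + x_jx_k + x_kx_i) = 3 - 3 = 0$. Next, in the ``not all the same'' case exactly one variable differs from the other two, and since the expression is symmetric in $i,j,k$ I may assume without loss of generality that $x_i = x_j \neq x_k$; then $x_ix_j = 1$ and $x_jx_k = x_kx_i = -1$, so $3 - (x_ix_j + x_jx_k + x_kx_i) = 3 - (1 - 1 - 1) = 4$. This settles the identity.

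A cleaner variant that avoids case analysis, and which I might write instead, starts from $(x_i + x_j + x_k)^2 = x_i^2 + x_j^2 + x_k^2 + 2(x_ix_j + x_jx_k + x_kx_i) = 3 + 2(x_ix_j + x_jx_k + x_kx_i)$. The sum $x_i + x_j + x_k$ is an odd integer in $\{-3,-1,1,3\}$, and its square is $9$ precisely when the three signs agree and $1$ otherwise; hence $x_ix_j + x_jx_k + x_kx_i$ equals $3$ when they all agree and $-1$ when they do not, and substituting into $3 - (x_ix_j + x_jx_k + x_kx_i)$ yields $0$ or $4$ respectively.

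There is no real obstacle here: this is an elementary finite identity, and the only thing to watch is the symmetry reduction in the second case (or, in the variant, correctly listing the possible values of $x_i + x_j + x_k$). The point of the lemma, as flagged just before its statement, is that the right-hand side is a \emph{quadratic} form in $(x_i, x_j, x_k)$; this is precisely what will let us write the number of three-node motif instances cut by a set $S$ as $\tfrac14 x^T L_M x$ for the indicator vector $x$, and thereby transfer the spectral sweep/Cheeger guarantees of \cref{thm:sweep} to motif conductance.
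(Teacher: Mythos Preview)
Your proof is correct; both the symmetry-reduced case check and the $(x_i+x_j+x_k)^2$ variant are valid and cleanly written. The paper itself states this lemma without proof, treating it as an immediate finite verification, so there is nothing to compare against.
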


%

The next lemma contains the essential result that relates motif cuts in the
original graph $G$ to weighted edge cuts in $G_M$.  In particular, the lemma
shows that the motif cut measure is proportional to the cut on the weighted
graph defined in \cref{eqn:informal_weighting} when there are three anchor
nodes.
\begin{lemma}\label{lem:motif_cut}
Let $G = (V, E)$ be a directed, unweighted graph and let $G_M$ be the weighted
graph for a motif with $\vert \anchorset \vert = 3$.  Then for any $S \subset
V$,
\[
\gmmcut{G}{S, \bar{S}} = \frac{1}{2}\gcut{G_M}{S, \bar{S}}
\]
\end{lemma}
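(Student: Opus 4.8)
The plan is to reuse exactly the indicator-vector machinery the excerpt has staged in \cref{eqn:quad_cut} and \cref{lem:ind3}, applied to the weighted graph $G_M$. Fix $S \subset V$ and let $x \in \{-1,1\}^n$ be its indicator vector, $x_i = 1$ iff $i \in S$. Applying \cref{eqn:quad_cut} to $G_M$, whose (symmetric) Laplacian is $L_M = D_M - \motifweighted$, gives $x^T L_M x = 4 \gcut{G_M}{S, \bar S}$. So the whole lemma reduces to showing $x^T L_M x = 8\,\gmmcut{G}{S, \bar S}$, after which dividing by $8$ and comparing with the previous identity yields the $\tfrac12$ factor.

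To prove that, I would expand $x^T L_M x = x^T D_M x - x^T \motifweighted x$ instance by instance using the definition \cref{eqn:formal_weighting}. Because $x_i^2 = 1$, we have $x^T D_M x = \sum_i (D_M)_{ii} = \sum_{i \neq j} \motifweightedij$, and by \cref{eqn:formal_weighting} each motif instance $\minstance$ contributes exactly $|\anchorset|(|\anchorset|-1) = 6$ to this sum (one unit for each ordered pair of distinct anchor nodes). Similarly $x^T \motifweighted x = \sum_{i \neq j} \motifweightedij\, x_i x_j = \sum_{\minstance \in \mathcal{M}} 2\,(x_{v_1}x_{v_2} + x_{v_2}x_{v_3} + x_{v_3}x_{v_1})$, where $(v_1, v_2, v_3) = \anchornodes$ are the three anchor nodes of the instance and the factor $2$ comes from each unordered anchor pair occurring as two ordered pairs. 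Writing $6 = 2(x_{v_1}^2 + x_{v_2}^2 + x_{v_3}^2)$, this collects to $x^T L_M x = 2 \sum_{\minstance \in \mathcal{M}} \big( x_{v_1}^2 + x_{v_2}^2 + x_{v_3}^2 - x_{v_1}x_{v_2} - x_{v_2}x_{v_3} - x_{v_3}x_{v_1} \big)$.

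Now \cref{lem:ind3} identifies the parenthesized expression for each instance as $4\,\indicator{x_{v_1}, x_{v_2}, x_{v_3}\text{ not all the same}}$, which equals $4$ precisely when the instance's anchor nodes straddle the cut and $0$ otherwise — that is, $4$ times that instance's contribution to $\mmcut{S,\bar S}$ as defined in \cref{eqn:motif_cut}. Summing over $\mathcal{M}$ gives $x^T L_M x = 8\,\gmmcut{G}{S, \bar S}$, and combined with $x^T L_M x = 4\,\gcut{G_M}{S, \bar S}$ we obtain $\gmmcut{G}{S, \bar S} = \tfrac12 \gcut{G_M}{S, \bar S}$.

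There is no conceptual obstacle here; the only thing to watch is the bookkeeping of the factors of two — keeping straight that each unordered anchor pair appears twice when summing over ordered pairs $(i,j)$ in $x^T \motifweighted x$ and in $\sum_i (D_M)_{ii}$, and reconciling the factor $4$ in \cref{lem:ind3} with the factor $4$ in \cref{eqn:quad_cut}. (An equivalent route avoids \cref{lem:ind3}: expand $x^T L_M x$ via $\sum_{i\neq j}\motifweightedij (x_i - x_j)^2$, note that per instance $\sum_{\{a,b\}} (x_{v_a} - x_{v_b})^2$ is $8$ if the three anchors are split and $0$ otherwise, and sum; I would present whichever is cleaner, but the quadratic-form version matches how the paper has ordered its lemmas.)
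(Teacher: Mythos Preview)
Your proposal is correct and follows essentially the same route as the paper: fix the $\{-1,1\}$ indicator vector $x$, decompose $x^T L_M x$ instance by instance into the quadratic of \cref{lem:ind3}, and match it against \cref{eqn:quad_cut}. The only cosmetic differences are that the paper runs the chain of equalities starting from $4\,\mmcut{S,\bar S}$ rather than from $x^T L_M x$, and it invokes \cref{lem:motif_vol} to identify the diagonal term $\sum_{\minstance} (x_{v_1}^2+x_{v_2}^2+x_{v_3}^2) = \mmvol{V} = \tfrac12 x^T D_M x$, whereas you recompute that contribution directly; the bookkeeping of the factors of two that you flag is exactly what the paper is doing as well.
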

\begin{proof}
Let $x \in \{-1, 1\}^{n}$ be an indicator vector of the node set $S$.
\begin{align*}
4 \cdot \mmcut{S, \bar{S}}
&= \sum_{(v, \{i, j, k\}) \in \mathcal{M}} 4\cdot\indicator{x_i, x_j, x_k \text{ not all the same}} & \text{by definition}\\
&= \sum_{(v, \{i, j, k\}) \in \mathcal{M}} \left(x_i^2 + x_j^2 + x_k^2\right) - \left(x_ix_j + x_jx_k + x_kx_i\right) & \text{by \cref{lem:ind3}} \\
&= -\frac{1}{2}x^TW_Mx + \sum_{(v, \{i, j, k\}) \in \mathcal{M}} x_i^2 + x_j^2 + x_k^2  
& \text{by \cref{eqn:formal_weighting}} \\
&= -\frac{1}{2}x^TW_Mx + \mmvol{V} & \text{since $x_i^2 = x_j^2 = x_k^2 = 1$}\\
&= -\frac{1}{2}x^TW_Mx + \frac{1}{2}x^TD_Mx & \text{by \cref{lem:motif_vol}}   \\
&= \frac{1}{2}x^TL_Mx & \text{by definition}\\
&= 2\cdot\gcut{G_M}{S, \bar{S}} & \text{by \cref{eqn:quad_cut}}.
\end{align*}
\end{proof}

We are now ready to prove our main result, namely that motif conductance on the
original graph $G$ is equivalent to (edge) conductance on the weighted graph
$G_M$ when there are three anchor nodes.  The result is a consequence of the
volume and cut relationships provided by \cref{lem:motif_vol,lem:motif_cut}.

\begin{theorem}\label{thm:motif_cond}
Let $G = (V, E)$ be a directed, unweighted graph and let $G_M$ be the weighted
graph corresponding to the motif adjacency matrix for any motif
with $\vert \anchorset \vert = 3$.  Then for any set $S \subset V$,
\[
\gmmcond{G}{S} = \gcond{G_M}{S}
\]
In other words, when there are three anchor nodes, the motif conductance is
equal to the conductance on the weighted graph defined by
\cref{eqn:informal_weighting}.
\end{theorem}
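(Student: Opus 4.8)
The plan is to derive the identity directly from the two structural lemmas already established, \cref{lem:motif_vol} and \cref{lem:motif_cut}, which respectively relate motif volume and motif cut in $G$ to ordinary volume and cut in the weighted graph $G_M$. The only real content is bookkeeping of constant factors, and the key observation is that the factor of $1/2$ appearing in the cut relationship is exactly matched by the factor appearing in the volume relationship when $\vert \anchorset \vert = 3$, so the two cancel in the conductance ratio.

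Concretely, I would start from the definition
\[
\gmmcond{G}{S} = \gmmcut{G}{S, \bar{S}} \,/\, \min\!\left(\gmmvol{G}{S}, \gmmvol{G}{\bar{S}}\right).
\]
First I would apply \cref{lem:motif_cut}, which gives $\gmmcut{G}{S, \bar{S}} = \tfrac{1}{2}\gcut{G_M}{S, \bar{S}}$ since the motif has $\vert \anchorset \vert = 3$ anchor nodes. Next I would apply \cref{lem:motif_vol} with $\vert \anchorset \vert = 3$, so that $\vert \anchorset \vert - 1 = 2$, yielding $\gmmvol{G}{S} = \tfrac{1}{2}\gvol{G_M}{S}$ and likewise $\gmmvol{G}{\bar{S}} = \tfrac{1}{2}\gvol{G_M}{\bar{S}}$. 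Since volumes are nonnegative, the scalar $\tfrac{1}{2}$ can be pulled out of the minimum, so $\min(\gmmvol{G}{S}, \gmmvol{G}{\bar{S}}) = \tfrac{1}{2}\min(\gvol{G_M}{S}, \gvol{G_M}{\bar{S}})$. Substituting both into the ratio, the two factors of $\tfrac{1}{2}$ cancel, leaving exactly $\gcut{G_M}{S, \bar{S}} / \min(\gvol{G_M}{S}, \gvol{G_M}{\bar{S}}) = \gcond{G_M}{S}$, which is the claim.

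There is no substantive obstacle here; this theorem is essentially a corollary of the two preceding lemmas. The only points worth a sentence of care are (i) noting that \cref{lem:motif_vol} is stated for general $2 \le \vert \anchorset \vert \le k$ and we instantiate it at $\vert \anchorset \vert = 3$ so the denominator $\vert \anchorset \vert - 1$ becomes $2$, and (ii) observing that the minimum is taken over nonnegative quantities so the common positive factor distributes out cleanly (implicitly we also need $\min(\gmmvol{G}{S}, \gmmvol{G}{\bar{S}}) > 0$ for the conductance to be defined, but this is the same nondegeneracy assumption already in force for the weighted graph $G_M$). Everything else is a one-line algebraic substitution.
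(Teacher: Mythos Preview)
Your proposal is correct and follows essentially the same approach as the paper: invoke \cref{lem:motif_vol} with $\lvert \anchorset \rvert = 3$ and \cref{lem:motif_cut} to get the common factor of $\tfrac{1}{2}$ in both numerator and denominator, then cancel. The paper's proof is just the one-sentence version of what you wrote.
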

\begin{proof}
When $\vert \anchorset \vert = 3$, the motif cut and motif volume are both equal
to half the motif cut and motif volume measures by
\cref{lem:motif_vol,lem:motif_cut}.
\end{proof}

For any motif with three anchor nodes, conductance on the weighted graph is
equal to the motif conductance.  Thus, we can re-interpret the Cheeger
inequality on the weighted graph in terms of motif conductance.  This leads to
our result.
\begin{theorem}[Motif Cheeger inequality]\label{thm:motif_cheeger}
Let $G$ be an unweighted, directed graph and let $M$ be a motif with two or
three anchor nodes.  Let
\begin{itemize}
\item  $S$ be the output of \cref{alg:motif_fiedler} with input $G$ and $M$,
\item $\lambda_2$ be the second smallest eigenvalue of $\normmotiflap$, and
\item $\phi_M^* = \min_{T \subset V} \mmcond{T}$ be the optimal motif conductance over
all sets of nodes $T$.
\end{itemize}
Then
\begin{enumerate}
\item $\lambda_2/2 \le \phi_M^* \le \sqrt{2\lambda_2}$, and
\item $\mmcond{S} \le 2\sqrt{\phi_M^*}$.
\end{enumerate}
\end{theorem}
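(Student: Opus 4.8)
The plan is to reduce the Motif Cheeger inequality directly to the classical Cheeger machinery by invoking the identification of motif conductance with weighted-graph conductance. First, I would observe that by \cref{thm:motif_cond}, whenever the motif $M$ has exactly three anchor nodes, we have $\gmmcond{G}{S} = \gcond{G_M}{S}$ for \emph{every} set $S \subset V$. In particular, taking the minimum over all $S$ gives $\phi_M^* = \min_{T \subset V}\mmcond{T} = \min_{T \subset V}\gcond{G_M}{T}$, which is exactly the quantity $\phi^*$ from \cref{thm:cheeger} applied to the weighted graph $W = W_M$ and its normalized Laplacian $N = \normmotiflap$. Then the first conclusion, $\lambda_2/2 \le \phi_M^* \le \sqrt{2\lambda_2}$, is immediate from \cref{thm:cheeger} (the ordinary Cheeger inequality for the weighted graph $G_M$), since $\lambda_2$ is by definition the second smallest eigenvalue of $\normmotiflap$.

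For the second conclusion, I would note that \cref{alg:motif_fiedler} is precisely the Fiedler/sweep-cut procedure run on the weighted graph $G_M$: it forms $\normmotiflap$, computes the eigenvector $z$ of its second smallest eigenvalue, orders nodes by $D_M^{-1/2}z$, and returns the prefix set of smallest (weighted, edge) conductance. Hence \cref{thm:sweep} applies verbatim to $G_M$ and yields $\gcond{G_M}{S} \le \sqrt{2\lambda_2} \le 2\sqrt{\phi^*}$ for the output set $S$, where conductance is measured in $G_M$. Translating back through \cref{thm:motif_cond}, $\mmcond{S} = \gcond{G_M}{S} \le 2\sqrt{\phi_M^*}$, which is exactly the claimed bound.

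The only subtlety — and the one place the argument needs care rather than a one-line citation — is extending the statement from $\vert\anchorset\vert = 3$ to the ``two or three anchor nodes'' case stated in the theorem. For two anchor nodes, a motif instance is cut exactly when its two anchors are separated, so a motif instance behaves like a single weighted edge between its anchors; one checks that $\mmcut{S,\bar S} = \gcut{G_M}{S,\bar S}$ and, via \cref{lem:motif_vol} with $\vert\anchorset\vert = 2$, that $\mmvol{G}{S} = \gvol{G_M}{S}$, so $\gmmcond{G}{S} = \gcond{G_M}{S}$ holds again and the same reduction goes through. (Concretely, the $\medge$ case just recovers ordinary spectral clustering.) After handling this, the theorem is genuinely a corollary: the first part is \cref{thm:cheeger} on $G_M$, and the second is \cref{thm:sweep} on $G_M$, both re-expressed in motif-conductance language using the per-set identity $\gmmcond{G}{S} = \gcond{G_M}{S}$.

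I expect no real obstacle here; the substantive work was already done in \cref{lem:motif_vol}, \cref{lem:ind3}, \cref{lem:motif_cut}, and \cref{thm:motif_cond}. The main thing to be careful about is making sure the two-anchor case is stated and dispatched cleanly, and that the hypotheses of \cref{thm:sweep} (a connected weighted graph with no isolated nodes) are either satisfied or handled by the standard connected-components preprocessing remark made earlier in the section.
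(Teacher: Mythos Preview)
Your proposal is correct and follows exactly the same route as the paper: the paper's proof is the one-liner ``The result follows from \cref{thm:motif_cond,thm:cheeger,thm:sweep},'' and you have simply unpacked that citation chain. Your explicit handling of the two-anchor case is in fact more careful than the paper, which states the theorem for two or three anchors but only proves \cref{thm:motif_cond} for $\lvert\anchorset\rvert=3$; your observation that the two-anchor cut identity is immediate (a motif is cut iff its unique pair of anchors is separated) together with \cref{lem:motif_vol} at $\lvert\anchorset\rvert=2$ fills that gap cleanly.
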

\begin{proof}
The result follows from \cref{thm:motif_cond,thm:cheeger,thm:sweep}.
\end{proof}

The first result provides a lower bound on the optimal motif conductance in
terms of the eigenvalue $\lambda_2$.  We use this bound in our analysis of a
food web (see \cref{sec:honc_foodweb}) to show that certain motifs do not
provide good clusters, regardless of the procedure used to find a cluster.  The
second part of the result says that \cref{alg:motif_fiedler} outputs a cluster
that is within a quadratic factor of optimal.  This provides the mathematical
guarantees that our procedure finds a good motif-based cluster in a graph, if
one exists.

\subsection{Motif Cheeger inequality for network motifs with four or more nodes}
\label{sec:fournode}

Analogs of the indicator function in \cref{lem:ind3} for four or more variables
are not quadratic~\cite{ihler1993modeling}.  Subsequently, for motifs with
$\vert \anchorset \vert > 3$, we no longer get the motif Cheeger inequalities guaranteed by
\cref{thm:motif_cheeger}.  That being said, solutions found by
motif-based partitioning approximate a related value of conductance.  We now
provide the details.

We begin with a lemma that shows a functional form for four binary variables
taking values in $\{-1, 1\}$ to not all be equal.  We see that it is quartic,
not quadratic.
\begin{lemma}\label{lem:ind4_1}
Let $x_i, x_j, x_k, x_l \in \{-1, 1\}$.
Then the indicator function on all four elements not being equal is
\begin{align}
& 8 \cdot \indicator{x_i, x_j, x_k, x_l \text{ not all the same}} \\
&= \left(7 - x_ix_j - x_ix_k - x_ix_l - x_jx_k - x_jx_l - x_kx_l - x_ix_jx_kx_l\right) \nonumber.
\end{align}
\end{lemma}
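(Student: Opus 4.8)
The plan is to prove the identity by passing to the complementary event — all four variables being equal — which turns out to be a clean product, and then expanding that product using $x^2 = 1$.

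First I would observe that, since each $x_m^2 = 1$, the condition $x_i = x_j = x_k = x_l$ is equivalent to $x_ix_j = x_ix_k = x_ix_l = 1$. Each factor $1 + x_ix_m$ lies in $\{0, 2\}$ and equals $2$ precisely when $x_m = x_i$, so the product $(1 + x_ix_j)(1 + x_ix_k)(1 + x_ix_l)$ equals $8$ when the four variables agree and $0$ otherwise. Hence $8 \cdot \indicator{x_i, x_j, x_k, x_l \text{ all the same}} = (1 + x_ix_j)(1 + x_ix_k)(1 + x_ix_l)$.

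Next I would expand this product, repeatedly replacing $x_i^2$ by $1$: one finds $(1 + x_ix_j)(1 + x_ix_k) = 1 + x_ix_j + x_ix_k + x_jx_k$, and multiplying by $(1 + x_ix_l)$ and again collapsing $x_i^2 = 1$ gives $1 + x_ix_j + x_ix_k + x_ix_l + x_jx_k + x_jx_l + x_kx_l + x_ix_jx_kx_l$, i.e.\ the constant $1$, all six pairwise products, and the single quartic term. Since "not all the same" is the complement of "all the same," $8 \cdot \indicator{x_i, x_j, x_k, x_l \text{ not all the same}} = 8 - 8\cdot\indicator{x_i, x_j, x_k, x_l \text{ all the same}} = 7 - x_ix_j - x_ix_k - x_ix_l - x_jx_k - x_jx_l - x_kx_l - x_ix_jx_kx_l$, which is exactly the claimed formula. (As a sanity check one can also verify this directly on the three orbits of $\{-1,1\}^4$ under permutations and global sign flip — all four equal, three-and-one, two-and-two — but the factorization argument is shorter and explains where the formula comes from.)

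There is essentially no obstacle here: the only care needed is the bookkeeping in the expansion and the remark that both sides are symmetric under permutations of the indices and under a global sign flip, which is automatic from the product form. The one conceptual point worth flagging — already made in the surrounding text — is that unlike \Cref{lem:ind3} the resulting polynomial has degree $4$ rather than $2$, so it cannot be captured by a quadratic form and hence does not feed directly into the Laplacian machinery of \Cref{thm:motif_cheeger}.
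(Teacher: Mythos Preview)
Your proof is correct. The paper actually states \cref{lem:ind4_1} without proof (just as it does for \cref{lem:ind3}), presumably leaving it to direct verification over the finitely many cases. Your factorization $(1+x_ix_j)(1+x_ix_k)(1+x_ix_l)$ for the complementary indicator is a clean and illuminating way to establish the identity: it not only avoids case-checking but also makes transparent why the expansion yields exactly the constant, the six pairwise products, and the single quartic term, and hence why no quadratic polynomial can represent this indicator.
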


We \emph{almost} have a quadratic form, if not for the quartic term
$x_ix_jx_kx_l$.  However, we could use the following related quadratic form:
\begin{align}
& 6 - x_ix_j - x_ix_k - x_ix_{l} - x_jx_k - x_jx_{l} - x_kx_{l} \nonumber \\
&= \left\{
     \begin{array}{ll}
       0 &  x_i, x_j, x_k, x_l \text{ are all the same} \\
       6 &  \text{exactly three of } x_i, x_j, x_k, x_l \text{ are the same} \\
       8 &  \text{exactly two of } x_i, x_j, x_k, x_l \text{ are $-1$}.
     \end{array}
   \right.\label{eqn:motif4_quadratic}
\end{align}


\definecolor{myblue}{RGB}{166,206,227}
\definecolor{mygreen}{RGB}{27,158,119}
\begin{figure}[tb]
\centering
\phantomsubfigure{fig:motif_cutsA}
\phantomsubfigure{fig:motif_cutsB}
\newcommand{\scalesize}{0.78}
\begin{tabular}{l @{\hspace{4pt}} l @{\hspace{4pt}} l @{\hspace{4pt}} l @{\hspace{4pt}} l}
\multicolumn{5}{l}{{\small$f(x) = x_i^2 + x_j^2 + x_k^2 - x_ix_j - x_jx_k - x_kx_i$}} \\
 \scalebox{\scalesize}{\begin{tikzpicture}[baseline=(current bounding box.center)] \input{CH2-TKZ-tri1} \end{tikzpicture}}
&\scalebox{\scalesize}{\begin{tikzpicture}[baseline=(current bounding box.center)] \input{CH2-TKZ-tri2} \end{tikzpicture}}
&\scalebox{\scalesize}{\begin{tikzpicture}[baseline=(current bounding box.center)] \input{CH2-TKZ-tri3} \end{tikzpicture}}
&\scalebox{\scalesize}{\begin{tikzpicture}[baseline=(current bounding box.center)] \input{CH2-TKZ-tri4} \end{tikzpicture}} \\ \\
\multicolumn{5}{l}{{\small $f(x) = \frac{3}{2}(x_i^2 + x_j^2 + x_k^2 + x_l^2) - x_ix_j - x_ix_k - x_ix_{l} - x_jx_k - x_jx_{l} - x_kx_{l}$}} \\
 \scalebox{\scalesize}{\begin{tikzpicture}[baseline=(current bounding box.center)] \input{CH2-TKZ-quad1} \end{tikzpicture}}
&\scalebox{\scalesize}{\begin{tikzpicture}[baseline=(current bounding box.center)] \input{CH2-TKZ-quad2} \end{tikzpicture}} 
&\scalebox{\scalesize}{\begin{tikzpicture}[baseline=(current bounding box.center)] \input{CH2-TKZ-quad3} \end{tikzpicture}} 
&\scalebox{\scalesize}{\begin{tikzpicture}[baseline=(current bounding box.center)] \input{CH2-TKZ-quad4} \end{tikzpicture}} 
&\scalebox{\scalesize}{\begin{tikzpicture}[baseline=(current bounding box.center)] \input{CH2-TKZ-quad5} \end{tikzpicture}} 
\end{tabular}
\dualcaption{Quadratic forms on indicator functions for set assignment}{%
The \textcolor{myblue}{blue} nodes have assignment to set $S$ and the \textcolor{mygreen}{green}
nodes have assignment to
set $\bar{S}$.  The quadratic function gives the penalty for cutting that motif.
{\bf Top:}
Illustration of \cref{eqn:ind3}.  The quadratic form is proportional to the
indicator on whether or not the motif is cut.
{\bf Bottom:}
Illustration of \cref{eqn:motif4_quadratic}.  The quadratic form is equal to
zero when all nodes are in the same set.  However, the form penalizes 2/2 splits
more than 3/1 splits.
}
\label{fig:quad_form}
\end{figure}

The quadratic still takes value $0$ if all four entries are the same, and takes
a non-zero value otherwise.  However, the quadratic takes a larger value if
exactly two of the four entries takes the value $-1$ (see \Cref{fig:quad_form}).
From this, we can provide an analogous statement to
\cref{lem:motif_cut} for motifs with $\vert \anchorset \vert = 4$.

\begin{lemma}\label{lem:motif_cut4}
Let $G = (V, E)$ be a directed, unweighted graph and let $G_M$ be the weighted
graph for a motif $M$ with $\vert \anchorset \vert = 4$.  Then for any $S \subset V$,
\[
\gmmcut{G}{S, \bar{S}} =
\frac{1}{3}\gcut{G_M}{S, \bar{S}} - 
\sum_{(v, \{i, j, k, l\}) \in \mathcal{M}}
\frac{1}{3} \cdot \indicator{\textnormal{exactly two of $i, j, k, l$ in $S$}}.
\]
\end{lemma}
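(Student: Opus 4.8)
The plan is to mimic the proof of \cref{lem:motif_cut}, replacing the three-variable identity of \cref{lem:ind3} with the four-variable quadratic form from \cref{eqn:motif4_quadratic}. Fix $S \subset V$ and let $x \in \{-1,1\}^n$ be its indicator vector. The starting point is to sum the quadratic form $g(x_i,x_j,x_k,x_l) = \frac{3}{2}(x_i^2 + x_j^2 + x_k^2 + x_l^2) - x_ix_j - x_ix_k - x_ix_l - x_jx_k - x_jx_l - x_kx_l$ over all motif instances $(v,\{i,j,k,l\}) \in \mathcal{M}$; note $g$ is exactly the bottom form in \cref{fig:quad_form}, and the coefficient $3/2$ is chosen so that the diagonal terms assemble correctly below.

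First I would expand the sum of $g$ over all instances and identify the pieces. The cross terms $-x_ix_j - \cdots - x_kx_l$ contribute $-\frac{1}{2}x^TW_Mx$: each unordered pair from an anchor set $\{i,j,k,l\}$ increments $(W_M)_{ij}$ by one (by \cref{eqn:formal_weighting}), there are $\binom{4}{2}=6$ such pairs, and $x^TW_Mx$ double-counts each pair, so the bookkeeping gives the factor $-\tfrac12$. The diagonal terms $\frac{3}{2}\sum (x_i^2+x_j^2+x_k^2+x_l^2) = \frac{3}{2}\sum_{(v,\{i,j,k,l\})}4 = 6\lvert\mathcal{M}\rvert$ since each $x^2 = 1$; but I want to express this via the degree matrix. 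By \cref{lem:motif_vol} applied with $\lvert\anchorset\rvert = 4$, we have $x^TD_Mx = \sum_i (D_M)_{ii} = \gvol{G_M}{V} = 3\,\gmmvol{G}{V} = 3\sum_{(v,\{i,j,k,l\})}4 \cdot \tfrac{1}{1}$—more precisely each anchor node is counted $\lvert\anchorset\rvert - 1 = 3$ times, so $x^TD_Mx = 3 \cdot 4\lvert\mathcal{M}\rvert = 12\lvert\mathcal{M}\rvert$, and hence the diagonal contribution $6\lvert\mathcal{M}\rvert = \tfrac12 x^TD_Mx$. Combining, $\sum_{(v,\{i,j,k,l\}) \in \mathcal{M}} g = \tfrac12 x^TD_Mx - \tfrac12 x^TW_Mx = \tfrac12 x^TL_Mx = 2\,\gcut{G_M}{S,\bar S}$ by \cref{eqn:quad_cut}.

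Next I would invoke the case analysis in \cref{eqn:motif4_quadratic}: note $g(x_i,x_j,x_k,x_l)$ equals $\tfrac32\cdot 4$ minus the quantity in \cref{eqn:motif4_quadratic} minus $6$... let me instead use it directly—$g$ takes value $0$ when all four are equal, $6$ when exactly three agree (a $3/1$ split), and $8$ when exactly two are $-1$ (a $2/2$ split). So for each instance, $g = 6\cdot\indicator{\text{instance is cut}} + 2\cdot\indicator{\text{exactly two of }i,j,k,l\text{ in }S}$, because the $2/2$ split gets an extra $2$ over the $3/1$ split, and both cut cases include the base $6$. Summing, $\sum g = 6\,\gmmcut{G}{S,\bar S} + 2\sum_{(v,\{i,j,k,l\})\in\mathcal{M}}\indicator{\text{exactly two of }i,j,k,l\text{ in }S}$. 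Setting the two evaluations of $\sum g$ equal gives $2\,\gcut{G_M}{S,\bar S} = 6\,\gmmcut{G}{S,\bar S} + 2\sum\indicator{\cdots}$, and dividing by $6$ yields the claimed identity $\gmmcut{G}{S,\bar S} = \tfrac13\gcut{G_M}{S,\bar S} - \tfrac13\sum_{(v,\{i,j,k,l\})\in\mathcal{M}}\indicator{\text{exactly two of }i,j,k,l\text{ in }S}$.

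The main obstacle is getting the constants right: the choice of the leading coefficient $3/2$ on the diagonal of $g$, the factor of $3 = \lvert\anchorset\rvert - 1$ from \cref{lem:motif_vol}, the $6$ unordered pairs, and the $\tfrac12$ double-counting in the quadratic forms all have to conspire so that the diagonal part of $\sum g$ collapses exactly to $\tfrac12 x^TD_Mx$ and the final division produces $1/3$. I would double-check \cref{eqn:motif4_quadratic} itself—verifying the three stated values by plugging in representative sign patterns—since the whole argument hinges on the gap between the $3/1$ value ($6$) and the $2/2$ value ($8$) being exactly $2$, which is what produces the correction term. The rest is the same template as \cref{lem:motif_cut}, just with quartets in place of triples.
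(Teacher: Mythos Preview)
Your proposal is correct and follows essentially the same approach as the paper: sum the quadratic form $g = \tfrac{3}{2}\sum x_i^2 - \sum_{\text{pairs}} x_ix_j$ over all motif instances, identify it on one side with $\tfrac{1}{2}x^TL_Mx = 2\,\gcut{G_M}{S,\bar S}$ via \cref{lem:motif_vol} and \cref{eqn:formal_weighting}, and on the other side with $6\,\gmmcut{G}{S,\bar S} + 2\sum\indicator{\text{2/2 split}}$ via the case values $0,6,8$ from \cref{eqn:motif4_quadratic}, then divide through by $6$. The paper's write-up is more compressed but the chain of equalities and the constants match yours exactly.
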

\begin{proof}
Let $x \in \{-1, 1\}^n$ be an indicator vector of the node set $S$.
\begin{align*}
& 6 \cdot \gmmcut{G}{S, \bar{S}} +
\sum_{(v, \{i, j, k, l\}) \in \mathcal{M}}
2 \cdot \indicator{\text{exactly two of } i, j, k, l \text{ in } S} \\
&= \sum_{(v, \{i, j, k, l\}) \in \mathcal{M}} 6 - x_ix_j - x_ix_k - x_ix_l - x_jx_k - x_jx_l - x_kx_l \\
&= \sum_{(v, \{i, j, k, l\}) \in \mathcal{M}} \frac{3}{2}
\left(x_i^2 + x_j^2 + x_k^2 + x_l^2\right) -
\left(x_ix_j + x_ix_k + x_ix_l + x_jx_k + x_jx_l + x_kx_l\right) \\
&= \frac{1}{2}x^TD_Mx  - \frac{1}{2}x^TW_Mx \\
&= \frac{1}{2}x^TL_Mx \\
&= 2\cdot\gcut{G_M}{S, \bar{S}}.
\end{align*}
The first equality follows from \cref{eqn:motif_cut,eqn:motif4_quadratic}.  The
third equality follows from \cref{lem:motif_vol}.  The fourth equality follows
from the definition of $L_M$.  The fifth equality follows from
\cref{eqn:quad_cut}.
\end{proof}

With four anchor nodes, the motif cut in $G$ is slightly different than the
weighted cut in the weighted graph $G_M$. However, by
\cref{lem:motif_vol},
\[
\gmmvol{G}{S} = \frac{1}{3}\gvol{G_M}{S},
\]
so the motif volume in $G$ is still proportional to the weighted volume
in $G_M$.  We use this to derive the following result.

\begin{theorem}\label{thm:motif_cond4}
Let $G = (V, E)$ be a directed, unweighted graph and let $W_M$ be the weighted
adjacency matrix for any motif $M$ with $\vert \anchorset \vert = 4$.  Then for any
$S \subset V$,
\[
\gmmcond{G}{S} =
\gcond{G_M}{S} -
\frac{
\sum_{(v, \{i, j, k, l\}) \in \mathcal{M}} \indicator{\textnormal{exactly two of } i, j, k, l \textnormal{ in } S}
}{
\gvol{G_M}{S}
}
\]
In other words, when there are four anchor nodes, the weighting scheme in
\cref{eqn:informal_weighting} models the exact conductance with an
additional penalty for splitting the four anchor nodes into two groups of two.
\end{theorem}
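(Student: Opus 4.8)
The plan is to derive \Cref{thm:motif_cond4} as an immediate corollary of the volume identity of \cref{lem:motif_vol} specialized to $\vert\anchorset\vert = 4$ together with the cut identity of \cref{lem:motif_cut4}, in precisely the same way that \cref{thm:motif_cond} was obtained from \cref{lem:motif_vol,lem:motif_cut}. First I would write out the definition $\gmmcond{G}{S} = \gmmcut{G}{S, \bar{S}} / \min(\gmmvol{G}{S}, \gmmvol{G}{\bar{S}})$ and then substitute the two lemmas into the numerator and the denominator separately.

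For the denominator, \cref{lem:motif_vol} with $\vert\anchorset\vert = 4$ gives $\gmmvol{G}{T} = \tfrac{1}{3}\gvol{G_M}{T}$ for every $T \subset V$, hence $\min(\gmmvol{G}{S}, \gmmvol{G}{\bar{S}}) = \tfrac{1}{3}\min(\gvol{G_M}{S}, \gvol{G_M}{\bar{S}})$. For the numerator, \cref{lem:motif_cut4} replaces $\gmmcut{G}{S,\bar{S}}$ by $\tfrac{1}{3}\gcut{G_M}{S,\bar{S}}$ minus $\tfrac{1}{3}\sum_{(v,\{i,j,k,l\}) \in \mathcal{M}} \indicator{\text{exactly two of } i,j,k,l \text{ in } S}$. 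Plugging both in, the common factor $\tfrac{1}{3}$ cancels between numerator and denominator, leaving $\bigl(\gcut{G_M}{S,\bar{S}} - \sum_{(v,\{i,j,k,l\}) \in \mathcal{M}} \indicator{\text{exactly two of } i,j,k,l \text{ in } S}\bigr)$ over $\min(\gvol{G_M}{S}, \gvol{G_M}{\bar{S}})$. Splitting this single quotient into two summands yields $\gcond{G_M}{S}$ from the cut term and the advertised penalty term from the indicator sum, which finishes the argument.

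The only place calling for a word of care is that the statement writes $\gvol{G_M}{S}$ rather than the minimum in the denominator of the penalty term; this is harmless because $\gmmcond{G}{\cdot}$, $\gcond{G_M}{\cdot}$, and the indicator $\indicator{\text{exactly two of } i,j,k,l \text{ in } S}$ are all invariant under $S \leftrightarrow \bar{S}$, so the whole identity is symmetric and one may assume without loss of generality that $\gvol{G_M}{S} \le \gvol{G_M}{\bar{S}}$, in which case the minimum equals $\gvol{G_M}{S}$ and the chain of substitutions produces exactly the displayed formula. I do not expect any genuine obstacle here: the quadratic-form bookkeeping --- in particular the quartic term $x_ix_jx_kx_l$ of \cref{lem:ind4_1} and the resulting overpenalization of $2/2$ splits recorded in \cref{eqn:motif4_quadratic} --- has already been absorbed into \cref{lem:motif_cut4}, so the remaining work is purely the arithmetic of combining two ratios.
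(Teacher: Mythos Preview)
Your proposal is correct and matches the paper's own proof, which simply says ``The result follows from \cref{lem:motif_vol,lem:motif_cut4}.'' You have in fact been more careful than the paper: you explicitly address the discrepancy between the displayed $\gvol{G_M}{S}$ and the $\min$ appearing in the definition of conductance via the $S\leftrightarrow\bar S$ symmetry, a point the paper leaves implicit.
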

\begin{proof}
The result follows from \cref{lem:motif_vol,lem:motif_cut4}.
\end{proof}

To summarize, we still get a motif Cheeger inequality from the weighted graph, but it
is in terms of a penalized version of the motif conductance $\gmmcond{G}{S}$.
However, the penalty makes sense---if the group of four nodes is ``more split''
(2 and 2 as opposed to 3 and 1), the penalty is larger.  When $\vert \anchorset
\vert > 4$, we can derive similar penalized approximations to $\gmmcond{G}{S}$.
Developing approximation algorithms for the minimization of motif conductance
with motifs consisting of four or more anchor nodes is an open question.  Preliminary
progress on this problem is presented in \cref{sec:nbrhood_cond}.  There, we show
that if the network has certain measurable clustering structure, then
we can find sets with low motif conductance for clique motifs.  However, we still desire
general theory applicable to any constant-size motif.

\subsection{Analysis of computational complexity}
\label{sec:honc_basic_complexity}

We now analyze the computational complexity of \cref{alg:motif_fiedler}.
Overall, the complexity of the algorithm is governed by:
\begin{compactenum}
\item forming the motif adjacency matrix $W_M$
\item computing an eigenvector of $\normmotiflap$
\item the sweep procedure
\end{compactenum}

We address these in reverse order.  Let $m$ and $n$ denote the number of edges
in the graph.  For the sweep cut, it takes $O(n\log n)$ to sort the indices
given the eigenvector using a standard sorting algorithm such as merge sort.
Computing motif conductance for each set $S_r$ in the sweep also takes linear
time.  In practice, the sweep cut step takes a small fraction of the total
running time of the algorithm.  The time to compute an eigenvector of
$\normmotiflap$ is not understood as well.  In a theoretical sense, there are a
number of ``fast Laplacian
solvers'' for solving the system of equations $N_My = x$ in nearly linear
time~\cite{spielman2004nearly,koutis2011nearly,kelner2013simple,kyng2016approximate}.
With such a solver, we can run a shifted inverse power method to compute the
eigenvector.  The recent algorithm by \citet{kyng2016approximate} in this space
is specifically designed to be practical and implementable.

For the remainder of the analysis, we consider the issue of the
time to compute $W_M$.  We assume that the size of the motif is
constant.  The time to compute $W_M$ is bounded by the time to find all
instances of the motif in the graph, assuming we can update an edge weight in
$O(1)$ time (for each instance of the motif, we increment the edge weight of all
pairs of nodes in the motif by 1).  Naively, for a motif on $k$ nodes, we can
compute $W_M$ in $O(n^k)$ time by checking each $k$-tuple of nodes.
Furthermore, there are cases where there are $O(n^k)$ motif instances in the
graph, e.g., there are $O(n^3)$ triangles in a complete graph.  However, since
most real-world networks are sparse, we instead focus on the complexity of
algorithms in terms of the number of edges and the maximum degree in the graph.
For this case, there are several efficient practical algorithms for real
networks with available software%
~\cite{demeyer2013index,houbraken2014index,wernicke2006efficient,wernicke2006fanmod,aberger2016emptyheaded}.

Here we will consider three classes of motifs:
\begin{enumerate}
\item triangles and more generally $k$-cliques,
\item wedges (connected, non-triangle three-node motifs), and
\item general four-node motifs
\end{enumerate}
\Citet{latapy2008main} analyzed a number of algorithms for listing all triangles in an undirected network, including an algorithm that has computational complexity
$O(m^{1.5})$.  For a directed graph $G$, we can use the following algorithm:
\begin{enumerate}
\item form a new graph $G_{\textnormal{undir}}$ by
removing the direction from all edges in $G$
\item find all triangles in $G_{\textnormal{undir}}$
\item for every triangle in $G_{\textnormal{undir}}$,
check which directed triangle motif it is in $G$.
\end{enumerate}
The first step takes linear time and the third step is linear assuming we can
determine the existence of an edge in $O(1)$ time.  Thus, the same
$O(m^{1.5})$ complexity holds for directed networks.  This analysis holds
regardless of the structure of the network.  However, additional properties of
the network can lead to improved algorithms.  For example, in networks with a
power law degree sequence with exponent greater than $7/2$, Berry et al.\
provide a randomized algorithm with expected running time
$O(m)$~\cite{berry2014why}.  In the case of a bounded degree graph, enumerating
over all nodes and checking all pairs of neighbors takes time $O(nd_{\max}^2)$,
where $d_{\max}$ is the maximum degree in the graph.  We note that with
triangular motifs, the number of non-zeros in $W_M$ is less than the number of
non-zeros in the original adjacency matrix.  Thus, we do not have to worry about
additional storage requirements.
\Citet{chiba1985arboricity} present an algorithm for $k$-clique enumeration with
complexity dependent on the arboricity of the graph.  Specifically, their algorithm
enumerates all $k$-cliques in $O(ka^{k-2}m)$ time, where $a$ is the
arboricity of the graph.  The arboricity of any connected graph is bounded by
$O(m^{1/2})$, so this algorithm runs in time $O(m^{1.5})$ for triangles.
Finally, note that the nodes of any connected component of the motif adjacency
matrix for $k$-clique motifs is a $k$-truss in the original network.
Thus, in practice, we can use pruning techniques for $k$-trusses as a
pre-computation to (possibly) reduce the size of the input
graph~\cite{cohen2008trusses}.

Next, we consider wedges (open triads).  We can list all wedges by looking at
every pair of neighbors of every node.  This algorithm has $O(nd_{\max}^2)$
computational complexity, where $n$ is the number of nodes and $d_{\max}$ is
again the maximum degree in the graph (a more precise bound is $O(\sum_{j}
d^2_j)$, where $d_j$ is the degree of node $j$.)  If the graph is sparse, the
motif adjacency matrix will have more non-zeros than the original adjacency
matrix, so additional storage is required.  Specifically, there is fill-in for
all two-hop neighbors, so the motif adjacency matrix has $O(\sum_{j}d^2_j)$
non-zeros.  This is impractical for large real-world networks but manageable for
modestly sized networks.

\Citet{marcus2010efficient} present an algorithm for listing all four-node motifs in an
undirected graph in $O(m^2)$ time.  We can employ the same edge direction check
as for triangles to extend this result to directed
graphs. \Citet{chiba1985arboricity} develop an algorithm for finding a
representation of all quadrangles (motif on four nodes that contains a four-node
cycle as a subgraph) in $O(am)$ time and $O(m)$ space, where $a$ is the
arboricity of the graph.

Finally, we note that the computation of $W_M$ and the computation of the
eigenvector are suitable for parallel computation.  There are already
distributed algorithms for triangle enumeration~\cite{cohen2009graph}, and the
parallel computation of eigenvectors of a sparse matrix is a classical problem
in scientific computing~\cite{maschhoff1996p_arpack}.

\subsection{Methods for simultaneously finding multiple clusters}
\label{sec:honc_multiple_clusters}

For clustering a network into $k > 2$ clusters based on motifs, we could
recursively cut the graph using the sweep procedure with some stopping
criterion~\cite{boley1998principal,kannan2004clusterings}.  For example, we
could continue to cut the largest remaining cluster until the graph is
partitioned into some pre-specified number of clusters.  We refer to this method
as recursive bi-partitioning.

In addition, there are well-known spectral clustering approaches that compute
multiple eigenvectors of the (normalized) Laplacian, use the eigenvectors to
embed the nodes into Euclidean space, and run a point cloud clustering algorithm
(typically $k$-means) on the embedded nodes~\cite{von2007tutorial}.  We can
adapt these methods for motif-based clustering by simply running the algorithms
on the weighted adjacency matrix.  For our work, we use the following adaptation
of the method of \citet{ng2001spectral}.

\begin{algorithm}[H]\algoptions
  \KwIn{Directed, unweighted graph $G$, motif $M$, number of clusters $k$}
  \KwOut{$k$ disjoint motif-based clusters}
   $(W_M)_{ij} \leftarrow \text{number of instances of $M$ that contain nodes $i$ and $j$}.$\;
   $D_M \leftarrow \diag{W_M e}$\;
   $z_1, \ldots, z_k \leftarrow $ eigenvectors of $k$ smallest eigenvalues for $\normmotiflap = I - D_M^{-1/2}W_MD_M^{-1/2}$\;
   $Y_{ij} \leftarrow z_{ij} / \sqrt{\sum_{j=1}^{k} z_{ij}^2}$ \quad \mycomment{row normalize} \;
   $E_{i} \leftarrow Y_{i,:}$ \quad \mycomment{Embed node $i$ into $\mathbb{R}^k$}\;
   \mycomment{Run $k$-means clustering on the point cloud $\{E_i\}$}\;
  \dualcaption{Motif-based clustering algorithm for finding several clusters}{}
  \label{alg:motif_ngetal}
\end{algorithm}

This method does not have the same Cheeger-like guarantee on quality.  However,
recent theory shows a cluster quality guarantee when $k$-means is replaced
with a different clustering algorithm~\cite{lee2014multiway}.\footnote{The
bounds for multi-way clustering are actually called \emph{higher-order Cheeger
inequalities}, where ``higher-order'' means more than two clusters.  We emphasize that we are using the term \emph{higher-order}
clustering to mean that our clustering objective is based on higher-order
structures.}  We still use $k$-means for its simplicity and empirical success.
Due to our theoretical results equating motif cuts and motif volumes to edge
cuts and edge volumes in the weighted graph
(\cref{lem:motif_vol,lem:motif_cut}), we can automatically apply the newly
developed theory of \citet{lee2014multiway} to the case of motifs with three anchor nodes.

\subsection{Extensions for multiple motifs, weighted motifs, and weighted, signed, and colored networks}
\label{sec:honc_extensions}

We now discuss some easy extensions of the method such as looking for clusters
based on several motifs or handling cases where the graph carries additional
information such as edge weights or node colors.

\xhdr{Simultaneously clustering based on several motifs}
All of our results carry through when considering several motifs simultaneously.
In particular, suppose we are interested in clustering based on $q$ different
motifs $M_1, \ldots, M_q$.  Further suppose that we want to weight the impact of
some motifs more than other motifs.  Let $W_{M_j}$ be the weighted adjacency
matrix for motif $M_j$ and let $\alpha_j \ge 0$ be the weight of motif $M_j$, $j
= 1, \ldots, q$.  Then we can form the weighted adjacency matrix
\begin{equation}
W_M = \sum_{j=1}^{q} \alpha_j W_{M_j}.
\end{equation}
(We will use this approach in \cref{sec:honc_yeast} when clustering based on the
four ``coherent feedforward loops'' in the \emph{S.~cerevisiae} transcriptional regulation network.)

Now, the cut and volume measures are simply weighted sums by linearity.  Suppose
that the $M_j$ all have three anchor nodes and let $G_M$ be the weighted graph
corresponding to $W_M$.  Then
\begin{align}
\gcut{G_M}{S, \bar{S}} &= \sum_{j=1}^{q}\alpha_j\mcut{M_j}{S, \bar{S}}, \\
\gvol{G_M}{S} &= \sum_{j=1}^{q}\alpha_j\mvol{M_j}{S},
\end{align}
and \cref{thm:motif_cheeger} applies to a weighted motif conductance
equal to
\[
\frac{
\sum_{j=1}^{q}\alpha_j\mcut{M_j}{S, \bar{S}}
}{
\min\left(
\sum_{j=1}^{q}\alpha_j\mvol{M_j}{S},
\sum_{j=1}^{q}\alpha_j\mvol{M_j}{\bar{S}}
\right)
}.
\]

\xhdr{Weighted motifs and weighted graphs}
We can also generalize the notions of motif cut and motif volume
for \emph{weighted motifs}, i.e., each motif instance $\minstance$ has an
associated nonnegative weight $\omega_{\minstance}$.  Our cut and volume metrics
are then
\begin{align}
\mmcut{S, \bar{S}} &=
\sum_{\minstance \in \mathcal{M}}
\omega_{\minstance}\indicator{\exists\; i, j  \in \anchornodes \;\mid\; i \in S, j \in \bar{S}},  \\
\mmvol{S} &=
\sum_{\minstance \in \mathcal{M}}
\omega_{\minstance}
\sum_{i \in \anchornodes} \indicator{i \in S}.
\end{align}
Subsequently, we adjust the motif adjacency matrix as follows:
\begin{equation}
\motifweightedij = \sum_{\minstance \in \mathcal{M}}
\omega_{\minstance} \indicator{\{i, j\} \subset \chi_{\anchorset}(v)}
\end{equation}

For weighted networks (i.e., the original network is weighted and we want to
cluster based on motifs), the algorithm user needs to decide how to weight a
motif instance given the weights of the edges in the instance itself.
Possibilities include:
\begin{enumerate}
\item the maximum or minimum edge weight;
\item the arithmetic or geometric mean of the edge weights; or
\item the product of the edge weights.
\end{enumerate}
The weighting scheme would depend on the particular applications and motivations
of the user.  The important idea is that the motif-based clustering scheme makes
it clear that the weights must be specified \emph{for each motif instance}, and
there is no canonical weighting scheme.  We note that the general problem of
constructing weights for higher-order structures in a weighted graph has been
studied in generalizations of the clustering coefficient to weighted
networks~\cite{opsahl2009clustering}.

\xhdr{Signed and colored networks}
Our results easily generalize for signed networks.  We only have to generalize
\cref{eqn:motif_set} by allowing the adjacency matrix $B$ to be signed.
This allows us to seamlessly analyze the signed \emph{S.~cerevisiae}
transcriptional regulation network in \cref{sec:honc_yeast}.
Extending the method for motifs where the edges or nodes are ``colored'' or
``labeled'' is similar.  If the edges are colored, then we again just allow the
adjacency matrix $B$ to capture this information, where the entries in the
matrix are indices for edge colors.  If the nodes in the motif are
colored, we only count motif instances with the specified pattern.

\section{Case studies}

We now use motif-based clustering to analyze several real-world networks.  Our
main goal is to show that motif-based clusters find more meaningful and markedly
different structures in many real-world networks compared to edge-based
clusters.  To this end, we first discuss several edge-based clustering
algorithms that will be used for comparison.

\subsection{Alternative clustering algorithms for evaluation}

For our experiments, we compare our motif-based spectral custering to the
following methods:
\begin{itemize}
\item Standard, edge-based spectral clustering, which is a special
case of motif-based clustering.  In particular, the motifs
\begin{equation}
B_1 = \begin{bmatrix} 0 & 1 \\ 1 & 0 \end{bmatrix},\;
B_2 = \begin{bmatrix} 0 & 1 \\ 0 & 0 \end{bmatrix},\;
\anchorset = \{1, 2\}
\end{equation}
correspond to removing directionality from a directed graph.  We refer to the
union of these two motifs as $\medge$ (there is a weight of 1 between nodes
$i$ and $j$ if there is some edge connecting $i$ and $j$).
\item %
Infomap, an information-theoretic approach designed to optimally compress network dynamics~\cite{rosvall2008maps}.\footnote{Software
for Infomap was downloaded from \url{http://mapequation.org/code.html}.
We run the ``directed'' algorithm for directed links when the network under
consideration is directed.  For all other parameters, we use the default values.}
\item %
The Louvain method, which is a hierarchical greedy algorithm for
modularity maximization~\cite{blondel2008fast}.\footnote{Software for the Louvain method was
downloaded from \url{https://perso.uclouvain.be/vincent.blondel/research/louvain.html}.
We use the ``oriented'' version of the Louvain method for directed graphs.}
\end{itemize}

Infomap and the Louvain method are take as input the graph and produce
as output a set of labels for the nodes in the graph.  In contrast to the
spectral methods, we do not have control over
the number of clusters.  Also, only the spectral methods provide embeddings of
the nodes into Euclidean space, which is useful for visualization.  Thus, for
our analysis of the transportation reachability network in
\cref{sec:honc_airports}, we only compare spectral methods.

\subsection{Comparing motif conductance and edge conductance}
\label{sec:comparing_cond}

\begin{figure}[t]
  \centering
  \phantomsubfigure{fig:conductancesA}
  \phantomsubfigure{fig:conductancesB}  
  \includegraphics[width=\columnwidth]{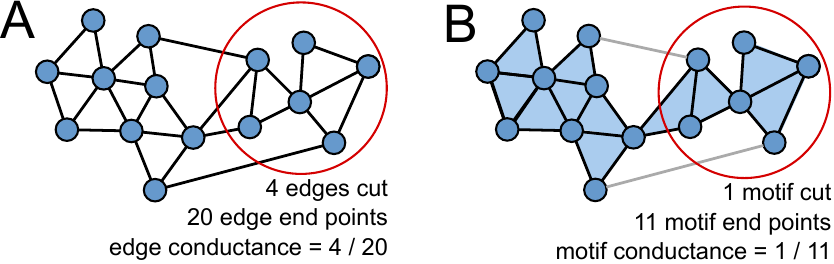}
  \dualcaption{Edge conductance and motif conductance}{The edge
    conductance (A) and the motif conductance (B) is different for the same set
    of nodes in the same graph, where the motif $M$ is the triangle.  Our
    methods finds clusters of nodes based on the motif conductance, where the
    user can decide which motif $M$ to use for the clustering.  Comparing edge
    and motif conductance is meaningful because of the probabilistic interpretation of conductance.}
   \label{fig:conductances}
\end{figure}

When analyzing data with motif-based clustering, we often compare values of
motif conductance to edge conductance (see \cref{fig:conductances}).  Although
these two objective functions measure different (but related) quantities, we
argue that comparing them is meaningful because both have a comparable
probabilistic interpretation.  Let $W$ be a connected, non-bipartite, and
(possibly weighted) undirected graph, and let $\{Z_t\}$ be a stochastic process
with transition probabilities given by the random walk transition matrix $P = WD^{-1}$.
Then it is well-known that
\begin{equation}\label{eq:escape}
\cond{S} = \max\{\prob{Z_1 \in \bar{S} \given Z_0 \in S}, \prob{Z_1 \in S \given Z_0 \in \bar{S}}\},
\end{equation}
where the initial state $Z_0$ is chosen randomly from the stationary distribution of
the random walk (i.e., from the unique vector $\pi$ satisfying $P\pi = \pi$)~\cite{meila2001random}.

When $W$ is a motif-adjacency matrix for a simple motif, then the
random walk is equivalent to the following stochastic process $\{Y_t\}$.
\begin{enumerate}
\item when $Y_{t} = j$, choose a motif instance containing node $j$ as an anchor node uniformly at random, and
\item transition to $Y_{t+1} = i$, where $i$ is an anchor node selected uniformly at random from
the nodes in the motif instance excluding $j$.
\end{enumerate}
To see this, let $r$ be the number of anchor nodes in the motif.  Then,
\begin{align*}
& \prob{\text{transition $j \to i$}} \\
&= P_{ij} \\
&= \frac{W_{ij}}{\sum_{k} W_{kj}} \\
&= \frac{\text{\# motifs containing $i$ and $j$ as anchors}}{(r - 1) \cdot (\text{\# motifs containing $j$ as an anchor})} \qquad \text{by \cref{lem:motif_vol}} \\
&= \frac{1}{r - 1}\prob{\text{random motif that contains $j$ as an anchor also contains $i$ as an anchor}}
\end{align*}
Consequently,
\begin{equation}\label{eq:escape2}
\mmcond{S} = \max\{\prob{Y_1 \in \bar{S} \given Y_0 \in S}, \prob{Y_1 \in S \given Y_0 \in \bar{S}}\},
\end{equation}
where $Y_0$ is chosen uniformly at random from the stationary distribution
(when there are $2$ or $3$ anchor nodes).
\Cref{eq:escape2} is comparable between motifs (including the edge motif).
Here, small motif conductance of a set $S$ just means that there is a stochastic
process that transitions between nodes based on motifs and that this process
tends to stay contained in $S$ or $\bar{S}$.

\subsection{Motif $M_{6}$ in the Florida Bay food web}
\label{sec:honc_foodweb}

We now apply the higher-order clustering framework on the Florida Bay
food
web~\cite{ulanowicz1998network}.\footnote{\url{http://vlado.fmf.uni-lj.si/pub/networks/data/bio/foodweb/Florida.paj}}
In this network, the nodes are compartments (roughly, organisms and species) and
the edges represent directed carbon exchange---an edge $(i, j$) means that
carbon flows from $i$ to $j$.  Often, this means that species $j$ eats species
$i$.  In this domain, motifs model energy flow patterns between several species.

In this case study, we use the framework to identify higher-order modular
organization of the network.  We focus on three motifs: $M_{5}$ corresponds to a
hierarchical flow of energy where species $i$ and $j$ are energy sources (prey)
for species $k$, and $i$ is also an energy source for $j$; $M_{6}$ models two
species that prey on each other and then compete to feed on a common third
species; and $M_{8}$ describes a single species serving as an energy source for
two non-interacting species.  Prevalence of motif $M_{5}$ has been found in food
webs~\cite{bascompte2005interaction,bascompte2009disentangling}, and motif
$M_{6}$ is predicted by a niche model~\cite{stouffer2007evidence}.

We first briefly discuss the connectivity of the motif adjacency matrix with
respect to these motifs, as we will perform analysis on the largest connected
component of these graphs.  The original network is weakly connected with 128
nodes and 2,106 edges.  The largest connected component of the motif adjacency
matrix for motif $M_{5}$ contains 127 of the 128 nodes (the compartment of
``roots'' becomes isolated).  The two largest connected components of the motif
adjacency matrix for motif $M_{6}$ contain 12 and 50 nodes
(see \Cref{tab:foodweb_m6_components}; we will also use this connectivity
information later in our analysis), and the remaining 66 nodes are isolated.
The motif adjacency matrix for $M_{8}$ is connected.  The original network is
weakly connected, so the motif adjacency matrix for $\medge$ is also connected.


\begin{table}[t]
\centering
\dualcaption{Connected components of the Florida Bay food web motif
adjacency matrix for motif $M_{6}$}
{
There are 50 nodes in component 1, 12 nodes in component 2, and 66 isolated
nodes.  Component 2 consists of microfauna and detritus (see
\cref{tab:foodweb_classification}).
}
\scalebox{0.57}{
\begin{tabular}{l c @{\hskip 5cm} l c}
\toprule
\multicolumn{2}{l}{Two largest components} & Isolated nodes \\
Compartment (node) & Component index & Compartment (node) & \\
\midrule
Benthic Phytoplankton & 1 & Barracuda & \\
Thalassia & 1 & \SI{2}{\micro\metre} Spherical Phytoplankton & \\
Halodule & 1 & Synedococcus & \\
Syringodium & 1 & Oscillatoria & \\
Drift Algae & 1 & Small Diatoms ($<$\SI{20}{\micro\metre}) & \\
Epiphytes & 1 & Big Diatoms ($>$\SI{20}{\micro\metre}) & \\
Predatory Gastropods & 1 & Dinoflagellates & \\
Detritivorous Polychaetes & 1 & Other Phytoplankton & \\
Predatory Polychaetes & 1  & Roots & \\
Suspension Feeding Polychaetes & 1 & Coral & \\
Macrobenthos & 1  & Epiphytic Gastropods & \\
Benthic Crustaceans & 1  & Thor Floridanus & \\
Detritivorous Amphipods & 1 & Lobster & \\
Herbivorous Amphipods & 1 & Stone Crab & \\
Isopods & 1 & Sharks & \\
Herbivorous Shrimp & 1 & Rays & \\
Predatory Shrimp & 1 & Tarpon & \\
Pink Shrimp & 1 & Bonefish & \\
Benthic Flagellates & 1 & Other Killifish & \\
Benthic Ciliates & 1 & Snook & \\
Meiofauna & 1 & Sailfin Molly & \\
Other Cnidaridae & 1 & Hawksbill Turtle & \\ 
Silverside & 1 & Dolphin & \\
Echinoderma & 1 & Other Horsefish & \\
Bivalves & 1 & Gulf Pipefish & \\
Detritivorous Gastropods & 1 & Dwarf Seahorse & \\
Detritivorous Crabs & 1 & Grouper & \\
Omnivorous Crabs & 1 & Jacks & \\
Predatory Crabs & 1 & Pompano & \\
Callinectes sapidus (blue crab) & 1 & Other Snapper & \\
Mullet & 1 & Gray Snapper & \\
Blennies & 1 & Mojarra & \\
Code Goby & 1 & Grunt & \\
Clown Goby & 1 & Porgy & \\
Flatfish & 1 & Pinfish & \\
Sardines & 1 & Scianids & \\
Anchovy & 1 & Spotted Seatrout & \\
Bay Anchovy & 1 & Red Drum & \\
Lizardfish & 1 & Spadefish & \\
Catfish & 1 & Parrotfish & \\
Eels & 1 & Mackerel & \\
Toadfish & 1 & Filefishes & \\
Brotalus & 1 & Puffer & \\
Halfbeaks & 1 & Loon & \\
Needlefish & 1 & Greeb & \\
Goldspotted killifish & 1 & Pelican & \\
Rainwater killifish & 1 & Comorant & \\
Other Pelagic Fishes & 1 & Big Herons and Egrets & \\
Other Demersal Fishes & 1 & Small Herons and Egrets & \\
Benthic Particulate Organic Carbon (Benthic POC) & 1 & Ibis & \\
Free Bacteria & 2 & Roseate Spoonbill & \\
Water Flagellates & 2 & Herbivorous Ducks & \\
Water Cilitaes & 2 & Omnivorous Ducks & \\
Acartia Tonsa & 2 & Predatory Ducks & \\
Oithona nana & 2 & Raptors & \\
Paracalanus & 2 & Gruiformes & \\
Other Copepoda & 2 & Small Shorebirds & \\
Meroplankton & 2 & Gulls and Terns & \\
Other Zooplankton & 2 & Kingfisher & \\
Sponges & 2 & Crocodiles & \\
Water Particulate Organic Carbon (Water POC) & 2 & Loggerhead Turtle & \\
Input & 2 & Green Turtle & \\
& & Manatee & \\
& & Dissolved Organic Carbon (DOC) & \\
& & Output & \\
& & Respiration & \\
\bottomrule
\end{tabular}
}
\label{tab:foodweb_m6_components}
\end{table}

\clearpage

\definecolor{myblue}{RGB}{69,122,189}
\definecolor{myred}{RGB}{230,10,143}
\definecolor{mygreen}{RGB}{76,191,115}
\begin{figure}[tb]
\centering \includegraphics[width=0.75\columnwidth]{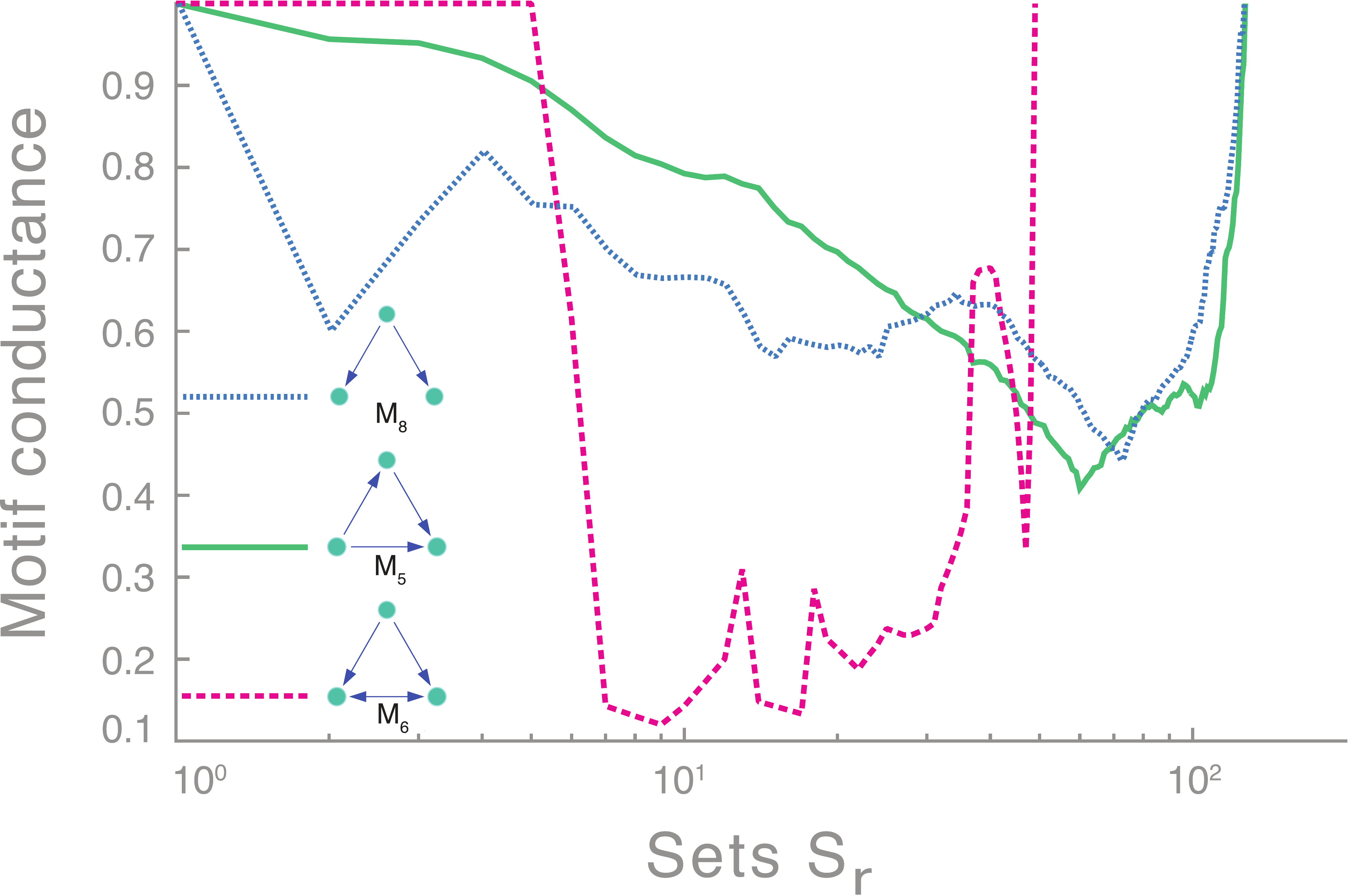}
\dualcaption{Sweep profile plot on the Florida Bay food web}{The sweep profile
  plot measures $\mmcond{S}$ as $S$ varies in the
  sweep procedure of \cref{alg:motif_fiedler}.
  Here, we look at the sweep profile for three motifs on the Florida Bay ecosystem food web.  A priori it is
  not clear whether the network is organized based on a given motif.  For
  example, motifs $M_{5}$ (\textcolor{mygreen}{green}) and $M_{8}$ (\textcolor{myblue}{blue})
  do not reveal any
  higher-order organization (motif conductance has high values). However, the
  downward spikes of the \textcolor{myred}{red} curve show that $M_{6}$ reveals rich higher-order
  clusters~\cite{leskovec2009community}.  
  Ecologically, motif $M_{6}$
  corresponds to two species mutually feeding on each other and also preying on
  a common third species.
  The motif cheeger inequality guarantees
  that the motif conductance for the best $M_{6}$ found by \cref{alg:motif_fiedler} is better
  than the motif conductance of every set for motifs $M_{5}$ and $M_{8}$.}
\label{fig:foodweb_sweep}
\end{figure}

By examining the sweep profile plots of these motifs
in \cref{fig:foodweb_sweep}, we see that low motif conductance clusters are only
found for motif $M_{6}$, whereas clusters based on motifs $M_{5}$ or $M_{8}$
have high motif conductance.  In fact, the motif Cheeger inequality
(\cref{thm:motif_cheeger}) guarantees that clusters based on motif $M_{5}$ or
$M_{8}$ will always have larger motif conductance than clusters based on
$M_{6}$.  Specifically, \cref{thm:motif_cond} says that the motif conductance of
any set is bounded below by $\lambda_2 / 2$, where $\lambda_2$ is the second
smallest eigenvalue of the motif normalized Laplacian $\normmotiflap$.  The
lower bounds $(\lambda_2 / 2)$ on motif conductance for motifs $M_{5}$, $M_{6}$,
$M_{8}$ are are 0.22, 0.03, and 0.22, and the clusters found
by \cref{alg:motif_fiedler} have motif conductances of 0.44, 0.12, and 0.41.
Thus, the cluster $S$ found by the algorithm for $M_{6}$ has smaller motif
$M_{6}$ conductance (0.12) than any possible cluster's $M_{5}$ or
$M_{8}$ conductance.  The same conclusions hold for edge-based clustering.
For motif $\medge$, the lower bound on conductance is 0.2194, and the cluster
found by the algorithm has conductance 0.4083.

Subsequently, we use motif $M_{6}$ and \cref{alg:motif_ngetal} to cluster the food web, revealing four
clusters (\cref{fig:foodweb_org,fig:foodweb_comms}).  Three represent well-known
aquatic layers: (i) the pelagic system; (ii) the benthic predators of eels,
toadfish, and crabs; and (iii) the sea-floor ecosystem of macroinvertebrates.  The
fourth cluster identifies microfauna supported by particulate organic carbon in
water and free bacteria (this cluster is the second largest connected component
of the motif adjacency matrix---component 2
in \cref{tab:foodweb_m6_components}). \Cref{tab:foodweb_classification} lists
the nodes in each cluster.

We also measure how well the motif-based clusters correlate to known ground
truth system subgroup classifications of the nodes~\cite{ulanowicz1998network}.
These classes are microbial, zooplankton, and sediment organism microfauna;
detritus; pelagic, demersal, and benthic fishes; demseral, seagrass, and algae
producers; and macroinvertebrates (\cref{tab:foodweb_classification},
Classification 1).  We also consider a set of labels which does not include the
subclassification for microfauna and producers.  In this case, the labels are
microfauna; detritus; pelagic, demersal, and benthic fishes; producers; and
macroinvertebrates (\cref{tab:foodweb_classification}, Classification 2).

To quantify how well the clusters found by motif-based clustering reflect the
ground truth labels, we use several standard evaluation criteria: adjusted rand
index, F1 score, normalized mutual information, and
purity~\cite{manning2008introduction}.  We compare these results to the
clusters of several methods using the same evaluation criteria.  In total, we
evaluate six methods:
\begin{enumerate}
\item Motif-based clustering with the embedding + k-means algorithm
(\cref{alg:motif_ngetal}) with 500 iterations of k-means.
\item Motif-based clustering with recursive bi-partitioning (repeated application of
\cref{alg:motif_fiedler} on the largest remaining compoennt).  The
process continues to cut the largest cluster until there are 4 total.
\item Edge-based clustering with the embedding + k-means algorithm, again with 500
iterations of k-means.
\item Edge-based clustering with recursive bi-partitioning with the same
partitioning process.
\item The Infomap algorithm.
\item The Louvain method.
\end{enumerate}
We control the number of clusters in the first four algorithms, which set to four.
We cannot control the number of clusters in the last two algorithms, but both
methods happen to find four clusters.

\Cref{tab:foodweb_performance} shows that the motif-based clustering by
embedding + k-means has the best performance for each classification criterion
on both classifications.  We conclude that the organization of compartments in
the Florida Bay food web are better described by motif $M_{6}$ than by edges.

\definecolor{myyellow}{RGB}{217,138,5}
\definecolor{myred}{RGB}{189,20,104}
\definecolor{myblue}{RGB}{46,156,189}
\definecolor{mygreen}{RGB}{28,174,117}
\begin{figure}[tb]
\centering \includegraphics[width=\columnwidth]{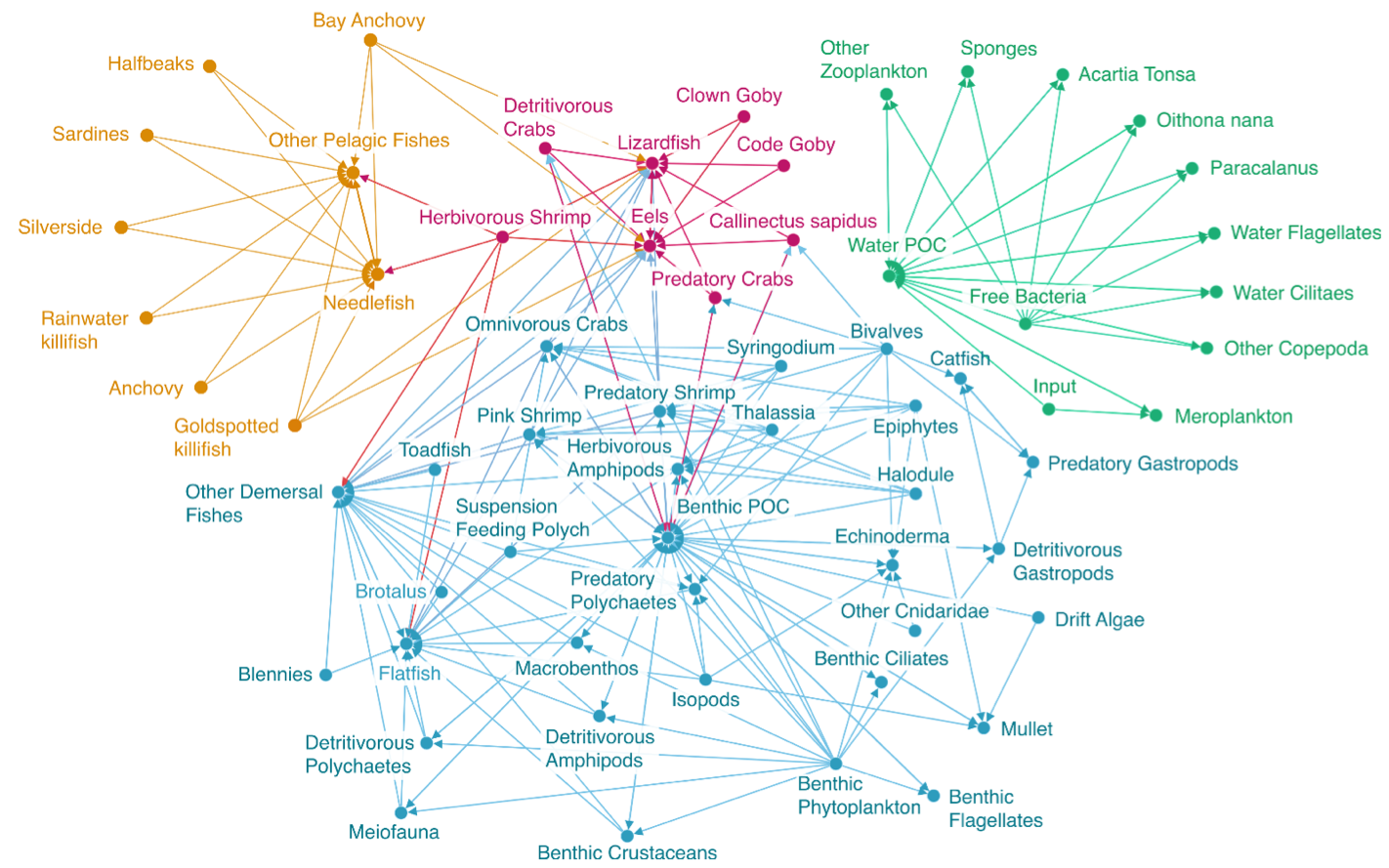}
\dualcaption{Higher-order organization of the Florida Bay food web}{Clustering
  of the food web based on motif $M_{6}$.  (For illustration, edges not
  participating in at least one instance of the motif are omitted.)  The
  clustering reveals three known aquatic layers: pelagic fishes
  (\textcolor{myyellow}{yellow}), benthic fishes and crabs
  (\textcolor{myred}{red}), and sea-floor macroinvertebrates
  (\textcolor{myblue}{blue}) as well as a cluster of microfauna and detritus
  (\textcolor{mygreen}{green}).  Our framework identifies these modules with
  higher accuracy (61\%) than existing methods
  (48--53\%)---see \cref{tab:foodweb_performance}.  \Cref{fig:foodweb_comms}
  examines the \textcolor{myyellow}{yellow} and \textcolor{mygreen}{green}
  higher-order clusters in more detail.}
\label{fig:foodweb_org}
\end{figure}

\clearpage


\definecolor{myyellow}{RGB}{217,138,5}
\definecolor{mygreen}{RGB}{28,174,117}
\begin{figure}[tb]
\centering
\includegraphics[width=\columnwidth]{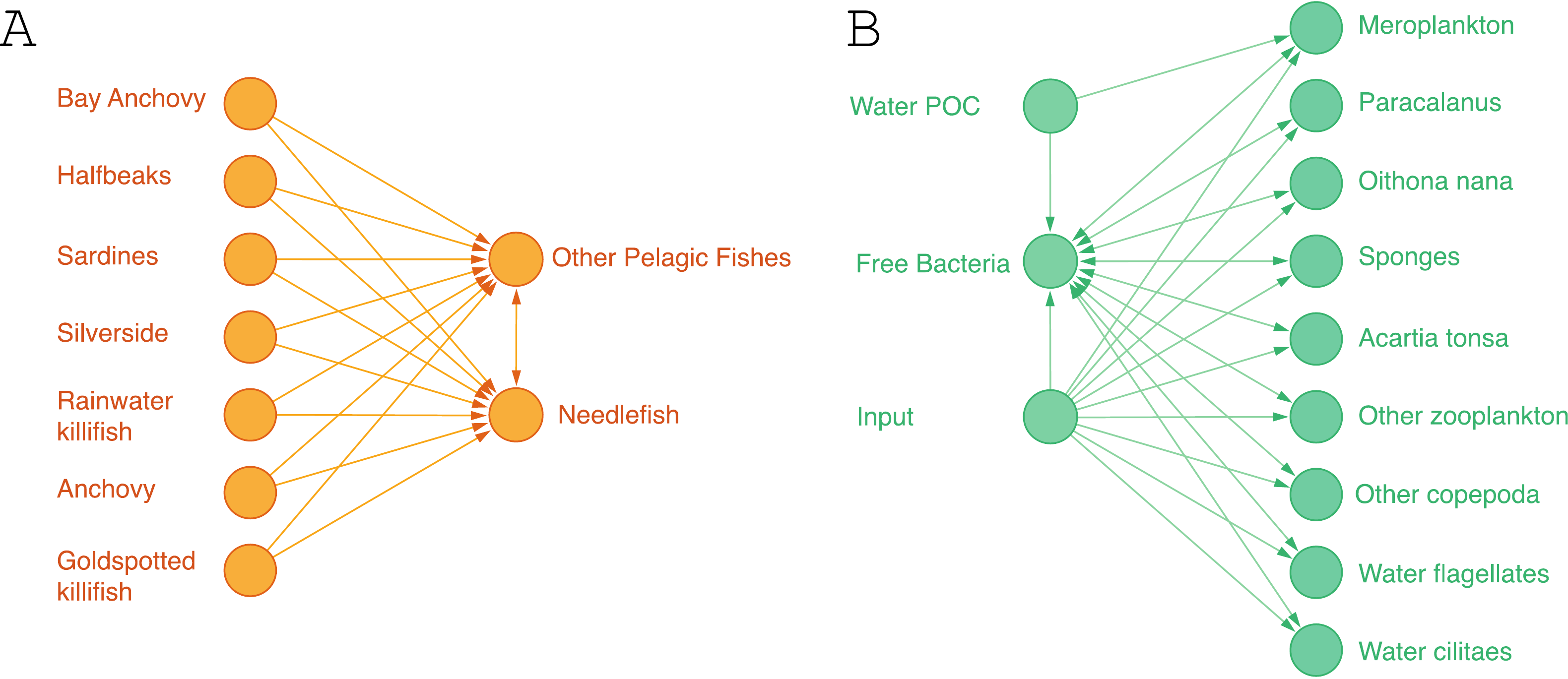}
\dualcaption{Higher-order clusters in the Florida Bay food web}{%
A closer look at two of the higher-order clusters from \cref{fig:foodweb_org}.
{\bf A:}
A higher-order cluster (the \textcolor{myyellow}{yellow} nodes
in \cref{fig:foodweb_org}) shows how motif $M_{6}$ occurs in the pelagic layer.
The needlefish and other pelagic fishes eat each other while several other
fishes are prey for these two species.
{\bf B:}
Another higher-order cluster (the \textcolor{mygreen}{green} nodes
in \cref{fig:foodweb_org}) shows how motif $M_{6}$ occurs between
microorganisms.  Here, several microfauna decompose into Particulate Organic
Carbon in the water (water POC) but also consume water POC.  Free bacteria
serves as an energy source for both the microfauna and water POC.
}
\label{fig:foodweb_comms}
\end{figure}

\clearpage


\definecolor{myyellow}{RGB}{217,138,5}
\definecolor{myred}{RGB}{189,20,104}
\definecolor{myblue}{RGB}{46,156,189}
\definecolor{mygreen}{RGB}{28,174,117}
\begin{table}[t]
\centering
\dualcaption{Ecological classification of nodes in the Florida Bay foodweb}{
Colors correspond to the visualization in \cref{fig:foodweb_org}.
}
\scalebox{0.62}{
\begin{tabular}{l l l @{\hskip 1.75cm} l}
\toprule
Compartment (node) & Classification 1 & Classification 2 & Assignment \\
\midrule
Free Bacteria             & Microbial microfauna         & Microfauna         & \textcolor{mygreen}{green}  \\
Water Flagellates         & Microbial microfauna         & Microfauna         & \textcolor{mygreen}{green}  \\
Water Cilitaes            & Microbial microfauna         & Microfauna         & \textcolor{mygreen}{green}  \\
Acartia Tonsa             & Zooplankton microfauna       & Microfauna         & \textcolor{mygreen}{green}  \\
Oithona nana              & Zooplankton microfauna       & Microfauna         & \textcolor{mygreen}{green}  \\
Paracalanus               & Zooplankton microfauna       & Microfauna         & \textcolor{mygreen}{green}  \\
Other Copepoda            & Zooplankton microfauna       & Microfauna         & \textcolor{mygreen}{green}  \\
Meroplankton              & Zooplankton microfauna       & Microfauna         & \textcolor{mygreen}{green}  \\
Other Zooplankton         & Zooplankton microfauna       & Microfauna         & \textcolor{mygreen}{green}  \\
Sponges                   & Macroinvertebrates           & Macroinvertebrates & \textcolor{mygreen}{green}  \\
Water POC                 & Detritus                     & Detritus           & \textcolor{mygreen}{green}  \\
Input                     & Detritus                     & Detritus           & \textcolor{mygreen}{green}  \\
Sardines                  & Pelagic Fishes               & Pelagic Fishes     & \textcolor{myyellow}{yellow} \\
Anchovy                   & Pelagic Fishes               & Pelagic Fishes     & \textcolor{myyellow}{yellow} \\
Bay Anchovy               & Pelagic Fishes               & Pelagic Fishes     & \textcolor{myyellow}{yellow} \\
Halfbeaks                 & Pelagic Fishes               & Pelagic Fishes     & \textcolor{myyellow}{yellow} \\
Needlefish                & Pelagic Fishes               & Pelagic Fishes     & \textcolor{myyellow}{yellow} \\
Goldspotted killifish     & Demersal Fishes              & Demersal Fishes    & \textcolor{myyellow}{yellow} \\
Rainwater killifish       & Demersal Fishes              & Demersal Fishes    & \textcolor{myyellow}{yellow} \\
Silverside                & Pelagic Fishes               & Pelagic Fishes     & \textcolor{myyellow}{yellow} \\
Other Pelagic Fishes      & Pelagic Fishes               & Pelagic Fishes     & \textcolor{myyellow}{yellow} \\
Detritivorous Crabs       & Macroinvertebrates           & Macroinvertebrates & \textcolor{myred}{red}    \\
Predatory Crabs           & Macroinvertebrates           & Macroinvertebrates & \textcolor{myred}{red}    \\
Callinectus sapidus       & Macroinvertebrates           & Macroinvertebrates & \textcolor{myred}{red}    \\
Lizardfish                & Benthic Fishes               & Benthic Fishes     & \textcolor{myred}{red}    \\
Eels                      & Demersal Fishes              & Demersal Fishes    & \textcolor{myred}{red}    \\
Code Goby                 & Benthic Fishes               & Benthic Fishes     & \textcolor{myred}{red}    \\
Clown Goby                & Benthic Fishes               & Benthic Fishes     & \textcolor{myred}{red}    \\
Herbivorous Shrimp        & Macroinvertebrates           & Macroinvertebrates & \textcolor{myred}{red}    \\
Benthic Phytoplankton     & Producer Demersal            & Producer           & \textcolor{myblue}{blue}   \\
Thalassia                 & Producer Seagrass            & Producer           & \textcolor{myblue}{blue}   \\
Halodule                  & Producer Seagrass            & Producer           & \textcolor{myblue}{blue}   \\
Syringodium               & Producer Seagrass            & Producer           & \textcolor{myblue}{blue}   \\
Drift Algae               & Producer Algae               & Producer           & \textcolor{myblue}{blue}   \\
Epiphytes                 & Producer Algae               & Producer           & \textcolor{myblue}{blue}   \\
Benthic Flagellates       & Sediment Organism microfauna & Microfauna         & \textcolor{myblue}{blue}   \\
Benthic Ciliates          & Sediment Organism microfauna & Microfauna         & \textcolor{myblue}{blue}   \\
Meiofauna                 & Sediment Organism microfauna & Microfauna         & \textcolor{myblue}{blue}   \\
Other Cnidaridae          & Macroinvertebrates           & Macroinvertebrates & \textcolor{myblue}{blue}   \\
Echinoderma               & Macroinvertebrates           & Macroinvertebrates & \textcolor{myblue}{blue}   \\
Bivalves                  & Macroinvertebrates           & Macroinvertebrates & \textcolor{myblue}{blue}   \\
Detritivorous Gastropods  & Macroinvertebrates           & Macroinvertebrates & \textcolor{myblue}{blue}   \\
Predatory Gastropods      & Macroinvertebrates           & Macroinvertebrates & \textcolor{myblue}{blue}   \\
Detritivorous Polychaetes & Macroinvertebrates           & Macroinvertebrates & \textcolor{myblue}{blue}   \\
Predatory Polychaetes     & Macroinvertebrates           & Macroinvertebrates & \textcolor{myblue}{blue}   \\
Suspension Feeding Polych & Macroinvertebrates           & Macroinvertebrates & \textcolor{myblue}{blue}   \\
Macrobenthos              & Macroinvertebrates           & Macroinvertebrates & \textcolor{myblue}{blue}   \\
Benthic Crustaceans       & Macroinvertebrates           & Macroinvertebrates & \textcolor{myblue}{blue}   \\
Detritivorous Amphipods   & Macroinvertebrates           & Macroinvertebrates & \textcolor{myblue}{blue}   \\
Herbivorous Amphipods     & Macroinvertebrates           & Macroinvertebrates & \textcolor{myblue}{blue}   \\
Isopods                   & Macroinvertebrates           & Macroinvertebrates & \textcolor{myblue}{blue}   \\
Predatory Shrimp          & Macroinvertebrates           & Macroinvertebrates & \textcolor{myblue}{blue}   \\
Pink Shrimp               & Macroinvertebrates           & Macroinvertebrates & \textcolor{myblue}{blue}   \\
Omnivorous Crabs          & Macroinvertebrates           & Macroinvertebrates & \textcolor{myblue}{blue}   \\
Catfish                   & Benthic Fishes               & Benthic Fishes     & \textcolor{myblue}{blue}   \\
Mullet                    & Pelagic Fishes               & Pelagic Fishes     & \textcolor{myblue}{blue}   \\
Benthic POC               & Detritus                     & Detritus           & \textcolor{myblue}{blue}   \\
Toadfish                  & Benthic Fishes               & Benthic Fishes     & \textcolor{myblue}{blue}   \\
Brotalus                  & Demersal Fishes              & Demersal Fishes    & \textcolor{myblue}{blue}   \\
Blennies                  & Benthic Fishes               & Benthic Fishes     & \textcolor{myblue}{blue}   \\
Flatfish                  & Benthic Fishes               & Benthic Fishes     & \textcolor{myblue}{blue}   \\
Other Demersal Fishes     & Demersal Fishes              & Demersal Fishes    & \textcolor{myblue}{blue}   \\
\bottomrule
\end{tabular}
}
\label{tab:foodweb_classification}
\end{table}

\clearpage


\begin{sidewaystable}[t]\def\arraystretch{1.4}
\centering
\dualcaption{Evaluation of clustering algorithms in the Florida Bay food web}
{
We compare the motif-based algorithms against other methods for finding the
``ground truth'' classifications listed in \cref{tab:foodweb_classification}.
Performance is evaluated based on Adjusted Rand Index (ARI), F1 score, Normalized
Mutual Information (NMI), and Purity.  In all cases, the motif-based methods
have the best performance.
}
\scalebox{0.93}{
\begin{tabular}{l @{\hskip 0.75cm} l c @{\hskip 0.2cm} c @{\hskip 0.2cm} c @{\hskip 0.2cm} c @{\hskip 0.2cm} c @{\hskip  0.2cm} c } %
\toprule
& Evaluation & \multicolumn{1}{l}{Motif embedding} & \multicolumn{1}{l}{Motif recursive}
& \multicolumn{1}{l}{Edge embedding}  & \multicolumn{1}{l}{Edge recursive}
& \phantom{XX}Infomap\phantom{XX}         & Louvain \\
& 
& \multicolumn{1}{l}{+ k-means}       & \multicolumn{1}{l}{bi-partitioning}
& \multicolumn{1}{l}{+ k-means}       & \multicolumn{1}{l}{bi-partitioning} 
&                                     & \\ \midrule
\parbox[t]{2mm}{\multirow{4}{*}{\rotatebox[origin=c]{90}{Classification 1}}}
& ARI    & \textbf{0.3005} & 0.2156 & 0.1564 & 0.1226 & 0.1423 & 0.2207 \\
& F1     & \textbf{0.4437} & 0.3853 & 0.3180 & 0.2888 & 0.3100 & 0.4068 \\
& NMI    & \textbf{0.5040} & 0.4468 & 0.4112 & 0.3879 & 0.4035 & 0.4220 \\
& Purity & \textbf{0.5645} & 0.5323 & 0.4032 & 0.4194 & 0.4194 & 0.5323 \\ \midrule
\parbox[t]{2mm}{\multirow{4}{*}{\rotatebox[origin=c]{90}{Classification 2}}}
& ARI    & \textbf{0.3265} & 0.2356 & 0.1814 & 0.1190 & 0.1592 & 0.2207 \\
& F1     & \textbf{0.4802} & 0.4214 & 0.3550 & 0.3035 & 0.3416 & 0.4068 \\
& NMI    & \textbf{0.4822} & 0.4185 & 0.3533 & 0.3034 & 0.3471 & 0.4220 \\
& Purity & \textbf{0.6129} & 0.5806 & 0.4839 & 0.4355 & 0.4677 & 0.5323 \\
\bottomrule
\end{tabular}
}
\label{tab:foodweb_performance}
\end{sidewaystable}

\clearpage

\subsection{Coherent feedforward loops in the \emph{S.~cerevisiae} transcriptional regulation network}
\label{sec:honc_yeast}

The transcription regulation network of the yeast \emph{S.~cerevisiae}\footnote{Fun
fact: \emph{S.~cerevisiae} is commonly used in brewing and winemaking.}
we study here describes how genes regulate genetic transcription in one another.
In the network, each node is an operon (a group of genes in a mRNA molecule),
and a directed edge from operon $i$ to operon $j$ means that $i$ is regulated by
a transcriptional factor encoded by $j$~\cite{alon2007network}.\footnote{The network data was downloaded from
\url{http://www.weizmann.ac.il/mcb/UriAlon/sites/mcb.UriAlon/files/uploads/NMpaper/yeastdata.mat}}
Edges are directed and signed.  A positive sign represents activation and a negative sign
represents repression.

For this case study, we examine the coherent feedforward loop motif
(\cref{fig:yeastA}), which acts as a sign-sensitive delay element in
transcriptional regulation
networks~\cite{mangan2003structure,mangan2003coherent}.  Formally, the coherent
feedforward loop is represented by the following simple signed motifs
\begin{align}\label{eqn:cffls}
B_1 = \begin{bmatrix} 0 & + & + \\ 0 & 0 & + \\ 0 & 0 & 0 \end{bmatrix},\;
B_2 = \begin{bmatrix} 0 & - & - \\ 0 & 0 & + \\ 0 & 0 & 0 \end{bmatrix},\;
B_3 = \begin{bmatrix} 0 & + & - \\ 0 & 0 & - \\ 0 & 0 & 0 \end{bmatrix},\;
B_4 = \begin{bmatrix} 0 & - & + \\ 0 & 0 & - \\ 0 & 0 & 0 \end{bmatrix}.
\end{align}
These motifs have the same edge pattern and only differ in sign.  All of the
motifs are simple ($\anchorset = \{1, 2, 3\}$).
\Citet{mangan2003structure} identified the functionality of every coherent
feedforward loop instance, so we have ground truth on the role that these motifs
play in the network.  For our analysis, we consider all coherent feedforward
loops that are subgraphs on the induced subgraph of any three nodes.  However,
there is only one instance where the coherent feedforward loop itself is a
subgraph but not an induced subgraph on three nodes (the induced
subgraph of DAL80, GAT1, and GLN3 contains a bidirectional edge between DAL80
and GAT1 and unidirectional edges from DAL80 and GAT1 to GLN3).


\begin{table}[tb]
  \centering
  \dualcaption{Non-trivial connected components of the motif adjacency
  matrix of the \emph{S.~cerevisiae} network for the coherent feedforward loop}{}
  \scalebox{0.95}{
\begin{tabular}{c@{\hskip 0.5cm} l l}
\toprule
Size & operons  \\ \midrule
18 & ALPHA1, CLN1, CLN2, GAL11, HO, MCM1, MFALPHA1, PHO5, SIN3, \\
     & SPT16, STA1, STA2, STE3, STE6, SWI1, SWI4/SWI6, TUP1, SNF2/SWI1 \\
9 & HXT11, HXT9, IPT1, PDR1, PDR3, PDR5, SNQ2, YOR1, YRR1  \\
9 & GCN4, ILV1, ILV2, ILV5, LEU3, LEU4, MET16, MET17, MET4  \\
6 & CHO1, CHO2, INO2, INO2/INO4, OPI3, UME6 \\
6 & DAL80, DAL80/GZF3, GAP1, GAT1, GLN1, GLN3 &  \\
5 & CYC1, GAL1, GAL4, MIG1, HAP2/3/4/5 \\
3 & ADH2, CCR4, SPT6 \\
3 & CDC19, RAP1, REB1 \\
3 & DIT1, IME1, RIM101 \\
\bottomrule
\end{tabular}
}
\label{tab:yeast_ccs}
\end{table}

Again, we analyze the component structure of the motif adjacency matrix as a
pre-processing step.  The original network consists of 690 nodes and 1082 edges,
and its largest weakly connected component consists of 664 nodes and 1066 edges.
Every coherent feedforward loop in the network resides in the largest weakly
connected component, so we subsequently consider this sub-network in the
following analysis.  Of the 664 nodes in the network, only 62 participate in a
coherent feedforward loop.  The motif adjacency matrix has nine connected
components---of sizes 18, 9, 9, 6, 6, 5, 3, 3, and 3---as well as some isolated
components.  \Cref{tab:yeast_ccs} lists the operons in the components of size at
least 3.

Although the original network is connected, the motif adjacency matrix has
several components, and this already reveals much of the structure in the
network (\cref{fig:yeastB}).  Indeed, this shattering of the graph into
components for the feedforward loop has previously been observed in
transcriptional regulation networks~\cite{dobrin2004aggregation}.  We
additionally use \cref{alg:motif_fiedler} to partition the largest connected
component of the motif adjacency matrix (consisting of 18 nodes).  This reveals
the cluster
\[
\{\text{CLN2, CLN1, SWI4/SWI6, SPT16, HO}\},
\]
which contains three coherent feedforward loops (\cref{fig:yeastD}).  All three
instances of the motif correspond to the function ``cell cycle and mating type
switch''.  The motifs in this cluster are the only feedforward loops for which
the function is described by \citet{mangan2003structure}.  Using the same
procedure on the undirected version of the induced subgraph of the 18 nodes
(i.e., using motif $\medge$) results in the cluster
\[
\{\text{CLN1, CLN2, SPT16, SWI4/SWI6}\}.
\]
This cluster breaks the coherent feedforward loop formed by HO, SWI4/SWI6, and
SPT16.

We also evaluate our method based on the motif functionality labels provided
by \citet{mangan2003structure}.\footnote{This data was downloaded
from \url{http://www.weizmann.ac.il/mcb/UriAlon/sites/mcb.UriAlon/files/uploads/DownloadableData/list_of_ffls.pdf}.}
In total, there are 12 different functionalities and 29 labeled coherent
feedforward loop instances.  We consider the motif-based clustering of the
graph to be the connected components of the motif adjacency matrix with the
additional partition of the largest connected component.  To form an edge-based
clustering, we used the embedding + k-means algorithm on the undirected graph
(i.e., motif $\medge$) with $k = 12$ clusters.  We also clustered the graph
using Infomap and the Louvain method.

We say that a method ``consistently'' labels a motif instance if all three nodes
are assigned to the same cluster, and we would like our clustering algorithms
to consistently label motif instances.  \Cref{tab:ffl_classification} summarizes the
results.  The motif-based clustering consistently labels all 29 motifs, while
the edge-based spectral, Infomap, and Louvain clustering consistently labeled 25,
23, and 23 motif instances, respectively.

We measures the accuracy of each clustering method as the rand
index~\cite{manning2008introduction} on the consistently labeled motifs,
multiplied by the fraction of consistently labeled motifs.  The motif-based
clustering has much highest accuracy (97\%) compared to the other
methods (68--82\%).  We conclude that motif-based clustering
provides an advantage over edge-based clustering methods in identifying
functionalities of coherent feedforward loops in the the \emph{S.~cerevisiae}
transcriptional regulation network.


\begin{figure}[htb]
  \centering
  \phantomsubfigure{fig:yeastA}
  \phantomsubfigure{fig:yeastB}
  \phantomsubfigure{fig:yeastC}
  \phantomsubfigure{fig:yeastD}    
  \includegraphics[width=1\textwidth]{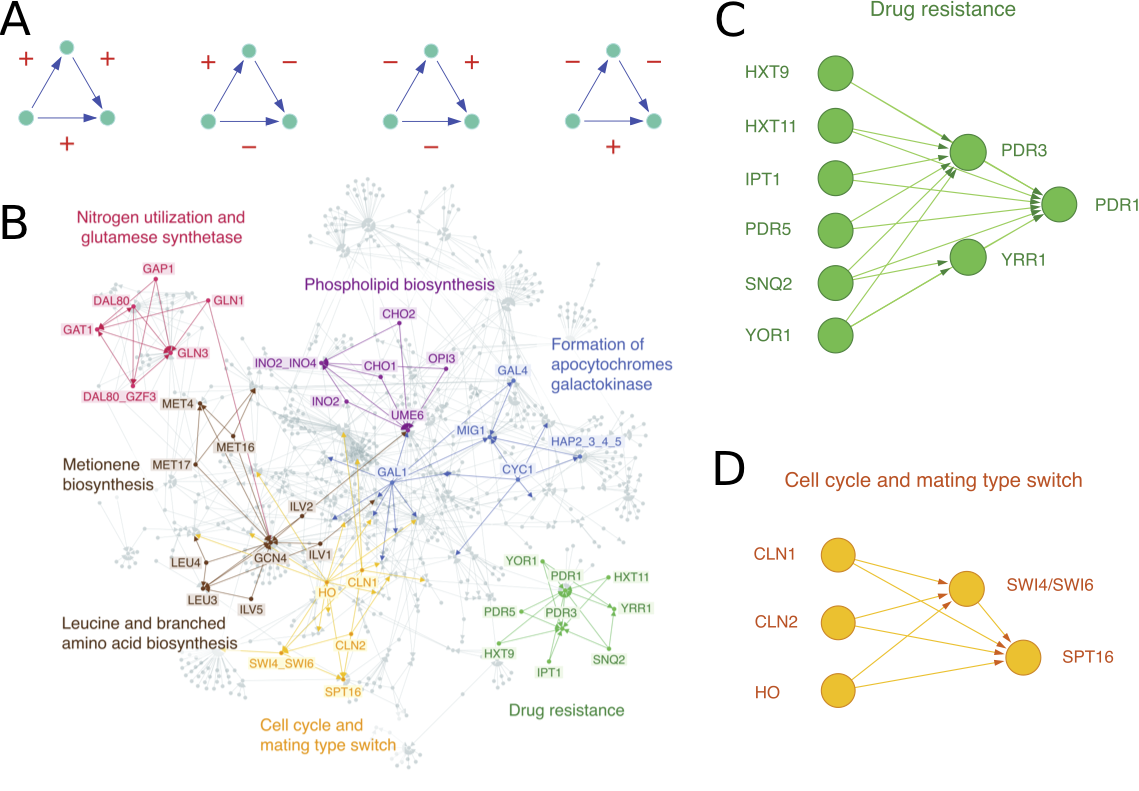}
\dualcaption{Higher-order organization of the \emph{S.~cerevisiae}
transcriptional regulation network}{%
{\bf A:}
The four higher-order structures used by our higher-order clustering method,
which can model signed motifs. These are the coherent feedfoward loop motifs, which
act as sign-sensitive delay elements in transcriptional regulation
networks~\cite{mangan2003coherent}.  The edge signs refer to activation
(positive) or suppression (negative).
{\bf B:}
Six higher-order clusters revealed by the motifs in (A).  Clusters show
functional modules consisting of several motif instances,
which were previously studied individually~\cite{mangan2003structure}.  The
higher-order clustering framework identifies the functional modules with higher
accuracy (97\%) than existing methods (68--82\%).
{\bf C--D:}
Two higher-order clusters from (B). In these clusters, all edges have positive
sign.  The functionality of the motifs in the modules correspond to drug
resistance (C) or cell cycle and mating type switch (D).  The clustering suggests
that coherent feedforward loops function together as a single processing unit
rather than as independent elements.  The cluster in (C) is simply
revealed by a connected component of the motif adjacency matrix (\cref{tab:yeast_ccs}).
}
\label{fig:yeast}
\end{figure}

\clearpage

\begin{sidewaystable}[hb]
\centering
\dualcaption{Clustering of coherent feedforward loops}
{
We compare the classification performance of motif-based clustering to other
methods.  In a given motif instance, we say that it is consistently labeled if the
nodes comprising the motif are in the same cluster.  If a motif is not
consistently labeled, a ``-1'' is listed.  The accuracy is the rand index on the
labels and motif functionality on consistently labeled motifs, multiplied by the
fraction of consistently labeled motifs.  Functions of each feedforward loop were
identified by \citet{mangan2003structure}.
}
\scalebox{0.66}{
\begin{tabular}{l l l l l c c c c}
\toprule
& \multicolumn{3}{c}{Motif nodes} & Function   & \multicolumn{4}{c}{Class label}                                                                          \\
                           &                                 &            &            &                                              & Motif-based & Edge-based & Infomap & Louvain \\ \midrule
                           & GAL11                           & ALPHA1     & MFALPHA1   & pheromone response                           & 1           & 1          & -1      & -1      \\
                           & GCN4                            & MET4       & MET16      & Metionine biosynthesis                       & 2           & 2          & 1       & -1      \\
                           & GCN4                            & MET4       & MET17      & Metionine biosynthesis                       & 2           & 2          & 1       & -1      \\
                           & GCN4                            & LEU3       & ILV1       & Leucine and branched amino acid biosynthesis & 2           & 2          & 1       & 1       \\
                           & GCN4                            & LEU3       & ILV2       & Leucine and branched amino acid biosynthesis & 2           & 2          & 1       & 1       \\          
                           & GCN4                            & LEU3       & ILV5       & Leucine and branched amino acid biosynthesis & 2           & 2          & 1       & 1       \\         
                           & GCN4                            & LEU3       & LEU4       & Leucine and branched amino acid biosynthesis & 2           & 2          & 1       & 1       \\
                           & GLN3                            & GAT1       & GAP1       & Nitrogen utilization                         & 3           & 3          & 1       & 2       \\
                           & GLN3                            & GAT1       & DAL80      & Nitrogen utilization                         & 3           & 3          & 1       & 2       \\
                           & GLN3                            & GAT1       & DAL80/GZF3 & Glutamate synthetase                         & 3           & 3          & 1       & 2       \\
                           & GLN3                            & GAT1       & GLN1       & Glutamate synthetase                         & 3           & 3          & 1       & 2       \\
                           & MIG1                            & HAP2/3/4/5 & CYC1       & formation of apocytochromes                  & 4           & 4          & -1      & -1      \\
                           & MIG1                            & GAL4       & GAL1       & Galactokinase                                & 4           & -1         & -1      & -1      \\
                           & PDR1                            & YRR1       & SNQ2       & Drug resistance                              & 5           & 5          & 2       & 3       \\
                           & PDR1                            & YRR1       & YOR1       & Drug resistance                              & 5           & 5          & 2       & 3       \\
                           & PDR1                            & PDR3       & HXT11      & Drug resistance                              & 5           & 5          & 2       & 3       \\
                           & PDR1                            & PDR3       & HXT9       & Drug resistance                              & 5           & 5          & 2       & 3       \\
                           & PDR1                            & PDR3       & PDR5       & Drug resistance                              & 5           & 5          & 2       & 3       \\
                           & PDR1                            & PDR3       & IPT1       & Drug resistance                              & 5           & 5          & 2       & 3       \\       
                           & PDR1                            & PDR3       & SNQ2       & Drug resistance                              & 5           & 5          & 2       & 3       \\
                           & PDR1                            & PDR3       & YOR1       & Drug resistance                              & 5           & 5          & 2       & 3       \\
                           & RIM101                          & IME1       & DIT1       & sporulation-specific                         & 6           & 6          & 3       & 4       \\
                           & SPT16                           & SWI4/SWI6  & CLN1       & Cell cycle and mating type switch            & 7           & -1         & 4       & 5       \\
                           & SPT16                           & SWI4/SWI6  & CLN2       & Cell cycle and mating type switch            & 7           & -1         & -1      & 5       \\
                           & SPT16                           & SWI4/SWI6  & HO         & Cell cycle and mating type switch            & 7           & -1         & -1      & -1      \\
                           & TUP1                            & ALPHA1     & MFALPHA1   & Mating factor alpha                          & 1           & 1          & -1      & 5       \\
                           & UME6                            & INO2/INO4  & CHO1       & Phospholipid biosynthesis                    & 8           & 6          & 5       & 4       \\
                           & UME6                            & INO2/INO4  & CHO2       & Phospholipid biosynthesis                    & 8           & 6          & 5       & 4       \\
                           & UME6                            & INO2/INO4  & OPI3       & Phospholipid biosynthesis                    & 8           & 6          & 5       & 4       \\ \midrule
Frac. consistently labeled &                                 &            &            &                                              & 29 / 29     & 25 / 29    & 23 / 29  & 23 / 29 \\
Accuracy                   &                                 &            &            &                                              & 0.97        & 0.82       & 0.68    & 0.76    \\
\bottomrule
\end{tabular}
}
\label{tab:ffl_classification}
\end{sidewaystable}

\clearpage

\subsection{Bi-directed length-2 paths in a transportation reachability network}
\label{sec:honc_airports}

Our framework also provides new insights into network organization beyond the
clustering of nodes into modules.  Here we study a transportation rechability
network where the nodes are cities in the United States and Canada, and
there is an edge from city $i$ to city $j$ if the estimated
travel time from $i$ to $j$ is less than some threshold~\cite{frey2007clustering}.\footnote{Data
was downloaded from \url{http://www.psi.toronto.edu/index.php?q=affinity\%20propagation}.}
The network is asymmetric.

Here we show results on a transportation reachability
network~\cite{frey2007clustering} that demonstrate how it finds the essential
hub interconnection airports (\cref{fig:airports}).  These appear as extrema on
the primary spectral direction (\cref{fig:airports_motif_embed}) when two-hop
motifs (\cref{fig:airports_motif}) are used to capture highly connected nodes
and non-hubs.  The first spectral coordinate of the normalized motif Laplacian
embedding is positively correlated with the airport city's metropolitan
population with Pearson correlation 99\% confidence interval [0.33, 0.53].  The
secondary spectral direction identifies the West-East geography in the North
American flight network (it is negatively correlated with the airport city's
longitude with Pearson correlation 99\% confidence interval [-0.66, -0.50]).  On
the other hand, edge-based methods conflate geography and hub structure.  For
example, Atlanta, a large hub, is embedded next to Salina, a non-hub, with an
edge-based method (\cref{fig:airports_edge_embed}).

\definecolor{myred}{RGB}{255,0,13}
\definecolor{mypurple}{RGB}{199,0,255}
\definecolor{mygreen}{RGB}{0,255,0}
\definecolor{myteal}{RGB}{81,194,166}
\begin{figure}[t]
\centering
\phantomsubfigure{fig:airports_motif}
\phantomsubfigure{fig:airports_weighted}
\phantomsubfigure{fig:airports_motif_embed}
\phantomsubfigure{fig:airports_edge_embed}
\includegraphics[width=\textwidth]{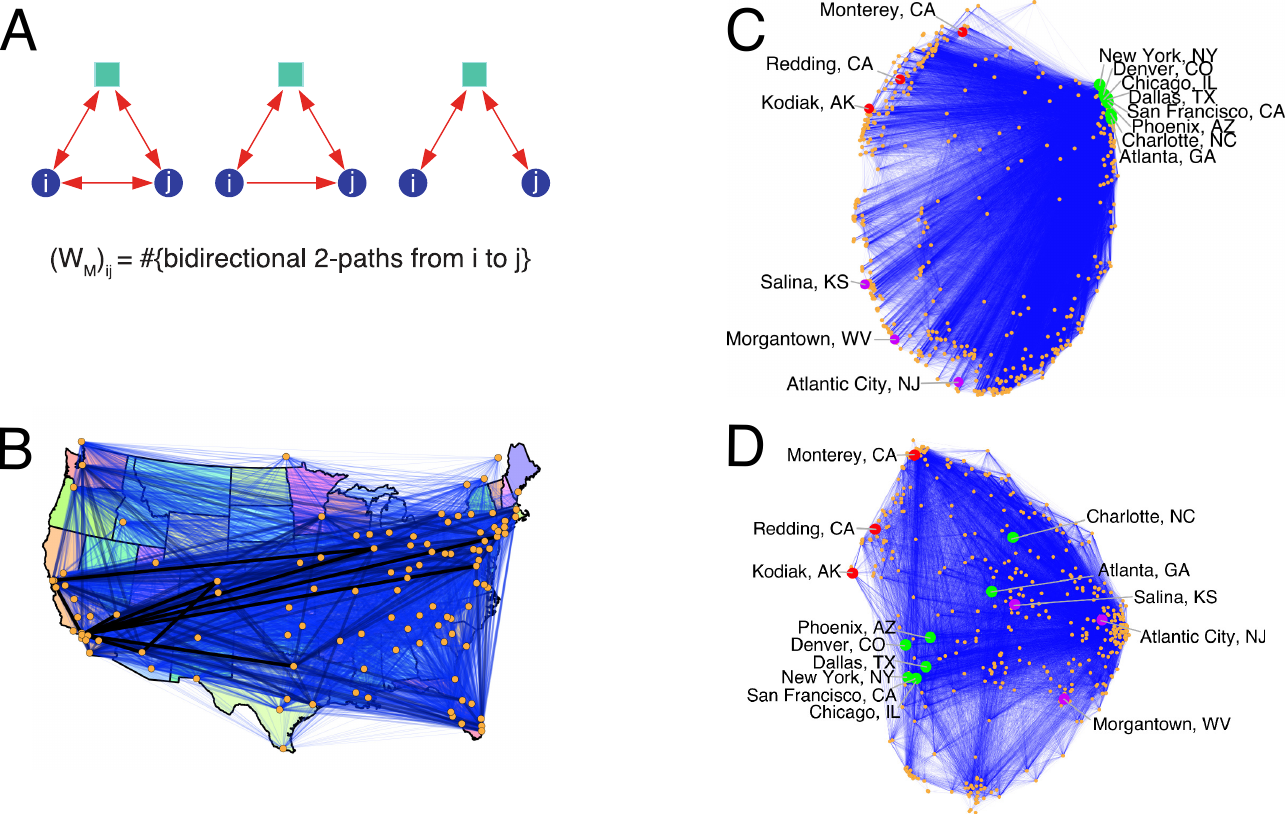}
\dualcaption{Higher-order spectral analysis of a network of airports in Canada and the United States}{%
{\bf A:}
The three higher-order structures used in our analysis.  Each motif is
``anchored'' by the blue nodes $i$ and $j$, which means our framework only seeks
to cluster together the blue nodes.  Specifically, the motif adjacency matrix
adds weight to the $(i, j)$ edge based on the number of third intermediary nodes
(\textcolor{myteal}{green} squares).  The first two motifs correspond to highly-connected cities and
the motif on the right connects non-hubs to non-hubs.
{\bf B:}
The top 50 most populous cities in the United States which correspond to nodes
in the network.  The edge thickness is proportional to the weight in the motif
adjacency matrix $W_M$.  The thick, dark lines indicate that large weights
correspond to popular mainline routes.
{\bf C:}
Embedding of nodes provided by their corresponding components of the first two
non-trivial eigenvectors of the normalized motif Laplacian $\normmotiflap$.  The marked
cities are eight large U.S. hubs (\textcolor{mygreen}{green}), three West coast non-hubs (\textcolor{myred}{red}), and
three East coast non-hubs (\textcolor{mypurple}{purple}).  The primary spectral coordinate (left to
right) reveals how much of a hub the city is, and the second spectral coordinate
(top to bottom) captures West-East geography.
{\bf D:}
Embedding of nodes provided by their corresponding components in the first two
non-trivial eigenvectors of the standard, edge-based normalized Laplacian.
This method does not capture the hub and geography found
by the higher-order method.  For example, Atlanta, the largest hub, is in the
center of the embedding, next to Salina, a non-hub.
}
\label{fig:airports}
\end{figure}

\clearpage

To study these results in more detail, we compare the motif-based spectral
embedding of the transportation reachability network to spectral embeddings from
other connectivity matrices.  The two-dimensional spectral embedding for a graph
defined by a (weighted) adjacency matrix $W \in \mathbb{R}^{n \times n}$ comes
from \cref{alg:motif_ngetal}:
\begin{enumerate}
\item %
Form the normalized Laplacian $N = I - D^{-1/2}WD^{-1/2}$, where $D$
is the diagonal degree matrix with $D_{ii} = \sum_{j}W_{ij}$.
\item %
Compute the first 3 eigenvectors $z_1$, $z_2$, $z_3$ of smallest eigenvalues for
$N$ ($z_1$ has the smallest eigenvalue, which is $0$).
\item %
Form the normalized matrix $Y \in \mathbb{R}^{n \times 3}$ by
$Y_{ij} = z_{ij} / \sqrt{\sum_{j=1}^{3}z_{ij}^2}$.
\item %
Define the primary and secondary spectral coordinates of node $i$ to be $Y_{i2}$
and $Y_{i3}$, respectively.
\end{enumerate}

We use the following three matrices $W$ for embeddings.

\begin{enumerate}
\item {\bf Motif}: %
The sum of the motif adjacency matrix for
three different anchored motifs:
\begin{equation}
B_1 = \begin{bmatrix} 0 & 1 & 1 \\ 1 & 0 & 1 \\ 1 & 1 & 0 \end{bmatrix},\;
B_2 = \begin{bmatrix} 0 & 1 & 1 \\ 1 & 0 & 1 \\ 0 & 1 & 0 \end{bmatrix},\;
B_3 = \begin{bmatrix} 0 & 1 & 0 \\ 1 & 0 & 1 \\ 0 & 1 & 0 \end{bmatrix},\;
\anchorset = \{1, 3\}.
\end{equation}
If $B$ is the matrix of bidirectional links in the graph ($B = A \circ A^T$),
then the motif adjacency matrix for these motifs is $W_M = B^2$.
\Cref{fig:airports_motif_embed} shows the resulting embedding.
This is equivalent to putting a weight between two nodes as the number of
bidirectional length-2 paths between the nodes (\cref{fig:airports_motif}).

\item {\bf Undirected}: %
The adjacency matrix is formed by ignoring edge direction.  In other words, this
is the motif adjacency matrix for motif $\medge$.  This is the standard spectral
embedding.  \Cref{fig:airports_edge_embed} shows the resulting embedding.

\item {\bf Undirected complement}: %
The adjacency matrix is formed by taking the complement of the undirected
adjacency matrix.  This matrix tends to connect non-hubs to each other.
\end{enumerate}
In all three cases, the network is connected.


\begin{table}[tb]
\centering
\dualcaption{Correlations between principal eigenvectors and city metadata}
{%
The table lists the 99\% confidence interval for the Pearson correlation
coefficients for (i) metropolitan population and the primary spectral coordinate
(eigenvector corresponding to the second smallest eigenvalue) and (ii) city
longitude and the secondary spectral coordinate (eigenvector corresponding to
the third smallest eigenvalue).
}
\begin{tabular}{l @{\hskip 0.8cm} l @{\hskip 0.5cm}l}
\toprule
                      & Primary spectral coordinate & Secondary spectral coordinate \\
                      & and metropolitan population & and longitude                 \\ \midrule
Embedding             & 99\% confidence interval    & 99\% confidence interval      \\ \midrule
Motif                 & 0.43 $\pm$ 0.09             & 0.59 $\pm$ 0.08               \\
Undir.            & 0.11 $\pm$ 0.12             & 0.39 $\pm$ 0.11               \\
Undir. comp. & 0.31 $\pm$ 0.11             & 0.10 $\pm$ 0.12               \\
\bottomrule
\end{tabular}
\label{tab:correlations}
\end{table}

We compute 99\% confidence intervals for the Pearson correlation of the primary
spectral coordinate with the metropolitan population of the city using the
Pearson correlation coefficient (\cref{tab:correlations}).  Since
eigenvectors are only unique up to sign, we list the interval with the
largest positive end point under this
symmetry to be consistent across embeddings.  The motif-based primary
spectral coordinate has the strongest correlation with the city populations.
\Cref{tab:correlations} also lists the correlation between the secondary spectral
coordinate and the longitude of the city.  Again, the motif-based clustering has
the strongest correlation.  Furthermore, the lower end of the confidence
interval for the motif-based embedding is above the higher end of the confidence
interval for the other two embeddings.

In order to visualize these relationships, we compute Loess regressions of city
metropolitan population and longitude against the primary and secondary spectral
coordinates for each of the embeddings (\cref{fig:airports_loess}).  The sign of
the eigenvector used in each regression was chosen to match correlation shown
in \cref{fig:airports_motif_embed,fig:airports_edge_embed}.\footnote{The primary
spectral coordinate positively correlated with population and secondary spectral
coordinate negatively correlated with longitude.}  The Loess regressions
visualize the stronger correlation of the motif-based spectral coordinates with
the metropolitan popuatlion and longitude.

\begin{figure}[t]
\centering
\newcommand{\airportfigcolwidth}{0.32\columnwidth}
\includegraphics[width=\airportfigcolwidth]{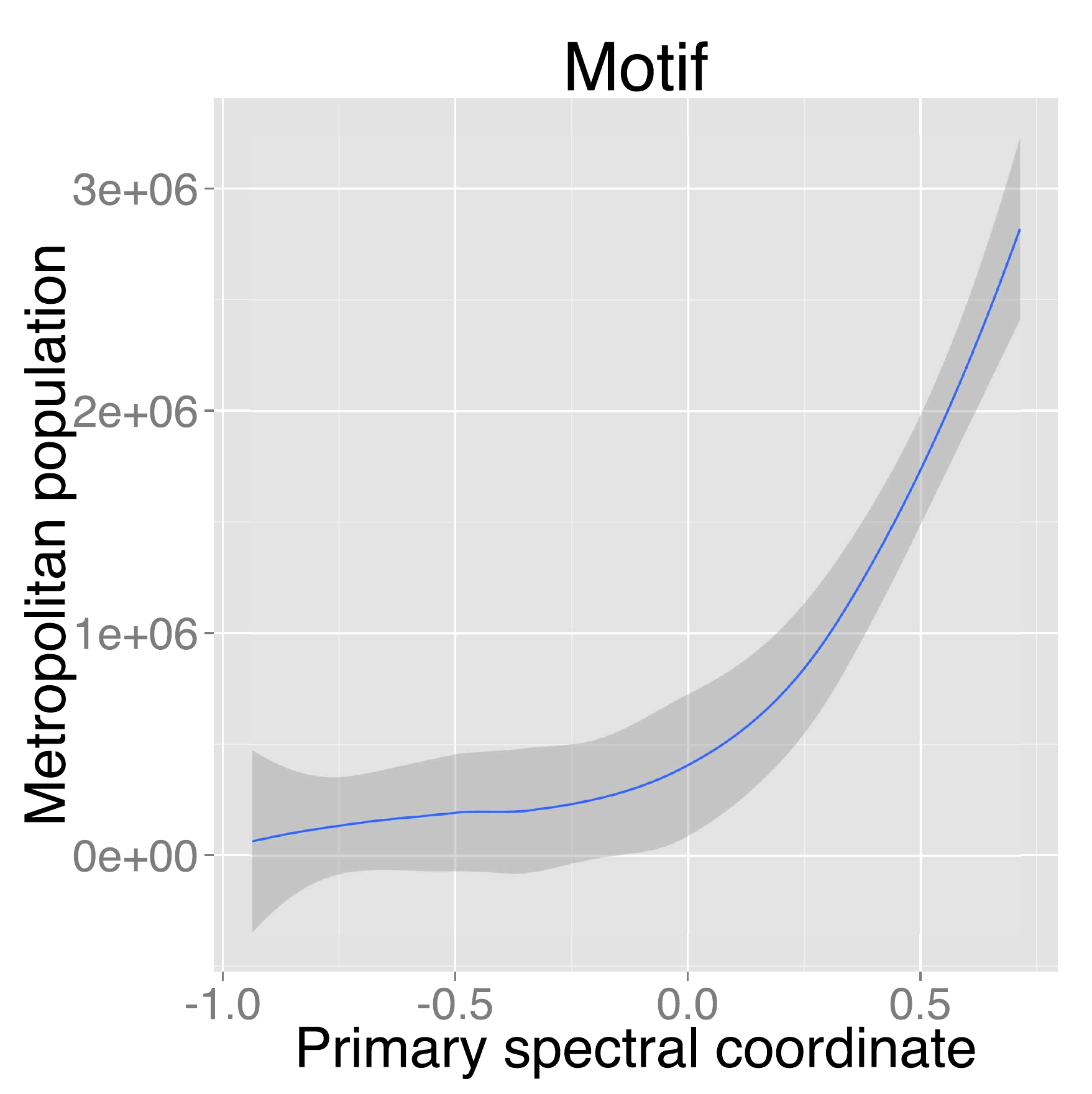}
\includegraphics[width=\airportfigcolwidth]{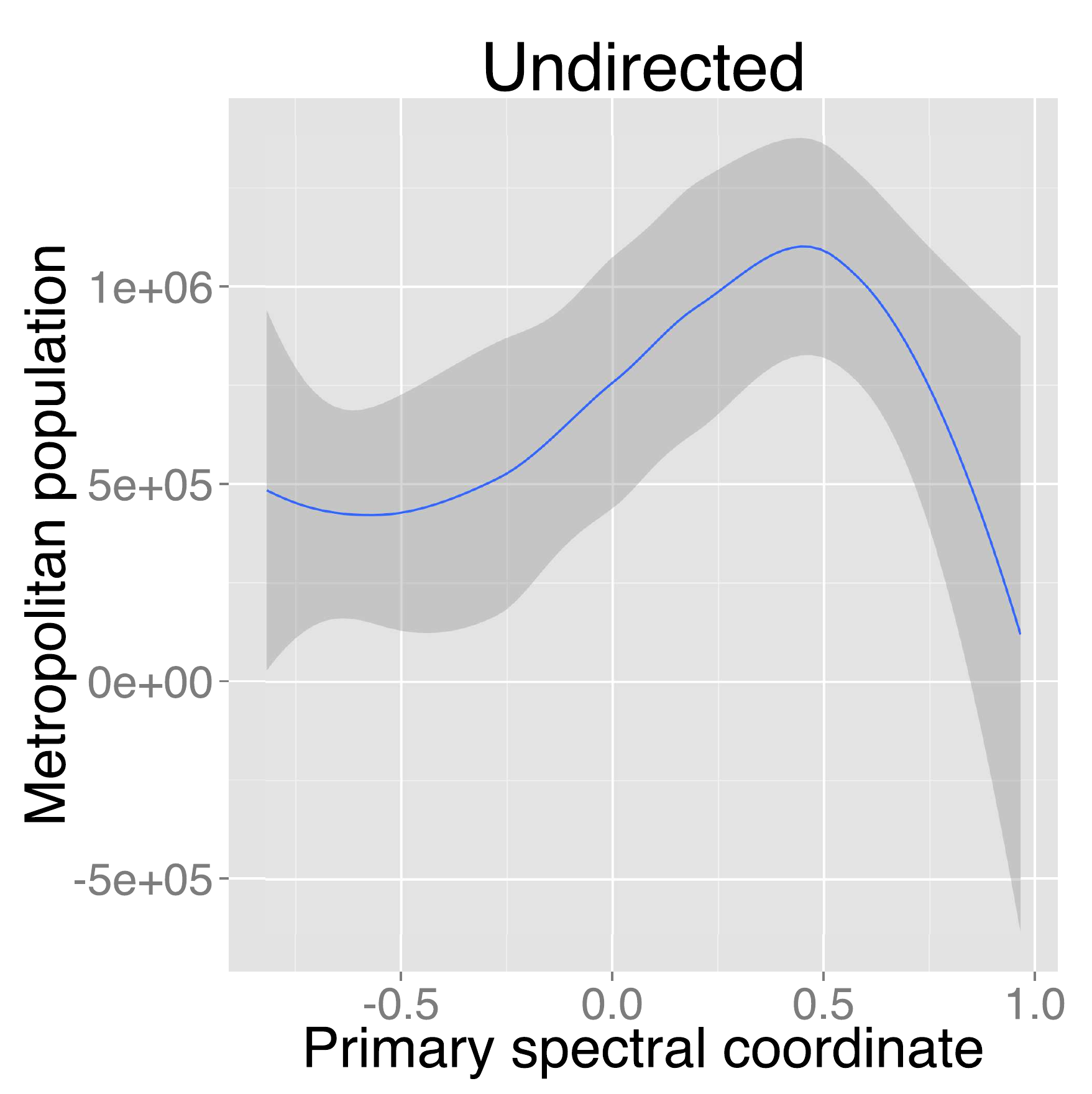}
\includegraphics[width=\airportfigcolwidth]{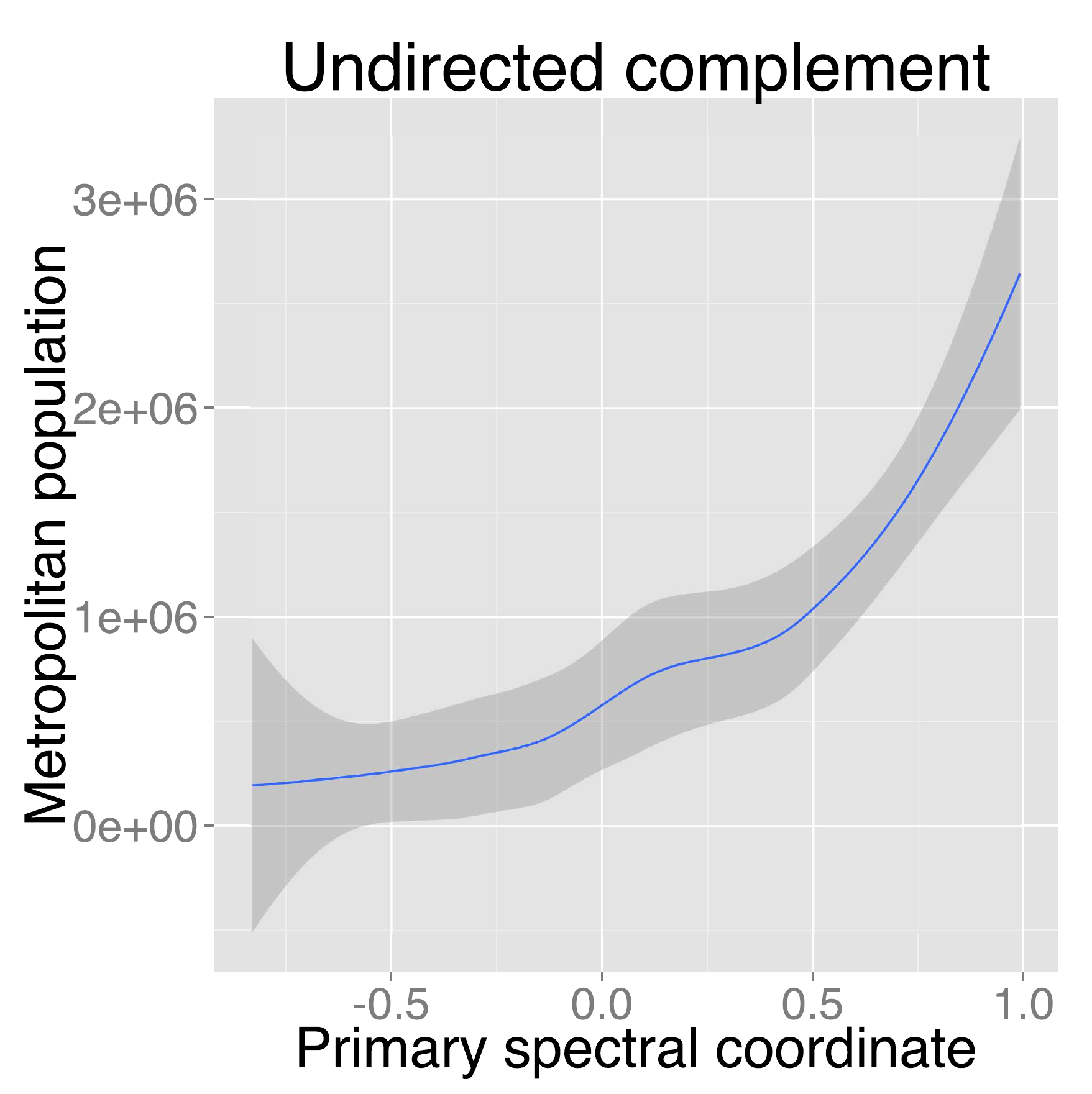}
\includegraphics[width=\airportfigcolwidth]{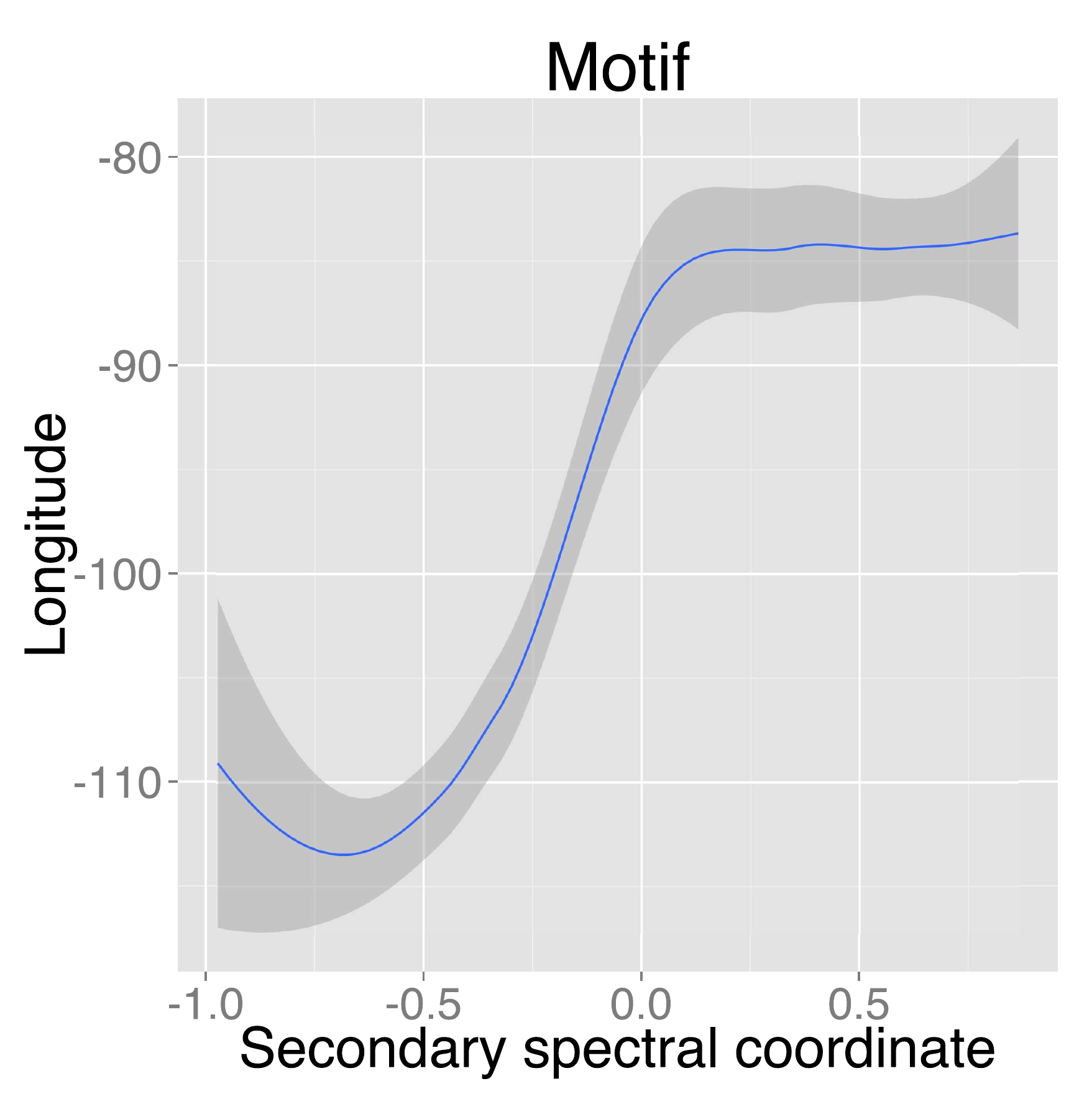}
\includegraphics[width=\airportfigcolwidth]{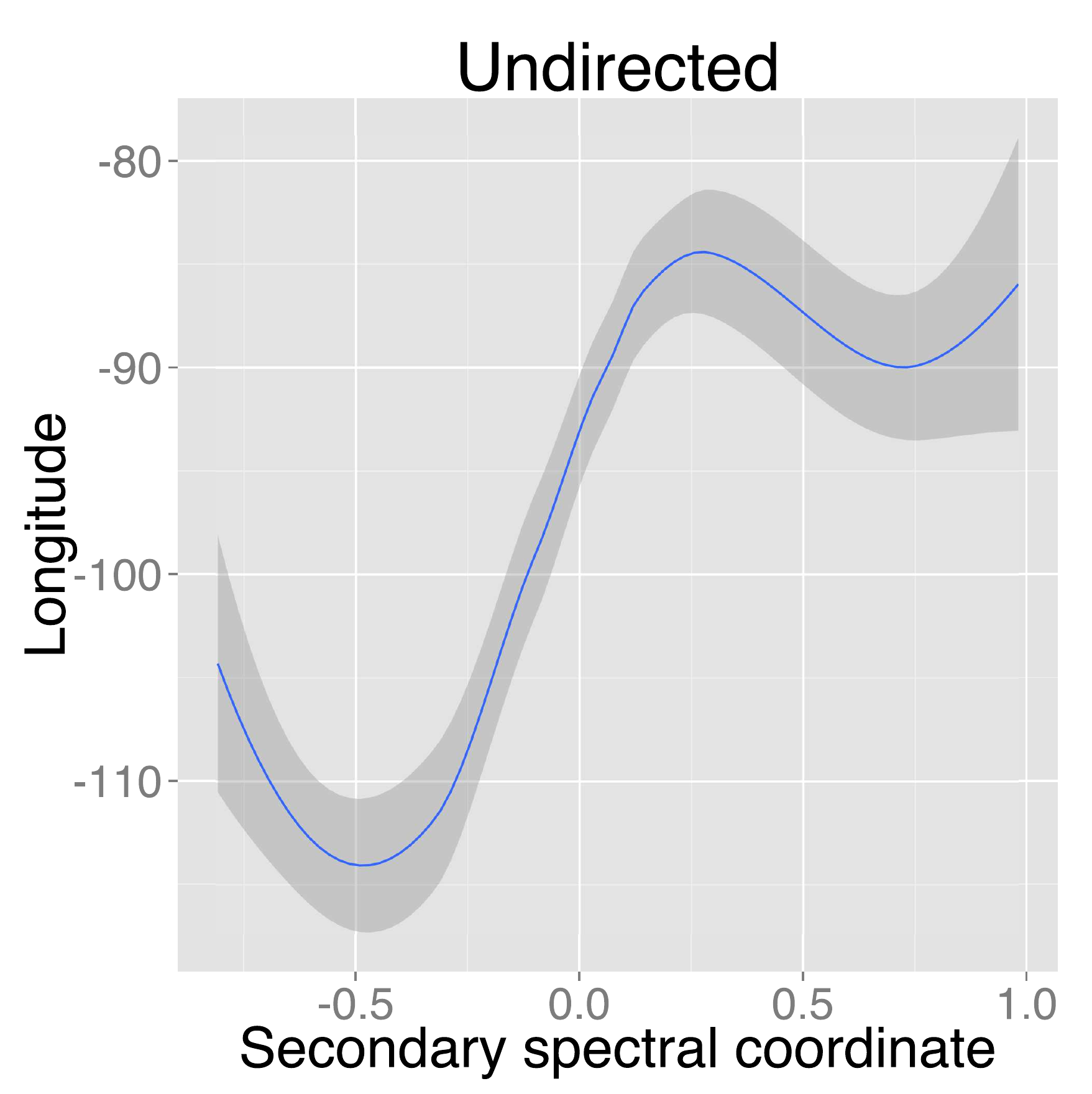}
\includegraphics[width=\airportfigcolwidth]{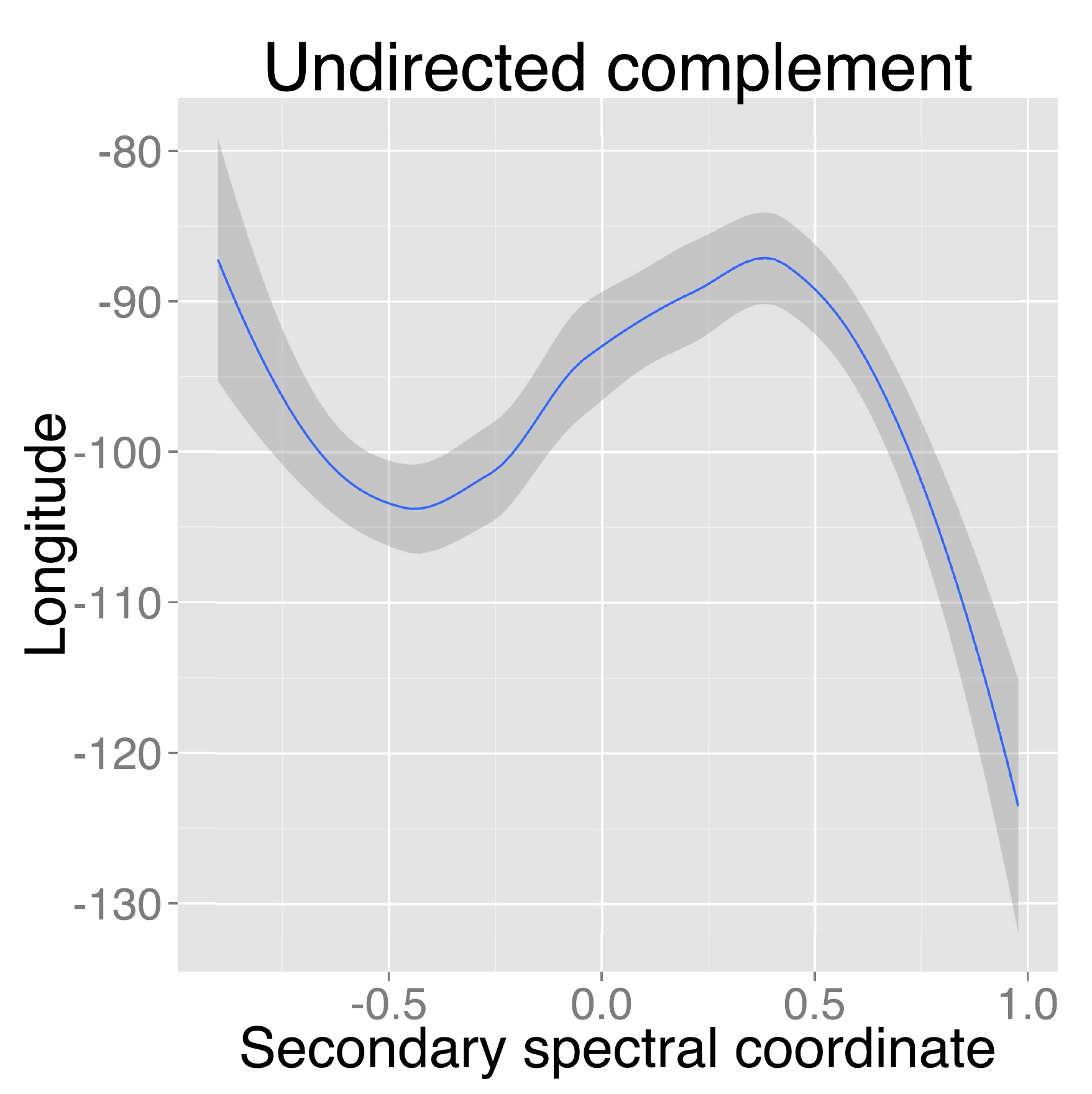}
\dualcaption{Loess regressions of city metadata against principal eigenvectors}{%
Specifically, the figures show the Loess regressions of city metropolitan
population against the primary spectral coordinate (top) and longitude against
secondary spectral coordinate (bottom) for the motif (left), undirected
(middle), and undirected complement (right) motif adjacency matrices.
}
\label{fig:airports_loess}
\end{figure}

\clearpage


\begin{figure}[h]
\centering
\includegraphics[width=4.5cm]{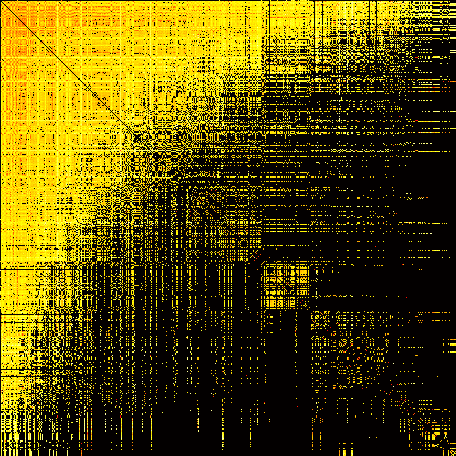}
\dualcaption{Transportation reachability network matrix}{%
Nodes are ordered according to the spectral ordering of the motif adjacency
matrix.  A black dot means no edge exists in the network.  For the edges in the
network, lighter colors correspond longer estimated travel times.
}
\label{fig:spectral_ordered_adj}
\end{figure}

We conclude that the embedding provided by the motif adjacency matrix more
strongly captures the hub nature of airports and West-East geography of the
network.  To gain further insight into the relationship of the primary spectral
coordinate's relationship with the hub airports, we visualize the adjacency
matrix in \cref{fig:spectral_ordered_adj}, where the nodes are ordered by the
spectral ordering.  We see a clear relationship between the spectral ordering
and the connectivity.

\subsection{The bi-fan motif in the \emph{C.~elegans} neuronal network}
\label{sec:honc_celegans}

Next, we apply the higher-order clustering framework to the \emph{C. elegans}
neuronal network with the four-node ``bi-fan'' $\mbifan$
(\cref{fig:celegans_bifan}), which is known to be over-expressed in neural
networks~\cite{milo2002network}.  Specifically, the network is of the frontal
neurons, where edges represent synapses between the
neurons~\cite{kaiser2006nonoptimal}.\footnote{The data was downloaded
from \url{http://www.biological-networks.org/pubs/suppl/celegans131.zip}.}  The
original network is weakly connected with 131 nodes and 764 edges.  The largest
connected component of the motif adjacency matrix for motif $\mbifan$ contains
112 nodes.  The remaining 19 nodes are isolated and correspond to the neurons
AFDL, AIAR, AINR, ASGL/R, ASIL/R, ASJL/R, ASKL/R, AVL, AWAL, AWCR, RID, RMFL,
SIADR, and SIBDL/R.

Using the bi-fan motif and \cref{alg:motif_fiedler}, we find a cluster of 20
neurons in the frontal section with low bi-fan motif conductance
(\cref{fig:celegans_cluster}).  The cluster shows a plausible way for how nictation is
controlled.  Within the cluster, ring motor neurons (RMEL/V/R), proposed
pioneers of the nerve ring~\cite{riddle1997celegans}, propagate information to
IL2 neurons, regulators of nictation~\cite{lee2012nictation}, through the neuron
RIH and several inner labial sensory neurons
(\cref{fig:celegans_cluster_embed}).  Our framework contextualizes the
signifance of the bi-fan motif in this control mechanism.

\definecolor{mycyan}{RGB}{0,202,202}
\definecolor{myorange}{RGB}{240,130,10}
\definecolor{mypurple}{RGB}{104,104,209}
\begin{figure}[t]
\centering
\phantomsubfigure{fig:celegans_bifan}
\phantomsubfigure{fig:celegans_cluster}
\phantomsubfigure{fig:celegans_cluster_embed}
\includegraphics[width=\textwidth]{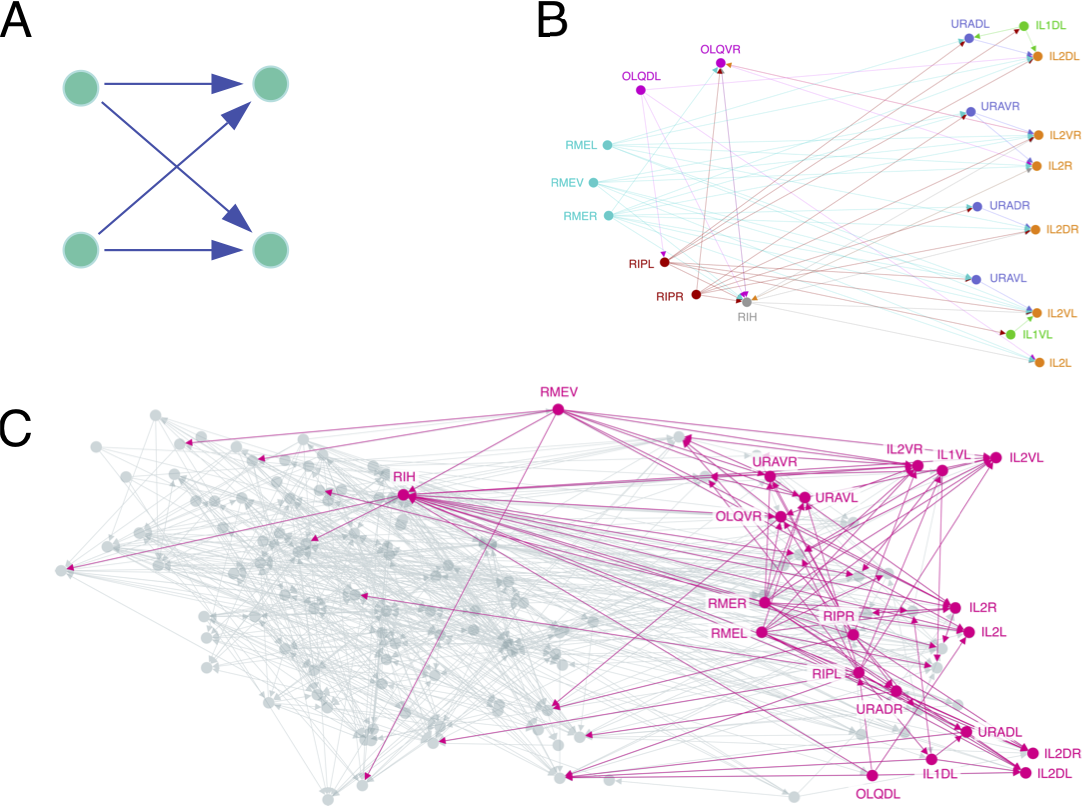}
\dualcaption{Higher-order cluster in the \emph{C.~elegans} neuronal network}{%
{\bf A:}
The 4-node ``bi-fan'' motif, which is over-expressed in neuronal
networks~\cite{milo2002network}.  Conceptually, this motif describes a
cooperative propagation of information from the nodes on the left to the nodes
on the right.
{\bf B:}
The best higher-order cluster in the \emph{C.~elegans} frontal neuronal network
based on the motif in (A).  The cluster contains three ring motor neurons
(RMEL/V/R; \textcolor{mycyan}{cyan}) with many outgoing connections, serving as the source of
information; six inner labial sensory neurons (IL2DL/VR/R/DR/VL; \textcolor{myorange}{orange}) with
many incoming connections, serving as the destination of information; and four
URA neurons (\textcolor{mypurple}{purple}) acting as intermediaries.  These RME neurons are
pioneers of the nerve ring~\cite{riddle1997celegans}, while the IL2
neurons are known regulators of nictation~\cite{lee2012nictation}, and the
higher-order cluster exposes their organization.  The cluster also reveals that
RIH serves as a critical intermediary of information processing.  This neuron
has incoming links from all three RME neurons, outgoing connections to five of
the six IL2 neurons, and the largest total number of connections of any neuron
in the cluster.
{\bf C:}
Illustration of the higher-order cluster in the context of the entire network.
Node locations are the true two-dimensional spatial embedding of the neurons.
RME/V/R/L and RIH serve as sources of information to the neurons on the right.
}
\label{fig:celegans}
\end{figure}

\clearpage

We compare the $\mbifan$-based cluster to the clusters found by edge-based
methods.  First, \cref{fig:celegans_ncp} shows the sweep profile plot from
running \cref{alg:motif_fiedler} for motifs $\mbifan$ and $\medge$.  Both curves
have a similar shape with local minima near the same set sizes---however, the
cluster found with $\mbifan$ is substantially smaller (20 nodes) than for
$\medge$ (64 nodes).  \Cref{fig:celegans_embed} provides a visualization of
these two clusters.  It turns out that the $\medge$-based cluster is a superset
of the $\mbifan$-based cluster, so we can roughly interpret clustering with
$\mbifan$ as identifying a more precise cluster than would be found with
spectral partitioning based on edges.  We also compute clusterings of the
network with Infomap and the Louvain method, to see if this trend holds for
other edge-based clustering methods.  Infomap finds a cluster 
that overlaps with the $\mbifan$-based cluster by 19 nodes but
contains 114 total nodes.  The Louvain method finds a cluster that overlaps by
13 nodes and has 27 total nodes.  We conclude that common edge-based methods do not
capture the same cluster information as our higher-order clustering framework
with the bi-fan motif.

We also compute the exact conductance minimizer of the weighted motif
adjacency matrix for the bi-fan motif, which is identical
to the $\medge$-based cluster.  In other words, the output of \cref{alg:motif_fiedler}
is not the global minimum for the objective function, even though it contains
meaningful structure.  The idea that the spectral method can give more meaningful
results than the conductance minimizer has been observed in other analyses~\cite{gleich2017personal}.


\definecolor{mygreen}{RGB}{27,158,119}
\definecolor{myorange}{RGB}{217,95,2}
\definecolor{mypurple}{RGB}{117,112,179}
\begin{figure}[t]
\centering
\includegraphics[width=9cm]{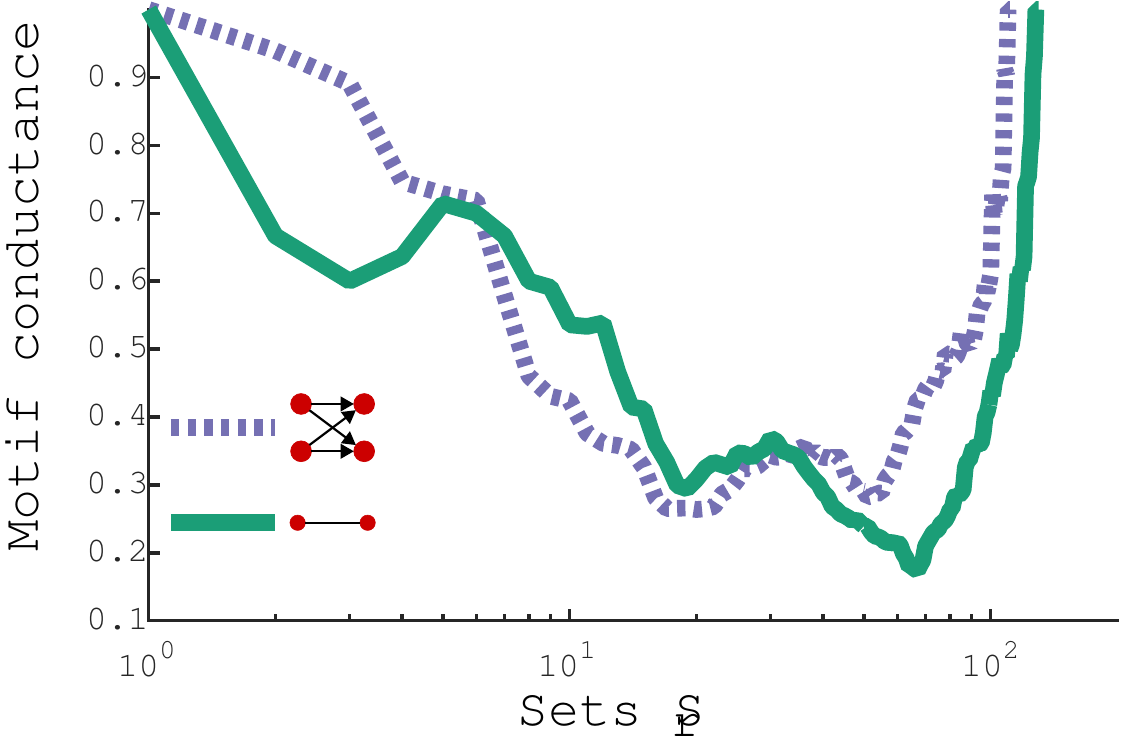}
\dualcaption{Sweep profile plots for \emph{C. elegans}}{%
The sweep profile plot shows $\mmcond{S}$ as a function of set size from the sweep
procedure in \cref{alg:motif_fiedler}.  We show the profiles for $\mbifan$ (\textcolor{mypurple}{purple})
and $\medge$ (\textcolor{mygreen}{green}).
}
\label{fig:celegans_ncp}
\end{figure}


\definecolor{mydarkblue}{RGB}{31,120,180}
\begin{figure}[t]
\centering
\includegraphics[width=\columnwidth]{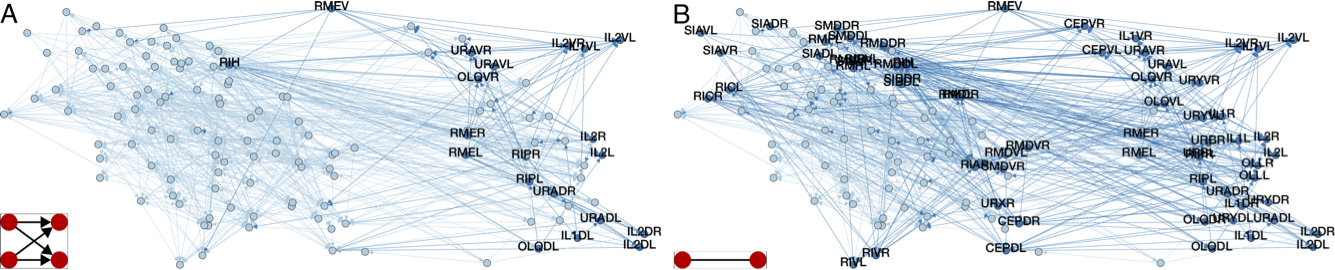}
\dualcaption{Clusters in \emph{C. elegans}}{%
The nodes are embedded according to a true two-dimensional
spatial embedding of the frontal neurons~\cite{kaiser2006nonoptimal}.
{\bf A:}
The $\mbifan$-based cluster consists of the labeled
nodes (same as \cref{fig:celegans_cluster}).
{\bf B:}
The $\medge$-based cluster consists of the labeled nodes.
This cluster is a superset of $\mbifan$-based cluster.
}
\label{fig:celegans_embed}
\end{figure}

\clearpage

\subsection{Motif $M_{6}$ in the English Wikipedia article network}
\label{sec:honc_enwiki}

The English Wikipedia
network~\cite{boldi2004webgraph,boldi2011layered,boldi2004ubicrawler} consists
of 4.21 million nodes (representing articles) and 101.31 million edges, where an
edge from node $i$ to node $j$ means that there is a hyperlink from the $i$th
article to the $j$th article.\footnote{The data was downloaded from
\url{http://law.di.unimi.it/webdata/enwiki-2013/}.}
We use \cref{alg:motif_fiedler} to find a motif-based cluster for motifs $M_{6}$
and $\medge$ (the algorithm was run on the largest connected component of the
motif adjacency matrix).  \Cref{fig:enwiki_comms} shows the clusters.  The nodes
in the motif-based cluster are cities and barangays (small administrative
divisions) in the Philippines.  The cluster has a set of nodes with many
outgoing links that form the source node in motif $M_{6}$.  In total, the
cluster consists of 22 nodes and 338 edges.  The linking pattern appears
anomalous and suggests that perhaps the pages uplinking should receive
reciprocated links.  On the other hand, the edge-based cluster is much larger
cluster and does not have too much structure.  The cluster consists of several
high-degree nodes and their neighbors.

\definecolor{mygreen}{RGB}{27,158,119}
\definecolor{myorange}{RGB}{217,95,2}
\begin{figure}[tb]
\begin{centering}
\includegraphics[width=\columnwidth]{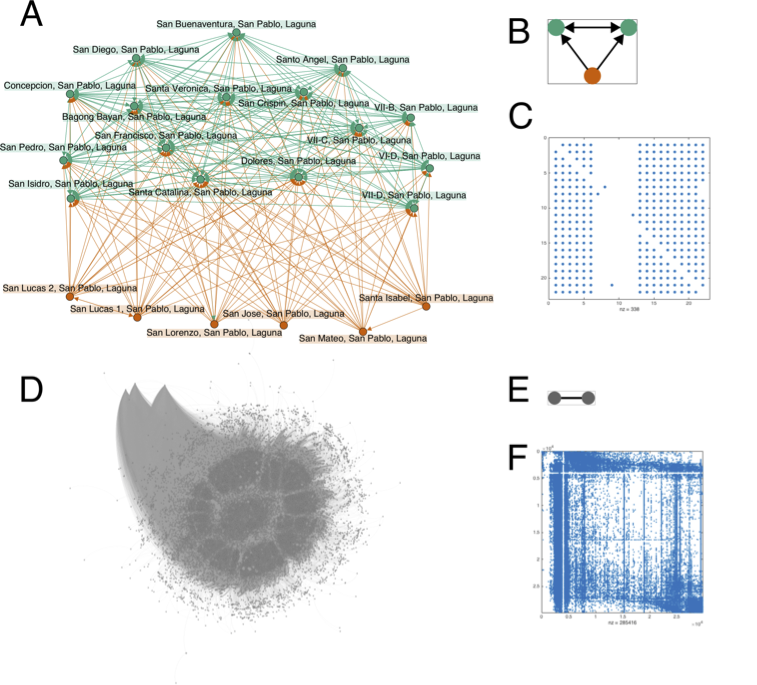}
\dualcaption{Clusters in the English Wikipedia hyperlink network}{%
{\bf A--C}:
Motif-based cluster (A) for motif $M_{6}$ (B).  The cluster consists of cities
and small administrative divisions in the Philippines. The \textcolor{mygreen}{green}
nodes have many bidirectional links with each other and many incoming links
from \textcolor{myorange}{orange} nodes at the bottom of the figure.  The spy
plot illustrates this network structure (C).
{\bf D--F}:
Cluster (D) for undirected edges (E).  The cluster has a few very high-degree
nodes, as evidenced by the spy plot (F).
}
\label{fig:enwiki_comms}
\end{centering}
\end{figure}

\clearpage

\subsection{Motif $M_{6}$ in the Twitter follower network}
\label{sec:honc_twitter}

We also analyze the complete Twitter follower
graph from 2010~\cite{boldi2004webgraph,boldi2011layered,kwak2010twitter}.  The graph
consists of 41.65 million nodes (users) and 1.47 billion edges, where an edge from
node $i$ to node $j$ signifies that user $i$ is followed by user $j$ on the
social network.\footnote{The data was downloaded
from \url{http://law.di.unimi.it/webdata/twitter-2010/}.}  We
use \cref{alg:motif_fiedler} to find a motif-based cluster for motif $M_{6}$
(the algorithm was run on the largest connected component of the motif adjacency
matrix).  The cluster contains 151 nodes and consists of two disconnected
components.  Here, we consider the smaller of the two components, which consists
of 38 nodes.  We also found an edge-based cluster on the undirected graph
(using \cref{alg:motif_fiedler} with motif $\medge$), which consists of 44 nodes.

\Cref{fig:twitter_comms} illustrates the motif-based and edge-based
clusters.  Both clusters capture anomalies in the graph.  The motif-based
cluster consists of holding accounts for a photography company.  The nodes that
form bidirectional links have completed profiles (contain a profile picture)
while several nodes with incomplete profiles (without a profile picture) are
followed by the completed accounts (\cref{fig:twitter_anomaly}).  The edge-based
cluster is a near clique, where the user screen names all begin with ``LC\_''.
We suspect that the similar usernames are either true social communities,
holding accounts, or bots. (For the most part, the tweets of these
accounts are protected, so we cannot verify if any of these scenarios are true.)
Both $M_{6}$ and $\medge$ find anomalous clusters, but
their structures are quite different.  We conclude that $M_{6}$ can lead to the
detection of new anomalous clusters in social networks.


\definecolor{mygreen}{RGB}{27,158,119}
\definecolor{myorange}{RGB}{217,95,2}
\begin{figure}[tb]
  \begin{centering}
    \phantomsubfigure{fig:twitter_commsA}
    \phantomsubfigure{fig:twitter_commsB}
    \phantomsubfigure{fig:twitter_commsC}
    \phantomsubfigure{fig:twitter_commsD}
    \phantomsubfigure{fig:twitter_commsE}
    \phantomsubfigure{fig:twitter_commsF}
\includegraphics[width=\columnwidth]{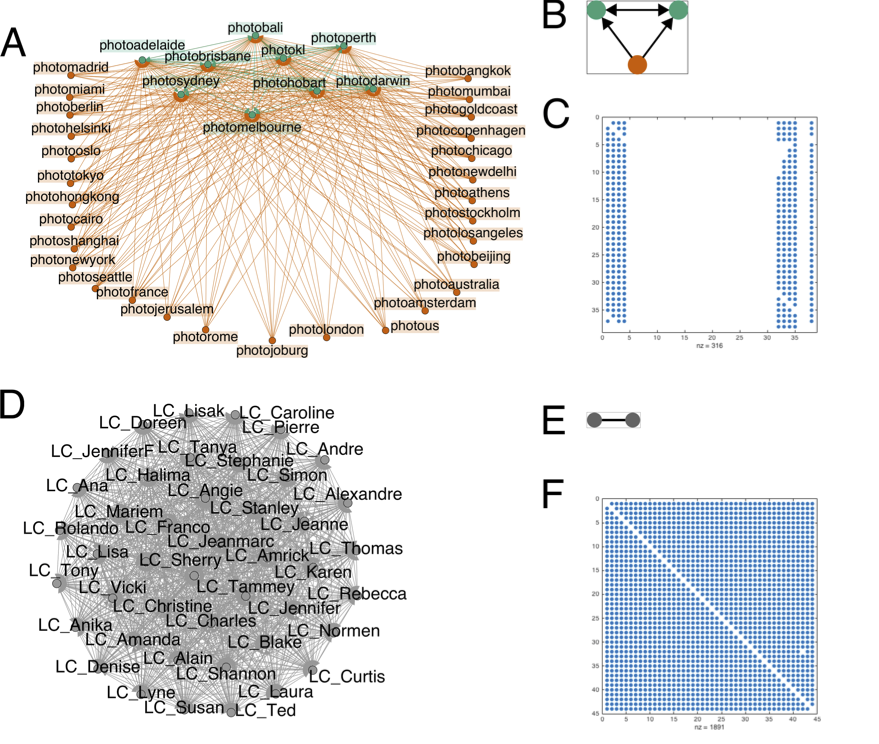}
\dualcaption{Clusters in the 2010 Twitter follower network}{%
{\bf A--C}:
Motif-based cluster (A) for motif $M_{6}$ (B).  All accounts are holding
accounts for a photography company.  The \textcolor{mygreen}{green} nodes
correspond to accounts that have completed profiles, while the accounts
corresponding to the \textcolor{myorange}{orange} nodes have incomplete profiles
(see \cref{fig:twitter_anomaly}).  The spy plot illustrates how the cluster is
formed around this motif (C).
{\bf D--F}:
Cluster (D) for edge-based clustering (E).  The cluster consists of a
near-clique (F) where all users have the prefix ``LC\_''.
}
\label{fig:twitter_comms}
\end{centering}
\end{figure}


\begin{figure}[htb]
\begin{centering}
  \includegraphics[width=0.85\columnwidth]{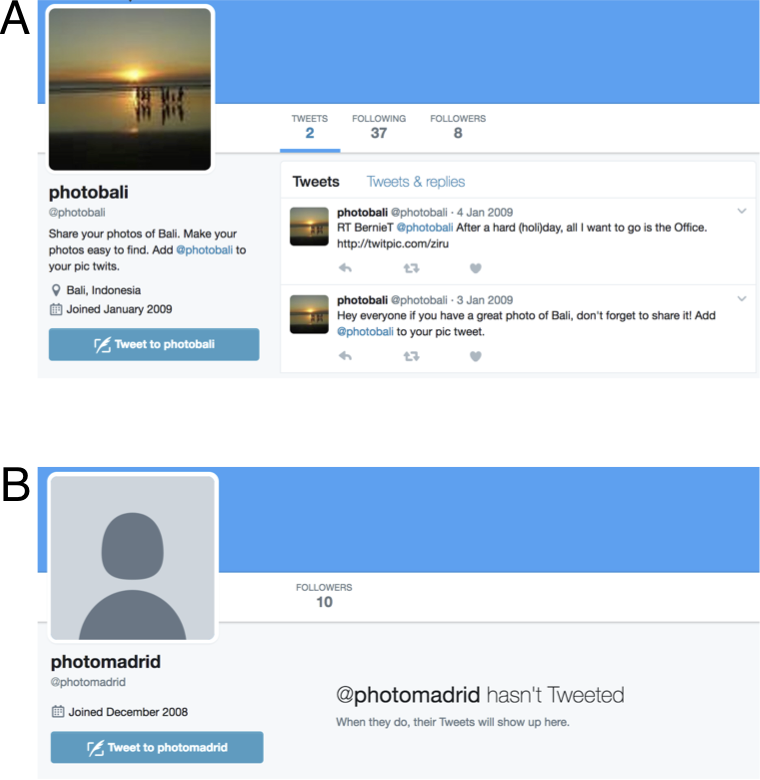}
  \dualcaption{@photobali and @photomadrid Twitter accounts}{The motif-based
    cluster for $M_{6}$ (\cref{fig:twitter_commsA}) consists of holding accounts
    for a photography company.  The nodes forming bidirectional links in motif $M_{6}$
    have complete profiles (A) and those forming unidirectional links have
    incomplete profiles (B).}
\label{fig:twitter_anomaly}
\end{centering}
\end{figure}

\clearpage

\subsection{Motif $M_{7}$ in the Stanford web graph}
\label{sec:honc_stanford}

The Stanford web graph consists of 281,903 nodes and 2,312,497 edges, where an
edge from node $i$ to node $j$ means that there is a hyperlink from the $i$th
web page to the $j$th web page~\cite{hirai2000webbase,kamvar2003extrapolation}.
Here, all of the web pages come from the \texttt{stanford.edu} domain.\footnote{The data was
downloaded from \url{http://snap.stanford.edu/data/web-Stanford.html}.}  We
use \cref{alg:motif_fiedler} to find a motif-based cluster for motif $M_{7}$, a
motif that is over-expressed in web graphs~\cite{milo2002network}.
\Cref{fig:web_stanford} illustrates this cluster
and an edge-based cluster (i.e., using \cref{alg:motif_fiedler} with $\medge$).
Both clusters exhibit a core-periphery structure, albeit markedly different
ones.  The motif-based cluster contains several core nodes with large in-degree.
Such core nodes serve as the sink node in motif $M_{7}$.  On the periphery are
several clusters within which are many bidirectional links (as illustrated by
the spy plot in \cref{fig:web_stanfordC}).  The nodes in these clusters then
up-link to the core nodes.  This type of organizational unit suggests an
explanation for why motif $M_{7}$ is over-expressed: clusters of similar pages
tend to uplink to more central pages.  The edge-based cluster also has a few
nodes with large in-degree, serving as a small core.  On the periphery are the
neighbors of these nodes, which themselves tend \emph{not} to be connected
(\cref{fig:web_stanfordF}).


\definecolor{mygreen}{RGB}{27,158,119}
\definecolor{myorange}{RGB}{217,95,2}
\begin{figure}[thb]
  \begin{centering}
    \phantomsubfigure{fig:web_stanfordA}
    \phantomsubfigure{fig:web_stanfordB}
    \phantomsubfigure{fig:web_stanfordC}
    \phantomsubfigure{fig:web_stanfordD}
    \phantomsubfigure{fig:web_stanfordE}
    \phantomsubfigure{fig:web_stanfordF}
    \scalebox{0.92}{\includegraphics[width=\columnwidth]{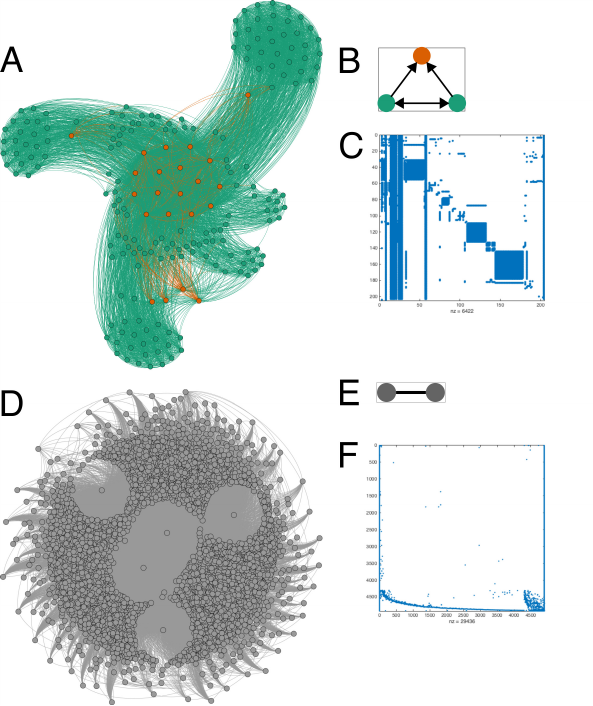}}
\dualcaption{Clusters in the Stanford web graph}{%
{\bf A--C:}
Motif-based cluster (A) for motif $M_{7}$ (B).  The cluster has a core group of
nodes with many incoming links (serving as the sink node in $M_{7}$; shown
in \textcolor{myorange}{orange}) and several periphery groups that are tied
together (the nodes forming the bidirectional link in $M_{7}$; shown in
\textcolor{mygreen}{green}) and also up-link to the core.  This is evident
from the spy plot (C).
{\bf D--F:}
Cluster (C) for undirected edges (B).  The cluster contains a few high-degree
nodes and their neighbors, and the neighbors tend to not be connected (F).
}
\label{fig:web_stanford}
\end{centering}
\end{figure}

\clearpage

\subsection{Semi-cliques in collaboration networks}
\label{sec:honc_collaboration}

We use \cref{alg:motif_fiedler} to identify clusters of the four-node
semi-clique motif that has been studied in conjunction with researcher
productivity in collaboration networks~\cite{chakraborty2014automatic}.
We found a motif-based cluster in two different
collaboration networks (\cref{fig:collabA,fig:collabC}).  Each network
is derived from co-authorship in papers
submitted to the arXiv under a certain category---the ``High Energy
Physics -- Theory'' (HepTh) and ``Condensed Matter Physics'' (CondMat)
categories~\cite{leskovec2007graph}.  The HepTh network has 23,133 nodes and
93,497 edges and the CondMat network has 9,877 nodes and 25,998
edges.\footnote{The HepTh network data was downloaded from
\url{http://snap.stanford.edu/data/ca-HepTh.html}
and the CondMat network data was downloaded from
\url{http://snap.stanford.edu/data/ca-CondMat.html}.}

\Cref{fig:collab} shows the two clusters for each of the collaboration
networks.  In both networks, the motif-based cluster consists of a core group of
nodes and similarly-sized groups on the periphery.  The core group of nodes
correspond to the nodes of degree 3 in the motif and the periphery group nodes
correspond to the nodes of degree 2.  One explanation for this organization is
that there is a small small group of authors that writes papers with different
research groups.  Alternatively, the co-authorship could come from a single
research group, where senior authors are included on all of the papers and
junior authors on a subset of the papers.

On the other hand, the edge-based clusters (i.e., the output of
\cref{alg:motif_fiedler} for $\medge$) are a clique in the HepTh
network and a clique with a few dangling nodes in the CondMat network.  The
dense clusters are quite different from the sparser clusters based on the
semi-clique.  Such dense clusters are not that surprising.  For example, a
clique could arise from a single paper published by a group of authors.


\begin{figure}[htb]
  \begin{centering}
    \phantomsubfigure{fig:collabA}
    \phantomsubfigure{fig:collabB}
    \phantomsubfigure{fig:collabC}
    \phantomsubfigure{fig:collabD}
    \phantomsubfigure{fig:collabE}
    \phantomsubfigure{fig:collabF}
  \includegraphics[width=\columnwidth]{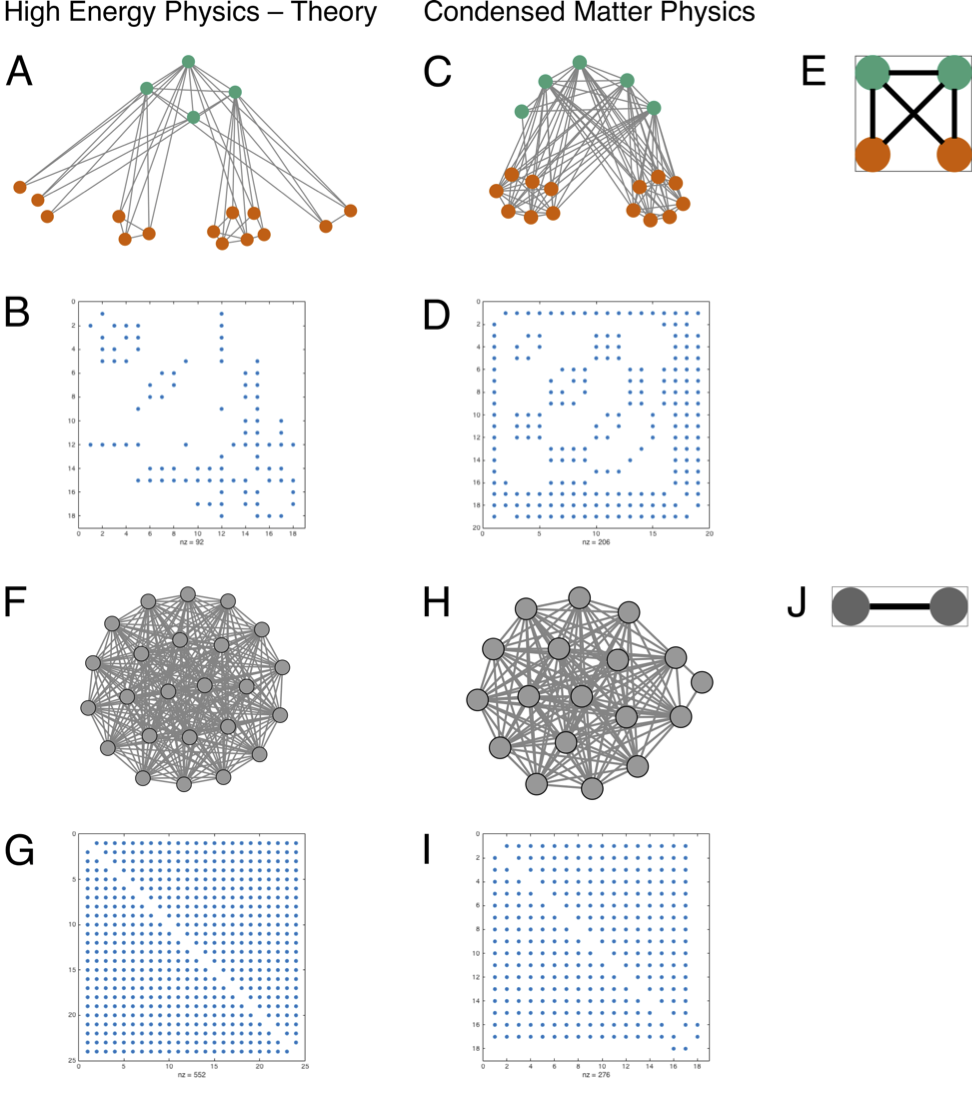}
\dualcaption{Clusters in co-authorship networks}{%
{\bf A--E:}
Best motif-based cluster for the semi-clique motif (E) in the High Energy
Physics--Theory collaboration network (A) and the Condensed Matter Physics
collaboration network (C).  Corresponding spy plots are shown in (B) and (D).
{\bf F--I:}
Best edge-based (I) cluster in the High Energy Physics--Theory collaboration
network (F) and the Condensed Matter Physics collaboration network (H).
Corresponding spy plots are shown in (G) and (I).
}
\label{fig:collab}
\end{centering}
\end{figure}

\clearpage

\section{Scalability experiments}
\label{sec:honc_scalability}

We now empirically analyze the time to find clusters for triangular motifs on a
variety of real-world networks, ranging in size from a few hundred thousand
edges to nearly two billion edges.  Then we show that we can compute
the motif adjacency matrix for cliques up to size 9 on a number of real-world
networks in a reasonable amount of time.

\subsection{Triangular motifs}
\label{sec:scalability}

In this section, we demonstrate that our method scales to real-world networks
with billions of edges.  We tested the scalability of our method on 16 large
directed graphs from a variety of real-world applications.  These networks range
from a couple hundred thousand to two billion edges and from 10 thousand to over
50 million nodes.  We briefly describe the networks here and provide some
summary statistics in \Cref{tab:networks_description}.
\begin{itemize}
\item \dataset{wiki-RfA} represents which users voted for which other users for
  adminship rights on Wikipedia~\cite{west2014exploiting}.
\item \dataset{email-EuAll} represents who emailed whom at a European research
  institution~\cite{leskovec2005graphs}.
\item \dataset{cit-HepPh} represents citations between papers in the ``High
  Energy Physics -- Phenomenology'' category on
  arXiv~\cite{gehrke2003overview}.
\item \dataset{web-NotreDame} is the hyperlink structure of web pages in the
  \texttt{nd.edu} domain~\cite{albert1999internet}.
\item \dataset{amazon0601} consists of connections between frequently
  co-purchased products on Amazon~\cite{leskovec2007dynamics}.
\item \dataset{wiki-Talk} represents which users wrote on which other users' talk page
  on Wikipedia~\cite{leskovec2010governance}.
\item \dataset{ego-Gplus} is a collection of egonetworks (1-hop neighborhoods)
  on the online social network Google+~\cite{leskovec2012learning}.
\item \dataset{uk-2014-tpd} is the hyperlink structure of top private domain
  links in the \texttt{.uk}
  domain~\cite{boldi2004webgraph,boldi2011layered,boldi2014bubing}.
\item \dataset{soc-Pokec} consists of the friendship relationships on the online
  social network Pokec~\cite{takac2012data}.
\item \dataset{uk-2014-host} is the hyperlink structure of host links on the
  \texttt{.uk} domain~\cite{boldi2004webgraph,boldi2011layered,boldi2014bubing}.
\item \dataset{soc-LiveJournal1} consists of the friendships on the online
  social network LiveJournal~\cite{backstrom2006group}.
\item \dataset{enwiki-2013} is the hyperlink structure of articles on English
  Wikipedia~\cite{boldi2004webgraph,boldi2004ubicrawler,boldi2011layered}
\item \dataset{uk-2002} is the hyperlink structure of web pages in the
  \texttt{.uk}
  domain~\cite{boldi2004webgraph,boldi2004ubicrawler,boldi2011layered}.
\item \dataset{arabic-2005} is the hyperlink structure of arabic-language web
  pages~\cite{boldi2004webgraph,boldi2004ubicrawler,boldi2011layered}.
\item \dataset{twitter-2010} represents the followers of users on the online
  social network
  Twitter in 2010~\cite{boldi2004webgraph,boldi2011layered,kwak2010twitter}.
\item \dataset{sk-2005} is the hyperlink structure of web pages in the
  \texttt{.sk}
  domain~\cite{boldi2004webgraph,boldi2004ubicrawler,boldi2011layered}.
\end{itemize}


\begin{table}[hb]
\centering
\dualcaption{Networks for scalability experiments}{The total number of edges is
  the sum of the number of unidirectional edges and twice the number of
  bidirectional edges.}
\begin{tabular}{l c c c c}
\toprule
Dataset                       & \# nodes & \# total edges & \# unidir.\ edges & \# bidir.\ edges \\
\midrule
\dataset{wiki-RfA}         & 10.8K    & 189K           & 175K              & 7.00K            \\
\dataset{email-EuAll}      & 265K     & 419K           & 310K              & 54.5K            \\
\dataset{cit-HepPh}        & 34.5K    & 422K           & 420K              & 657              \\
\dataset{web-NotreDame}    & 326K     & 1.47M          & 711K              & 380K             \\
\dataset{amazon0601}       & 403K     & 3.39M          & 1.50M             & 944K             \\
\dataset{wiki-Talk}        & 2.39M    & 5.02M          & 4.30M             & 362K             \\
\dataset{ego-Gplus}        & 108K     & 13.7M          & 10.8M             & 1.44M            \\
\dataset{uk-2014-tpd}      & 1.77M    & 16.9M          & 13.7M             & 1.58M            \\
\dataset{soc-Pokec}        & 1.63M    & 30.6M          & 14.0M             & 8.32M            \\
\dataset{uk-2014-host}     & 4.77M    & 46.8M          & 33.7M             & 6.55M            \\
\dataset{soc-LiveJournal1} & 4.85M    & 68.5M          & 17.2M             & 25.6M            \\
\dataset{enwiki-2013}      & 4.21M    & 101M           & 82.6M             & 9.37M            \\
\dataset{uk-2002}          & 18.5M    & 292M           & 231M              & 30.5M            \\
\dataset{arabic-2005}      & 22.7M    & 631M           & 477M              & 77.3M            \\
\dataset{twitter-2010}     & 41.7M    & 1.47B          & 937M              & 266M             \\
\dataset{sk-2005}          & 50.6M    & 1.93B          & 1.69B             & 120M             \\
\bottomrule
\end{tabular}
\label{tab:networks_description}
\end{table}

\clearpage

Recall that \Cref{alg:motif_fiedler} consists of two major computational components:
\begin{enumerate}
\item Form the weighted graph $W_M$.
\item Compute the eigenvector $z$ of second smallest eigenvalue of $\normmotiflap$.
\end{enumerate}
After computing the eigenvector, we sort the vertices and loop over prefix sets
to find the lowest motif conductance set.  We consider these final steps as part
of the eigenvector computation for our performance experiments.

For each network in \cref{tab:networks_description}, we ran the method for all
directed triangular motifs ($M_1$--$M_7$).  To compute $W_M$, we used a standard
algorithm that meets the $O(m^{3/2})$
bound~\cite{schank2005finding,latapy2008main} with some additional
pre-processing based on the motif.  Conceptually, the algorithm is as follows:

\begin{enumerate}
\item Take motif type $M$ and graph $G$ as input.

\item (Pre-processing)  If $M$ is $M_1$ or $M_5$, ignore all bidirectional
  edges in $G$ as these motifs only contain unidirectional edges.  If $M$ is
  $M_4$, ignore all unidirectional edges in $G$ as this motif only contains
  bidirectional edges.
  
\item Form the undirected graph $G_{\textnormal{undir}}$ by removing the
  direction of all edges in $G$.
  
\item Let $d_u$ be the degree of node $u$ in $G_{\textnormal{undir}}$.  Order
  the nodes in $G_{\textnormal{undir}}$ by increasing degree, breaking ties
  arbitrarily.  Denote this ordering by $\psi$.
  
\item For every edge undirected edge $\{u, v\}$ in $G_{\textnormal{undir}}$, if
  $\psi_u < \psi_v$, add directed edge $(u, v)$ to $G_{\textnormal{dir}}$;
  otherwise, add directed edge $(v, u)$ to $G_{\textnormal{dir}}$.
  
\item For every node in $u$ in $G_{\textnormal{dir}}$ and every pair of directed
  edges $(u, v)$ and $(u, w)$, check to see if edge $(v, w)$ or $(w, v)$ is in
  $G_{\textnormal{dir}}$.  If so, check if these three nodes form motif $M$ in
  $G$.  If they do, increment the weights of edges $(W_M)_{uv}$,
  $(W_M)_{uw}$, and $(W_M)_{vw}$ by $1$.
  
\item Return $W_M + W_M^T$ as the motif weighted adjacency matrix.  
\end{enumerate}

The algorithm runs in time $O(m^{3/2})$ time and is also
known as an effective heuristic for real-world
networks~\cite{berry2014why,latapy2008main}.  After, we find the largest
connected component of the graph corresponding to the motif adjacency matrix
$W_M$, form the motif normalized Laplacian $\normmotiflap$ of the largest
component, and compute the eigenvector of second smallest eigenvalue of
$\normmotiflap$.  To compute the eigenvector, we use MATLAB's \texttt{eigs}
routine with tolerance 1e-4 and the ``smallest algebraic'' option for the
eigenvalue type.

\Cref{tab:scalability_results} lists the time to compute $W_M$ and the time to
compute the eigenvector for each network.  We omit the time to read the graph
from disk because this time strongly depends on how the graph is compressed.
All experiments ran on a 40-core server with four 2.4 GHz Intel Xeon E7-4870
processors.  All computations of $W_M$ were in serial and the computations of
the eigenvectors were done in parallel (as defaulted to by MATLAB).

Over all networks and all motifs, the longest computation of $W_M$ (including
pre-processing time) was for $M_2$ on \dataset{sk-2005} and took roughly 52.8
hours.  The longest eigenvector computation was for $M_6$ on \dataset{sk-2005}
and took about 1.62 hours. We note that $W_M$ only needs to be computed once per
network, regardless of the eventual number of clusters that are extracted.
Also, the computation of $W_M$ can easily be accelerated by parallel computing
(the enumeration of motifs can be done in parallel over nodes, for example) or
by more sophisticated algorithms~\cite{berry2014why}.  In this work, we perform
the computation of $W_M$ in serial in order to better understand the
scalability.  Our results serve only as a rough baseline.

\begin{table}[h]
\dualcaption{Performance of motif-based clustering}{%
Time to compute the motif adjacency matrix $W_M$ (top) and second eigenvector of
the motif normalized Laplacian $\normmotiflap$ (bottom) in seconds for each
directed triangular motif.
}
\scalebox{0.84}{
\begin{tabular}{l c c c c c c c}
\toprule
Network & $M_{1}$ & $M_{2}$ & $M_{3}$ & $M_{4}$ & $M_{5}$ & $M_{6}$ & $M_{7}$ \\ 
\midrule
\dataset{wiki-RfA}         & 1.19e+00 & 2.67e+00 & 1.71e+00 & 2.06e-02 & 1.79e+00 & 2.42e+00 & 2.35e+00 \\
\dataset{email-EuAll}      & 4.74e-01 & 8.29e-01 & 6.26e-01 & 2.46e-01 & 5.02e-01 & 5.40e-01 & 5.41e-01 \\
\dataset{cit-HepPh}        & 7.65e+00 & 3.36e+00 & 2.73e+00 & 6.22e+00 & 8.20e+00 & 3.29e+00 & 3.35e+00 \\
\dataset{web-NotreDame}    & 9.42e-01 & 2.39e+01 & 2.33e+01 & 2.30e+00 & 1.17e+00 & 8.29e+00 & 8.40e+00 \\
\dataset{amazon0601}       & 2.35e+00 & 8.66e+00 & 6.91e+00 & 1.82e+00 & 2.94e+00 & 5.47e+00 & 5.73e+00 \\
\dataset{wiki-Talk}        & 1.07e+01 & 3.00e+01 & 2.20e+01 & 3.11e+00 & 1.35e+01 & 2.09e+01 & 2.10e+01 \\
\dataset{ego-Gplus}        & 8.55e+02 & 2.42e+03 & 1.73e+03 & 2.08e+01 & 1.63e+03 & 2.07e+03 & 2.17e+03 \\
\dataset{uk-2014-tpd}      & 8.10e+01 & 5.31e+02 & 4.07e+02 & 2.56e+01 & 1.15e+02 & 3.04e+02 & 2.85e+02 \\
\dataset{soc-Pokec}        & 4.17e+01 & 1.34e+02 & 1.21e+02 & 3.04e+01 & 4.88e+01 & 1.00e+02 & 1.04e+02 \\
\dataset{uk-2014-host}     & 9.98e+02 & 4.68e+03 & 2.76e+03 & 8.90e+01 & 1.32e+03 & 2.89e+03 & 2.99e+03 \\
\dataset{soc-LiveJournal1} & 9.08e+01 & 7.66e+02 & 6.24e+02 & 1.24e+02 & 1.24e+02 & 4.41e+02 & 4.49e+02 \\
\dataset{enwiki-2013}      & 8.36e+02 & 9.62e+02 & 7.09e+02 & 3.13e+01 & 9.77e+02 & 8.19e+02 & 8.38e+02 \\
\dataset{uk-2002}          & 1.47e+03 & 8.59e+03 & 5.17e+03 & 2.45e+02 & 1.73e+03 & 4.53e+03 & 5.29e+03 \\
\dataset{arabic-2005}      & 6.51e+03 & 7.64e+04 & 6.05e+04 & 6.08e+03 & 8.39e+03 & 3.59e+04 & 3.69e+04 \\
\dataset{twitter-2010}     & 1.21e+04 & 1.38e+05 & 1.31e+05 & 3.33e+04 & 1.99e+04 & 8.03e+04 & 7.65e+04 \\
\dataset{sk-2005}          & 5.52e+04 & 1.63e+05 & 1.29e+05 & 1.55e+04 & 5.23e+04 & 9.64e+04 & 8.42e+04 \\
\midrule
\dataset{wiki-RfA}         & 1.14e-01 & 2.12e-01 & 1.22e-01 & 2.12e-01 & 2.12e-01 & 2.94e-01 & 2.93e-01 \\
\dataset{email-EuAll}      & 2.29e-01 & 1.62e-01 & 2.43e-01 & 1.62e-01 & 1.62e-01 & 2.35e-01 & 1.92e-01 \\
\dataset{cit-HepPh}        & 2.11e+00 & 2.10e+00 & 2.11e+00 & 2.10e+00 & 2.10e+00 & 2.24e+00 & 2.30e+00 \\
\dataset{web-NotreDame}    & 1.86e-01 & 3.62e-01 & 5.97e-01 & 3.62e-01 & 3.62e-01 & 9.61e-01 & 2.06e+00 \\
\dataset{amazon0601}       & 1.23e-01 & 6.96e-01 & 4.62e+00 & 6.96e-01 & 6.96e-01 & 4.97e+00 & 4.53e+00 \\
\dataset{wiki-Talk}        & 1.28e+00 & 2.40e+00 & 2.51e+00 & 2.40e+00 & 2.40e+00 & 2.54e+00 & 4.52e+00 \\
\dataset{ego-Gplus}        & 4.42e+00 & 1.68e+01 & 2.11e+01 & 1.68e+01 & 1.68e+01 & 2.57e+01 & 4.42e+01 \\
\dataset{uk-2014-tpd}      & 3.59e+00 & 9.66e+00 & 9.92e+00 & 4.35e+00 & 9.66e+00 & 2.10e+01 & 2.16e+01 \\
\dataset{soc-Pokec}        & 1.96e+00 & 1.75e+01 & 3.91e+01 & 1.75e+01 & 1.75e+01 & 2.39e+01 & 2.45e+01 \\
\dataset{uk-2014-host}     & 1.81e+01 & 4.38e+01 & 6.80e+01 & 2.04e+01 & 4.38e+01 & 8.28e+01 & 8.73e+01 \\
\dataset{soc-LiveJournal1} & 2.32e+00 & 2.20e+01 & 1.06e+02 & 2.20e+01 & 2.20e+01 & 4.49e+01 & 6.13e+01 \\
\dataset{enwiki-2013}      & 2.18e+01 & 7.58e+01 & 8.45e+01 & 7.58e+01 & 7.58e+01 & 2.14e+02 & 1.48e+02 \\
\dataset{uk-2002}          & 1.66e+01 & 8.65e+01 & 2.52e+02 & 8.65e+01 & 8.65e+01 & 7.87e+02 & 5.32e+02 \\
\dataset{arabic-2005}      & 1.98e+01 & 1.64e+02 & 4.80e+02 & 3.26e+02 & 1.64e+02 & 1.95e+03 & 1.40e+03 \\
\dataset{twitter-2010}     & 2.23e+02 & 1.23e+03 & 1.95e+03 & 1.23e+03 & 1.23e+03 & 2.22e+03 & 2.18e+03 \\
\dataset{sk-2005}          & 5.73e+01 & 2.94e+02 & 7.98e+02 & 2.94e+02 & 2.94e+02 & 5.83e+03 & 3.81e+03 \\
\bottomrule
\end{tabular}
}
\label{tab:scalability_results}
\end{table}

\clearpage

In theory, the worst-case time for triangle enumeration scales as $m^{1.5}$.
We fit a linear regression of the log of the computation time of the last step of the
enumeration algorithm to the regressor $\log(m)$ and a constant term:
\begin{equation}\label{eqn:regression}
\log(\text{time}) \sim a\log(m) + b
\end{equation}
If the computations truly took $cm^{1.5}$ for some constant $c$, then the
regression coefficient for $\log(m)$ would be $1.5$.  Because of the
pre-processing of the algorithm, the number of edges $m$ depends on the motif.
For example, with motifs $M_1$ and $M_5$, we only count the number of
unidirectional edges.  The pre-processing time, which is linear in the total
number of edges, is not included in the time.  The regression coefficient for
$\log(m)$ (the variable $a$ in \cref{eqn:regression}) is smaller than $1.5$ for
each motif (\cref{tab:scale_conf}).  The largest regression coefficient is
$1.31$ for $M_3$ (with 95\% confidence interval $1.31 \pm 0.19$).  The
regression coefficient over the aggregation of data points (the ``combined''
column in \cref{tab:scale_conf}) is $1.17$ (with 95\% confidence interval $1.17
\pm 0.09$).  We conclude that on real-world datasets, the algorithm for
computing $W_M$ performs much better than the worst-case guarantees.



\begin{table}[h]
\setlength{\tabcolsep}{3pt}
\centering
\dualcaption{Linear models for computation time}{
The 95\% confidence interval for the regression coefficient of the regressor
$\log(m)$ in a linear model for predicting the time to compute $W_M$, based on
the computational results for the 16 networks described at the beginning of
\cref{sec:scalability}.  The algorithm scales as $m^{1.5}$ for the worst-case
input.  ``Combined'' refers to the regression coefficient for
the union of data points of all motifs.
}
\scalebox{0.86}{
\begin{tabular}{c c c c c c c c c}
  \toprule
$M_{1}$ & $M_{2}$ & $M_{3}$ & $M_{4}$ & $M_{5}$ & $M_{6}$ & $M_{7}$ & combined \\ \midrule
$1.20 \pm 0.19$ &  $1.30 \pm 0.20$ & $1.31 \pm 0.19$ & $0.90 \pm 0.31$ & $1.20 \pm 0.20$ & $1.27 \pm 1.21$ & $1.27 \pm 0.21$ &  $1.17 \pm 0.09$ \\
  \bottomrule
\end{tabular}
  }
\label{tab:scale_conf}
\end{table}

\clearpage

\subsection{Larger $k$-clique motifs}

On smaller graphs, we can compute larger motifs.  To demonstrate this point, we
form the motif adjacency matrix $W_M$ based on the $k$-cliques motif for $k = 4,
\dots, 9$ using the algorithm of \citet{chiba1985arboricity} with the additional
pre-processing of computing the ($k-1$)-core of the graph.  (This pre-processing
improves the running time in practice but does not affect the asymptotic
complexity.)  The motif adjacency matrices for $k$-cliques are sparser than the
adjacency matrix of the original graph, so we do not worry about spatial
complexity for these motifs.

We evaluate this procedure on nine real-world networks, ranging from roughly
four thousand nodes and 88 thousand edges to over two million nodes and around
five million edges.  We briefly describe the networks here and list summary
statistics in \cref{tab:cliques_perf}.
\begin{itemize}
\item \dataset{ego-Facebook} is the union of ego networks from the user
  friendship graph of the online social network
  Facebook~\cite{leskovec2012learning}.

\item \dataset{ca-AstroPh} represents scientists who have co-authored a paper
  listed on the AstroPhysics category on
  arXiv~\cite{leskovec2005graphs}.

\item \dataset{soc-Slashdot0811} represents who tagged whom during an event on
  the online social network Slashdot~\cite{leskovec2009community}.  The
  original network data is directed and signed, and we ignore both of these
  properties here.

\item \dataset{com-DBLP} represents scientists who have co-authored a paper
  listed on DBLP~\cite{yang2012defining}.

\item \dataset{com-Youtube} consists of friendships on the online social network
  YouTube~\cite{mislove2007measurement}.
\end{itemize}
We also use the \dataset{wiki-RfA}, \dataset{email-EuAll}, \dataset{cit-HepPh},
and \dataset{wiki-Talk} datasets described earlier, although we now consider the
graphs to be undirected.

Each network contains at least one $9$-clique and hence at least one $k$-clique
for $k < 9$.  All computations ran on the same server as for the triangular
motifs and again there was no parallelism.  We terminated computations after two
hours.  For five of the nine networks, it takes less than two hours to compute
$W_M$ for any $k$-clique motif, $k = 4, \ldots, 9$ (\cref{tab:cliques_perf}).
Furthermore, on all of these networks, the computation takes less than two hours
for $k = 4, 5, 6$.  The smallest network (in terms of number of nodes and number
of edges) is \dataset{ego-Facebook}, where it took just under two hours to
comptue $W_M$ for the $6$-clique motif and over two hours for the $7$-clique
motif.  This network has around 80,000 edges.  On the other hand, for
$\youtube$, which contains nearly 3 million edges, we can compute $W_M$ for the
$9$-clique motif in under a minute.  We conclude that it is possible to use our
framework with motifs much larger than the three-node motifs on which we
performed many of our experiments.  However, the number of edges is not a
good predictor of the running time to compute $W_M$.  This makes sense because the
complexity of the algorithm of \citet{chiba1985arboricity} is $O(a^{k-2}m)$,
where $a$ is the arboricity of the graph.  Hence, the dependence on the number
of edges is always linear, and the arboricity drives the running time.

\begin{table}[h]
\centering
\dualcaption{Time to compute $W_M$ for $k$-clique motifs}{
All times are in seconds.  Only computations that finished within two hours are
listed.
}
\begin{tabular}{l c c @{\qquad} c c c c c c}
\toprule
                      & &  & \multicolumn{6}{c}{Number of nodes in clique ($k$)}        \\
                      \cmidrule(r){4-9} 
Network                    & \# nodes & \# edges & 4  & 5   & 6    & 7    & 8    & 9    \\
\midrule
\dataset{ego-Facebook}     & 4.04K & 88.2K & 14 & 317 & 6816 & --   & --   & --   \\
\dataset{wiki-RfA}         & 10.8K & 182K  & 6  & 22  & 63   & 134  & 218  & 286  \\
\dataset{ca-AstroPh}       & 18.8K & 198K  & 5  & 35  & 285  & 2164 & --   & --   \\
\dataset{email-EuAll}      & 265K  & 364K  & 1  & 2   & 4    & 5    & 6    & 6    \\
\dataset{cit-HepPh}        & 34.5K & 421K  & 3  & 6   & 11   & 18   & 30   & 36   \\
\dataset{soc-Slashdot0811} & 77.4K & 469K  & 3  & 12  & 55   & 282  & 1018 & 2836 \\
$\dblp$                    & 317K  & 1.05M & 9  & 129 & 3234 & --   & --   & --   \\
$\youtube$                 & 1.13M & 2.99M & 12 & 17  & 25   & 33   & 35   & 33   \\
\dataset{wiki-Talk}        & 2.39M & 4.66M & 64 & 466 & 2898 & --   & --   & --   \\
\bottomrule
\end{tabular}
\label{tab:cliques_perf}
\end{table}

\clearpage

\section{An extension to local higher-order clustering}
\label{sec:local}

\subsection{Overview}

Thus far, our algorithms have focused on finding a \emph{global clustering} of
the network---we assign every node in the network to a cluster.  In contrast,
\emph{local} graph clustering methods aim to find a cluster of nodes by exploring a
small region of the graph.  More specifically, the idea is to identify a single
cluster nearby a seed set of nodes without ever exploring the entire graph, which makes
the local clustering methods much faster than their global counterparts.
Because of its speed and scalability, this approach is frequently used in
applications including ranking and community detection on the
Web~\cite{epasto2014reduce,gargi2011large}, social
networks~\cite{jeub2015think}, and bioinformatics~\cite{jiang2009gene}.
Furthermore, the seed-based targeting is also critical to many applications.  In
analysis of protein-protein interaction networks, for instance, local clustering
aids in determining additional members of a protein
complex~\cite{voevodski2009spectral}.  However, current local graph partitioning
methods are also not designed to account for higher-order structures.

A major advantage of the theory we developed for global higher-order clustering
is that it can immediately be adapted for other objectives.  Specifically, we
can re-interpret objective functions based on cuts and volumes of sets (e.g.,
conductance) to motif cuts and motif volumes of sets (e.g., motif conductance).
And in fact, much of the theory and algorithms for local clustering are most
well developed when using conductance as the cluster quality
measure~\cite{andersen2006local,zhu2013local}.  In this section, we adapt one of
these methods---the approximate personalized PageRank (APPR)
algorithm~\cite{andersen2006local}---to find local clusters containing a seed
node with minimal \emph{motif conductance}.  The theory of the approximate
personalized PageRank method is based on the graph Laplacian and conductance, so
we get new theory and algorithms for motif-based local clusters ``for
free''.\footnote{We need to be a little careful with the analysis, but the core
ideas come ``for free''.}  In particular, we get guarantees of fast
running time (independent of the size of the graph, assuming we have
pre-computed the motif adjacency matrix $W$) and cluster quality (in terms of
motif conductance).  For community detection tasks on both synthetic and
real-world networks, our new framework outperforms the current edge-based
personalized PageRank methodology.  From a data mining perspective, the main
advantage of local higher-order clustering is to provide new types of heretofore
unexplored local information based on higher-order structures.

As stated above, our approach to local higher-order clustering is to generalize
the APPR of \citet{andersen2006local} to finding sets of provably small motif
conductance (\cref{Thm:ApproxGuarantee} is the main result).  The APPR method is
a graph diffusion that ``spreads mass'' from a seed set to identify the
cluster. It has an extremely fast running time, which is roughly proportional to
the size of the output cluster.  Our generalization just uses a pre-processing
step that transforms the original (possibly directed) network into the
motif-weighted graph, where the weight on edge $(i, j)$ is the number of times
that nodes $i$ and $j$ co-participate in an instance of the motif.  We show that
running APPR on this weighted network maintains the provably fast running time
and has theoretical guarantees on cluster output quality in terms of motif
conductance.  An additional benefit of our motif-based APPR method is that it
naturally handles directed graphs on which graph clustering has been a
longstanding challenge.  The original APPR method can only be used for
undirected graphs, and existing local approaches for APPR on directed graphs are
challenging to interpret~\cite{andersen2008local}.

We use our motif-based APPR method on a number of community detection tasks and
show improvements over the corresponding edge-based methods.  We show that using
the triangle motif improves the detection of ground truth communities in
synthetic networks.  In addition, we identify important directed triangle motifs
for recovering community structure in directed graphs.  We note that there are
several methods for finding clusters containing a seed node beyond the
personalized PageRank method considered here, including other graph
diffusions~\cite{chung2013solving,kloster2014heat}, local spectral
methods~\cite{li2015uncovering,mahoney2012local}, local modularity
maximization~\cite{clauset2005finding}, and flow-based
algorithms~\cite{orecchia2014flow}.  We focus on generalizing the personalized
PageRank method because of the algorithm's simplicity and scalability.  Finally,
the local method of \citet{rohe2013blessing} is specifically designed around
growing a seed set to avoid cutting the triangle motif, which is most similar to
our ideas, and their analysis is developed from a stochastic blockmodel point of
view.

\subsection{Motif-based personalized PageRank}

We now generalize the classical personalized PageRank method to account for
motifs.  The essential idea of our approach is to transform the input graph,
which is unweighted and possibly directed, into a weighted undirected
graph---the motif adjacency matrix defined in \cref{sec:motif_adjacency}.  We
then show that the fast approximate personalized PageRank method on this
weighted graph will efficiently find a set with small motif conductance that
contains a given seed node.  We also explain how previous theoretical results
are applicable to this approach, which gives us formal guarantees on running
time and cluster output quality in terms of motif conductance.

\xhdr{Background on approximate personalized PageRank (APPR)}
The personalized PageRank ($\ppr$) vector represents the stationary distribution
of a particular random walk.  At each step of the random walk, with a
teleportation parameter $\alpha \in (0, 1)$, the random walker will ``teleport''
back to a specified seed node $u$; and with probability $1 - \alpha$, the walker
will transition uniformly at random to an adjacent node.  The key idea is that
the stationary distribution of this process for a seed node $u$ (the $\ppr$ vector
$p_u$) will have large values for nodes ``close'' to $u$.  We can write the
stationary distribution as the solution to the following system of equations
\begin{equation}\label{eqn:ppr}
(I - \alpha P)p_u = (1 - \alpha)e_u,
\end{equation}
where $P$ is the column-stochastic transition matrix representing the standard
random walk over the graph and $e_u$ is the indicator vector for node $u$.
Formally,
$P = AD^{-1}$,
where $A$ is the adjacency matrix,
$D = \text{diag}(Ae)$
is the diagonal degree matrix, and $e$ is the vector of all ones.

\Citet{andersen2006local} developed a fast algorithm for approximating $p_u$ by a vector
$\tilde{p}_u$ where $0 \le D^{-1}(p_u - \tilde{p}_u) \le \varepsilon$
component-wise.  To obtain a cluster with small conductance from this
approximation, we again use the sweep procedure:
\begin{enumerate}
\item sort the nodes by descending value in the vector $D^{-1}\tilde{p}_u$,
\item compute the conductance of each prefix in this sorted list, and
\item output the prefix set with smallest conductance.
\end{enumerate}
This is the same sweep procedure as in \cref{alg:motif_fiedler}, but the node values
are given by $D^{-1}\tilde{p}_u$ instead of the Fiedler vector.  Overall, this
algorithm is fast (it runs in time proportional to the size of the output
cluster) and is guaranteed to have small conductance provided that node $u$ is
in a set with small conductance.  We will be more specific with the guarantees
in the following section, when we derive the analogous theory for the
motif-based approach.

\xhdr{Motif-weighted APPR}
We now propose an algorithm that finds a cluster with small motif conductance by
finding an approximate $\ppr$ vector on a weighted graph based on the motif.
The algorithm has three steps:
\begin{enumerate}
\item construct the motif adjacency matrix (\cref{eqn:informal_weighting}),
where $W_{ij}$ is the number of instances of $M$ containing nodes $i$ and $j$,
\item compute the approximate \texttt{PPR} vector for this weighted graph,
\item use the sweep procedure to output the set with minimal conductance.
\end{enumerate}
\Cref{alg:Nibble} formally describes this method. Note
that step (i) needs to be done only once, whereas steps (ii) and (iii) would be
repeated for multiple subsequent runs.
This approach is directly motivated by \cref{thm:motif_cond}, which says
that edge-based conductance of sets in the weighted network is equal to
motif-based conductance in the original network.

\begin{algorithm}[tb]\algoptions
    \KwIn{Unweighted graph $G = (V, E)$, motif $M$, seed node $u$, teleportation parameter $\alpha$, tolerance $\varepsilon$}
    \KwOut{Motif-based cluster (set $S \subset V$)}
    $W_{ij} \gets$ \#(instances of $M$ containing nodes $i$ and $j$)\; \label{line:motif_adj}
    $\tilde{p} \gets \text{Approximate-Weighted-PPR}(W, u, \alpha, \varepsilon)$ \quad (\cref{alg:APPR})\;
    $D_W \gets \text{diag}(We)$\;
    $\sigma_i \gets$ $i$th smallest entry of $D_W^{-1}\tilde{p}$\;
    \Return $S \gets  \arg \min_\ell \mmcond{S_\ell}$, where $S_\ell = \{\sigma_1, \dots, \sigma_\ell\}$\;
    \dualcaption{Motif-PageRank-Nibble}{This algorithm finds localized clusters
      with small motif conductance}
\label{alg:Nibble}
\end{algorithm}

\begin{algorithm}[tb]\algoptions
    \KwIn{Undirected edge-weighted graph $G_w = (V_w, E_w, W)$,
    seed node $u$, 
    teleportation parameter $\alpha$, 
    tolerance $\varepsilon$}
    \KwOut{an $\varepsilon$-approximate weighted \ppr vector $\tilde{p}$}
    $\tilde{p} \gets 0$\;
    $r \gets e_u$\;
    $ d_w \gets We$\;
    \While{$r(v) / d_w(v) \geq \varepsilon$ for some node $v \in V_w$} {
      \mycomment{push operation} \; \label{line:push}
      $\tilde{p}(v) \gets \tilde{p}(v) + (1-\alpha) r(u)$ \;
      $r(v) \gets \frac{\alpha}{2}  r(v)$\;
      \For{each $x$ such that $(v, x) \in E_w$} {$r(x) \gets r(x) +  \frac{W_{v,w}}{d_w(v)} \cdot \frac{\alpha}{2}  r(v)$}
      }
    \Return $\tilde{p}$\;
\dualcaption{Approximate-Weighted-PPR}{}
\label{alg:APPR}
\end{algorithm}

The APPR method is designed for \emph{unweighted} graphs, whereas we want to use
the method for weighted graphs.  Mathematically, this corresponds to replacing
the column stochastic matrix $P$ in the linear system with the column stochastic
matrix $P_W = WD_W^{-1}$, where $D_W = \text{diag}(We)$ is the diagonal weighted
degree matrix.  For the purposes of implementation, this modification is simple.
We just need to change the algorithm's \texttt{push} method (\cref{line:push} in \cref{alg:APPR})
to push residual weights to neighbors proportional to edge weights (instead of evenly).  We state
the procedure in \cref{alg:APPR}.

\xhdr{Theory}
For the purposes of theoretical analysis with motifs, it is important that our
edge weights are integers so that we can interpret an edge with weight $k$ as
$k$ parallel edges.  Since all of the analysis of APPR permits parallel edges in
the graph, we can combine previous results for theoretical guarantees on
\cref{alg:Nibble}.  The following theorem says that
our algorithm runs in time proportional to the size of the output set. 
\begin{theorem} \label{Thm:RunningTime}
\Cref{alg:Nibble}, after line 1, runs in $O(\frac{1}{\varepsilon(1-\alpha)})$
time, and the number of nodes with non-zero values in the output
approximated $\ppr$ vector is at most $\frac{1}{\varepsilon (1-\alpha)}$.
\end{theorem}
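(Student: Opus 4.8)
The plan is to observe that, after line~1, \Cref{alg:Nibble} is precisely the approximate personalized PageRank (APPR) procedure of \citet{andersen2006local} run on the motif-weighted graph $W$, so the claim follows by replaying their amortized analysis of the \texttt{push} operation, with the single adjustment that $W$ carries positive integer weights rather than being unweighted. As noted in the discussion preceding the theorem, I would interpret an edge of weight $k$ as $k$ parallel edges; with this reading a \texttt{push} at a node $v$ touches $O(d_w(v))$ edges, where $d_w(v) = (We)_v$ is the weighted degree, and the implementation in \Cref{alg:APPR}, which spreads residual in proportion to edge weights, does no more work than this per \texttt{push} while realizing the weighted random walk $P_W = W D_W^{-1}$ (whose conductance equals the motif conductance by \cref{thm:motif_cond}).

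First I would record the mass invariant: a \texttt{push} at $v$ moves a $(1-\alpha)$ fraction of the residual $r(v)$ into $\tilde p$ and redistributes the rest (half kept at $v$, half sent along the weighted edges), so $\|r\|_1$ decreases by exactly $(1-\alpha)r(v)$ while $\|\tilde p\|_1$ increases by the same amount; hence $\|\tilde p\|_1 + \|r\|_1 = \|e_u\|_1 = 1$ throughout, and in particular $\|\tilde p\|_1 \le 1$ always. Next I would use the loop guard: a \texttt{push} is applied at $v$ only when $r(v) \ge \varepsilon\, d_w(v)$, so that \texttt{push} increases $\|\tilde p\|_1$ by at least $(1-\alpha)\varepsilon\, d_w(v)$. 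Summing over every \texttt{push} performed during the run and using $\|\tilde p\|_1 \le 1$ gives $\sum_{\text{pushes}} d_w(v) \le \tfrac{1}{\varepsilon(1-\alpha)}$; since each \texttt{push} at $v$ costs $O(d_w(v))$, the total running time after line~1 is $O\!\left(\tfrac{1}{\varepsilon(1-\alpha)}\right)$.

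For the bound on the support size I would note that $\tilde p(x)$ can be nonzero only after $x$ has been pushed at least once, and at its first \texttt{push} we have $r(x) \ge \varepsilon\, d_w(x) \ge \varepsilon$ because the weighted degree of any non-isolated node is a positive integer; that \texttt{push} therefore raises $\tilde p(x)$ by at least $(1-\alpha)\varepsilon$. Hence every node in the support of the returned vector carries value at least $(1-\alpha)\varepsilon$, and from $\|\tilde p\|_1 \le 1$ there are at most $\tfrac{1}{\varepsilon(1-\alpha)}$ of them.

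The only genuine obstacle is the passage from the unweighted setting in which \citet{andersen2006local} state their bounds to the weighted motif-adjacency graph: one must check that systematically replacing ``degree'' by ``weighted degree $d_w$'' is legitimate — this is exactly where the parallel-edge interpretation and the integrality of the motif counts in \cref{eqn:informal_weighting} are used — and that the weighted \texttt{push} in \Cref{alg:APPR} still has $P_W$ as its governing operator, so that the $\varepsilon$-approximation property $0 \le D_W^{-1}(p_u - \tilde p) \le \varepsilon$ is inherited. Once those routine checks are in place, the two bounds above are the classical ACL amortization argument verbatim.
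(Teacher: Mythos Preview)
Your proposal is correct and follows essentially the same approach as the paper: both reduce to the analysis of \citet{andersen2006local} by interpreting the integer motif weights as parallel edges. The paper's proof is a one-line citation of \citet[Lemma~2]{andersen2006local} together with that parallel-edge observation, whereas you unpack the amortized \texttt{push} argument in full; the content is the same.
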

\begin{proof}
	This follows from \citet[Lemma 2]{andersen2006local},
	where we translate $G_w$ into a unweighted graph with parallel edges. 
\end{proof}
Although APPR with weighted edges has been used
before~\cite{andersen2008local,gleich2015using}, there was never a runtime
bound.  This result is the first (albeit straightforward) theoretical bound on the
runtime of APPR with weighted edges when they arise from integers.

Our next result is a theoretical guarantee on the quality of the output
of \cref{alg:Nibble} in terms of motif conductance.  The proof of the result
follows from combining \cref{thm:motif_cond} and the analysis of 
\citet{zhu2013local} (an improvement over the analysis of \citet{andersen2006local}).
The result says that if there is some set $T$ with small motif conductance, then
there are several nodes in $T$ for which
\cref{alg:Nibble} outputs a set with small motif conductance.  For
notation, let $\eta$ be the inverse mixing time of the random walk on the
subgraph induced by $T$.

\begin{theorem}   \label{Thm:ApproxGuarantee}
Suppose $T \subset V$ is some unknown targeted community we are trying to
retrieve from an unweighted graph using motif $M$.  Then for most seeds $u \in
T$, \cref{alg:Nibble} with
$1-\alpha = \Theta(\eta)$
and
$\varepsilon \in [\frac{1}{10 \mmvol{T}}, \frac{1}{5 \mmvol{T}}]$
runs in time
$O(\mmvol{T} / \mmcond{T})$
and outputs a set $S$ with
\[
\mmcond{S} \leq \tilde O\left( \min\left\{ \sqrt{\mmcond{T}}, \mmcond{T} / \sqrt{\eta} \right\} \right).
\]
\end{theorem}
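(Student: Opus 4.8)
The plan is to reduce Theorem~\ref{Thm:ApproxGuarantee} to the analogous statement for edge-based APPR on the weighted motif graph $G_M$, leveraging \cref{thm:motif_cond}, which tells us that motif conductance in $G$ exactly equals edge conductance in $G_M$. First I would note that running \cref{alg:Nibble} is, by construction, nothing more than running the edge-based APPR sweep procedure on $G_M$: step (i) builds $W_M$, steps (ii)--(iii) are the ordinary APPR routine and conductance sweep on the weighted graph. Hence whatever quality guarantee holds for APPR on $G_M$ (with respect to edge conductance in $G_M$) transfers verbatim to a guarantee for \cref{alg:Nibble} in terms of motif conductance in $G$. The targeted set $T \subset V$ becomes the same vertex set $T$ inside $G_M$, and $\mmvol{T} = \vol_{G_M}(T)$ up to the constant factor $|\anchorset|-1 = 2$ from \cref{lem:motif_vol} (for a three-node motif), which is absorbed into the $\tilde O$ and into the constants in the $\varepsilon$-window.

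Next I would invoke the local clustering analysis of \citet{zhu2013local}: for a graph $W$ and a set $T$ with inverse mixing time $\eta$, for a $1 - \rho$ fraction of seed nodes $u \in T$ (weighted by degree/volume), running APPR with teleportation $1-\alpha = \Theta(\eta)$ and tolerance $\varepsilon$ roughly $1/\vol(T)$ produces, via the sweep cut, a set $S$ with conductance $\tilde O(\min\{\sqrt{\phi(T)},\, \phi(T)/\sqrt{\eta}\})$, in time $O(\vol(T)/\phi(T))$ (the running-time part also follows from \cref{Thm:RunningTime} with the stated $\varepsilon$). Applying this to $W = W_M$, translating $\phi_{G_M} \to \mmcond{\cdot}$ via \cref{thm:motif_cond} and $\vol_{G_M} \to 2\,\mmvol{\cdot}$ via \cref{lem:motif_vol}, and noting that the ``most seeds'' clause is preserved under this bijection of vertex sets (the degree-weighting in $G_M$ is a bounded reweighting of the motif-degree weighting), yields exactly the claimed conclusion: $\mmcond{S} \le \tilde O(\min\{\sqrt{\mmcond{T}}, \mmcond{T}/\sqrt{\eta}\})$ with running time $O(\mmvol{T}/\mmcond{T})$.

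The one genuine subtlety --- and the step I expect to be the main obstacle --- is that the analysis of \citet{zhu2013local} is stated for unweighted graphs, while $G_M$ is integer-weighted. The fix, already flagged in the paper's footnote (``we need to be a little careful''), is to interpret an edge of weight $k$ in $G_M$ as $k$ parallel edges, turning $G_M$ into an unweighted multigraph $\widehat{G}_M$ on which APPR, its sweep cut, conductance, volume, and mixing time are all identical to their weighted counterparts on $G_M$. I would need to check that: (a) the APPR push operation in \cref{alg:APPR} on $G_M$ is exactly the push operation on $\widehat{G}_M$ (true, since pushing proportional to edge weight is pushing evenly across parallel edges); (b) the inverse mixing time $\eta$ of the walk on the subgraph induced by $T$ is the same whether computed in $G_M$ or $\widehat{G}_M$; and (c) the ``for most seeds'' statement, which in \citet{zhu2013local} is over seeds sampled proportional to degree, is stable under the multigraph expansion. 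All three are immediate once the parallel-edge correspondence is set up, so the proof is essentially: expand to a multigraph, cite \cref{thm:motif_cond} and \cref{lem:motif_vol} to rewrite edge quantities as motif quantities, and cite \citet{zhu2013local} together with \cref{Thm:RunningTime}. I would write the proof as exactly that chain of reductions.

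\begin{proof}
As in \cref{Thm:RunningTime}, we translate the integer-weighted graph $G_M$ into an unweighted multigraph $\widehat{G}_M$ by replacing each edge of weight $k$ with $k$ parallel edges; the APPR routine of \cref{alg:APPR} on $G_M$ coincides step-for-step with APPR on $\widehat{G}_M$, and edge conductance, volume, and the inverse mixing time $\eta$ of the walk induced on $T$ are identical in $G_M$ and $\widehat{G}_M$. By \cref{thm:motif_cond}, for every $U \subset V$ we have $\gcond{\widehat{G}_M}{U} = \gcond{G_M}{U} = \mmcond{U}$, and by \cref{lem:motif_vol}, $\gvol{G_M}{U} = 2\,\mmvol{U}$ for a three-node motif, so the sampling of seeds proportional to degree in $\widehat{G}_M$ is the sampling proportional to motif degree up to a factor of $2$. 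Applying the local clustering guarantee of \citet{zhu2013local} to $\widehat{G}_M$ with the target set $T$, teleportation $1-\alpha = \Theta(\eta)$, and $\varepsilon \in [\frac{1}{10\,\vol_{\widehat{G}_M}(T)}, \frac{1}{5\,\vol_{\widehat{G}_M}(T)}] = [\frac{1}{20\,\mmvol{T}}, \frac{1}{10\,\mmvol{T}}]$, we obtain that for most seeds $u \in T$ the sweep cut outputs a set $S$ with
\[
\gcond{\widehat{G}_M}{S} \le \tilde O\!\left( \min\!\left\{ \sqrt{\gcond{\widehat{G}_M}{T}},\; \gcond{\widehat{G}_M}{T}/\sqrt{\eta} \right\} \right),
\]
in time $O(\vol_{\widehat{G}_M}(T)/\gcond{\widehat{G}_M}{T})$. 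Rewriting both sides in terms of motif conductance and motif volume via \cref{thm:motif_cond,lem:motif_vol} (absorbing the constant $2$ into the $O$ and $\tilde O$), and noting that \cref{alg:Nibble} after line 1 is exactly this APPR-plus-sweep computation, gives $\mmcond{S} \le \tilde O(\min\{\sqrt{\mmcond{T}},\, \mmcond{T}/\sqrt{\eta}\})$ with running time $O(\mmvol{T}/\mmcond{T})$. The $\varepsilon$-window stated in the theorem, $[\frac{1}{10\,\mmvol{T}}, \frac{1}{5\,\mmvol{T}}]$, lies within a constant factor of the window used above and is covered by the same argument.
\end{proof}
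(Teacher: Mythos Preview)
Your proposal is correct and follows essentially the same approach as the paper: the paper states only that ``the proof of the result follows from combining \cref{thm:motif_cond} and the analysis of \citet{zhu2013local},'' and you have spelled out precisely that reduction, including the parallel-edge reinterpretation already used in \cref{Thm:RunningTime}. Your proof is more explicit than what the paper provides (which gives no formal proof at all), and your handling of the constant-factor discrepancy in the $\varepsilon$-window via \cref{lem:motif_vol} is a reasonable way to reconcile the stated bounds with the cited result.
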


For the computational complexity, the final piece we need to consider is the
complexity of forming the motif adjacency matrix (\cref{line:motif_adj} of \cref{alg:Nibble}),
which we have analyzed in \cref{sec:honc_basic_complexity}.  For the experiments
in this section, we will only use triangular motifs, so the 
$O(am)$ bound from \citet{chiba1985arboricity} suffices, where $a$ is the arboricity of the
graph and $m$ is the number of edges.

\xhdr{Towards purely local methods}
The graph weighting procedure can also be done locally by having
the \texttt{push} procedure compute $W_{vx}$ ``on the fly'' for all nodes $x$
adjacent to node $v$.  While this suffices for \cref{alg:APPR},
the \texttt{Nibble} method (\cref{alg:Nibble}) needs to know the total volume of the weighted graph
to compute the motif conductance scores.  To address this, one might use recent
techniques for quickly estimating the total $\ell$-clique volume on large
graphs~\cite{jain2017fast}.  We leave these optimizations for future work.

\xhdr{Practical considerations}
The formal theory underlying the methods (\cref{Thm:ApproxGuarantee}) requires
multiple apriori unknowable parameters including the inverse mixing time $\eta$
of the target community and the volume of the output. As practical guidance, we
suggest using $\alpha = 0.99$, computing the \texttt{PPR} vector for
$\varepsilon = 10^{-2}/\bar d_M$, $10^{-3}/\bar d_M$, $10^{-4}/\bar d_M$,
where $\bar d_M =\frac{1}{n} \mmvol{G}$ is the average motif-degree of all
nodes, and outputting the set of best motif conductance.  The reason for scaling
by $\bar d_M$ is as follows.  \Cref{Thm:RunningTime} bounds the running time by
volume as if accessing an edge $(i, j)$ with weight $W_{ij}$ takes
$\Theta(W_{ij})$ time (i.e., as if the edges are parallel).  However, we merely
need to access the value $W_{ij}$, which takes $O(1)$ time.  Scaling
$\varepsilon$ by $\bar d_M$ accounts for the average effect of parallel edges
present due to the weights of the motifs and permits the algorithm to do more
computation with roughly the same running time guarantees.

Rather than using the global minimum in the sweep procedure in the last step of
the Nibble method, we apply the common heuristic of finding the first local
minimum~\cite{yang2012defining}.  The first local minimum is the smallest set
where the PageRank vector suggests a border between the seed and the rest of the
graph. It also better models the small size scale of most ground truth
communities that we encounter in our experiments.

\subsection{Experiments on synthetic networks}
\label{sec:local_synth}

On a network with ground truth communities or clusters, our evaluation procedure
of both edge-based and motif-based APPR method is the following. For each ground
truth community, we use every node as a seed in the APPR method to obtain a set
and pick the set with the highest $F_1$ score for recovering the ground truth.
Then we take the average of the $F_1$ scores over all detected communities for
the detection accuracy of the method.  This measurement captures how well the
communities can possibly be recovered (i.e., given the best seed) and has
previously been used to compare seeded clustering
algorithms~\cite{kloster2014heat}.

We first evaluate our motif-based APPR method for recovering ground truth in two
common synthetic random graph models---the planted partition model and the LFR
model.  In both cases, we find that using the triangle motif increases the range
of parameters in which APPR is able to recover the ground truth communities.

\xhdr{Planted partition model}
The planted partition model generates random, undirected, unweighted graphs with
$kn_1$ nodes.  Nodes are partitioned into $k$ built-in communities, each of size
$n_1$. Between any pair of nodes from the same community, an edge exists with
probability $p$ and between any pair of nodes from different communities, an
edge exists with probability $q$.  Each edge exists independently of all other
edges.

In our experiment, we examine the behavior of the edge-based and motif-based
APPR methods by fixing parameters $n_1 = 50$, $k = 10$, $p = 0.5$, and takings
different values of $q$ such that the community mixing level
$\mu = [(k-1)q] / [p + (k-1)q]$,
which measures the fraction of expected neighbors of a node that cross cluster
boundary, varies from 0.1 to 0.9. For each value of $\mu$, we computed the
average of the ``mean best'' $F_1$ score described above over 20 random instances
of the graph.  For the motif-based APPR method, we used the triangle motif.  We
are motivated in part by recent theoretical results
of \citet{tsourakakis2017scalable} showing that with high probability, the
triangle conductance of a true community in the planted partition model is smaller than
the edge conductance.  Here we take an empirical approach and study recovery
instead of conductance.

\Cref{fig:PPM_F1} illustrates the results.  The triangle-based APPR
significantly outperforms the edge-based APPR method when $\mu \in [0.4, 0.6]$.
In this regime, for any given node, the expected number of intra-community edges
and inter-community edges is roughly the same.  Thus, the edge-based method
degrades in performance.  However, the number of intra-community triangles
remains greater than the number of inter-community triangles, so the
triangle-based method is able to recover the planted partition.

\begin{figure}[tb]
  \centering
  \includegraphics[width=0.7\columnwidth]{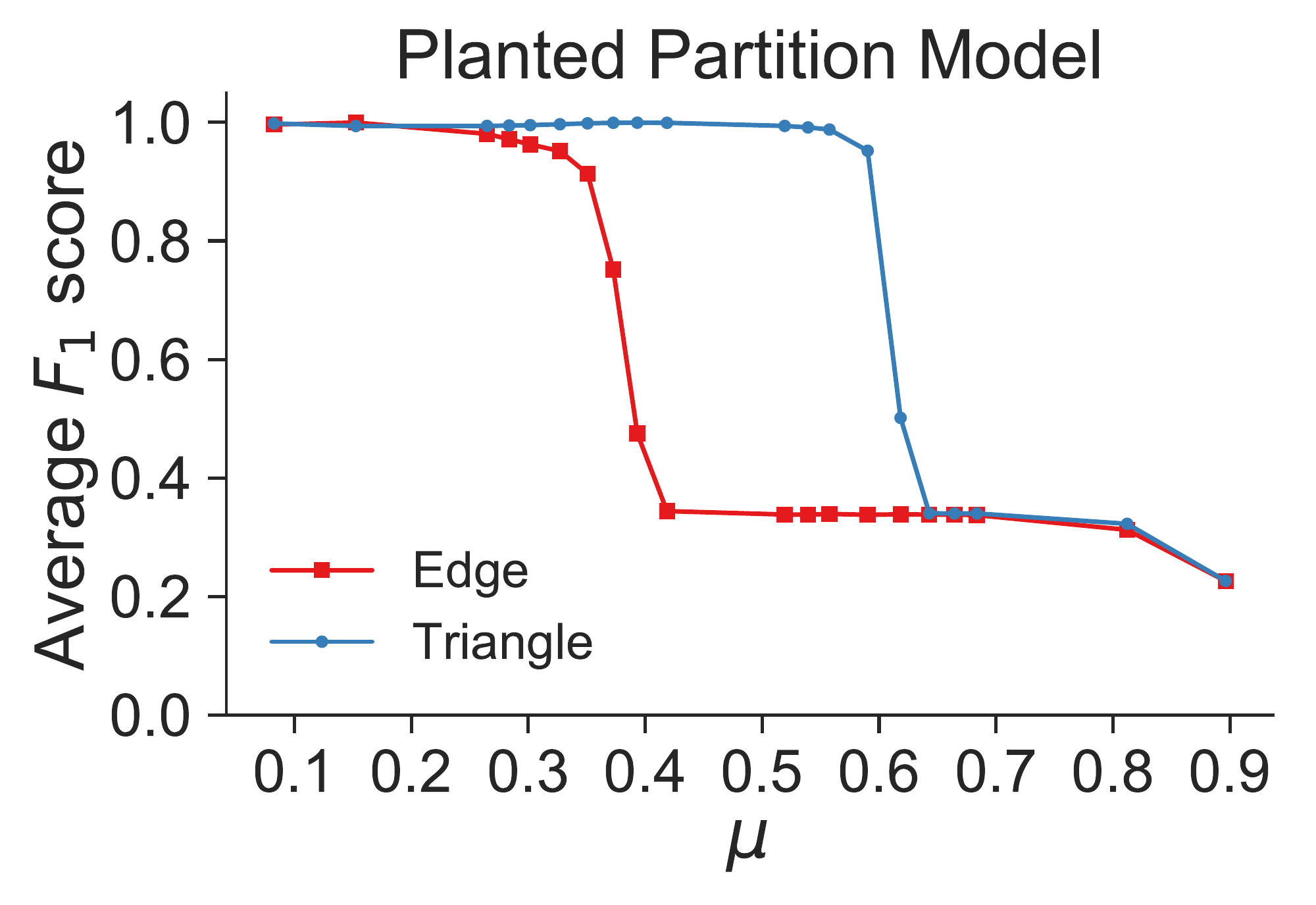}
  \dualcaption{Recovery in the planted partition model}{The plot shows the average
  $F_1$ score of detected clusters in the model as a function of the mixing
  parameter $\mu$ that specifies the fraction of neighbors of a node that cross
  cluster boundaries ($\mu$ varies as the intra-cluster edge probability $p$
  is fixed).  We use edge-based and triangle-based approximate
  personalized PageRank to recover ground truth clusters.  There is a large
  parameter regime where the triangle-based approach significantly out-performs
  the edge-based approach.}
  \label{fig:PPM_F1}
\end{figure}

\clearpage

\xhdr{LFR model}
The LFR model also generates random graphs with planted communities, but the
model is designed to capture several properties of real-world networks with
community structure such as skew in the degree and community size distributions
and overlap in community membership for
nodes~\cite{lancichinetti2008benchmark,lancichinetti2009benchmarks}.  For our purposes, the most important
model parameter is the mixing parameter $\mu$, which is the fraction of a node's
edges that connect to a node in another community.  We fix the other parameters
as follows: $n = 1000$ is the number of nodes, where 500 nodes belong to 1
community and 500 belong to 2; the number of communities is randomly chosen
between $43$ and $50$; the average degree is $20$; and the community sizes range
from 20 to 50.

We again used the edge-based and triangle-based APPR methods, and
\cref{fig:LFR_F1} shows the results.  The
performance of the edge-based method decays as we increase the mixing
parameter $\mu$ from $0.1$ to $0.4$, while the triangle-based method maintains
an $F_1$ score of approximately $0.9$ in this regime.  For mixing parameters as
large as $0.6$, the $F_1$ score for the triangle-based method is still three times
larger than that of the edge-based method, and throughout nearly the entire
parameter space, using triangles improves performance.

To summarize, incorporating triangles into personalized PageRank dramatically
improves the recovery of ground truth community structure in synthetic models.
In the next section, we run experiments on both undirected and directed
real-world networks.


\begin{figure}[tb]
  \centering \includegraphics[width=0.7\columnwidth]{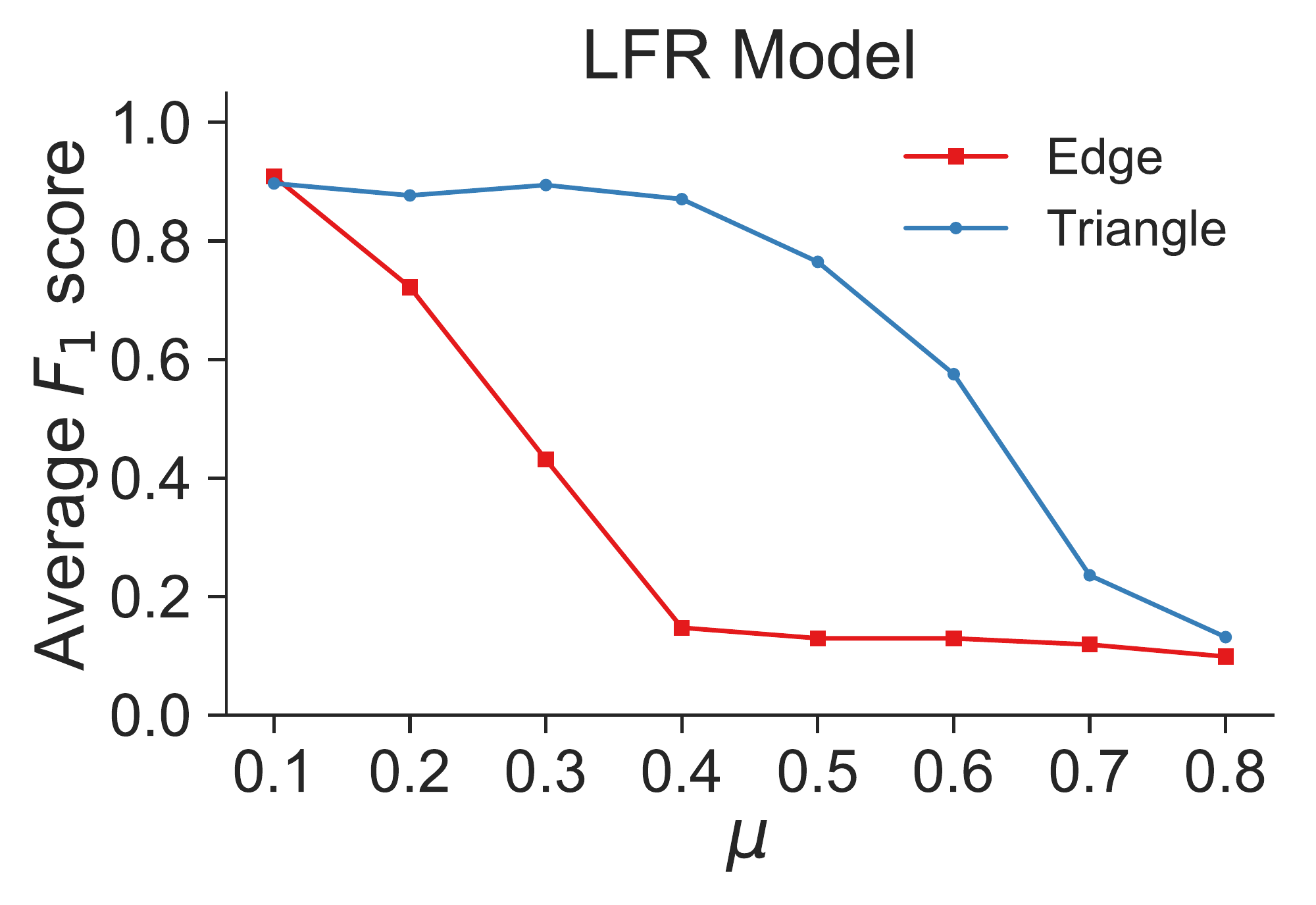}
  \dualcaption{Recovery in the LFR model}{The plot shows the average $F_1$ score
    of detected clusters in the model as a function of the mixing parameter
    $\mu$ that specifies the fraction of neighbors of a node that cross cluster
    boundaries.  We use edge-based and triangle-based approximate personalized
    PageRank to recover ground truth clusters.  As was the case for the planted
    partition model (\cref{fig:PPM_F1}), there is a large parameter regime where
    the triangle-based approach significantly out-performs the edge-based
    approach.}
   \label{fig:LFR_F1}
\end{figure}

\clearpage

\subsection{Experiments on real-world networks}
\label{sec:local_real}

We now compare the edge- and motif-based APPR methods on real-world networks
with ground truth communities.  Although these graphs have as many as 1.8
billion edges, the APPR method takes at most a few seconds per seed once the
graph is in memory and the motif weighted adjacency matrix has been computed.

\xhdr{Undirected graphs}
We analyze several well-known networks with ground truth community structure
constructed from Web data:
\begin{itemize}
\item $\amazon$ represents frequent co-purchasing of items on
the online retailer Amazon.  The communities are the connected components of the subgraph induced
by nodes in a product category~\cite{yang2012defining}.
\item $\dblp$ represents co-authorship on DBLP.  The communities are 
the connected components of the subgraph induced by individuals who have published at a particular
conference or in a particular journal~\cite{yang2012defining}.
\item $\youtube$ represents friendships between users on the online social
network YouTube.  The communities are connected components
of user-defined groups~\cite{mislove2007measurement}.
\item $\lj$ represents friendships between users on the online social
network LiveJournal.  The communities are connected components
of user-defined groups~\cite{mislove2007measurement}.
\item $\orkut$ represents friendships between users on the online social
network Orkut.  The communities are connected components
of user-defined groups~\cite{mislove2007measurement}.
\item $\friendster$ represents friendships between users on the online social
network Friendster.  The communities are connected components
of user-defined groups~\cite{yang2012defining}.
\end{itemize}

For each network, we examine 100 communities whose sizes ranged between 10 and
200 nodes.  We use both edge-based and motif-based APPR for the triangle motif
to recover the known communities.  Summary statistics of the datasets and our
experiment are in \cref{tab:rw_undir_recovery}.  In 5 out of 6 networks,
motif-based APPR achieves a higher $F_1$ score than edge-based APPR.  In 3 of
the 5 networks, the $F_1$ score provides a relative improvement of over 5\%.  In
all 6 networks, the average precision of the recovered clusters is larger, and
in 4 of these networks, the change is greater than 5\%.  We suspect this arises
from triangles encouraging more tight-knit clusters.  For example, dangling
nodes connected by one edge to a cluster are ignored by the triangle-based
method, whereas such a node would increase the edge-based conductance of the
set.  In 4 of the 6 networks, recall in the triangle-based method provides
relative improvements of at least 5\%.

\begin{table}[tb]
\setlength{\tabcolsep}{2pt}
  \centering
  \dualcaption{Recovery of ground truth community structure in undirected
    graphs}{We use edge-based and motif-based APPR for the triangle motif.  Bold
    numbers denote better recovery or smaller conductance with 5+\% relative
    difference.  $F_1$ score, precision, and recall are all averages over the
    100 ground truth communities.}
  \label{tab:rw_undir_recovery}
  \scalebox{0.9}{
  \begin{tabular}{l c c c c c c}
    \toprule
                         & {\footnotesize $\amazon$} & {\footnotesize $\dblp$} & {\footnotesize $\youtube$} & {\footnotesize $\lj$} & {\footnotesize $\orkut$} & {\footnotesize $\friendster$} \\ \midrule
    \# nodes             & 335K                      & 317K                    & 1.13M                      & 4.00M                 & 3.07M                    & 65.6M                         \\
    \# edges             & 926K                      & 1.05M                   & 2.99M                      & 34.7M                 & 117M                     & 1.81B                         \\
    \# comms.            & 100                       & 100                     & 100                        & 100                   & 100                      & 100                           \\
    comm. sizes          & 10--178                   & 10--36                  & 10--200                    & 10--10                & 10--200                  & 10--191                       \\ \midrule
    $F_1$ score                                                                                                                                                                                \\
    \phantom{X} edge     & \textbf{0.620}            & 0.264                   & 0.140                      & 0.255                 & 0.063                    & 0.095                         \\
    \phantom{X} triangle & 0.556                     & 0.269                   & \textbf{0.165}             & \textbf{0.274}        & \textbf{0.078}           & \textbf{0.114}                \\ \midrule
    Precision                                                                                                                                                                                  \\
    \phantom{X} edge     & 0.634                     & 0.342                   & 0.233                      & 0.216                 & 0.072                    & 0.103                         \\ 
    \phantom{X} triangle & 0.660                     & \textbf{0.366}          & \textbf{0.390}             & \textbf{0.280}        & \textbf{0.117}           & \textbf{0.158}                \\ \midrule
    Recall                                                                                                                                                                                     \\
    \phantom{X} edge     & \textbf{0.704}            & 0.310                   & 0.147                      & 0.606                 & \textbf{0.212}                    & 0.204                         \\
    \phantom{X} triangle & 0.567                     & \textbf{0.329}          & \textbf{0.188}             & \textbf{0.672}        & 0.166                    & \textbf{0.234}                \\ \midrule
    Conductance                                                                                                                                                                                \\
    \phantom{X} edge     & 0.163                     & 0.393                   & \textbf{0.536}             & 0.498                 & 0.702                    & 0.747                         \\
    \phantom{X} triangle & \textbf{0.065}            & 0.384                   & 0.739                      & \textbf{0.409}        & \textbf{0.510}           & \textbf{0.622}                \\
    \bottomrule
\end{tabular}
}
\end{table}

\clearpage

\xhdr{Directed graphs}
A major advantage of our motif-based APPR method is that it is straightforward
to analyze directed graphs---we simply need to specify the directed motifs for
our motif-based APPR algorithm.  Here, we use the three different directed
triangle motifs in \cref{fig:local_motifs}: $\mtri$, the undirected triangle; $\mcyc$,
the cycle; and $\mffl$, the feed-forward loop.  We form our motif weighted matrix
$W$ with respect to subgraphs and \emph{not} induced subgraphs.
Thus, a triangle with all six directed edges contains 1 instance of motif $\mtri$,
2 instances of motif $\mcyc$, and 2 instances of motif $\mffl$.  The equivalent
weighting scheme for simultaneously analyzing several motifs is in \cref{fig:local_motifs}.


\begin{figure}[t]
\centering
\includegraphics[width=\columnwidth]{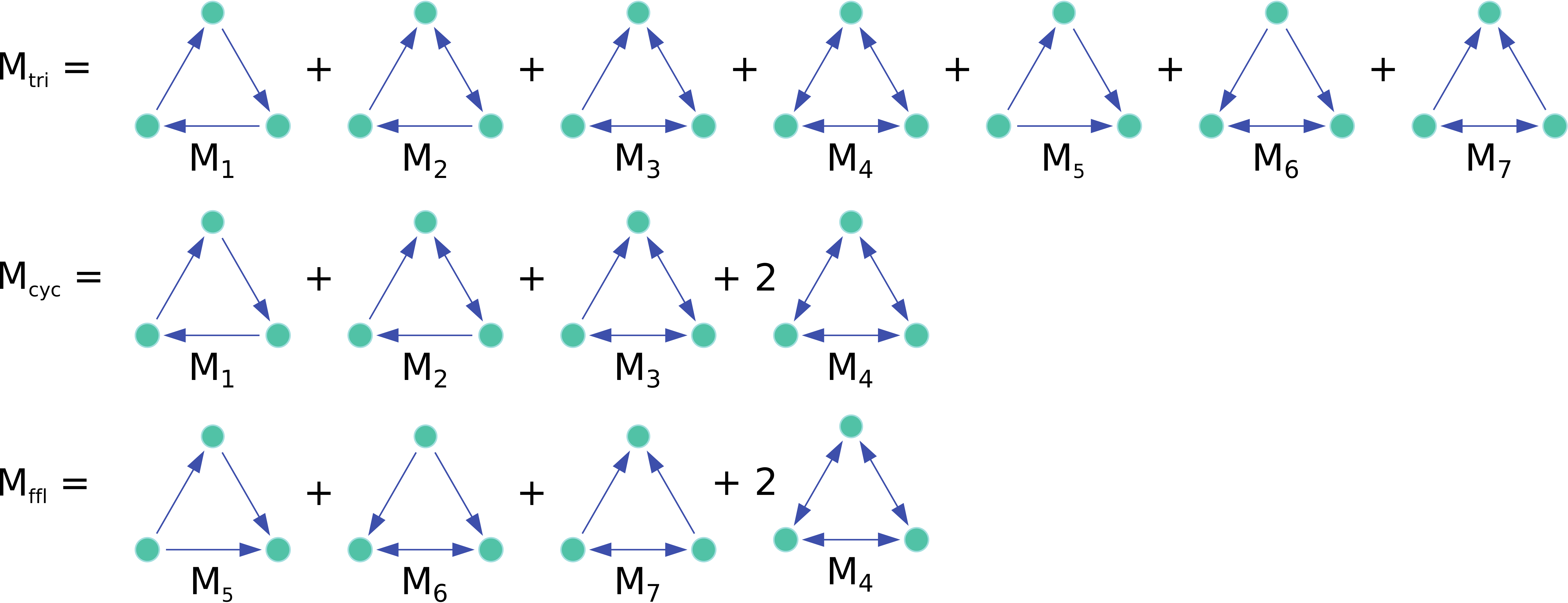}
\dualcaption{Three directed motif groups}{We use these three directed triangle motif
groups in our experiments: a triangle with any edge directions ($\mtri$),
the number of directed 3-cycles between three nodes ($\mcyc$), and the
number of feed-forward loops between three nodes ($\mffl$).}
\label{fig:local_motifs}
\end{figure}

We analyze two directed networks:
\begin{itemize}
\item $\emaileucore$ is an e-mail network between members of a European research institution.
Department membership of researchers determines the ground truth communities.
\item $\wikicat$ is the hyperlink network of English Wikipedia.  The article
categories are the ground truth communities (we only consider 100 categories for our analysis)~\cite{klymko2014using}.
\end{itemize}
The datasets and recovery results are
summarized in \cref{tab:rw_dir_recovery}.  For both networks, using motif $\mtri$
provides an improvement in $F_1$ score over the edge-based method.  The
improvement is drastic in $\emaileucore$ (25\% relative improvement).  In fact, all three
motifs lead to substantial improvements in this network.  We also see that in
both networks, the motifs provide additional precision but sacrifice recall.
These tighter clusters are expected for the same reasons as for the undirected
networks.



\begin{table}[tb]
  \centering
  \dualcaption{Recovery of ground truth communities in directed graphs}{We use
    edge-based and motif-based APPR for the three triangular motifs in
    \cref{fig:local_motifs}.  Bold numbers denote (i) cases where a motif-based
    method's score is a 5+\% relative improvement over the edge-based method and
    (ii) cases where the edge-based method out-performs all 3 motif-based
    methods by 5+\%.}
\label{tab:rw_dir_recovery}
  \scalebox{1.0}{
  \begin{tabular}{l @{\qquad\quad} c @{\qquad} c}
    \toprule
                         & $\emaileucore$ & $\wikicat$     \\ \midrule
    \# nodes             & 1.00K          & 1.79M          \\         
    \# edges             & 25.6K          & 28.5M          \\
    \# comms.            & 28             & 100            \\
    comm. sizes          & 10--109        & 21--192        \\ \midrule
    $F_1$ score                                            \\
    \phantom{XX} edge    & 0.396          & 0.237          \\
    \phantom{XX} $\mtri$ & \textbf{0.496} & 0.245          \\
    \phantom{XX} $\mcyc$ & \textbf{0.447} & 0.233          \\
    \phantom{XX} $\mffl$ & \textbf{0.472} & 0.231          \\ \midrule
    Precision                                              \\ 
    \phantom{XX} edge    & 0.478          & 0.322          \\
    \phantom{XX} $\mtri$ & \textbf{0.584} & \textbf{0.349} \\
    \phantom{XX} $\mcyc$ & \textbf{0.616} & \textbf{0.376} \\
    \phantom{XX} $\mffl$ & \textbf{0.630} & 0.333          \\ \midrule
    Recall                                                 \\ 
    \phantom{XX} edge    & \textbf{0.779} & \textbf{0.380} \\
    \phantom{XX} $\mtri$ & 0.690          & 0.327          \\
    \phantom{XX} $\mcyc$ & 0.577          & 0.227          \\
    \phantom{XX} $\mffl$ & 0.607          & 0.344          \\
    \bottomrule
  \end{tabular}
}
\end{table}

\clearpage


\begin{figure}[tb]
   \centering
   \includegraphics[width=0.75\columnwidth]{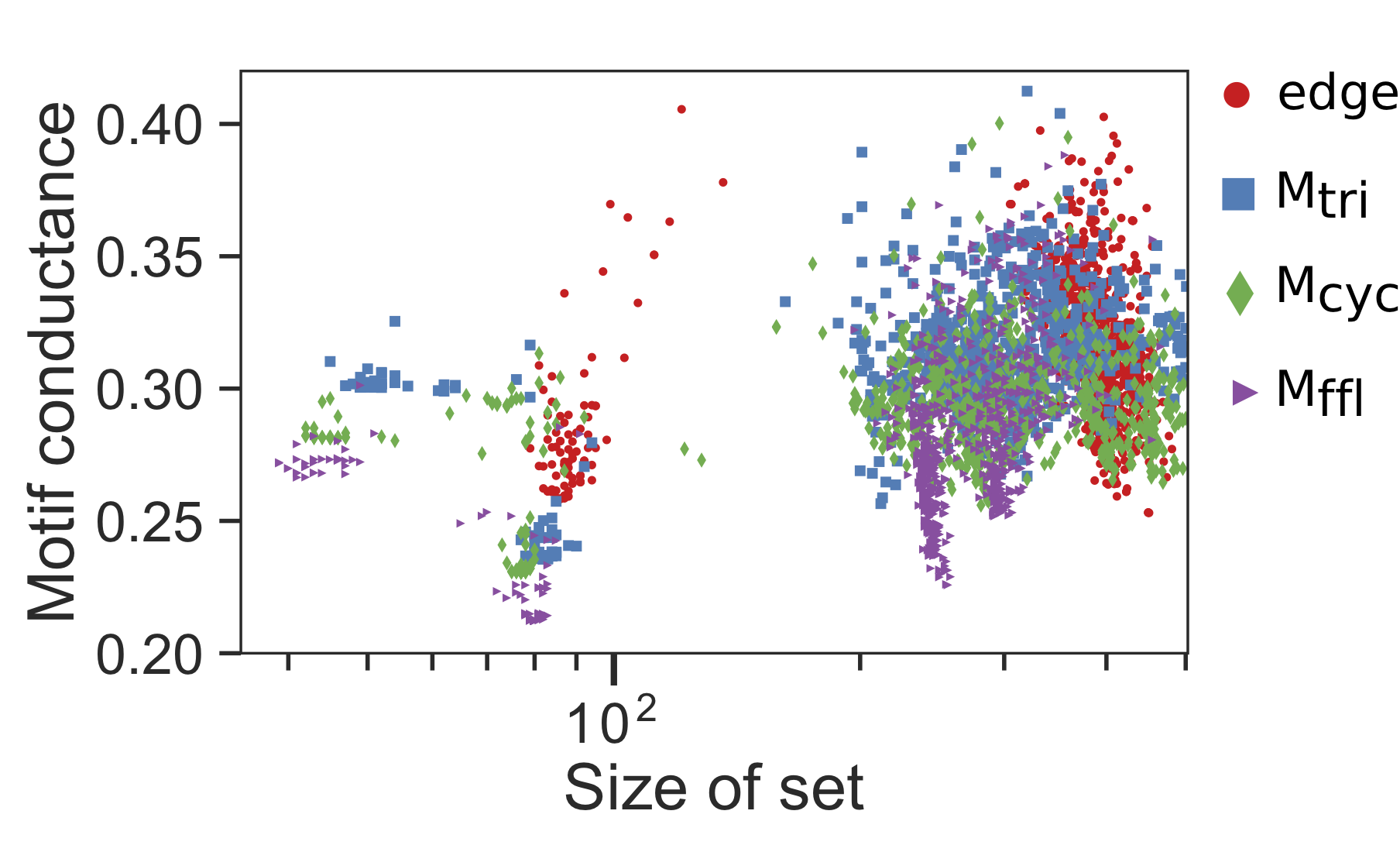}
   \dualcaption{Distribution of set size and conductance for seeded APPR in
     $\emaileucore$}{We use each node in the graph as a seed for the edge-based and
     motif-based APPR methods.  There is a bifurcation into small clusters and
     large clusters.  Small edge-based cluster concentrate in sizes of 70--100,
     a regime in which there are also many motif-based clusters with smaller
     conductance.  Of the larger clusters, edge-based clusters tend to be even
     larger than motif-based clusters.}  \label{fig:email-Eu-core-seeds}
\end{figure}

We investigate the results of the $\emaileucore$ network in more detail, as the use of
motifs dramatically improves the recovery of ground truth clusters with respect
to $F_1$ score.  First, we use every node in the network as a seed for the APPR
method with edges and the three motifs (\cref{fig:email-Eu-core-seeds}).  The
clusters bifurcate into small ($<$ 100 nodes) and large ($>$ 200 nodes) groups.
For the small clusters, the edge-based ones concentrate in sizes of 70--100.  In
this range, there are several clusters with much smaller motif-based conductance
for all three motifs.  This provides evidence that the 3 motifs are better
models for the community structure in the network.  We also see that, of the
large clusters, the edge-based ones tend to be the largest.  Since these sets
are larger than the sizes of the communities in the network, this observation
provides evidence for why precision is better with motif-based APPR.

\begin{figure}[tb]
   \centering
   \includegraphics[width=0.75\columnwidth]{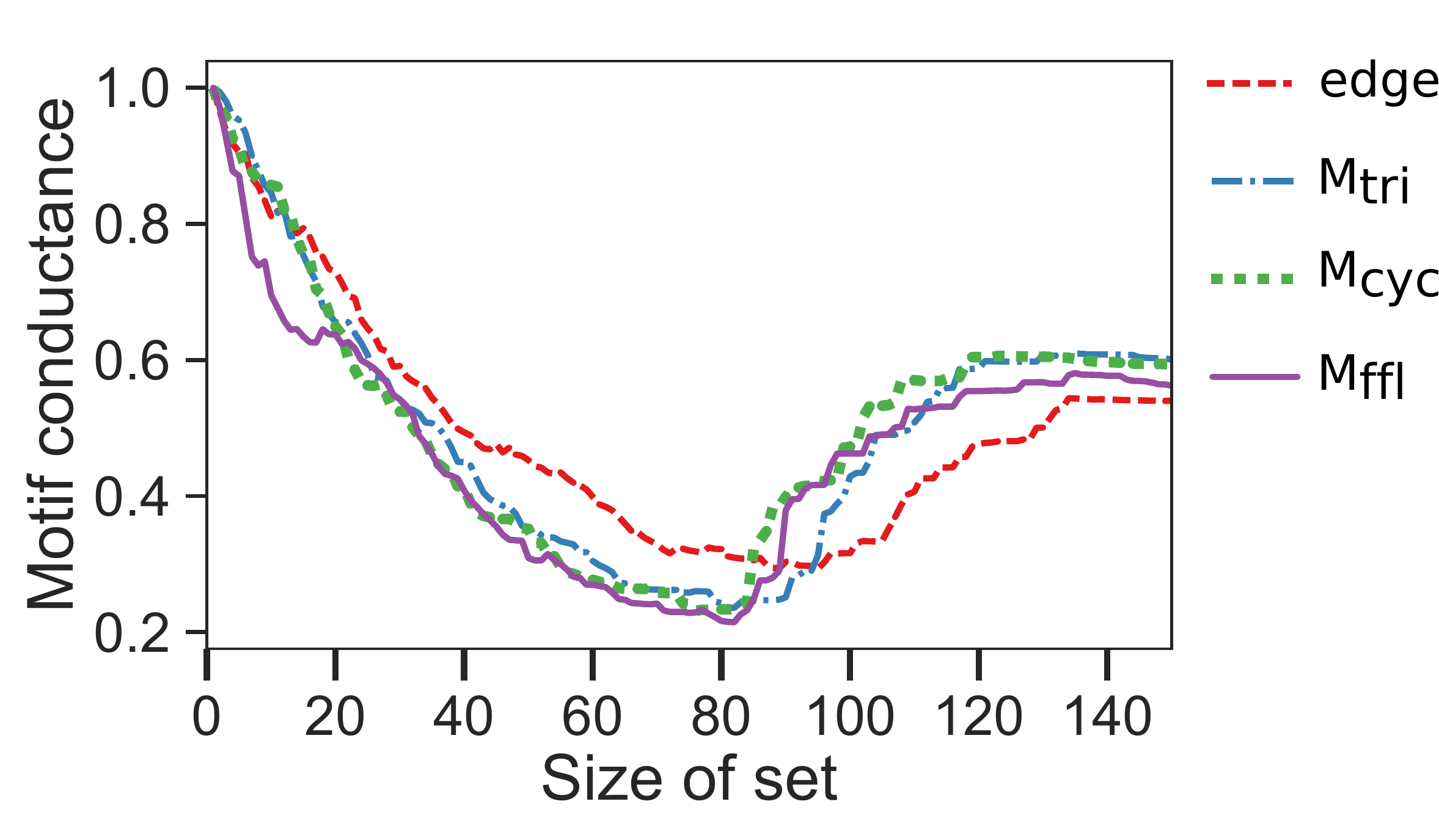}
   \dualcaption{Sweep profile for a single seed in $\emaileucore$}{The shape of the
     curves is similar, but the minimum for the 3 motif-based curves occur for a
     smaller set size and have a smaller conductance compared to the curve for
     edges.}  \label{fig:email-Eu-core-sweep}
\end{figure}

Next, we examine the sweep profile for a single seed node in the $\emaileucore$ network
(\cref{fig:email-Eu-core-sweep}).  The sweep profile highlights key differences
between the output of the motif-based and edge-based algorithm.  Although the
general shape of the sweep profile is the same across the 3 motifs and edges,
the minimum of the curves occurs for a smaller set and at a smaller conductance
value for the motifs.  A plausible explanation is that the edge-based and
motif-based APPR methods are capturing roughly the same set, but the constraint
of triangle participation excludes some nodes.  The smaller motif conductance
values indicate that these motifs are better models for the cluster structure in
the network.

\section{Related work and discussion}
\label{sec:honc_discussion}

The work in this chapter fits into the saga of network clustering, community
detection, and graph partitioning, where the goal is to assign the nodes in a
graph into clusters, communities, or partition components.  The literature in
this space is vast, but there are a few surveys providing an overview of the
topic~\cite{fortunato2010community,malliaros2013clustering,fortunato2016community,schaeffer2007graph}.
In particular, the recent survey by \citet{fortunato2016community}
discusses our motif-based clustering methodology within
a broader context.

We now discuss existing methods for global graph clustering that consider motifs
or higher-order structures.  In my own work, we clustered networks based on
metrics similar to motif conductance using generalizations of spectral methods
to tensor data~\cite{benson2015tensor}.
\Citet{klymko2014using} add weights to the edges of
directed graphs based on the participation of edges in cyclic motifs 3-node
triangle motifs ($M_{4}$, $M_{5}$, $M_{6}$, and $M_{7}$) before plugging the
weighted graph into an existing clustering algorithm.  In this method, the
weight does not depend on the number of instances of the motif---it only matters
whether an edge participates in at least one
instance.  \Citet{gupta2016decompositions} provide an algorithm to decompose
triangle-dense graphs into an ``approximate union of cliques''; more
specifically, most of the nodes can be partitioned into edge- and triangle-dense
subgraphs with radius at most 2.  This approach has been extended to also
provide estimates of clique counts in real-world
networks~\cite{comandur2014counting}.  \Citet{michoel2011enrichment}
and \citet{michoel2012alignment} use a max-cut-like objective to decompose
graphs based on triangular motifs.  More specifically, they aim to find sets of
nodes $X_1$, $X_2$, and $X_3$, such that in many instances of a given triangular
motif, the three nodes in the instance are in the three different sets.
Eigenvectors of the non-backtracking operator, which looks at 3-node length-2
paths as higher-order structures, have been used to get better recovery results
in the stochastic block model~\cite{krzakala2013spectral}.  Finally, there are
studies of the hierarchical composition of complex networks based on
self-similarity and higher-order
structures~\cite{derenyi2005clique,angulo2015network,itzkovitz2005coarse} as
well as generalizations of $k$-core decompositions~\cite{zhang2012extracting,sariyuce2015finding}.

\Citet{arenas2008motif} proposed a ``motif modularity'' that measures how
many more motifs are contained within cluster in a given partition compared to a
random configuration.  A few years later, \citet{serrour2011detecting} showed
that the classical spectral algorithm for modularity
maximization~\cite{newman2006modularity} can be adapted for the case of
triangular motifs and this definition of motif modularity.  The approach simply
uses the same spectral algorithm on the modularity matrix coming from the
weighted motif adjacency matrix we proposed in this chapter.  In fact, this idea
is a simple corollary of our cut and volume equivalence results
(\cref{lem:motif_cut,lem:motif_vol}) and the fact that modularity can be written
in terms of cuts and volumes~\cite{gleich2016mining}.

In addition to the motif modularity ideas
above, \cref{lem:motif_cut,lem:motif_vol} make our theoretical results much more
general than the extension of the Cheeger inequality introduced in this chapter.
Any objective function based on cuts, volumes, and set sizes are immediately
translated, in addition to any algorithms based on conductance.  This was the case for
the localized methods developed in \cref{sec:local}.  We can use the same motif
Laplacian for motif normalized cut or motif ratio cut
objectives~\cite{von2007tutorial} as well as for Laplacian-based methods for
non-exhaustive and overlapping clustering~\cite{whang2015non}.  We can also
generalize more theoretical algorithms for sparsest cut~\cite{arora2009expander}.
However, we focused on the simpler
Cheeger bounds because they are easy to use in practice and actually scale to
large datasets.  The central challenge to these generalizations is that we now
have to deal with weighted graphs and often the theory assumes an unweighted graph.

The work of \citet{serrour2011detecting} is not the only place where our
proposed weighting scheme has shown up.  \Citet{rohe2013blessing} use the same
motif adjacency matrix weights (specifically for the triangle motif) for single
linkage clustering---the resulting dendogram is cut at some threshold level and
the connected components are returned as
clusters.  \Citet{tsourakakis2017scalable} use the same Laplacian (and
independently arrived at some of our same results), but their experiments
analyze undirected graphs.  The motif Laplacian was originally studied
by \citet{rodriguez2002laplacian}, who related eigenvalues of this Laplacian to
bounds on hypergraph generalizations of partition quantities like bisection,
max-cut, and the isoperimetric number.  This same Laplacian was used for
hypergraph-based semi-supervised learning tasks~\cite{zhou2006learning}.  In
this case, hyperedges are constructed from data points that share a feature
(e.g., the nodes are animals and a hyperedge is constructed between all animals
that have a tail).

The general problem of partitioning a graph based on relationships between more
than two nodes has been studied in hypergraph
partitioning~\cite{karypis1999multilevel}.  We can interpret motifs as
hyperedges in a hypergraph.  The key difference of our methods is that \emph{we
are constructing a hypergraph from patterns in a standard graph rather than
receiving the hyperedges a priori}.  The way in which we construct the
hyperedges leads to different information about the underlying dataset.  One
goal with our analysis of the Florida Bay food web, for example, was to find
which hyperedge constructs (induced by different motifs) provide a good
clustering of the network (see \cref{sec:honc_foodweb}).

Our motif-based spectral clustering methodology falls into the area of encoding
a hypergraph partitioning problem by a graph partitioning
problem~\cite{agarwal2006higher}.  The motif Cheeger inequality we proved
(\cref{thm:motif_cheeger}) explains why previous methods (such as clustering
based on the Laplacian proposed by
\citet{rodriguez2002laplacian}) are appropriate for $3$-regular hypergraphs.
Specifically, it respects the standard cut and volume metrics for graph
partitioning.  There are also a couple spectral clustering approaches designed
directly for hypergraphs.  For example, \citet{angelini2015spectral} use a
non-backtracking walk approach for recovering planted clusters in a hypergraph
stochastic block model.  In the theory
community, \citet{louis2014approximation} introduced an \emph{edge expansion}
metric for vertex sets in hypergraphs and a diffusion process whose second
eigenvalue leads to a Cheeger-like bound.  

Lastly, it is worth mentioning that our framework is also a principled approach
for clustering directed graphs, a longstanding problem in network analysis.
Some existing principled generalizations of undirected graph partitioning to
directed graph partitioning proceed from graph
circulations~\cite{chung2005laplacians} or random walks~\cite{boley2011commute}
and are difficult to interpret.  \Citet{yoshida2016nonlinear} uses a nonlinear
Laplacian to derive a Cheeger-like inequality where the ``cut'' measures the
number of outgoing edges from a set (in this case, the cut is asymmetric, so the
measure is actually the minimum over the set $S$ and the complement set
$\bar{S}$.  Our motif-based clustering framework provides a simple,
rigorous framework for directed graph partitioning---the algorithm user simply
needs to specify which motif he or she wants to use.

We can also interpret common heuristics for directed graph clustering in terms
of motifs.  For example, consider the simple approach of clustering the
symmetrized graph $W = A + A^T$, where $A$ is the (directed) adjacency
matrix~\cite{malliaros2013clustering}.  Following \cref{thm:motif_cond},
conductance-minimizing methods for partitioning $W$ are actually trying to
minimize a weighted sum of motif-based conductances for the directed edge motif
and the bidirectional edge motif:
\[
B_1 = \begin{bmatrix} 0 & 1 \\ 0 & 0 \end{bmatrix},\quad
B_2 = \begin{bmatrix} 0 & 1 \\ 1 & 0 \end{bmatrix},
\]
where both motifs are simple ($\anchorset = \{1, 2\}$).  If $W_{B_1}$ and
$W_{B_2}$ are the motif adjacency matrices for motifs $B_1$ and $B_2$, then $A +
A^T = W_M = W_{B_1} + 2W_{B_2}$.  This weighting scheme gives a weight of two to
bidirectional edges in the original graph and a weight of one to unidirectional
edges.  An alternative method for clustering a directed graph is to simply
remove the direction on all edges, treating bidirectional and unidirectional
edges the same.  This is equivalent to using the motif $\medge$
discussed earlier in this chapter.  The resulting
adjacency matrix is equivalent to the motif adjacency matrix for the
bidirectional and unidirectional edges (without any weighting), i.e.,
$W = W_{B_1} + W_{B_2}$.

We conclude this chapter with an executive summary of our contributions.
\begin{enumerate}
\item
We are the first to really demonstrate how clustering based on motifs leads to
new discoveries in network data from several domains.  Moreover, we show how
``first-order'' methods that only consider edge relationships do not make the
same discoveries.  This should encourage practitioners analyzing network data to
think about higher-order structures.

\item
Our methods come with a theoretical guarantees on cluster quality by
generalizing steadfast ideas in spectral graph theory.  The theory explains why
certain hypergraph partitioning methods work for 3-regular hypergraphs.  The
theory also makes the idea of higher-order clustering easily extendible, which
we demonstrate with with the localized methods in \cref{sec:local}.

\item
Our methods provide a simple, easy-to-understand, and principled approach to
clustering directed graphs.
\end{enumerate}

\chapter{\hoccfstitle}
\label{ch:hoccfs}

\section{The clustering coefficient and closure probabilities}

As we have discussed, networks are a fundamental tool for understanding and
modeling complex physical, social, informational, and biological systems.
Although such networks are typically sparse, a recurring trait of networks
throughout all of these domains is the tendency of edges to appear in small
clusters or cliques~\cite{watts1998collective}.  In many cases, such clustering
can be explained by local evolutionary processes.  For example, in social
networks, clusters appear due to the formation of triangles where two
individuals who share a common friend are more likely to become friends
themselves.  This process is known as \emph{triadic
closure}~\cite{rapoport1953spread,granovetter1973strength,simmel1908sociology}.
Similar triadic closures occur in other information networks.  For example, in
citation networks, two references appearing in the same publication are more
likely to be on the same topic and hence more likely to cite each
other~\cite{wu2009modeling}.  And in co-authorship networks, scientists with a
mutual collaborator are more likely to collaborate in the
future~\cite{jin2001structure}.  In other cases, local clustering arises from
highly connected functional units operating within a larger system, e.g.,
metabolic networks are organized by densely connected
modules~\cite{ravasz2003hierarchical}.

The \emph{clustering coefficient} quantifies the extent to which edges of a
network cluster.  The clustering coefficient is defined as the fraction of
length-2 paths, or \emph{wedges}, that are closed with a
triangle~\cite{watts1998collective,barrat2000properties} (\cref{fig:ccf_def},
$\gccf{2}$).  In other words, the clustering coefficient measures the
probability of triadic closure in the network.  However, the clustering
coefficient is inherently restrictive as it measures the closure pattern of just
one simple structure---the triangle.  Higher-order structures such as larger
cliques are crucial to the structure and function of complex
networks~\cite{benson2016higher,yaverouglu2014revealing}.  For example,
4-cliques reveal community structure in word association and protein-protein
interaction networks~\cite{palla2005uncovering} and maximal
cliques of size 5--7 appear more than expected in many real-world networks
compared to a configuration model~\cite{slater2014mid}.  However, the extent of clustering of such
higher-order structures has not been well understood nor quantified.

Here we introduce higher-order clustering coefficients that measure the closure
probability of higher-order network structures (specifically, larger cliques)
and provide a more comprehensive view of how the edges of complex networks
cluster.  Our higher-order clustering coefficients are a natural generalization
of the classical clustering coefficient.

Below, we derive several properties about higher-order clustering coefficients
and analyze them under common random graph models.  We then use higher-order
clustering coefficients to gain new insights into the structure of real-world
networks from several domains.  For example, we show that the
\emph{C. elegans} neural network exhibits clustering in the traditional sense
(i.e., in terms of triadic closure) but not in the higher-order sense.  On the other hand,
co-authorship and social networks exhibit both traditional and higher-order clustering.
Finally, we make a connection between higher-order clustering coefficients and motif
conductance for clique motifs.  More specifically, we show that if the graph has a large higher-order
clustering coefficient, then there is a 1-hop neighborhood of same node $u$ with
small motif conductance for a clique motif.

\section{Definitions of higher-order clustering coefficients}

We first give an alternative interpretation of the clustering coefficient that
will later allow us to generalize it and quantify clustering of higher-order
network structures.  First consider a $2$-clique $K$ in a graph $G$ (that is, a single edge $K$).
Now, ``expand'' the clique $K$ by
considering any edge $e$ adjacent to $K$ (i.e., $e$ and $K$ share exactly one
node).  In our parlance, this expanded subgraph is a \emph{wedge} (length-2 path).
The global clustering coefficient $\gccf{}$ of
$G$~\cite{barrat2000properties,luce1949method} can then be defined as the
fraction of wedges that are \emph{closed}, meaning that the $2$-clique and
adjacent edge induce a $(2 + 1)$-clique, or a triangle (\cref{fig:ccf_def},
$C_2$).  Formally,
\begin{equation}\label{eq:global_ccf}
  \gccf{} = \frac{6 \lvert K_3 \rvert}{\lvert W \rvert},
\end{equation}
where $K_3$ is the set of $3$-cliques (triangles), $W$ is the set of wedges, and
the coefficient $6$ comes from the fact that each $3$-clique closes 6 wedges
(the 6 ordered pairs of edges in the triangle).\footnote{We use an ordered
pair of edges in order to make the constants in our formulas consistent with
the higher-order generalization.}

Our novel interpretation of the global clustering coefficient as a measurement
of $2$-cliques expanding to $3$-cliques through closure of an adjacent edge
naturally carries over to the local clustering coefficient~\cite{watts1998collective}.
Each wedge consists of a $2$-clique and adjacent edge (\cref{fig:ccf_def}),
and we call the unique node in the intersection of the $2$-clique and adjacent
edge the \emph{center} of the wedge.  Under this view, the \emph{local
clustering clustering coefficient} of a node $u$ can be defined as the fraction
of wedges centered at $u$ that are closed:
\begin{equation}\label{eq:local_ccf}
\lccf{}{u} = 
\dfrac{2\lvert K_3(u) \rvert}{\lvert W(u) \rvert},
\end{equation}
where $K_3(u)$ is the set of $3$-cliques containing $u$ and $W(u)$ is the set of
wedges with center $u$ (if $\lvert W(u) \rvert = 0$, we say that $\lccf{}{u}$ is
undefined).  The \emph{average clustering coefficient} $\accf{}$ is the mean of
the local clustering coefficients,
\begin{equation}\label{eq:avg_ccf}
  \accf{} = \frac{1}{\lvert \widetilde V \rvert}\sum_{u \in \widetilde V} \lccf{}{u},
\end{equation}
where $\widetilde V$ is the set of nodes in the network where the local
clustering coefficient is defined~\cite{watts1998collective}.


\begin{figure}[tb]
  \centering
  \begin{tabular}{l c c c}
  & 1.~Start with~ & 2.~Find an adjacent edge & 3.~Check for an \\
  & an $\ell$-clique  & to form an $\ell$-wedge & $(\ell+1)$-clique \\ \\
    $\gccf{2}$
    & \begin{tikzpicture}[baseline=(current bounding box.center)] \input{CH3-TKZ-intro2_part1} \end{tikzpicture}
    & \begin{tikzpicture}[baseline=(current bounding box.center)] \input{CH3-TKZ-intro2_part1}\input{CH3-TKZ-intro2_part2} \end{tikzpicture}
    & \begin{tikzpicture}[baseline=(current bounding box.center)] \input{CH3-TKZ-intro2_part1}\input{CH3-TKZ-intro2_part2}\foreach \w in {w1}{ \draw[densely dotted] (\w) -- (v); } \end{tikzpicture} \\ \\
    $\gccf{3}$
    & \begin{tikzpicture}[baseline=(current bounding box.center)] \input{CH3-TKZ-intro3_part1} \end{tikzpicture}
    & \begin{tikzpicture}[baseline=(current bounding box.center)] \input{CH3-TKZ-intro3_part1}\input{CH3-TKZ-intro3_part2} \end{tikzpicture}
    & \begin{tikzpicture}[baseline=(current bounding box.center)] \input{CH3-TKZ-intro3_part1}\input{CH3-TKZ-intro3_part2}\foreach \w in {w1,w2}{ \draw[densely dotted] (\w) -- (v); } \end{tikzpicture} \\ \\
    $\gccf{4}$
    & \begin{tikzpicture}[baseline=(current bounding box.center)] \input{CH3-TKZ-intro4_part1} \end{tikzpicture}
    & \begin{tikzpicture}[baseline=(current bounding box.center)] \input{CH3-TKZ-intro4_part1}\input{CH3-TKZ-intro4_part2} \end{tikzpicture}
    & \begin{tikzpicture}[baseline=(current bounding box.center)] \input{CH3-TKZ-intro4_part1}\input{CH3-TKZ-intro4_part2}\foreach \w in {w1,w2,w3}{\draw[densely dotted] (\w) -- (v);} \end{tikzpicture} \\        
  \end{tabular}
  \dualcaption{Overview of higher-order clustering coefficients as clique
    expansion probabilities}{The $\ell$th-order clustering coefficient
    $\gccf{\ell}$ measures the probability that an $\ell$-clique and an adjacent
    edge, i.e., an $\ell$-wedge, is closed, meaning that the $\ell - 1$ possible
    edges between the $\ell$-clique and the outside node in the adjacent edge
    exist to form an $(\ell + 1)$-clique.  The case of $\ell = 2$ is the traditional
    (global) clustering coefficient---the fraction of length-2 paths that induce
    a triangle.  We also define the local higher-order clustering
    coefficients for a node $u$ to be the same probability over all $\ell$-wedges
    where $u$ is at the ``center,'' meaning that $u$ is the unique node
    intersecting the $\ell$-clique and the adjacent edge.}\label{fig:ccf_def}
\end{figure}

Our alternative interpretation of the clustering coefficient, described above as
a form of clique expansion, leads to a natural generalization to higher-order
structures.  Instead of expanding $2$-cliques to $3$-cliques, we expand
$\ell$-cliques to $(\ell + 1)$-cliques (\cref{fig:ccf_def}, $C_3$ and
$C_4$).  Formally, we define an $\ell$-wedge to consist of an $\ell$-clique and
an adjacent edge. Then we define the global $\ell$th-order clustering
coefficient $\gccf{\ell}$ as the fraction of $\ell$-wedges that are closed,
meaning that they induce an $(\ell + 1)$-clique in the network.  We can write
this as
\begin{equation}\label{eq:global_ccf_l}
  \gccf{\ell} = \frac{{\ell+1 \choose \ell } {\ell \choose 1 } \lvert K_{\ell + 1} \rvert}{\lvert W_{\ell} \rvert},
\end{equation}
where $K_{\ell + 1}$ is the set of $(\ell + 1)$-cliques, $W_{\ell}$ is the set
of $\ell$-wedges, and the coefficient ${\ell+1 \choose \ell } {\ell \choose 1}$
comes from the fact that each $(\ell + 1)$-clique closes that many wedges.

We also define higher-order local clustering coefficients:
\begin{equation} \label{eq:def_lccf}
  \lccf{\ell}{u} = \frac{{\ell \choose \ell - 1} \lvert K_{\ell + 1}(u) \rvert}{\lvert W_{\ell}(u) \rvert},
\end{equation}
where $K_{\ell + 1}(u)$ is the set of $(\ell + 1)$-cliques containing $u$,
$W_{\ell}(u)$ is the set of $\ell$-wedges with center $u$, and the coefficient
${\ell \choose \ell - 1}$ comes from the fact that each $(\ell + 1)$-clique
containing $u$ closes that many $\ell$-wedges in $W_{\ell}(u)$.

The $\ell$th-order clustering coefficient of a node is defined for any node that
is the center of at least one $\ell$-wedge, and the average $\ell$th-order
clustering coefficient is the mean of the local clustering coefficients:
\begin{equation}
  \accf{\ell} = \frac{1}{\lvert \widetilde V_\ell \rvert}\sum_{u \in \widetilde V_\ell} \lccf{\ell}{u},
\end{equation}
where $\widetilde V_{\ell}$ is the set of nodes that are the centers of at least
one $\ell$-wedge.

We have considered the average (higher-order) clustering coefficient to be the
mean over only nodes at the center of at least one $\ell$-wedge.  This makes
some of the theoretical analysis easier.  We could also consider the clustering
at node $u$ to be $0$ if $u$ does not participate in any $\ell$-wedge.  There is
not a consensus definition in the literature, and this issue is usually not even
discussed (see \citet{kaiser2008mean} for a discussion on how this seemingly
small difference in definition can affect network analyses).  At the very least,
we should report what fraction of nodes in the network participate in at least
one $\ell$-wedge.  We do this in our experiments later and also report the value
of $\accf{\ell}$ for both definitions.

\section{Theoretical Analysis}

\subsection{A method to compute higher-order clustering coefficients}

We first discuss how to compute higher-order clustering coefficients.
Substituting the identity
\begin{equation}\label{eq:wedge_identity}
  \lvert W_{\ell}(u) \rvert = \lvert K_{\ell}(u) \rvert \cdot (d_u - \ell + 1),
\end{equation}
into \cref{eq:def_lccf} gives
\begin{equation}   \label{eq:deriv_lccf}
  \lccf{\ell}{u}
  = \frac{\ell \cdot \lvert K_{\ell+1}(u) \rvert}{(d_u - \ell + 1) \cdot \lvert K_{\ell}(u) \rvert }.
\end{equation}
From \cref{eq:deriv_lccf}, it is easy to see that we can compute all local
$\ell$th-order clustering coefficients, by enumerating all $(\ell + 1)$-cliques
and $\ell$-cliques in the graph.  The computational complexity of the algorithm
is thus bounded by the time to enumerate $(\ell + 1)$-cliques and
$\ell$-cliques.  Using the Chiba and Nishizeki algorithm (discussed
in \cref{sec:honc_basic_complexity}), the complexity is $O(\ell a^{\ell-2}m)$,
where $a$ is the arboricity of the graph, and $m$ is the number of edges.
We note that the arboricity $a$ may be as large as $\sqrt{m}$, so this algorithm
is only guaranteed to take polynomial time if $\ell$ is a constant.  In general,
determining if there exists a single clique with at least $\ell$ nodes is
NP-complete~\cite{karp1972reducibility}.

For the global clustering coefficient, note that
\begin{equation}
\lvert W_{\ell} \rvert = \sum_{u \in V} \lvert W_\ell(u) \rvert.
\end{equation}
Thus, it suffices to enumerate $\ell$-cliques (to compute $\lvert W_{\ell} \rvert$ through \cref{eq:wedge_identity})
and to count the total number of $\ell$-cliques.  In practice, we use
the Chiba and Nishizeki to enumerate cliques and simultaneously compute
$\gccf{\ell}$ and $\lccf{\ell}{u}$ for all nodes $u$.

\subsection{Probabilistic interpretations}

To facilitate understanding of higher-order clustering coefficients, we now
present two probabilistic interpretations of the measurements.  First, we can
interpret $\lccf{\ell}{u}$ as the probability that a wedge $w$ chosen uniformly
at random from all wedges centered at $u$ is closed:
\begin{equation} \label{eq:prob_interp1}
  \lccf{\ell}{u} = \prob{w \in K_{\ell + 1}(u)}.
\end{equation}
We will use this interpretation in \cref{sec:nbrhood_cond} when relating
higher-order clustering coefficients to motif conductance.

For the next interpretation, it is useful to analyze the structure of the 1-hop
neighborhood graph $\onehopnou$ of a given node $u$ (not containing node $u$) in a
graph $G = (V, E)$, $u \in V$.  The vertex set of $\onehopnou$ is the set of
all nodes adjacent to $u$, and the edge set consists of all edges between neighbors
of $u$, i.e., $ \{ (v, w) \;\vert\; (u, v), (u, w), (v, w) \in E \}$.

Any $\ell$-clique in $G$ containing node $u$ corresponds to a unique $(\ell -
1)$-clique in $\onehopnou$, and specifically for $\ell = 2$, any edge $(u, v)$
corresponds to a node $v$ in $\onehopnou$.  Therefore, each $\ell$-wedge
centered at $u$ corresponds to an $(\ell-1)$-clique $K$ and one of the $d_u -
\ell + 1$ nodes outside $K$ (i.e., in $\onehopnou \backslash K$).  Thus,
\cref{eq:deriv_lccf} can be re-written as
\begin{equation}\label{eq:lccf_nbrs}
\frac{\ell \cdot \lvert K_{\ell}(\onehopnou) \rvert}{(d_u - \ell + 1) \cdot \lvert K_{\ell-1}(\onehopnou) \rvert }.
\end{equation}

If we uniformly at random select an $(\ell - 1)$-clique $K$ from $\onehopnou$
and then also uniformly at random select a node $v$ from $\onehopnou$ outside of
this clique, then $\lccf{\ell}{u}$ is the probability that these $\ell$ nodes
form an $\ell$-clique:
\begin{equation} \label{eq:prob_interp2}
  \lccf{\ell}{u} = \prob{K \cup \{ v \} \in K_{\ell}(\onehopnou)}.
\end{equation}

Moreover, if we condition on observing an $\ell$-clique from this sampling
procedure, then the $\ell$-clique itself is selected uniformly at random from
all $\ell$-cliques in $\onehopnou$.  Therefore, $\lccf{\ell-1}{u} \cdot
\lccf{\ell}{u}$ is the probability that an $(\ell - 1)$-clique and two nodes
selected uniformly at random from $\onehopnou$ form an $(\ell + 1)$-clique.
Applying this recursively gives
\begin{equation}
\prod_{j=2}^{\ell}\lccf{j}{u} = \frac{\lvert K_{\ell}(\onehopnou) \rvert}{{d_u \choose \ell}}.
\end{equation}
In other words, the product of the higher-order local clustering coefficients of
node $u$ up to order $\ell$ is the $\ell$-clique density amongst $u$'s
neighbors.

\subsection{Bounds on higher-order clustering coefficients}


\begin{figure}[tb]
\centering
  \begin{tabular}{l @{\hskip 12pt} c @{\hskip 12pt} c @{\hskip 12pt} c}
  & \scalebox{1.7}{\begin{tikzpicture}[baseline=(current bounding box.center)] \input{CH3-TKZ-ccf_equal} \end{tikzpicture}}
  & \scalebox{1.7}{\begin{tikzpicture}[baseline=(current bounding box.center)] \input{CH3-TKZ-ccf_lower} \end{tikzpicture}}
  & \scalebox{1.7}{\begin{tikzpicture}[baseline=(current bounding box.center)] \input{CH3-TKZ-ccf_upper} \end{tikzpicture}} \\ \\
  $\lccf{2}{u}$ & $1$ & $\frac{d_u}{2(d - 1)} \approx \frac{1}{2}$ & $\frac{d_u - 2}{4d_u - 4} \approx \frac{1}{4}$  \\ 
  $\lccf{3}{u}$ & $1$ & $0$ & $\frac{d_u - 4}{2d_u - 4} \approx \frac{1}{2}$ \\
  $\lccf{4}{u}$ & $1$ & $0$ & $\frac{d_u - 6}{2d_u - 6} \approx \frac{1}{2}$    
  \end{tabular}
  \dualcaption{Families of 1-hop neighborhoods of a node $u$ with degree
    $d_u$}{These families illustrate the difference between higher-order
    clustering coefficients of different orders.  Left: For cliques,
    $\lccf{\ell}{u} = 1$ for all $\ell$.  Middle: If node $u$'s neighbors form a
    complete bipartite graph, then $\lccf{2}{u}$ is constant while
    $\lccf{\ell}{u} = 0$ for $\ell \ge 3$.  Right: If half of node $u$'s
    neighbors form a star and the other half form a clique with $u$, then
    $\lccf{\ell}{u} = \sqrt{\lccf{2}{u}}$, reaching the upper bound of
    \cref{eq:Bound_Kappa3}.}  \label{fig:ccf_diffs}
\end{figure}

Next we analyze relationships between local higher-order clustering coefficients
of different orders.  Our technical result is \cref{prop:ccf_bounds}, which
provides tight lower and upper bounds for higher-order local clustering
coefficients in terms of the traditional local clustering coefficient.  The main
ideas of the proof are illustrated in \cref{fig:ccf_diffs}.

\begin{proposition}\label{prop:ccf_bounds}
  For any fixed $\ell \geq 3$,
\begin{equation}   \label{eq:Bound_Kappa3}
0 \leq \lccf{\ell}{u} \leq \sqrt{\lccf{2}{u}}.
\end{equation}
Moreover,
\begin{enumerate}
\item There exists a finite graph $G$ with node $u$ such that the lower bound is
  tight for any prescribed value of $\lccf{2}{u} \in [0, 1]$.
\item There exists a sequence of growing graphs $\{G_i\}_{i=1}^{\infty}$, each
  with node $u$, such that the upper bound is tight in the limit for any
  prescribed limiting value of $\lccf{2}{u} \in [0, 1]$.
\end{enumerate}
\end{proposition}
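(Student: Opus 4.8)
The plan is to establish the inequality $\lccf{\ell}{u} \le \sqrt{\lccf{2}{u}}$ by working in the 1-hop neighborhood graph $\onehopnou$ and chaining together two instances of the probabilistic interpretation in \cref{eq:prob_interp2}. First I would use \cref{eq:lccf_nbrs} to rewrite $\lccf{\ell}{u}$ and $\lccf{\ell-1}{u}$ entirely in terms of clique counts in $\onehopnou$: specifically $\lccf{j}{u} = \frac{j \cdot \lvert K_{j}(\onehopnou) \rvert}{(d_u - j + 1)\cdot \lvert K_{j-1}(\onehopnou)\rvert}$. The key structural fact I would prove is a ``log-concavity''-type monotonicity: $\lccf{\ell}{u} \le \lccf{\ell - 1}{u}$ for all $\ell \ge 3$, or more precisely that the ratios $\lvert K_{j}(\onehopnou)\rvert / \lvert K_{j-1}(\onehopnou)\rvert$ (suitably normalized by the available degrees) are non-increasing in $j$. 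This is a Kruskal--Katona flavored statement: adding one more node to an already-dense clique structure cannot increase the conditional density. Given this monotonicity, $\lccf{\ell}{u} \le \lccf{\ell-1}{u} \le \cdots \le \lccf{3}{u}$, and then I would separately argue $\lccf{3}{u} \le \sqrt{\lccf{2}{u}}$, from which the general bound follows since $\lccf{\ell}{u}^2 \le \lccf{3}{u}\cdot\lccf{\ell}{u} \le \lccf{3}{u} \le \sqrt{\lccf{2}{u}}$ — wait, that needs care; more cleanly, I would show directly that $\lccf{\ell}{u}^2 \le \lccf{\ell-1}{u}\cdot\lccf{\ell}{u} \cdot \frac{\text{(correction)}}{\cdots}$ and telescope, or alternatively use the product identity $\prod_{j=2}^{\ell}\lccf{j}{u} = \lvert K_{\ell}(\onehopnou)\rvert / {d_u \choose \ell}$ together with the fact that this $\ell$-clique density is bounded by powers of the edge density $\lccf{2}{u}$ via Kruskal--Katona. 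The cleanest route is probably: by \cref{eq:prob_interp2}, $\lccf{\ell}{u}$ is a conditional $\ell$-clique density and $\lccf{2}{u}$ is the edge density in $\onehopnou$; a Kruskal--Katona argument gives that the $\ell$-clique density is at most the edge density raised to the power $\binom{\ell}{2}/\binom{2}{2} \cdot \frac{1}{\ell-1}$-ish, which after the conditioning collapses to the square root.

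The lower bound $\lccf{\ell}{u} \ge 0$ is immediate from the definition (it is a ratio of nonnegative clique counts), so part (i) only requires exhibiting, for each target $t \in [0,1]$, a finite graph where $\lccf{\ell}{u} = 0$ while $\lccf{2}{u} = t$. I would take the ``middle'' construction of \cref{fig:ccf_diffs}: let node $u$'s neighborhood be a complete bipartite graph $K_{a,b}$ (so there are edges among neighbors, giving $\lccf{2}{u} > 0$, but no triangle among neighbors, hence no $(\ell+1)$-clique through $u$ for $\ell \ge 3$, forcing $\lccf{\ell}{u} = 0$). Tuning the sizes $a, b$ and possibly padding with isolated neighbors of $u$ lets me hit any rational $\lccf{2}{u}$, and a limiting/density argument or a direct computation handles all of $[0,1]$; I would just need to verify the formula $\lccf{2}{u} = \frac{ab}{\binom{a+b}{2}}$ and check it ranges over a dense enough set, then note that $\lccf{2}{u}$ can be made exactly any value in $[0,1]$ by also allowing a clique component in the neighborhood alongside the bipartite part.

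For part (ii), I would use the ``right'' construction in \cref{fig:ccf_diffs}: split $u$'s $d_u$ neighbors into two halves, one forming a clique with $u$ (contributing full higher-order closure) and one forming a star / independent-ish set (diluting $\lccf{2}{u}$ but not contributing higher cliques). The figure already records the target values $\lccf{2}{u} \approx 1/4$ and $\lccf{\ell}{u} \approx 1/2$, exhibiting the equality $\lccf{\ell}{u} = \sqrt{\lccf{2}{u}}$ in the limit. The plan is to generalize this: parametrize by the fraction $\rho$ of neighbors placed in the clique part, compute $\lccf{2}{u}$ and $\lccf{\ell}{u}$ as explicit rational functions of $\rho$ and $d_u$, take $d_u \to \infty$, and check that the limiting pair $(\lccf{2}{u}, \lccf{\ell}{u})$ traces out the curve $y = \sqrt{x}$ as $\rho$ varies over $[0,1]$; matching a prescribed limiting $\lccf{2}{u} = t$ just means solving for the appropriate $\rho$. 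The main obstacle is the upper-bound inequality in the proposition itself — getting the Kruskal--Katona / clique-density argument to yield exactly the exponent $1/2$ (rather than something weaker), and making sure the conditioning in \cref{eq:prob_interp2} is handled correctly so that the bound is tight; the two extremal constructions are comparatively routine calculations once the inequality is in hand, since \cref{fig:ccf_diffs} essentially hands us the examples.
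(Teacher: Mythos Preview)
Your proposal ultimately lands on the same approach as the paper---the upper bound comes from the density form of Kruskal--Katona applied inside the neighborhood $\onehopnou$, and the two tightness constructions (bipartite/multipartite neighborhood for the lower bound, clique plus isolated vertices for the upper) are exactly the paper's examples from \cref{fig:ccf_diffs}. So the destination is right; the route is unnecessarily circuitous.

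The monotonicity detour is a genuine dead end, not just inelegant. The claim $\lccf{\ell}{u} \le \lccf{\ell-1}{u}$ fails already at $\ell = 3$: your own extremal construction for part (ii) has $\lccf{2}{u} \approx 1/4$ and $\lccf{3}{u} \approx 1/2$, so $\lccf{3}{u} > \lccf{2}{u}$. Even if a log-concavity statement held for $\ell \ge 4$, you would still need the base case $\lccf{3}{u} \le \sqrt{\lccf{2}{u}}$ separately, which already requires the Kruskal--Katona machinery. So the monotonicity route is strictly more work with no payoff.

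The clean argument, which resolves your worry about the exponent $1/2$, is this. Writing $\delta_j = \lvert K_j(\onehopnou)\rvert / \binom{d_u}{j}$ for the $j$-clique density in $\onehopnou$, \cref{eq:lccf_nbrs} simplifies exactly to $\lccf{\ell}{u} = \delta_\ell / \delta_{\ell-1}$. The density form of Kruskal--Katona gives $\delta_\ell \le \delta_{\ell-1}^{\ell/(\ell-1)}$, hence $\lccf{\ell}{u} \le \delta_{\ell-1}^{1/(\ell-1)}$. A second application gives $\delta_{\ell-1} \le \delta_2^{(\ell-1)/2}$, and since $\delta_2 = \lccf{2}{u}$ you obtain $\lccf{\ell}{u} \le \sqrt{\lccf{2}{u}}$ directly. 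No telescoping, no conditioning subtleties.

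One small caution on part (i): if you patch the bipartite construction by ``also allowing a clique component in the neighborhood,'' make sure that clique has at most $\ell - 1$ vertices, otherwise $\lccf{\ell}{u}$ becomes positive. The paper uses an $(\ell-1)$-partite neighborhood, which handles this cleanly.
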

\begin{proof}
Clearly, $0 \leq \lccf{\ell}{u}$ if the local clustering coefficient is well
defined.  This bound is tight, even if $\lccf{2}{u}$ is constant, when
$\onehopnou$ is $(\ell - 1)$-partite (see \cref{fig:ccf_diffs}, middle).
  
To derive the upper bound, consider the 1-hop neighborhood $\onehopnou$, and let
$\delta_\ell(\onehopnou) = {\lvert K_{\ell}(\onehopnou) \rvert} / {{d_u \choose \ell}}$
denote the $\ell$-clique density of $\onehopnou$.  The Kruskal-Katona
theorem~\cite{kruskal1963number,katona1966theorem} implies that
$\delta_{\ell}(\onehopnou) \leq [\delta_{\ell - 1}(\onehopnou)]^{\frac{\ell}{\ell - 1}}$
and
$\delta_{\ell - 1}(\onehopnou) \leq [\delta_{2}(\onehopnou)]^{\frac{\ell - 1}{2}}$.
Combining this with \cref{eq:deriv_lccf} gives
\begin{equation}
  \lccf{\ell}{u} \leq [\delta_{\ell - 1}(\onehopnou)]^{\frac{1}{\ell - 1}} \leq \sqrt{\delta_{2}(\onehopnou)} = \sqrt{\lccf{2}{u}},
\end{equation}
where the last equality uses the fact that $\lccf{2}{u}$ is the edge density of
$\onehopnou$.

The upper bound is tight if $\onehopnou$ consists of a clique and isolated nodes
(\cref{fig:ccf_diffs}, right) and we take $d_u \to \infty$.  Specifically, let
$N_u$ consist of a clique of size $c$ and $b$ isolated nodes.  When $\ell = 2$,
\begin{align*}
\lccf{\ell}{u} = \frac{{c \choose 2}}{{c + b \choose 2}}
= \frac{(c - 1)c}{(b + c - 1)(b + c)} \to \left(\frac{c}{c + b}\right)^2
\end{align*}
and by \cref{eq:lccf_nbrs}, when $3 \le \ell \le c$,
\begin{align*}
\lccf{\ell}{u} &= \frac{\ell \cdot {c \choose \ell}}{(c + b - \ell + 1) \cdot {c \choose \ell - 1}} 
= \frac{c - \ell + 1}{c + b - \ell + 1} \to \frac{c}{c + b}
\end{align*}

By adjusting the ratio $c / ( b + c)$ in $\onehopnou$, we can
construct a family of graphs such that $\lccf{2}{u}$ may take any value in $[0,1]$
as $d_u \to \infty$ and $\lccf{\ell}{u} = \sqrt{\lccf{2}{u}}$ as $d_u \to \infty$.
\end{proof}

Again, we have an asymptotic result.  However, in \cref{sec:ccfs_empirical}, we
will see that in some real-world data, there are nodes $u$ for which
$\lccf{3}{u}$ is close to $\sqrt{\lccf{2}{u}}$.

\subsection{Analysis for the $G_{n, p}$ model}

Next, we analyze higher-order clustering coefficients in the classical
Erd\H{o}s-R\'enyi model with edge probability $p$, i.e., the $G_{n,p}$
model~\cite{erdos1959random}.  We implicitly assume
that $\ell$ is small in the following analysis so that there should be at
least one $\ell$-wedge in the graph (with high probability and $n$
large, there is no clique of size greater than $(2 + \epsilon)\log n / \log (1 / p)$
for any $\epsilon > 0$~\cite{bollobas1976cliques}).  Our results are
summarized in the following proposition.

\begin{proposition}\label{prop:ccf_er}
Let $G$ be a random graph drawn from the $G_{n, p}$ model.
\begin{enumerate}
\item $\expectover{G}{\gccf{\ell}} = p^{\ell - 1}$, \label{itm:gnp1}
\item $\expectover{G}{\lccf{\ell}{u} \given W_{\ell}(u) > 0} = p^{\ell - 1}$ for any node $u$, \label{itm:gnp2}
\item $\expectover{G}{\accf{\ell}} = p^{\ell - 1}$, and \label{itm:gnp3}
\item For constant $\ell$,
\[
\expectover{G}{\lccf{\ell}{u} \given \lccf{2}{u}, W_{\ell}(u) > 0}
=  \left[\lccf{2}{u} - (1 - \lccf{2}{u}) \cdot O(1 / d_u^2)\right]^{\ell - 1}
\approx (\lccf{2}{u})^{\ell - 1}.
\]  \label{itm:gnp4}
\end{enumerate}
\end{proposition}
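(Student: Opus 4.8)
The plan is to handle all four parts with a single device: rewrite the clustering coefficient as a ratio of clique/wedge counts, compute the expectations of numerator and denominator exactly from edge-independence, and read off $p^{\ell-1}$ as the ratio of those expectations. For part~(i) I would start from the counting form \eqref{eq:global_ccf_l}. A ``wedge position'' is a triple $(S,c,v)$ with $S$ an $\ell$-subset of $V$, $c\in S$ a prospective center, and $v\notin S$; there are $\binom{n}{\ell}\,\ell\,(n-\ell)$ of them, each is a genuine $\ell$-wedge with probability $p^{\binom{\ell}{2}+1}$ (the $\binom{\ell}{2}$ clique edges plus the adjacent edge $(c,v)$) and a \emph{closed} wedge with probability $p^{\binom{\ell+1}{2}}$. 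Hence $\mathbb{E}\lvert K_{\ell+1}\rvert=\binom{n}{\ell+1}p^{\binom{\ell+1}{2}}$ and $\mathbb{E}\lvert W_\ell\rvert=\binom{n}{\ell}\,\ell\,(n-\ell)\,p^{\binom{\ell}{2}+1}$; dividing, using $\binom{n}{\ell+1}/\binom{n}{\ell}=(n-\ell)/(\ell+1)$ and the elementary identity $\binom{\ell+1}{2}-\binom{\ell}{2}=\ell$, everything cancels except $p^{\ell-1}$. That cancellation is the computational core of the whole proposition.

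For part~(ii) I would condition on the degree $d_u=d$. Given $d$, the subgraph $\onehopnou$ induced on $u$'s neighbors is exactly a $G_{d,p}$ graph, so by \eqref{eq:lccf_nbrs} the local coefficient is $\ell\lvert K_{\ell}(\onehopnou)\rvert\big/\big[(d-\ell+1)\lvert K_{\ell-1}(\onehopnou)\rvert\big]$. Using $\mathbb{E}\lvert K_j(\onehopnou)\rvert=\binom{d}{j}p^{\binom{j}{2}}$, together with $\binom{d}{\ell}/\binom{d}{\ell-1}=(d-\ell+1)/\ell$ and $\binom{\ell}{2}-\binom{\ell-1}{2}=\ell-1$, the ratio of expectations is again $p^{\ell-1}$ for every $d$; averaging over the law of $d_u$ then gives the claim, with the conditioning on $W_\ell(u)>0$ becoming innocuous for large $d$. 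Part~(iii) follows from the same per-node statement and linearity over $\widetilde V_\ell$.

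For part~(iv) the conditioning is finer: fix both $d_u=d$ and $\lccf{2}{u}$, which pins the edge count of $\onehopnou$ at $M=\binom{d}{2}\lccf{2}{u}$, so $\onehopnou$ is a uniformly random graph on $d$ vertices with $M$ edges. The probability that a prescribed $j$-set spans a clique is then the hypergeometric $\prod_{i=0}^{\binom{j}{2}-1}(M-i)/(\binom{d}{2}-i)$; writing each factor as $\lccf{2}{u}\bigl(1-\tfrac{i(1-\lccf{2}{u})}{\lccf{2}{u}(\binom{d}{2}-i)}\bigr)$ and feeding the $j=\ell$ and $j=\ell-1$ products into \eqref{eq:lccf_nbrs}, with $\binom{d}{2}=\Theta(d^2)$ the corrections collect into exactly $\bigl[\lccf{2}{u}-(1-\lccf{2}{u})\,O(1/d_u^2)\bigr]^{\ell-1}$ after the same binomial telescoping as in part~(ii).

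The hard part is that a ratio of expectations is not the expectation of a ratio: strictly $\mathbb{E}[X/Y]\neq\mathbb{E}[X]/\mathbb{E}[Y]$, and a quick check ($\ell=3$, $d_u=3$) shows the two genuinely differ at finite size. To license the displayed equalities I would lean on concentration --- clique counts in $G_{n,p}$ and $G_{d,p}$ are sharply concentrated around their means, the clustering coefficients lie in $[0,1]$, and bounded convergence pushes $\mathbb{E}[X/Y]$ to $\mathbb{E}[X]/\mathbb{E}[Y]$ as $n\to\infty$ (equivalently $d_u\to\infty$). For part~(iv) this is delicate: the concentration error on $\lvert K_\ell(\onehopnou)\rvert$ must be $o(1/d_u^2)$ relative to its mean so as not to swamp the $O(1/d_u^2)$ correction, so I would need a second-moment or Azuma-type estimate precise enough to preserve that term --- this is the one place where real work beyond bookkeeping is required.
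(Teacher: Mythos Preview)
Your combinatorics is correct and the ratio of expectations you compute does equal $p^{\ell-1}$ (and in part~(iv) your hypergeometric product coincides term-for-term with the paper's expression). But as you yourself flag, $\mathbb{E}[X]/\mathbb{E}[Y]\neq\mathbb{E}[X/Y]$; your concentration patch would only yield asymptotic equality, whereas items~(i)--(iii) assert exact equality. This is a genuine gap, and the paper avoids it by a different route rather than by repairing yours.

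The observation you never use is that in $G_{n,p}$ the $\ell-1$ edges needed to \emph{close} an $\ell$-wedge are disjoint from --- hence independent of --- the $\binom{\ell}{2}+1$ edges that \emph{form} the wedge. So rather than compute $\mathbb{E}[\text{closed wedges}]$ and $\mathbb{E}[\text{wedges}]$ separately and divide, the paper conditions on the wedge set and writes
\[
\mathbb{E}\bigl[\gccf{\ell}\,\big|\,W_\ell\bigr]
=\frac{1}{\lvert W_\ell\rvert}\sum_{w\in W_\ell}\Pr[w\text{ is closed}]
=\frac{1}{\lvert W_\ell\rvert}\sum_{w\in W_\ell}p^{\ell-1}
=p^{\ell-1},
\]
so the conditional expectation is the constant $p^{\ell-1}$ and the tower property finishes without ever forming a ratio of random variables. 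Parts~(ii) and~(iii) are identical with $W_\ell(u)$ in place of $W_\ell$. For part~(iv) the same device applies: condition on $\lccf{2}{u}$ and on $W_\ell(u)$; for each wedge $w\in W_\ell(u)$ the closure event asks that $\ell-1$ specific edges of $\onehopnou$ are present, given that the $\binom{\ell-1}{2}$ edges of the wedge's $(\ell-1)$-clique are already present and that $\onehopnou$ has $M=\lccf{2}{u}\binom{d_u}{2}$ edges in total. That is a single hypergeometric probability $\binom{r-\ell+1}{q-\ell+1}\big/\binom{r}{q}$ with $q=M-\binom{\ell-1}{2}$ and $r=\binom{d_u}{2}-\binom{\ell-1}{2}$, computed per wedge, and it yields the displayed $\ell-1$-term product directly. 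Your approach recovers the same product only as the quotient of two expected clique counts, which is why you then need a second-moment bound to justify passing to the expectation of the ratio; the paper's per-wedge conditioning makes that entire step unnecessary.
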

\begin{proof}
  For \cref{itm:gnp1}:
  \begin{align*}
    \expect{\gccf{\ell}}
    &= \expectover{G}{\expectover{W_{\ell}}{\gccf{\ell} \given W_{\ell}}} \\
    &= \expectover{G}{\expectover{W_{\ell}}{\dfrac{1}{\lvert W_{\ell}\rvert}\sum_{w \in W_{\ell}}\prob{w \text{ is closed}}}} \\
    &= \expectover{G}{\expectover{W_{\ell}}{\frac{1}{\lvert W_{\ell}\rvert}\sum_{w \in W_{\ell}}p^{\ell - 1}}} \\
    &= \expectover{G}{p^{\ell - 1}} \\
    &= p^{\ell - 1}.
  \end{align*}  
  The second equality is well defined (with high probability) for small $\ell$.
  The third equality comes from the fact that any $\ell$-wedge is closed if and
  only if the $\ell - 1$ possible edges between the $\ell$-clique and the
  outside node in the adjacent edge exist to form an $(\ell+1)$-clique.

  The proof of \Cref{itm:gnp2} is similar.
  \begin{align*}
    \expect{\lccf{\ell}{u} \given W_{\ell}(u) > 0}
    &= \expectover{G}{\expectover{W_{\ell}(u) > 0}{\lccf{\ell}{u} \given W_{\ell}(u)}} \\
    &= \expectover{G}{\expectover{W_{\ell}(u) > 0}{\dfrac{1}{\lvert W_{\ell}(u)\rvert}\sum_{w \in W_{\ell}(u)}\prob{w \text{ is closed}}}} \\
    &= \expectover{G}{\expectover{W_{\ell}(u) > 0}{\frac{1}{\lvert W_{\ell}(u)\rvert}\sum_{w \in W_{\ell}(u)}p^{\ell - 1}}} \\
    &= \expectover{G}{p^{\ell - 1}} \\
    &= p^{\ell - 1}.
  \end{align*}

  For \cref{itm:gnp3}:
  \begin{align*}
    \expectover{G}{\accf{\ell}}
    &= \expectover{G}{\expectover{\tilde{V}}{\accf{\ell} \given \widetilde V }} \\
    &= \expectover{G}{\expectover{\tilde{V}}{\frac{1}{\lvert \widetilde V \rvert} \sum_{u \in \widetilde V} \expect{\lccf{\ell}{u}}}} \\
    &= \expectover{G}{\expectover{\tilde{V}}{\frac{1}{\lvert \widetilde V \rvert} \sum_{u \in \widetilde V} p^{\ell - 1}}} \\
    &= \expectover{G}{p^{\ell - 1}} \\
    &= p^{\ell - 1}.
  \end{align*}
  In the second line, $\tilde{V}$ is nonempty with high probability for $\ell$ small,
  so $1 / \lvert \tilde{V} \rvert$ is well defined.  The third equality comes from \cref{itm:gnp2}.
  
  For \cref{itm:gnp4}, first consider
  \begin{align*}
    & \expectover{G}{\lccf{\ell}{u} \given \lccf{2}{u}, W_{\ell}(u) > 0} \\
    &= \expectover{G}{\expectover{W_{\ell}(u) > 0}{\lccf{\ell}{u} \given \lccf{2}{u},\; W_{\ell}(u)}} \\
    &= \expectover{G}{\expectover{W_{\ell}(u) > 0}{\dfrac{1}{\lvert W_{\ell}(u)\rvert}\sum_{w \in W_{\ell}(u)}\prob{w \text{ is closed} \given \lccf{2}{u}}}}.
  \end{align*}  
  Now, note that $\onehopnou$ has $m = \lccf{2}{u} \cdot {d_u \choose 2}$ edges.
  Knowing that $w \in W_{\ell}(u)$ accounts for ${\ell - 1 \choose 2}$ of these
  edges.  By symmetry, the other $q = m - {\ell - 1 \choose 2}$ edges appear in
  any of the remaining $r = {d_u \choose 2} - {\ell - 1 \choose 2}$ edges
  uniformly at random.  There are ${r \choose q}$ ways to place these edges, of
  which ${r - \ell + 1 \choose q - \ell + 1}$ would close the wedge $w$.  Thus,
  \begin{align}
  \prob{w \text{ is closed} \given \lccf{2}{u}}
  &= \frac{{r - \ell + 1 \choose q - \ell + 1}}{{r \choose q}} \\
  &= \frac{(r - \ell + 1)!q!}{(q - \ell + 1)!r!} \\
  &= \frac{(q - \ell + 2)(q - \ell + 3) \cdots q}{(r - \ell + 2)(r - \ell + 3) \cdots r} \label{eqn:w_closed}
  \end{align}
  Now, for any small nonnegative integer $k$,
  \begin{align*}\label{eqn:closed_limit}
  \frac{q - k}{r - k} &= \frac{
  \lccf{2}{u} \cdot {d_u \choose 2} - {\ell -1 \choose 2} - k
  }{
  {d_u \choose 2} - {\ell - 1 \choose 2} - k
  } \\
  &= \lccf{2}{u} - (1 - \lccf{2}{u})\left[\frac{{\ell -1 \choose 2} + k}{{d_u \choose 2} - {\ell - 1 \choose 2} - k}\right]  \\
  &= \lccf{2}{u} - (1 - \lccf{2}{u}) \cdot O(1 / d_u^2)
  \end{align*}
  We are able to use the big-O notation because $\ell$ is a constant by assumption.
  Thus, the expression in \cref{eqn:w_closed} approaches $(\lccf{2}{u})^{\ell - 1}$
  as $\lccf{2}{u} \to 1$ and as $d_u \to \infty$.
\end{proof}

\Cref{itm:gnp4} of \cref{prop:ccf_er} says that even if the second-order local
clustering coefficient is large, the $\ell$th-order clustering coefficient will
still decay exponentially in $\ell$, at least in the limit as $d_u$ grows.  We
will use this as a reference point in our analysis of real-world networks in
\cref{sec:ccfs_empirical}.

\subsection{Analysis for the small-world model}

\begin{figure}[tb]
  \begin{center}
    \includegraphics[width=0.8\columnwidth]{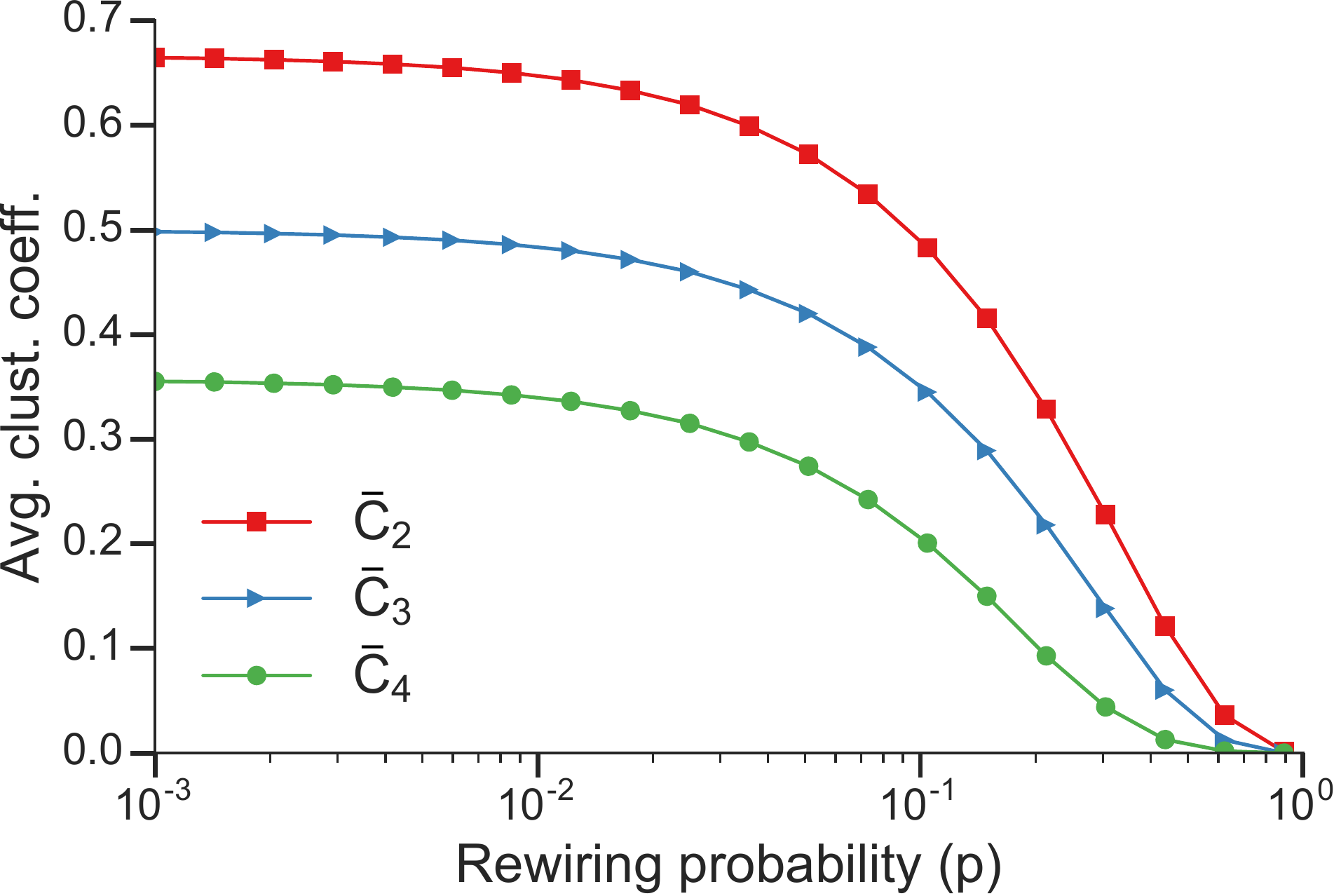}
  \end{center}
  \dualcaption{Higher-order clustering in the small-world model}{The plot shows
    the average higher-order clustering coefficient $\accf{\ell}$ as a function
    of rewiring probability $p$ in small-world networks for $\ell = 2, 3, 4$.
    Each data point is the average over 20 small-world random graph instances
    with 20,000 nodes where each node connects to its $10$ nearest neighbors
    before rewiring.}\label{fig:sw_ccfs}
\end{figure}

We also study higher-order clustering in the small-world random graph
model~\cite{watts1998collective}. The model begins with a ring network where
each node connects to its $2k$ nearest neighbors.  Then, for each node $u$ and
each of the $k$ edges $(u, v)$ with $v$ following $u$ ``clockwise'' in the ring,
the edge is ``rewired'' to $(u, w)$ with probability $p$, where $w$ is chosen
uniformly at random.\footnote{Bret Victor has a wonderful visual explanation of
  the small-world model at\\
  \url{http://worrydream.com/ScientificCommunicationAsSequentialArt/}.}

With no rewiring ($p = 0$) and $k \ll n$, $\accf{2} \approx
3/4$~\cite{watts1998collective}.  As $p$ increases, the average clustering
coefficient $\accf{2}$ slightly decreases until a phase transition near $p =
0.1$, where $\accf{2}$ decays to $0$~\cite{watts1998collective} (also see
\cref{fig:sw_ccfs}).  We generalize these results for higher-order clustering
coefficients.  Specifically, when $p = 0$, we can analytically show that
\begin{equation}  \label{eq:C_SW}
  \accf{\ell} \approx (\ell + 1) / (2\ell)
\end{equation}
for any $\ell \geq 2$. Thus, $\accf{\ell}$ decreases as $\ell$ increases.
Furthermore, via simulation, we observe the same behavior as for $\accf{2}$ when
adjusting the rewiring probability $p$ (\cref{fig:sw_ccfs}).  Regardless of
$\ell$, the phase transition happens near $p = 0.1$ and this is partly
due to the fact that $\lccf{\ell}{u} \to 0$ as $\lccf{2}{u} \to 0$ (\cref{prop:ccf_bounds}).

The following proposition gives an argument for \cref{eq:C_SW}.  We will
not be precise in the proof.
\begin{proposition}\label{prop:ccf_sw}
In the small world model with no rewiring ($p = 0$) and $2k < n$,
for any node $u$, $\lccf{\ell}{u} \approx \frac{\ell + 1}{2\ell}$ for large $k$.
\end{proposition}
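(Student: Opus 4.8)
The plan is to analyze the 1-hop neighborhood graph $\onehopnou$ explicitly and then invoke the clique-density identity \cref{eq:lccf_nbrs}. First I would pin down the structure of $\onehopnou$ in the unrewired ring. Node $u$ is adjacent to the $2k$ nodes sitting at ring-positions $\pm 1, \pm 2, \ldots, \pm k$ relative to $u$, and two such neighbors are themselves adjacent exactly when their ring distance is at most $k$. Since $n$ is comfortably larger than $2k$, wraparound never occurs, so neighbors at relative positions $i$ and $j$ are adjacent iff $\lvert i - j \rvert \le k$. Up to a single unit shift of the threshold across the one missing position $0$ (which is negligible for large $k$), this means $\onehopnou$ is the interval graph on the integer points $\{1, 2, \ldots, 2k\}$ with $i \sim j$ iff $\lvert i - j \rvert \le k$; hence an $m$-subset is a clique iff all its points lie in a window of width at most $k$.

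Second, I would count the cliques. Fixing the minimal point $i_1$ of an $m$-clique, the remaining $m-1$ points must come from $(i_1, i_1 + k] \cap \{1, \ldots, 2k\}$, which contains $k$ points when $i_1 \le k$ and $2k - i_1$ points when $i_1 > k$. Summing over $i_1$ and using the hockey-stick identity $\sum_{j=0}^{k-1}\binom{j}{m-1} = \binom{k}{m}$ gives
\[
\lvert K_m(\onehopnou) \rvert = k\binom{k}{m-1} + \binom{k}{m} \sim \frac{(m+1)k^m}{m!} \quad \text{as } k \to \infty .
\]

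Third, I would substitute into \cref{eq:lccf_nbrs} with $d_u = 2k$ and cancel the leading $k$-powers:
\[
\lccf{\ell}{u} = \frac{\ell \cdot \lvert K_{\ell}(\onehopnou) \rvert}{(2k - \ell + 1)\cdot \lvert K_{\ell - 1}(\onehopnou) \rvert} \sim \frac{\ell \cdot \dfrac{(\ell+1)k^{\ell}}{\ell!}}{2k \cdot \dfrac{\ell\, k^{\ell - 1}}{(\ell - 1)!}} = \frac{\ell + 1}{2\ell},
\]
which is exactly \cref{eq:C_SW} at the local level; note it recovers the familiar $\lccf{2}{u} \approx 3/4$.

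The main obstacle — and the reason the proposition is stated loosely — is bookkeeping the lower-order corrections rather than any real difficulty: the unit shift of the adjacency threshold at the deleted position $0$, the $-\ell+1$ in the denominator of \cref{eq:lccf_nbrs}, and the gap between $k\binom{k}{m-1}+\binom{k}{m}$ and its leading term $\tfrac{(m+1)k^m}{m!}$ all contribute only $o(1)$ relative to the limiting constant, so they affect the ``$\approx$'' but not the value $\tfrac{\ell+1}{2\ell}$. One also needs $n$ bounded away from $2k$ (e.g.\ $n \ge 3k$ suffices) so that ring distances among $u$'s neighbors never wrap; in the degenerate regime $n \downarrow 2k$, $\onehopnou$ becomes a clique and $\lccf{\ell}{u} \to 1$ instead. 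I would flag this caveat but not dwell on it, matching the authors' stated level of rigor for this proposition.
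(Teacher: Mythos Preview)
Your proposal is correct and follows essentially the same route as the paper: both identify the neighborhood $\onehopnou$ as (approximately) the interval graph on $2k$ points with adjacency given by bounded label difference, count the $(\ell-1)$- and $\ell$-cliques in it, and plug into the ratio formula. The only real difference is in the bookkeeping of the clique count: the paper parameterizes by the \emph{span} $s$ of the clique and approximates the resulting sum $\sum_s (2k-1-s)\binom{s-1}{\ell-3}$ by dropping lower-order terms inside the summand, whereas you parameterize by the minimal element and collapse the sum to the exact closed form $k\binom{k}{m-1}+\binom{k}{m}$ via the hockey-stick identity before extracting the leading term. Your version is a bit cleaner for that reason. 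One small point: the paper's labeling actually gives the adjacency threshold $|a-b|\le k-1$ exactly (not $\le k$), so the ``unit shift'' you flag is an artifact of your relabeling rather than a genuine feature of the graph; but as you note, it is $o(1)$ and irrelevant to the conclusion.
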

\begin{proof}
To derive \Cref{eq:C_SW}, we first label the $2k$ neighbors of $u$ as $1, 2,
\dots, 2k$ by their clockwise ordering in the ring.  Since $2k < n$, these nodes are unique.
Next, define the
\emph{span} of any $\ell$-clique containing $u$ as the difference between the
largest and smallest label of the $\ell-1$ nodes in the clique other than $u$.
The span $s$ of any $\ell$-clique satisfies $s \leq k-1$ since any node is
directly connected with a node of label difference no greater than $k-1$.  Also,
$s \geq \ell-2$ since there are $\ell-1$ nodes in an $\ell$-clique other than
$u$.  For each span $s$, we can find $2k-1-s$ pairs of $(i, j)$ such that $1
\leq i, j \leq 2k$ and $j - i = s$.  Finally, for every such pair $(i, j)$,
there are ${s - 1 \choose \ell - 3}$ choices of $\ell-3$ nodes between $i$ and
$j$ which will form an $\ell$-clique together with nodes $u$, $i$, and
$j$. Therefore,
\begin{align*}
\lvert \cliqueL{\ell}{u} \rvert
&= \sum_{s = \ell-2}^{k-1} (2k-1-s) \cdot {s - 1 \choose \ell - 3} \\
&= \sum_{s = \ell-2}^{k-1} (2k-1-s) \cdot \frac{(s-1)(s - 2) \cdots (s - \ell + 3)}{(\ell - 3)!} \\
&= \sum_{t = 1}^{k-\ell+2} (2k + 2 -t - \ell) \cdot \frac{t (t + 1) \cdots (t + \ell - 4)}{(\ell - 3)!} \\
&\approx \sum_{t = 1}^{k} \frac{(2k - t)t^{\ell - 3}}{(\ell - 3)!} \quad \text{(ignoring lower-order terms of $t$)} \\ 
&= \frac{1}{(\ell - 3)!}\sum_{t = 1}^{k} 2kt^{\ell - 3} - t^{\ell - 2} \\
&= \frac{1}{(\ell - 3)!}\left[2k \cdot \left(\frac{k^{\ell - 2}}{\ell - 2} + O(k^{\ell - 3})\right) - \frac{k^{\ell - 1}}{\ell - 1}  + O(k^{\ell - 2})\right] \\
&= \frac{1}{(\ell - 3)!}\left[\frac{2k^{\ell - 1}}{\ell - 2} - \frac{k^{\ell - 1}}{\ell - 1}\right] \quad \text{(ignoring lower-order terms of $k$)}\\
&= \frac{k^{\ell -1 }}{(\ell - 3)!}\left[\frac{2}{\ell - 2} - \frac{1}{\ell - 1}\right] \\
&= \frac{\ell k^{\ell -1 }}{(\ell - 1)!} 
\end{align*}

Then, following \cref{eq:deriv_lccf},
\begin{align*}
\lccf{\ell}{u} &= \frac{\ell \cdot \lvert K_{\ell+1}(u) \rvert}{(d_u - \ell + 1) \cdot \lvert K_{\ell}(u) \rvert }  \\
&\approx \frac{\ell \cdot \frac{(\ell + 1) k^{\ell}}{\ell!}}{(2k - \ell + 1) \cdot \frac{\ell k^{\ell -1 }}{(\ell - 1)!} } \\
&= \frac{(\ell + 1)k}{(2k - \ell + 1) \cdot \ell} 
\to \frac{\ell + 1}{2\ell} \text{ as } k \to \infty.
\end{align*}
\end{proof}

\cref{prop:ccf_sw} is an imprecise and asymptotic result, but \cref{fig:sw_ccfs} 
at least captures the trend that the clustering decreases with $\ell$.

\section{Empircal Analysis}
\label{sec:ccfs_empirical}
We now measure and analyze higher-order clustering
in five networks:
\begin{enumerate}
\item an $\er$ graph with $n = 1,000$ nodes and edge probability $p = 0.2$;
\item a small-world network with $n = 20,000$ nodes, $k = 10$ edges per node,
and rewiring probability $p = 0.1$;
\item the neural network
of the nematode worm \emph{C.~elegans}~\cite{watts1998collective}, where we take
the edges in this network to be undirected; 
\item the friendships between Stanford
students on Facebook from September 2005~\cite{traud2012social} (denoted $\stanford$); and
\item a co-authorship network constructed from papers posted to the
Astrophysics category on arXiv~\cite{leskovec2007graph} (denoted $\astroph$).
\end{enumerate}

\Cref{tab:data_summary} lists the higher-order clustering coefficients for $\ell
= 2$, $3$, and $4$.  \Cref{prop:ccf_er,fig:sw_ccfs} say that the average
clustering coefficient should decrease as the order increases for the
Erd\H{o}s-R\'enyi and small-world models, and indeed, this is the case.  We see
that $\accf{4} < \accf{3} < \accf{2}$ also holds for the three real-world
networks.  (This does not necessarily have to hold---the right column of
\cref{fig:ccf_diffs} shows an example where $\accf{3} > \accf{2}$.)  Thus, when
averaging over nodes, higher-order cliques are less likely to close in both the
synthetic and real-world networks.  However, the same trends do not hold for the
higher-order global clustering coefficient, which can decrease with $\ell$
(\emph{C. elegans}), increase with $\ell$ ($\astroph$), or non-monotonic with $\ell$
($\stanford$).

For the three real-world networks, we also measure the higher-order clustering
coefficients with respect to two null models (\cref{tab:data_summary}).  First,
we use the Configuration Model (CM) that samples uniformly at random from simple
graphs with the same degree
distribution~\cite{bollobas1980probabilistic,milo2003uniform}.  In the
real-world networks, $\accf{2}$ is much larger than expected with respect to the
CM null model (\cref{tab:data_summary}); this behavior was observed by Watts and
Strogatz~\cite{watts1998collective}.  Unsurprisingly, we find that the same
result holds for $\accf{3}$ and $\accf{4}$.

Second, we use a null model that samples graphs that preserve both the degree
distribution and $\accf{2}$.  Specifically, these are samples from an ensemble
of exponential graphs where the Hamiltonian measures the absolute value of the
difference in $\accf{2}$ between the original network and the sampled
network~\cite{park2004statistical}.  Such samples are referred to as Maximally
Random Clustered Networks (MRCN) and are sampled with a simulated annealing
procedure~\cite{colomer2013deciphering}.  Comparing $\accf{3}$ between the
real-world and the null network, we observe different behavior in higher-order
clustering.  The \emph{C. elegans} neural network has less than expected higher-order
clustering in terms of $\accf{3}$ with respect to the MRCN null model
(\cref{tab:data_summary}).  On the other hand, the Facebook friendship and
co-authorship networks exhibits higher than expected $\accf{3}$ (although
only slightly higher for $\astroph$).  Thus, while all three of the
real-world networks exhibit clustering in the classical sense of triadic
closure, only the friendship and (and to some extent) the co-authorship networks exhibit higher-order
clustering.

The lack of higher-order clustering in the \emph{C. elegans}
network\footnote{Here, we mean the lack of clustering with respect to the
  higher-order clustering coefficient defined in this chapter.  In
  \cref{ch:honc}, we found a meaningful \emph{higher-order cluster} in
  \emph{C. elegans} in terms of the bi-fan motif.  These results are compatible
  and exemplify the different dimensions of higher-order network analysis in
  this thesis.} agrees with previous results that 4-cliques are under-expressed
in parts of \emph{C. elegans}, while open 3-wedges related with cooperative
information propagation are
over-expressed~\cite{benson2016higher,milo2002network,varshney2011structural}.
This also provides credence for the ``3-layer'' model of
\emph{C. elegans}~\cite{varshney2011structural}.  The observed clustering in the
friendship network is consistent with prior work showing the relative
infrequency of open $\ell$-wedges in many Facebook network subgraphs with
respect to a null model accounting for triadic
closure~\cite{ugander2013subgraph}.  Co-authorship networks are known to have
large clustering in the traditional sense, which is partially attributed to
papers with multiple authors that form cliques~\cite{newman2001scientific}.  It
is natural that these cliques contribute to higher-order clustering as well.

\begin{sidewaystable}[t]
\renewcommand{\arraystretch}{1.1}
\setlength{\tabcolsep}{3pt}
  \scalebox{0.8}{
  \begin{tabular}{l   c c  @{\hskip 10pt}  c c c   @{\hskip 10pt}   c c c   @{\hskip 10pt}   c c c}
\toprule
& E-R & S-W 
& \multicolumn{3}{c}{\!\!\!\!\!\!\!\!\!\!\!\!\!\!\!\!\!\!\!\!\!\!\!\!\!\!\!\!\!\!\!\!\emph{C. elegans}} 
& \multicolumn{3}{c}{\!\!\!\!\!\!\!\!\!\!\!\!\!\!\!\!\!\!\!\!\!\!\!\!\!\!\!\!\!\!\!\!$\stanford$} 
& \multicolumn{3}{c}{\!\!\!\!\!\!\!\!\!\!\!\!\!\!\!\!\!\!\!\!\!\!\!\!\!\!\!\!\!\!\!\!$\astroph$} \\
\cmidrule(r){4-6}                                                                                                                                                          
\cmidrule(r){7-9}                                                                                                                                                          
\cmidrule(r){10-12}   
&       &               & orig.\ & CM & MRCN & orig.\ & CM & MRCN & orig.\ & CM & MRCN \\ \midrule
$\lvert V \rvert$ & 1K & 20K & 297 & 297 & 297 & 11.6K & 11.6K & 11.6K & 18.8K & 18.8K & 18.8K  \\ 
$\lvert E \rvert$ & 99.8K & 100K & 2.15K & 2.15K & 2.15K & 568K & 568K & 568K & 198K & 198K & 198K \\ \midrule
$\accf{2}$ & 0.20 & 0.49 & 0.31 & 0.15 [0.13, 0.16] & 0.31 [0.31, 0.31] & 0.25 & 0.03 [0.03, 0.03] & 0.25 [0.25, 0.25]  & 0.68 & 0.01 [0.01, 0.01] & 0.68 [0.68, 0.68] \\
$\accf{3}$ & 0.04 & 0.35 & 0.14 & 0.04 [0.02, 0.05] & 0.17 [0.15, 0.17] & 0.18 & 0.00 [0.00, 0.00] & 0.14 [0.14, 0.14]  & 0.61 & 0.00 [0.00, 0.00] & 0.60 [0.60, 0.60]\\
$\accf{4}$ & 0.01 & 0.20 & 0.06 & 0.01 [0.00, 0.01] & 0.09 [0.07, 0.11] & 0.16 & 0.00 [0.00, 0.00] & 0.09 [0.09, 0.09] & 0.56 & 0.00 [0.00, 0.00] & 0.52 [0.52, 0.53]\\  \midrule
$\frac{\lvert \widetilde V_2 \rvert}{\lvert V \rvert}$  & 1.00 & 1.00 & 0.95 & 0.95 [0.95, 0.95] & 0.95 [0.95, 0.95] & 0.95 & 0.95 [0.95, 0.95] & 0.95 [0.95, 0.95] & 0.93 & 0.93 [0.93, 0.93] & 0.93 [0.93, 0.93] \\
$\frac{\lvert \widetilde V_3 \rvert}{\lvert V \rvert}$  & 1.00 & 1.00 & 0.93 & 0.89 [0.86, 0.91] & 0.94 [0.93, 0.94] & 0.92 & 0.85 [0.85, 0.86] & 0.93 [0.93, 0.93] & 0.84 & 0.46 [0.45, 0.46] &  0.84 [0.84, 0.84] \\
$\frac{\lvert \widetilde V_4 \rvert}{\lvert V \rvert}$  & 1.00 & 1.00 & 0.81 & 0.54 [0.48, 0.60] & 0.83 [0.79, 0.87] & 0.88 & 0.63 [0.63, 0.64] & 0.91 [0.91, 0.91] & 0.74 & 0.04 [0.03, 0.04] & 0.75 [0.75, 0.75] \\ \midrule
$\accf{2}'$ & 0.20 & 0.49 & 0.29 & 0.14 [0.13, 0.15] & 0.29 [0.29, 0.29] & 0.24 & 0.03 [0.03, 0.03] & 0.24 [0.24, 0.24]  & 0.63 & 0.01 [0.01, 0.01] & 0.63 [0.63, 0.63] \\
$\accf{3}'$ & 0.04 & 0.35 & 0.13 & 0.03 [0.02, 0.04] & 0.16 [0.14, 0.16] & 0.17 & 0.00 [0.00, 0.00] & 0.13 [0.13, 0.13] & 0.51 & 0.00 [0.00, 0.00] & 0.50 [0.50, 0.50] \\
$\accf{4}'$ & 0.01 & 0.20 & 0.05 & 0.00 [0.00, 0.01] & 0.08 [0.06, 0.09] & 0.14 & 0.00 [0.00, 0.00] & 0.08 [0.08, 0.08] & 0.42 & 0.00 [0.00, 0.00] & 0.40 [0.39, 0.40] \\ \midrule
$\gccf{2}$ & 0.20 & 0.48 & 0.18 & 0.10 [0.10, 0.11] & 0.14 [0.14, 0.15] & 0.16 & 0.03 [0.03, 0.03] & 0.05 [0.05, 0.05] & 0.32 & 0.01 [0.01, 0.01] & 0.14 [0.13, 0.14] \\
$\gccf{3}$ & 0.04 & 0.36 & 0.08 & 0.02 [0.02, 0.03] & 0.04 [0.04, 0.05] & 0.11 & 0.00 [0.00, 0.00] & 0.02 [0.02, 0.02] & 0.33 & 0.00 [0.00, 0.00] & 0.10 [0.10, 0.10] \\
$\gccf{4}$ & 0.01 & 0.23 & 0.06 & 0.01 [0.00, 0.01] & 0.02 [0.01, 0.03] & 0.12 & 0.00 [0.00, 0.00] & 0.02 [0.02, 0.02] & 0.36 & 0.00 [0.00, 0.00] & 0.09 [0.09, 0.09] \\
\bottomrule
\end{tabular}
  }
  \dualcaption{Higher-order clustering coefficients for several networks}{We
    measure higher-order clustering in the $\er$ (E-R) and small-world (S-W)
    random graph models as well as three real-world networks.  For the
    real-world networks, we also measure the clustering coefficients with
    respect to two null models: a Configuration Model (CM) that produces random
    graphs with the same degree distribution as in the real
    graph~\cite{bollobas1980probabilistic,milo2003uniform} and Maximally Random
    Clustered Networks (MRCN) that preserve the degree distribution as well as
    $\accf{2}$~\cite{park2004statistical,colomer2013deciphering}.  For the
    random networks, we report the sample mean and the minimum and maximum
    values (in brackets) over 100 samples.  Under a CM null model, there is more
    clustering than expected for all orders of the clustering coefficient.
    Under the MRCN null model, the \emph{C. elegans} neural network does not exhibit higher-order
    clustering, while $\stanford$ and $\astroph$ do exhibit higher-order
    clustering (although $\astroph$ has only slightly larger higher-order clustering than expected under the
    null model).  We measure the higher-order average clustering coefficient
    ($\accf{\ell}$), the fraction of nodes that are the center of at least one
    $\ell$-wedge ($\lvert \widetilde V_\ell \rvert / \lvert V \rvert$), the
    alternative higher-order average clustering coefficient that considers the
    clustering of a node not at the center of at least one $\ell$-wedge to be
    $0$, and the higher-order global clustering coefficient ($\gccf{\ell}$).}
  \label{tab:data_summary}
\end{sidewaystable}

\clearpage

\Cref{fig:ccfs23} plots the joint distribution of $\lccf{2}{u}$ and
$\lccf{3}{u}$.  The lower trend line represents random behavior (i.e., the
asymptotic behavior of Erd\H{o}s-R\'enyi networks in expectation; see
\cref{prop:ccf_er}) and the upper trend line denotes the maximum possible value
of $\lccf{3}{u}$ given $\lccf{2}{u}$ (\cref{eq:Bound_Kappa3}).  For many nodes
in the \emph{C. elegans} network, local clustering is nearly random, resembling
the Erd\H{o}s-R\'enyi joint distribution.  This provides further evidence that
the \emph{C. elegans} neural network lacks higher-order clustering.  In the
co-authorship network, there are many nodes $u$ with a large value of
$\lccf{2}{u}$ that have an even larger value of $\lccf{3}{u}$ near the upper
bound of \cref{eq:Bound_Kappa3} (see the inset).  Thus, our upper bound is
nearly tight in practice.  We emphasize that this does not imply that these
nodes are simply members of large cliques---if the 1-hop neighborhood of $u$ is
a clique, then $\lccf{2}{u} = \lccf{3}{u} = 1$.  Instead, some nodes likely
appear in both cliques and as the center of star-like patterns, as in
\cref{fig:ccf_diffs} (right).  This pattern would appear, for example, with an
advisor writing several 2-author papers with students and a multi-author paper
with other collaborators.


\definecolor{myblue}{RGB}{31,120,180}
\definecolor{mylightblue}{RGB}{166,206,227}
\definecolor{mygreen}{RGB}{51,160,44}
\definecolor{mylightgreen}{RGB}{178,223,138}
\begin{figure}[tb]
  \newcommand{\figwidth}{0.32}
  \includegraphics[width=\figwidth\columnwidth]{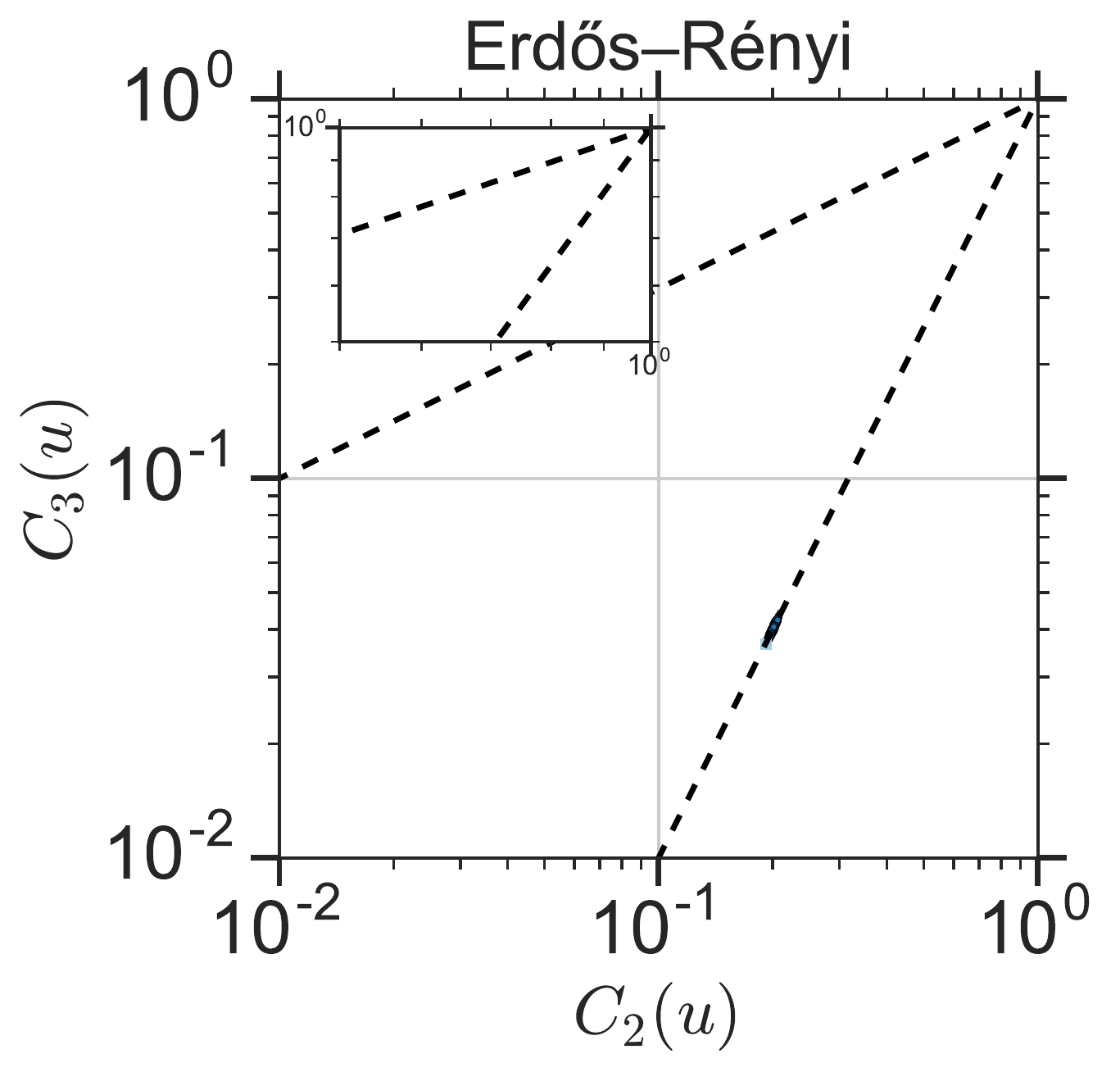}
  \includegraphics[width=\figwidth\columnwidth]{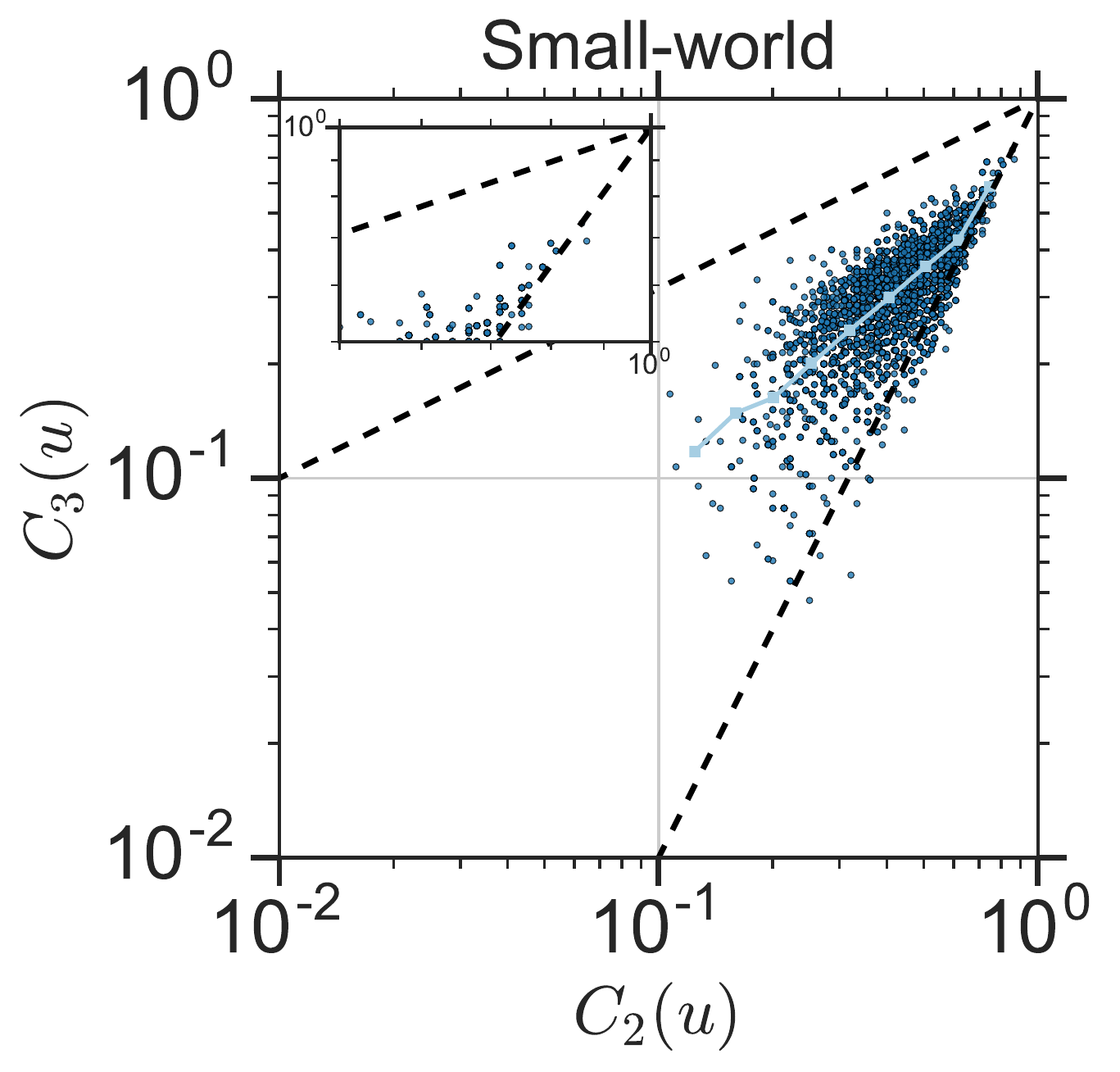} \\
  \includegraphics[width=\figwidth\columnwidth]{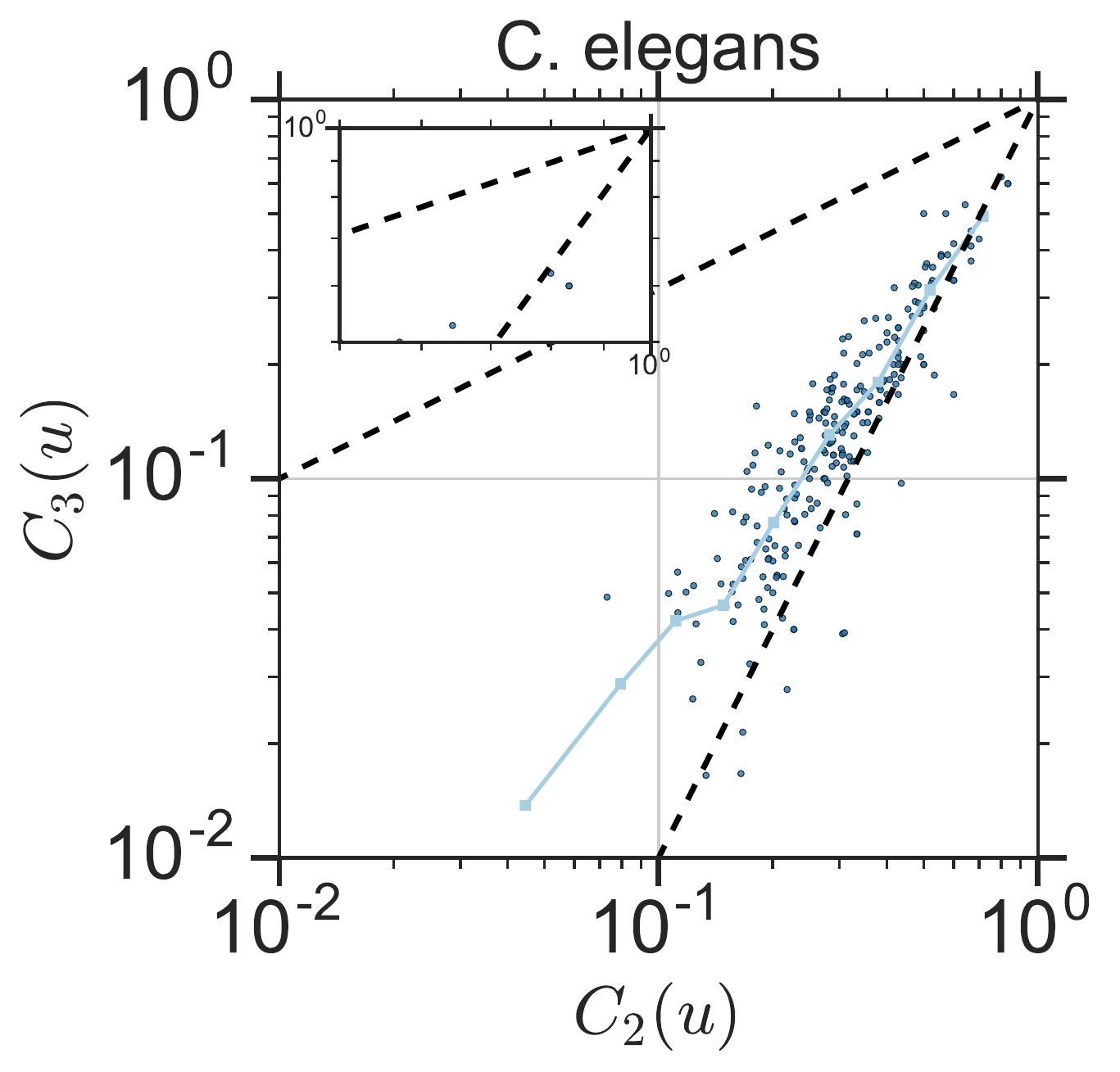}
  \includegraphics[width=\figwidth\columnwidth]{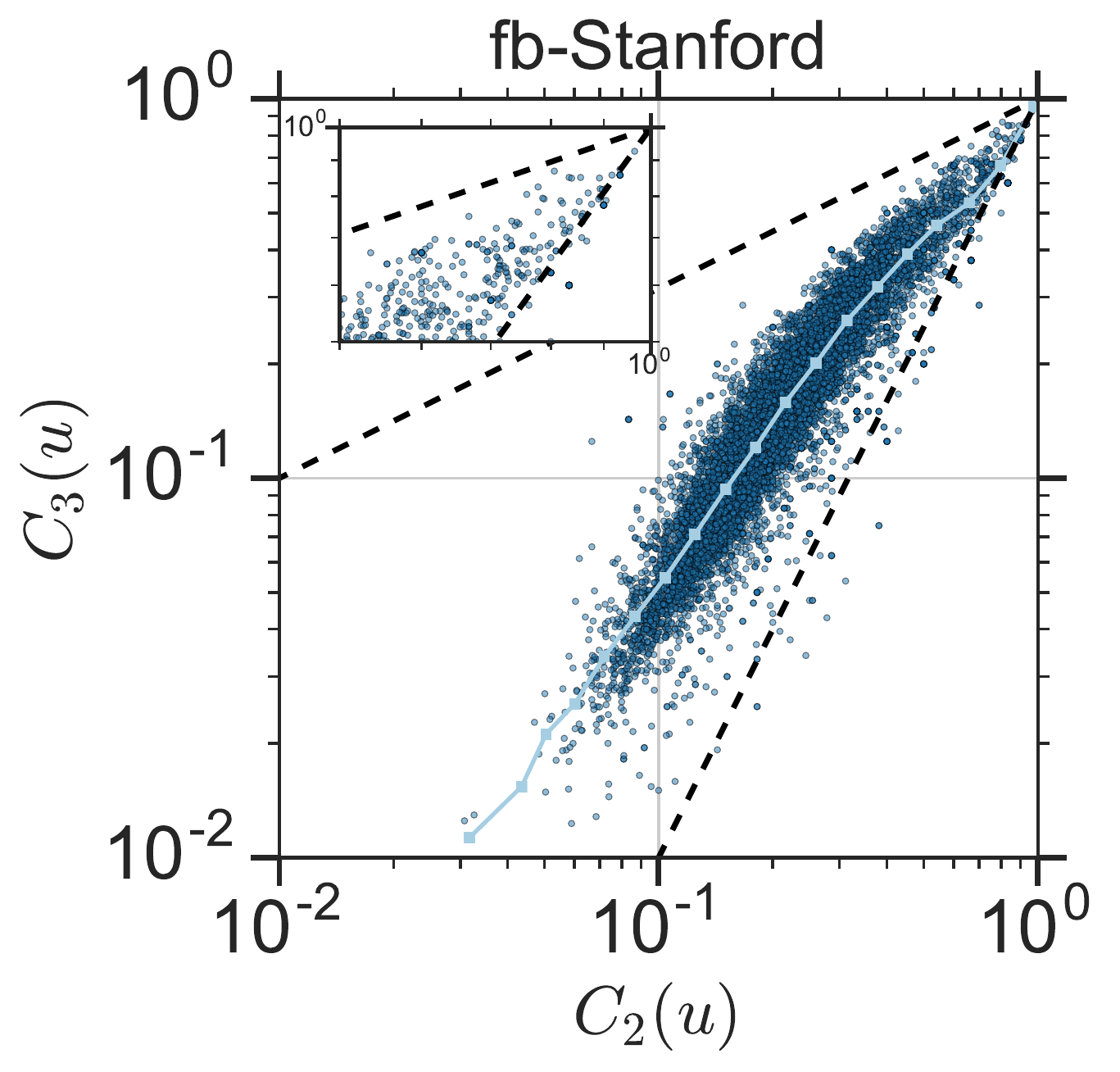}
  \includegraphics[width=\figwidth\columnwidth]{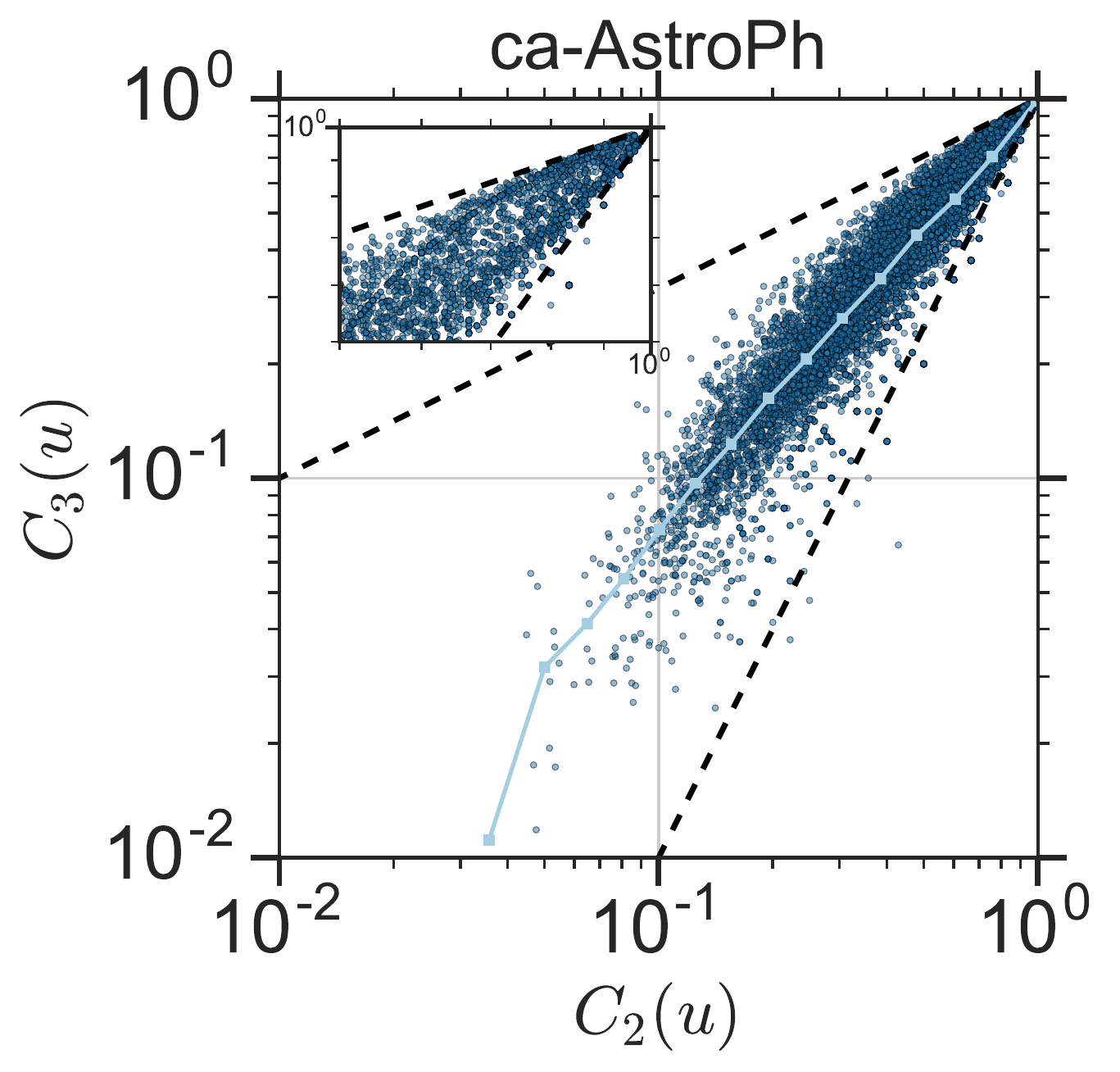} \\
  \includegraphics[width=\figwidth\columnwidth]{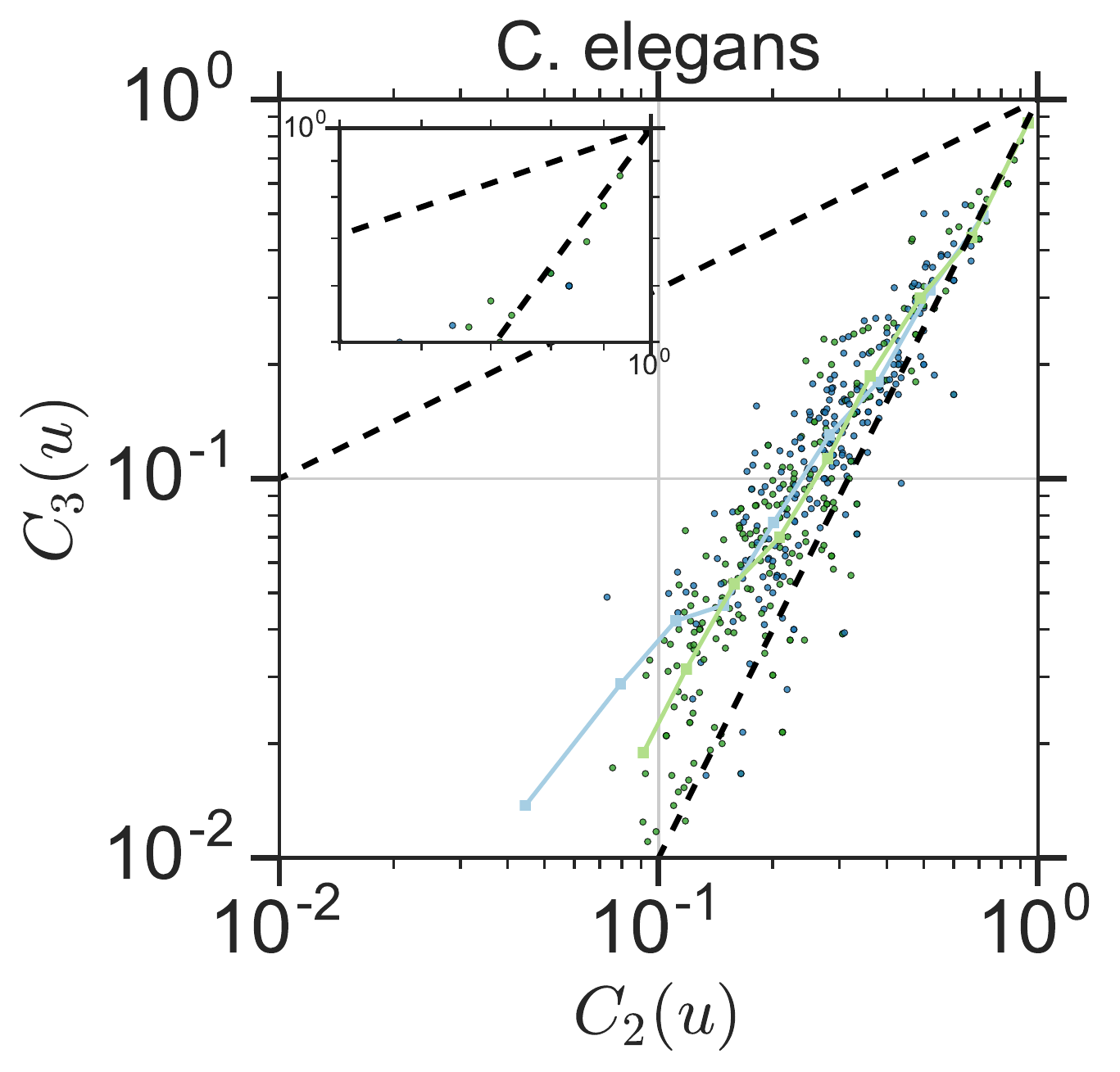}
  \includegraphics[width=\figwidth\columnwidth]{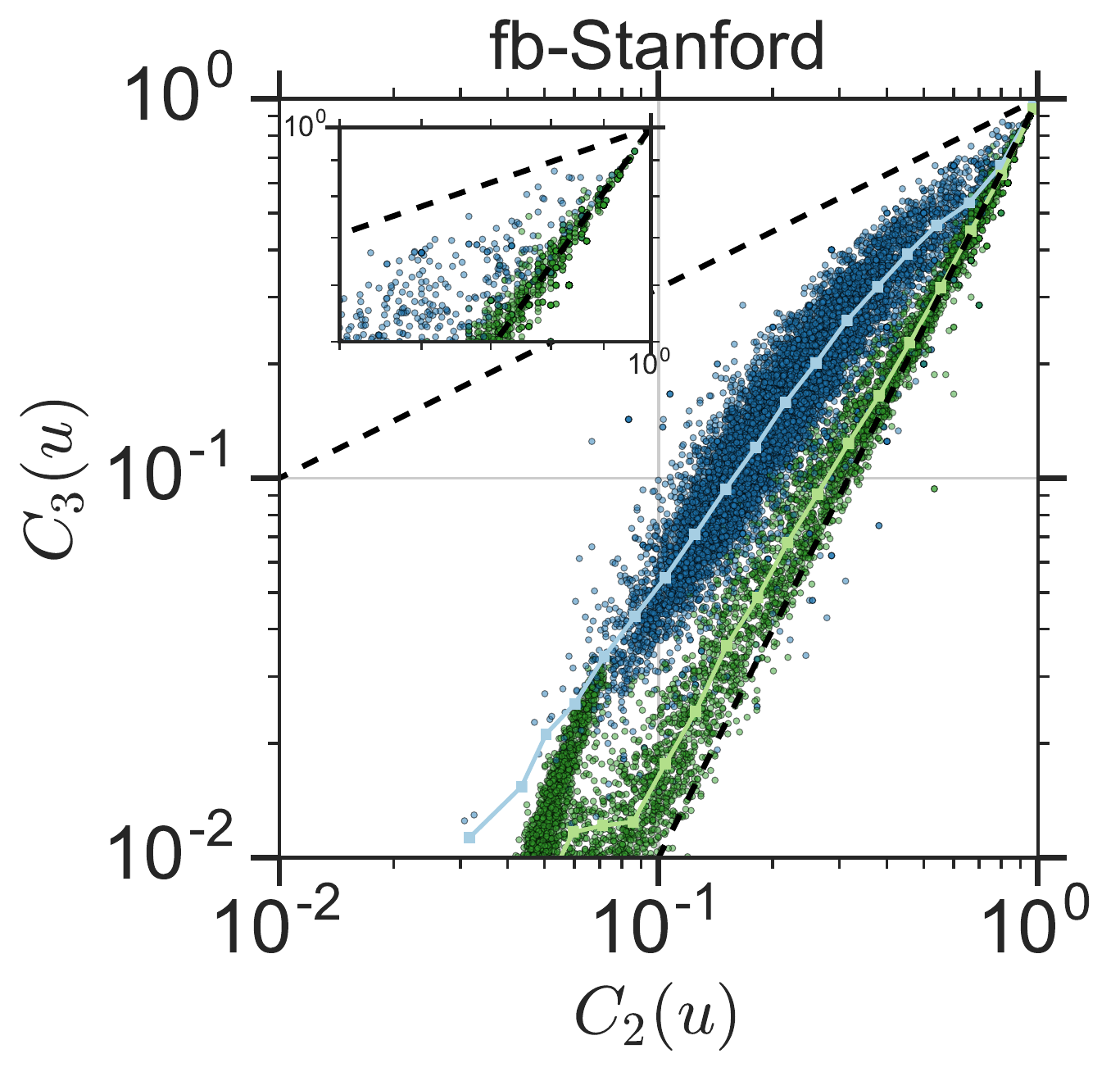}
  \includegraphics[width=\figwidth\columnwidth]{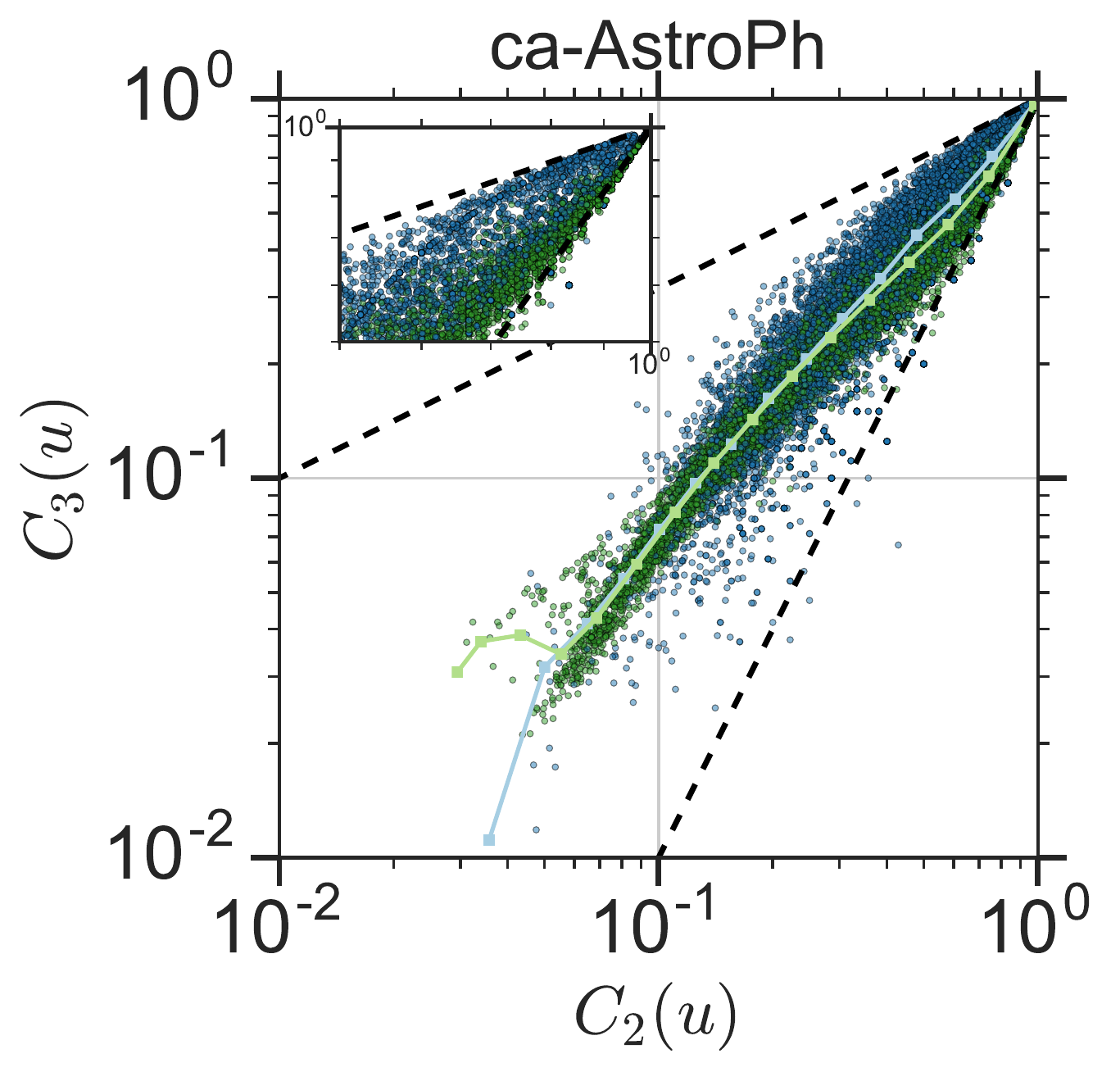}   
  \dualcaption{Joint distributions of $(\lccf{2}{u}$, $\lccf{3}{u})$}{Each
    \textcolor{myblue}{blue} dot corresponds to a node in the original network,
    and the \textcolor{mylightblue}{light blue} curve tracks the average over
    logarithmic bins.  The upper trend line is the upper bound in
    \cref{eq:Bound_Kappa3}, and the lower trend line follows the
    Erd\H{o}s-R\'enyi model where edges appear randomly (\cref{prop:ccf_er}).
    The inset enlarges the portion of the plot where $\lccf{2}{u}$ and
    $\lccf{3}{u}$ are greater than 0.5.  In the bottom row, each \textcolor{mygreen}{green} dot
    corresponds to a node in one of the MRCN random samples, and the
    \textcolor{mylightgreen}{light green} line tracks the average over
    logarithmic bins.}\label{fig:ccfs23}
\end{figure}

\clearpage

We also compute the higher-order clustering coefficient as a function of node
degree (\cref{fig:ccfs_dist}).  In the Erd\H{o}s-R\'enyi, small-world, and \emph{C. elegans}
networks, there is a distinct gap between the average higher-order clustering
coefficients for nodes of all degrees.  Thus, the observed decrease in
clustering as the order increases is independent of degree.  In the Facebook
friendship network, $\lccf{2}{u}$ is larger than $\lccf{3}{u}$ and $\lccf{4}{u}$
on average for nodes of all degrees, but $\lccf{3}{u}$ and $\lccf{4}{u}$ are
roughly the same for nodes of all degrees, which means that 4-cliques and
5-cliques close at roughly the same rate, independent of degree, albeit at a
smaller rate than traditional triadic closure.  In the co-authorship network,
nodes $u$ have roughly the same $\lccf{\ell}{u}$ for $\ell = 2$, $3$, $4$, which
means that $\ell$-cliques close at about the same rate, independent of $\ell$.
We note that the global clustering coefficient $\gccf{\ell}$ slightly increases
with $\ell$ in this network (\cref{tab:data_summary}), 
which probably means there are nodes participating in a \emph{large} clique and
also serving as the center of a star-like pattern (\cref{fig:ccf_diffs}, right),
which causes the global clustering coefficient to increase with the order.

\clearpage
\begin{figure}[tb]
  \newcommand{\figwidth}{0.32}
  \includegraphics[width=\figwidth\columnwidth]{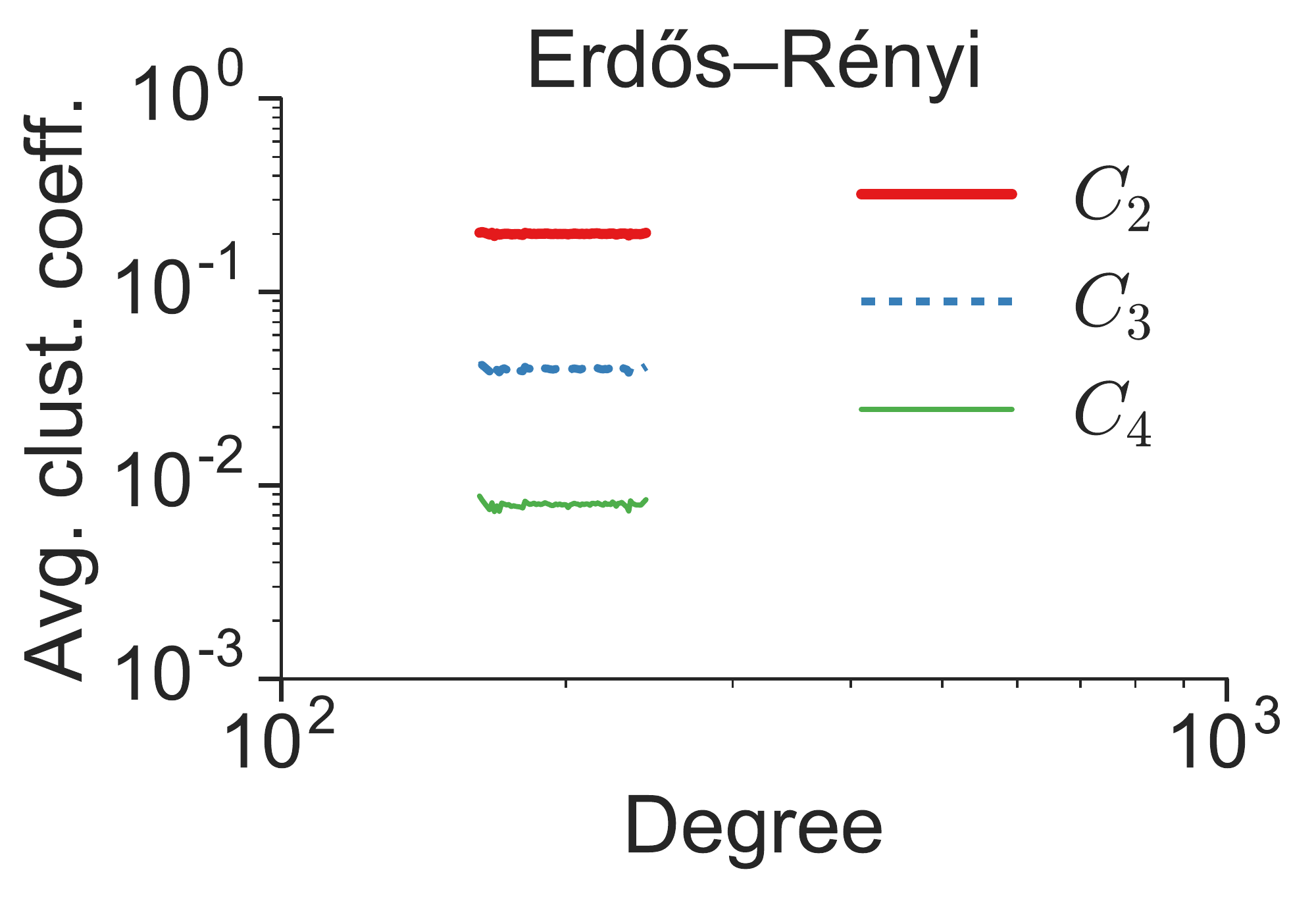}
  \includegraphics[width=\figwidth\columnwidth]{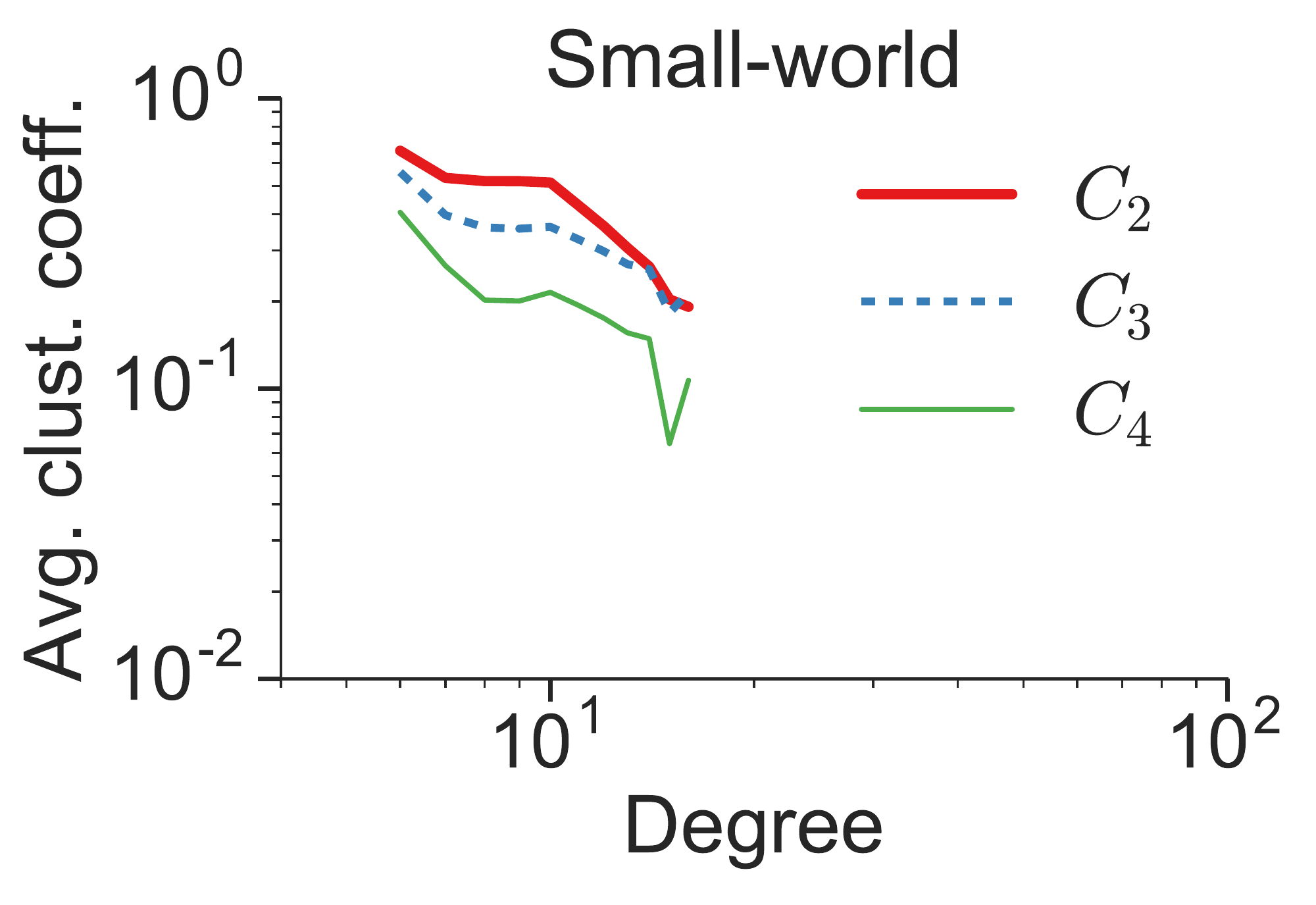}  \\
  \includegraphics[width=\figwidth\columnwidth]{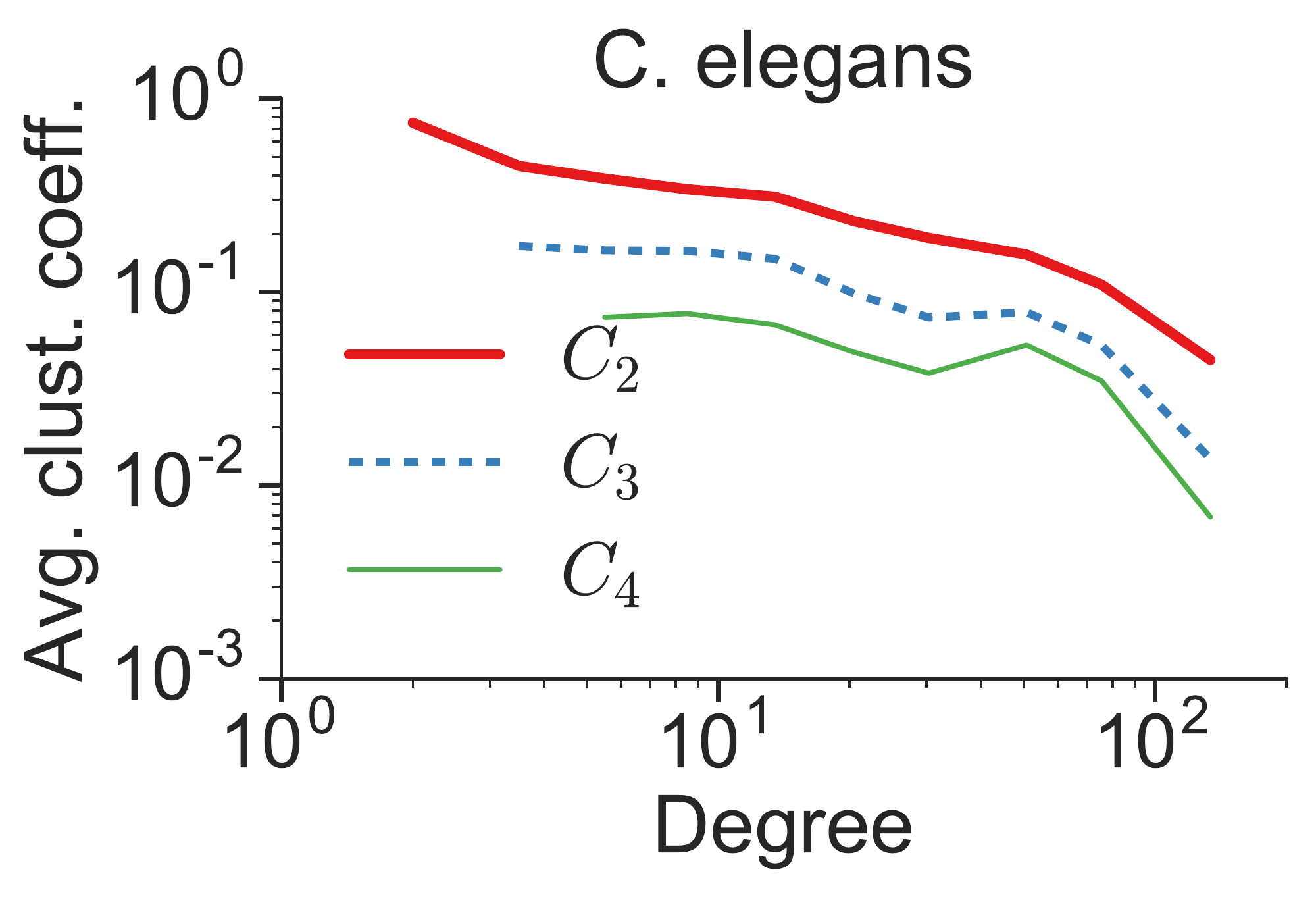} 
  \includegraphics[width=\figwidth\columnwidth]{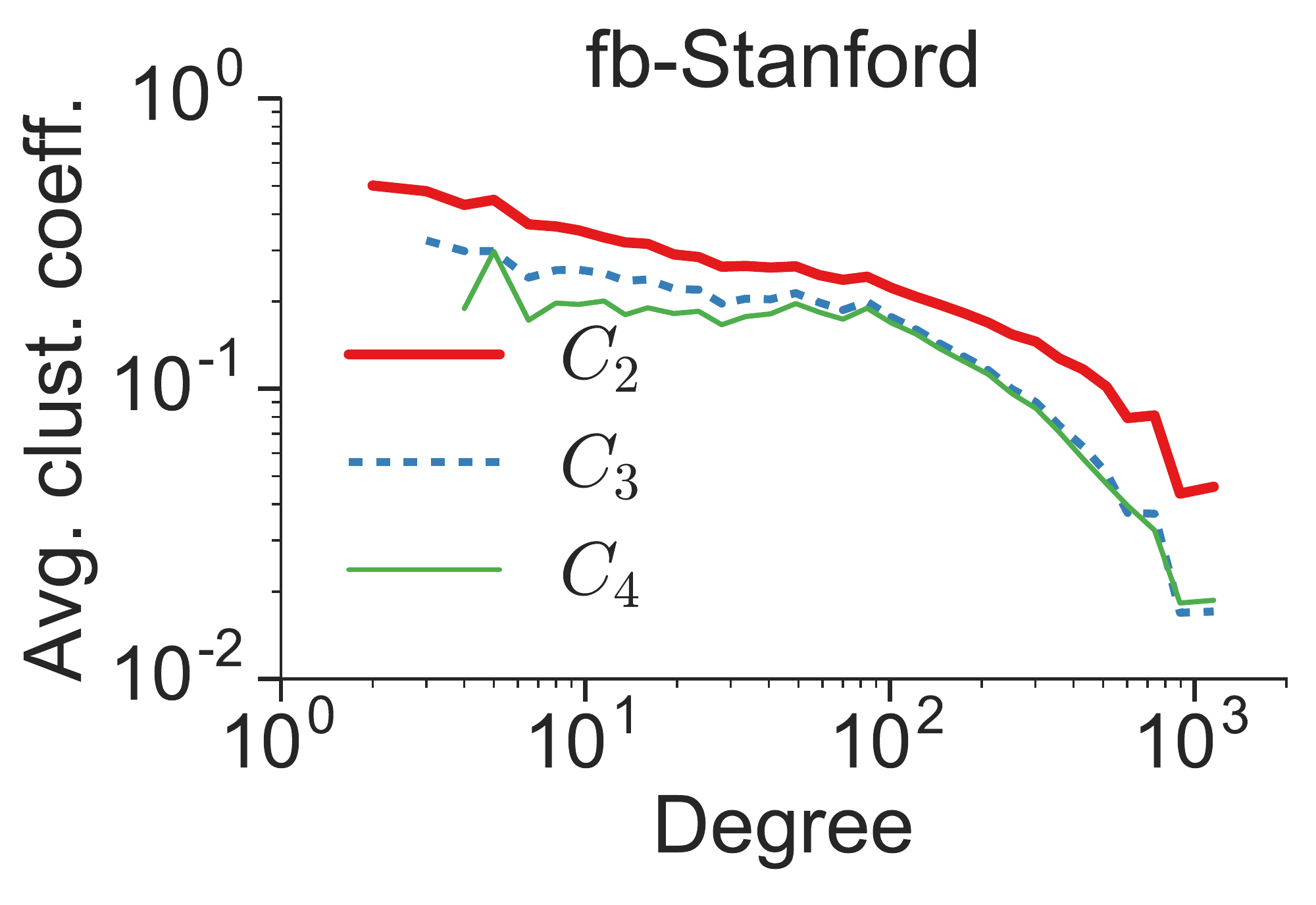}
  \includegraphics[width=\figwidth\columnwidth]{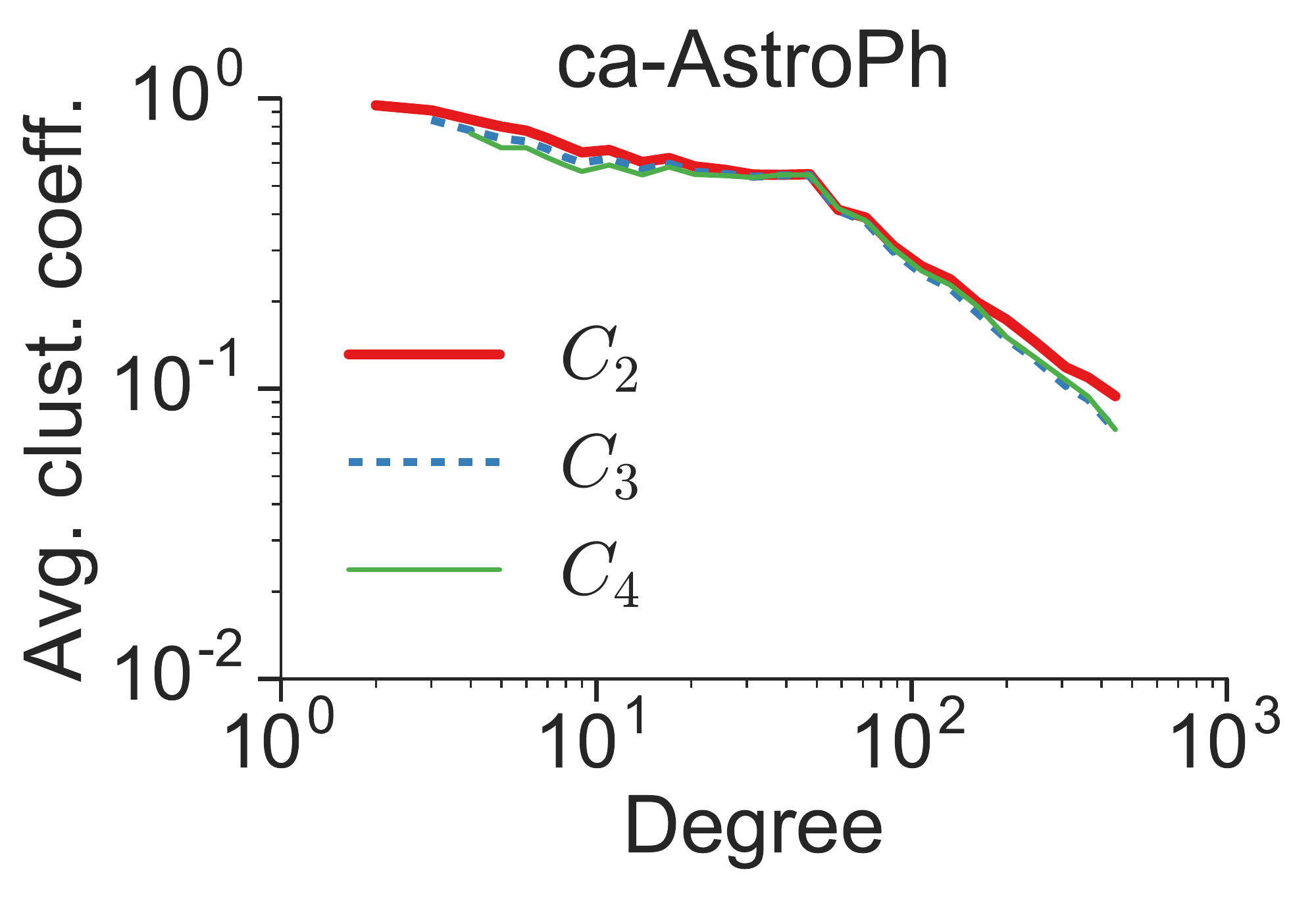}
\dualcaption{Average higher-order clustering coefficients as a function of
  degree}{}\label{fig:ccfs_dist}
\end{figure}

\section{Relating higher-order clustering coefficients to motif conductance through 1-hop neighborhoods}
\label{sec:nbrhood_cond}

In this section, we relate our higher-order clustering coefficients to motif
conductance, which was introduced in \cref{ch:honc} as a measure of how good of
a cluster a subset of nodes is, in terms of some motif.  Here we will look at
motif conductance for clique motifs.  Our main technical result
(\cref{thm:nbrhd_main}) says that if a graph has a large $\ell$th-order global
clustering coefficient, then there is a node whose 1-hop neighborhood has small
motif conductance for the $\ell$-clique motif.  This is a generalization of
similar results from \citet{gleich2012vertex} for edge conductance.  In fact,
these generalizations are what led to us to our definitions for higher-order
clustering coefficients.

\subsection{A formal relationship between 1-hop neighborhoods and motif conductance}
\sectionmark{1-hop neighborhoods and motif conductance}
\label{sec:nbrhd_thy}

We first state our main technical result of this section, which says that if the
network exhibits higher-order clustering, i.e., $\gccf{\ell}$ is large, then
there is a 1-hop neighborhood with small $\ell$-clique conductance.  For
notation, let $\onehopu$ denote the nodes in the 1-hop neighborhood graph
of node $u$ (including $u$):
\begin{equation}
\onehopu = \{v \in V \;\lvert\; (u, v) \in E\} \cup \{u\}.
\end{equation}
Above, we used the notation $\onehopnou$ to denote the graph induced by the
neighborhoods of $u$.  Here, $\onehopu$ includes node $u$ and represents a set
of vertices---$u$ and its neighbors.  Also, let $\clique{\ell}$ denote the
$\ell$-clique motif.

\begin{theorem}\label{thm:nbrhd_main}
Let $G = (V, E)$ with global $\ell$th-order clustering coefficient
$\gccf{\ell}$.  Suppose that
$\mvol{\clique{\ell}}{\onehopu} \le \mvol{\clique{\ell}}{V} / 2$ for each node
$u$.  Then there exists a node $u$ such that
\begin{align}  
\mcond{\clique{\ell}}{\onehopu}
&\leq
\frac{1 - \gccf{\ell}}{1 - \gccf{\ell} + [\gccf{\ell}  / (1 + \sqrt{1 - \gccf{\ell}})]^2 }  \label{eq:cond_upper} \\
& \leq
\min\{2(1 - \gccf{\ell}), 1\}. \label{eq:cond_upper_weaker}
\end{align}
\end{theorem}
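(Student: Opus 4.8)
The plan is to prove the theorem by a global double-counting argument over the choice of the center vertex $u$, generalizing the $\ell = 2$ neighborhood-conductance argument. First, the hypothesis $\mvol{\clique{\ell}}{\onehopu} \le \mvol{\clique{\ell}}{V}/2$ lets me write $\mcond{\clique{\ell}}{\onehopu} = \mcut{\clique{\ell}}{\onehopu, \overline{\onehopu}}/\mvol{\clique{\ell}}{\onehopu}$ for every $u$. The combinatorial heart is a per-vertex estimate: an $\ell$-clique $K$ of $G$ that meets $\onehopu$ is either contained in $\onehopu$ --- which happens exactly when $K$ contains $u$, or when $K \subseteq N(u)$ (equivalently $K \cup \{u\}$ is an $(\ell+1)$-clique through $u$) --- or else it is cut; a contained $\ell$-clique contributes $\ell$ to $\mvol{\clique{\ell}}{\onehopu}$ while a cut one contributes at least $1$. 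Hence
\begin{equation*}
\mcut{\clique{\ell}}{\onehopu, \overline{\onehopu}} \;\le\; \mvol{\clique{\ell}}{\onehopu} - \ell\bigl(\lvert \cliqueL{\ell}{u}\rvert + \lvert \cliqueL{\ell+1}{u}\rvert\bigr),
\end{equation*}
so that $\mcond{\clique{\ell}}{\onehopu} \le 1 - \ell\bigl(\lvert \cliqueL{\ell}{u}\rvert + \lvert \cliqueL{\ell+1}{u}\rvert\bigr)/\mvol{\clique{\ell}}{\onehopu}$ for every $u$ with $\mvol{\clique{\ell}}{\onehopu} > 0$.

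Next I would sum the two clique counts over all $u$. Writing $\lvert K \cap \onehopu\rvert = \indicator{u \in K} + \lvert K \cap N(u)\rvert$ and reorganizing, $\sum_u \lvert K \cap N(u)\rvert = \sum_{v \in K} d_v$ for each $\ell$-clique $K$, whence $\sum_u \mvol{\clique{\ell}}{\onehopu} = \ell\lvert \clique{\ell}\rvert + \sum_v d_v \lvert \cliqueL{\ell}{v}\rvert$; applying \cref{eq:wedge_identity} to the last sum gives $\sum_u \mvol{\clique{\ell}}{\onehopu} = \ell^2\lvert \clique{\ell}\rvert + \lvert W_\ell\rvert$, where $\lvert W_\ell\rvert$ is the number of $\ell$-wedges in $G$. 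Counting, for each $\ell$-clique $K$, the vertices $u$ with $K \subseteq \onehopu$ (the $\ell$ vertices of $K$ together with the common neighbors of $K$) yields $\sum_u \bigl(\lvert \cliqueL{\ell}{u}\rvert + \lvert \cliqueL{\ell+1}{u}\rvert\bigr) = \ell\lvert \clique{\ell}\rvert + (\ell+1)\lvert \clique{\ell+1}\rvert$. Since $\max_u a_u/b_u \ge (\sum_u a_u)/(\sum_u b_u)$ for nonnegative reals with $\sum_u b_u > 0$, there is a vertex $u$ with $\mvol{\clique{\ell}}{\onehopu} > 0$ for which
\begin{equation*}
\mcond{\clique{\ell}}{\onehopu} \;\le\; 1 - \frac{\ell\bigl(\ell\lvert \clique{\ell}\rvert + (\ell+1)\lvert \clique{\ell+1}\rvert\bigr)}{\ell^2\lvert \clique{\ell}\rvert + \lvert W_\ell\rvert} \;=\; \frac{(1-\gccf{\ell})\lvert W_\ell\rvert}{\ell^2\lvert \clique{\ell}\rvert + \lvert W_\ell\rvert} \;\le\; 1-\gccf{\ell},
\end{equation*}
using $\gccf{\ell} = (\ell+1)\ell\lvert \clique{\ell+1}\rvert/\lvert W_\ell\rvert$ from \cref{eq:global_ccf_l} for the equality and dropping $\ell^2\lvert \clique{\ell}\rvert \ge 0$ from the denominator for the last step (this uses $\lvert W_\ell\rvert > 0$, which holds whenever $\gccf{\ell}$ is defined).

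It remains to check that $1-\gccf{\ell}$ is dominated by the two bounds in the statement, which is elementary. Putting $t = \sqrt{1-\gccf{\ell}} \in [0,1]$, one has $\gccf{\ell}/(1+\sqrt{1-\gccf{\ell}}) = (1-t^2)/(1+t) = 1-t$, so the right-hand side of \cref{eq:cond_upper} equals $t^2/\bigl(t^2 + (1-t)^2\bigr)$. Because $t^2 + (1-t)^2 = 1 - 2t(1-t) \le 1$, this is $\ge t^2 = 1-\gccf{\ell}$, giving \cref{eq:cond_upper}; and because $(2t-1)^2 \ge 0$ gives $t^2+(1-t)^2 \ge \tfrac12$ while $(1-t)^2 \ge 0$ gives $t^2+(1-t)^2 \ge t^2$, we get $t^2/\bigl(t^2+(1-t)^2\bigr) \le \min\{2t^2, 1\} = \min\{2(1-\gccf{\ell}), 1\}$, giving \cref{eq:cond_upper_weaker}.

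No step is a serious obstacle; the points requiring care are the direction of the averaging inequality (one wants a vertex lying above the $\mvol{\clique{\ell}}{\onehopu}$-weighted ratio of sums) and the bookkeeping in $\sum_u \mvol{\clique{\ell}}{\onehopu}$, where \cref{eq:wedge_identity} enters. I note that the chain above in fact establishes the strictly stronger bound $\mcond{\clique{\ell}}{\onehopu} \le (1-\gccf{\ell})\lvert W_\ell\rvert/(\ell^2\lvert \clique{\ell}\rvert + \lvert W_\ell\rvert) \le 1-\gccf{\ell}$; the weaker closed form of \cref{eq:cond_upper} is presumably recorded because it matches the shape of the $\ell = 2$ bound of \citet{gleich2012vertex} that \cref{thm:nbrhd_main} generalizes.
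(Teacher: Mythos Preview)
Your argument is correct, and it takes a genuinely different route from the paper. The paper proceeds via a probabilistic--method argument: it defines a distribution $\p{\ell}{u} = \lvert W_\ell(u)\rvert/\lvert W_\ell\rvert$ on nodes, uses Markov's inequality together with a union bound to find a single node $u$ satisfying simultaneously $\mcut{\clique{\ell}}{\onehopu} \le a(1-\gccf{\ell})\lvert W_\ell(u)\rvert$ and $\lccf{\ell}{u} \ge (a\gccf{\ell}-1)/(a-1)$ for any $a>1$, bounds $\mvol{\clique{\ell}}{\onehopu}$ below by $\mcut{\clique{\ell}}{\onehopu} + \ell\lvert \cliqueL{\ell+1}{u}\rvert$, and finally optimizes over $a$ (the choice $a = (1+\sqrt{1-\gccf{\ell}})/\gccf{\ell}$ gives exactly \cref{eq:cond_upper}). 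You instead average the single ratio that matters, $\ell(\lvert\cliqueL{\ell}{u}\rvert+\lvert\cliqueL{\ell+1}{u}\rvert)/\mvol{\clique{\ell}}{\onehopu}$, directly over all $u$. Two things this buys you: first, the proof is more elementary---no Markov, no union bound, no free parameter to tune; second, your volume lower bound is strictly sharper than the paper's, because you also credit the $\ell$-cliques \emph{through} $u$ (the paper's \cref{lem:nbrhood_conductance} only credits the $\ell$-cliques in $N(u)$ coming from $(\ell+1)$-cliques through $u$). As you note, the upshot is that you actually prove the stronger conclusion $\mcond{\clique{\ell}}{\onehopu} \le (1-\gccf{\ell})\lvert W_\ell\rvert/(\ell^2\lvert\clique{\ell}\rvert+\lvert W_\ell\rvert) \le 1-\gccf{\ell}$, and then verify that this dominates \cref{eq:cond_upper}. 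What the paper's route buys in exchange is a clean modular template inherited from \citet{gleich2012vertex}: the intermediate lemmas isolate the two properties (small cut relative to wedges; large local clustering) that one might want to reuse separately.
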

The bound in \cref{eq:cond_upper} is monotonically decreasing, approaches $0$ as
$\gccf{\ell}$ approaches 1, and is bounded above by $1$.  The result is a
generalization of a similar statement for edge conductance
by \citet{gleich2012vertex}, but prior results contain only the case of $\ell = 2$,
and only use the weaker bound of \cref{eq:cond_upper_weaker}.

We now prove this result via several technical results relating higher-order
clustering coefficients and motif conductance, where the motif is a clique. We
note that many of the results are generalizations of previous theory developed
by \citet{gleich2012vertex}.

First, we show the extreme-case result that when $\gccf{\ell} = 1$, each
connected component of the graph is a clique.
\begin{proposition}\label{prop:full_clustering}
If a graph has global high-order clustering coefficient $\gccf{\ell} = 1$, then
each connected component of this graph is either complete or $\ell$-clique free.
\end{proposition}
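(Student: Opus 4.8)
The plan is to translate the hypothesis $\gccf{\ell} = 1$ into the purely combinatorial statement that \emph{every} $\ell$-wedge in $G$ is closed, and then run a maximal-clique argument on each connected component. Recalling \cref{eq:global_ccf_l}, $\gccf{\ell}$ is exactly the fraction of $\ell$-wedges that are closed (each $(\ell+1)$-clique closes exactly ${\ell+1\choose\ell}{\ell\choose 1}$ distinct wedges, and a closed wedge determines its unique $(\ell+1)$-clique), so $\gccf{\ell} = 1$ forces every $\ell$-clique $S$ together with any adjacent edge $\{v,w\}$ (with $v \in S$, $w \notin S$) to induce an $(\ell+1)$-clique; in particular $w$ is then adjacent to all of $S$. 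If $G$ has no $\ell$-wedge at all, then any connected component containing an $\ell$-clique must consist of exactly those $\ell$ nodes, since an additional vertex reachable by connectivity would create an $\ell$-wedge, so such a component is complete and the conclusion holds trivially; from now on we may assume there is at least one $\ell$-wedge, so that $\gccf{\ell}$ is defined and the reformulation applies.

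Next I would fix a connected component $D$ of $G$. If $D$ contains no $\ell$-clique, then $D$ is $\ell$-clique free and there is nothing to prove, so assume $D$ contains an $\ell$-clique and let $C$ be a \emph{maximal} clique of $D$ with $\lvert C \rvert \ge \ell$. The goal is to show $C = V(D)$, which makes $D$ complete. Suppose instead $C \subsetneq V(D)$. Since $D$ is connected, taking any vertex outside $C$ together with a shortest path from it into $C$ produces a vertex $w \in V(D) \setminus C$ adjacent to some vertex $v \in C$.

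The heart of the argument is then to show that $w$ is adjacent to \emph{all} of $C$, contradicting maximality. For each $u \in C \setminus \{v\}$, I would choose an $\ell$-element subset $S \subseteq C$ with $\{v, u\} \subseteq S$ (possible because $\lvert C \rvert \ge \ell \ge 2$); since $S$ lies inside the clique $C$ it is an $\ell$-clique, and since $w \notin C \supseteq S$ while $\{v,w\} \in E$, the pair $(S, \{v,w\})$ is a genuine $\ell$-wedge whose center is $v$. By the reformulation above this wedge is closed, so $w$ is adjacent to every node of $S$, in particular to $u$. Letting $u$ range over all of $C \setminus \{v\}$ and recalling $\{v,w\} \in E$ shows $w$ is adjacent to every vertex of $C$, so $C \cup \{w\}$ is a clique of $D$ strictly larger than $C$ — contradicting maximality. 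Hence $C = V(D)$ and $D$ is complete, which finishes the proof.

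I expect the only real subtlety — more bookkeeping than obstacle — to be careful handling of the degenerate situations (a graph with no $\ell$-wedge, where $\gccf{\ell}$ is formally $0/0$; components that contain no $\ell$-clique; and the boundary value $\ell = 2$, where an ``$\ell$-clique'' is just an edge), together with verifying that the constructed pair $(S,\{v,w\})$ genuinely meets the definition of an $\ell$-wedge, i.e.\ that $S$ and $\{v,w\}$ share exactly the single vertex $v$. Everything else follows directly from the choice of a maximal clique and the ``all wedges closed'' reformulation of $\gccf{\ell} = 1$.
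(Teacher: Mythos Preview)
Your proof is correct and follows essentially the same maximal-clique argument as the paper: take a maximal clique of size at least $\ell$ in a component, use connectivity to find an outside vertex adjacent to it, and derive a contradiction from the ``all $\ell$-wedges closed'' condition. The paper's version is a one-sentence contradiction (an outside vertex not extending the maximum clique yields an open $\ell$-wedge), whereas you spell out the contrapositive direction and handle the degenerate cases more carefully, but the core idea is identical.
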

\begin{proof}
We prove this by contradiction.  Suppose a connected component contains an
$\ell$-clique and that the maximum clique of this connected component is of size
$j \geq \ell$.  If this connected component is not a complete graph, there must
be a node connecting to the maximum clique but not forming a bigger clique,
giving an open $\ell$-wedge.
\end{proof}
\begin{corollary}\label{cor:zerocond}
If $\gccf{\ell} = 1$ and there are two $\ell$-cliques, then there exists a node
$u$ with $\mcond{\clique{\ell}}{\onehopu} = 0$.
\end{corollary}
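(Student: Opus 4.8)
The plan is to combine Proposition \ref{prop:full_clustering} with a simple observation about disconnected pieces of the graph. Since $\gccf{\ell} = 1$, Proposition \ref{prop:full_clustering} tells us that every connected component of $G$ is either a complete graph or is $\ell$-clique free. The hypothesis says there are (at least) two $\ell$-cliques, call them $Q_1$ and $Q_2$. First I would argue that these two cliques cannot lie in the same connected component: if they did, that component would be complete (it contains an $\ell$-clique, hence is not $\ell$-clique free), and so $Q_1$ and $Q_2$ would both be contained in one large clique, which contradicts nothing by itself—so I need to be a little more careful. The real point is to produce a node $u$ whose 1-hop neighborhood, as a vertex set, has no $\ell$-clique leaving it, i.e.\ motif cut zero, while still having positive $\ell$-clique volume.

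The cleanest route: pick any node $u$ lying in one of the $\ell$-cliques, say $u \in Q_1$, and let $C$ be the connected component of $G$ containing $u$. By Proposition \ref{prop:full_clustering}, $C$ is complete (it contains $Q_1$, an $\ell$-clique). Then $\onehopu$, the set consisting of $u$ together with all of its neighbors, is exactly the vertex set of $C$: in a complete component every other vertex is adjacent to $u$. Now any $\ell$-clique of $G$ either lies entirely inside $C = \onehopu$ or entirely outside it (cliques are connected, so they live in a single component). Hence no $\ell$-clique instance has an anchor node both inside $\onehopu$ and outside it, which means $\mcut{\clique{\ell}}{\onehopu, \overline{\onehopu}} = 0$. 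Since $u$ participates in the $\ell$-clique $Q_1 \subseteq \onehopu$, we have $\mvol{\clique{\ell}}{\onehopu} > 0$, and similarly the second $\ell$-clique $Q_2$ lies in a different component, so $\mvol{\clique{\ell}}{\overline{\onehopu}} > 0$; therefore $\min(\mvol{\clique{\ell}}{\onehopu}, \mvol{\clique{\ell}}{\overline{\onehopu}}) > 0$ and the conductance $\mcond{\clique{\ell}}{\onehopu} = 0$ is well defined.

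The one subtlety—and the main (very mild) obstacle—is ensuring the denominator of the conductance is nonzero, i.e.\ that $Q_2$ genuinely lies outside $\onehopu$. This is where the "two $\ell$-cliques" hypothesis is used together with the structure theorem: if all $\ell$-cliques were in the same complete component $C$, then $\onehopu = C$ would contain all of them and $\overline{\onehopu}$ would be $\ell$-clique-free, forcing $\mvol{\clique{\ell}}{\overline{\onehopu}} = 0$. So I would split into cases: either there exist two $\ell$-cliques in distinct components (then the argument above applies directly), or all $\ell$-cliques lie in a single complete component $C$—but then $|V(C)| \ge \ell+1$ is impossible to force zero cut differently, so I instead take $u$ inside $C$ and note $\onehopu = C$ still gives cut zero, but now I need $\overline{\onehopu}$ nonempty with positive volume, which fails; in that degenerate sub-case one should instead observe that a complete graph on $j \ge \ell+1$ vertices contains many $\ell$-cliques and pick $u$ to be a vertex so that $\onehopu$ is a proper subset—but in a complete graph $\onehopu$ is everything. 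Rereading the corollary, I believe the intended reading is that the two $\ell$-cliques are in different components (which is automatic once $\gccf{\ell}=1$ and the graph is not itself a single clique), so the first case is the substantive one and the proof is essentially the three-line argument: $\onehopu$ equals a complete component, cliques don't straddle components, cut is zero.
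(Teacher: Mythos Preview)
Your core argument is exactly the intended one, and it matches the paper: the paper states this result as an immediate corollary of Proposition~\ref{prop:full_clustering} without further proof. The three-line version you end with is correct: pick $u$ in an $\ell$-clique, its component $C$ is complete by Proposition~\ref{prop:full_clustering}, hence $\onehopu = C$; since every $\ell$-clique lies in a single connected component, none is cut by $(\onehopu,\overline{\onehopu})$, giving $\mcut{\clique{\ell}}{\onehopu}=0$.

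Your worry about the denominator is legitimate, and you are right that the corollary as written is slightly imprecise on this point. However, your parenthetical claim that having two $\ell$-cliques in different components ``is automatic once $\gccf{\ell}=1$ and the graph is not itself a single clique'' is false: take $K_{\ell+1}$ together with some isolated vertices. This graph has $\gccf{\ell}=1$, is not a single clique, and contains $\ell+1$ distinct $\ell$-cliques, all in the same component; for the unique $u$-choices in $K_{\ell+1}$ one gets $\overline{\onehopu}$ $\ell$-clique-free and $\mvol{\clique{\ell}}{\overline{\onehopu}}=0$. So the hypothesis ``there are two $\ell$-cliques'' should be read as ``there are $\ell$-cliques in at least two components'' (equivalently, at least two complete components of size $\ge \ell$). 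This is a harmless imprecision in the statement---the corollary is only meant to convey the intuition behind Theorem~\ref{thm:nbrhd_main}, whose volume hypothesis handles exactly this degeneracy---but you should state the needed reading cleanly rather than argue it is automatic.
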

\Cref{cor:zerocond} provides some intuition for the main idea.  If there is
perfect clustering, then there is a perfect cut.  \Cref{thm:nbrhd_main}
is a relaxation of this observation.  We now prove a series of technical lemmas that will be used to
prove \cref{thm:nbrhd_main}.  To begin, the following lemma relates the higher-order
clustering coefficient with neighborhood cuts.
\begin{lemma}\label{lem:sum_nbr_cut}
The sum of the clique-motif cuts of all 1-hop neighborhoods is bounded above by
a $(1 - \gccf{\ell})$ fraction of the total $\ell$-wedges:
\begin{equation}\label{eq:sum_nbr_cut}
\sum_{v \in V} \mcut{\clique{\ell}}{\onehopv} \leq (1 - \gccf{\ell}) \cdot \lvert W_\ell \rvert.
\end{equation}
\end{lemma}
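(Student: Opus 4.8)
The plan is to evaluate the left-hand side as an explicit combinatorial count of (clique, vertex) pairs and then bound that count by the number of \emph{open} $\ell$-wedges, which equals $(1 - \gccf{\ell})\lvert W_{\ell} \rvert$ by the definition of the global higher-order clustering coefficient in \cref{eq:global_ccf_l}. Throughout, write $N(v)$ for the neighbor set of $v$, so that $\onehopv = \{v\} \cup N(v)$. This parallels the $\ell = 2$ argument of \citet{gleich2012vertex}.

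First I would determine exactly which $\ell$-cliques are cut by a $1$-hop neighborhood. Since all nodes of the clique motif are anchors, an $\ell$-clique $C$ is cut by $\onehopv$ iff $C$ meets both $\onehopv$ and its complement, and a three-case analysis settles this: if $v \in C$, then every other node of $C$ is a neighbor of $v$, so $C \subseteq \onehopv$ and $C$ is not cut; if $v \notin C$ but $v$ has no neighbor in $C$, then $C$ is disjoint from $\onehopv$ and again is not cut; and if $v \notin C$ with $1 \le \lvert N(v) \cap C \rvert \le \ell - 1$, then $C$ has a node in $\onehopv$ and a node outside it, so $C$ is cut. Hence
\[
\sum_{v \in V} \mcut{\clique{\ell}}{\onehopv} = \#\bigl\{(C,v) : C \text{ an } \ell\text{-clique},\ v \notin C,\ 1 \le \lvert N(v) \cap C \rvert \le \ell - 1\bigr\}.
\]

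Next I would match these ``cutting pairs'' with open $\ell$-wedges. Recall an $\ell$-wedge is an $\ell$-clique $C$ together with an adjacent edge $\{c, x\}$ with $c \in C$, $x \notin C$, and it is open iff $C \cup \{x\}$ is not an $(\ell+1)$-clique, i.e.\ $x$ misses some node of $C$. The map $(C, \{c,x\}) \mapsto (C, x)$ sends an open $\ell$-wedge to a cutting pair: indeed $\lvert N(x) \cap C\rvert \ge 1$ since $c \sim x$, and $\lvert N(x) \cap C\rvert \le \ell - 1$ since the wedge is open. Conversely, over a cutting pair $(C,v)$ the fiber is exactly $\{(C, \{c,v\}) : c \in N(v) \cap C\}$: each such $(C,\{c,v\})$ is an $\ell$-wedge, and it is automatically open because $v$ misses at least one node of $C$. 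So the map is onto the set of cutting pairs with fiber size $\lvert N(v) \cap C\rvert \ge 1$, which gives
\[
(1 - \gccf{\ell})\lvert W_{\ell} \rvert = \#\{\text{open } \ell\text{-wedges}\} = \sum_{(C,v)\ \text{cutting}} \lvert N(v) \cap C\rvert \ \ge\ \#\{\text{cutting pairs}\} = \sum_{v \in V}\mcut{\clique{\ell}}{\onehopv},
\]
which is the lemma. I do not expect a genuine obstacle: the case analyses are routine, and the only point requiring care is the multiplicity $\lvert N(v) \cap C\rvert$ appearing in the fiber (so that the open-wedge count \emph{dominates}, rather than equals, the sum of cuts), together with confirming that the combinatorial factor in \cref{eq:global_ccf_l} is precisely the number of wedge closures contributed by each $(\ell+1)$-clique, so that $(1-\gccf{\ell})\lvert W_{\ell}\rvert$ is exactly the number of open $\ell$-wedges.
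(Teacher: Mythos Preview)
Your proof is correct and follows essentially the same approach as the paper: both arguments identify the pairs $(C,v)$ with $C$ an $\ell$-clique cut by $\onehopv$, and bound their number by the number of open $\ell$-wedges via the correspondence $(C,\{c,v\}) \leftrightarrow (C,v)$. The paper phrases this as an injection from cutting pairs into open wedges (pick one neighbor $c \in N(v)\cap C$), while you phrase it as a surjection from open wedges onto cutting pairs with fibers of size $\lvert N(v)\cap C\rvert \ge 1$; these are two sides of the same counting argument. One minor omission in your case analysis: you should also note that when $v \notin C$ and $\lvert N(v)\cap C\rvert = \ell$, the clique lies entirely in $\onehopv$ and is not cut, completing the characterization.
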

\begin{proof}
If an $\ell$-clique $(u_1, \dots u_\ell)$ gets cut by the partition induced by
$\onehopv$, then $v$ must directly connect with one of $u_1, \dots, u_\ell$, say
$u_1$.  Note that the clique $(u_1, \dots u_\ell)$ and adjacent edge $(u_1, v)$
form an open $(\ell + 1)$-wedge since $v$ cannot connect to all of $u_1, \dots,
u_\ell$ (otherwise, the clique is not cut). Therefore, we have an injective map
from any cut clique on the left-hand side of the inequality to an open
$\ell$-wedge.  Thus, $\sum_{v \in V} \mcut{\clique{\ell}}{\onehopv}$ is no
greater than the number of open $\ell$-wedges, which is exactly
$(1 - \gccf{\ell}) \cdot \lvert W_{\ell} \rvert$.
\end{proof}

Next, we define a probability distribution on the nodes,
\begin{equation}\label{eq:ccf_prob_dist}
\p{\ell}{u} = \lvert W_\ell(u)\rvert / \lvert W_\ell \rvert.
\end{equation}
The following lemma creates a random variable with the probability distribution
in \cref{eq:ccf_prob_dist} whose expectation is bounded by $1- \gccf{\ell}$,
which we use in the proof of \cref{thm:nbrhd_main}.
\begin{lemma}\label{lem:expectation}
Let $X$ be a random variable that takes value
$\frac{\mcut{\clique{\ell}}{\onehopu}}{\lvert W_\ell(u) \rvert}$
with probability $\p{\ell}{u}$.  Then $\expect{X} \leq 1 - \gccf{\ell}$.
\end{lemma}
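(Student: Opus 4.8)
The plan is to evaluate $\expect{X}$ directly from its definition and then feed the result into \cref{lem:sum_nbr_cut}. First I would check that $\p{\ell}{\cdot}$ is a genuine probability distribution on $V$: since every $\ell$-wedge has a unique center, the sets $W_\ell(u)$, $u \in V$, partition $W_\ell$, so $\sum_{u\in V}\lvert W_\ell(u)\rvert=\lvert W_\ell\rvert$ and hence $\sum_{u}\p{\ell}{u}=1$. The random variable $X$ is defined exactly on the support $\{u : \lvert W_\ell(u)\rvert>0\}$ of $\p{\ell}{\cdot}$, where the quantity $\mcut{\clique{\ell}}{\onehopu}/\lvert W_\ell(u)\rvert$ is finite; the nodes with $\lvert W_\ell(u)\rvert=0$ carry zero probability and can simply be dropped from every sum below.

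Next I would compute the expectation. Letting all sums run over nodes $u$ with $\lvert W_\ell(u)\rvert>0$, substituting $\p{\ell}{u}=\lvert W_\ell(u)\rvert/\lvert W_\ell\rvert$, and cancelling the common factor $\lvert W_\ell(u)\rvert$,
\begin{align*}
\expect{X}
&=\sum_{u}\p{\ell}{u}\cdot\frac{\mcut{\clique{\ell}}{\onehopu}}{\lvert W_\ell(u)\rvert}
=\frac{1}{\lvert W_\ell\rvert}\sum_{u}\mcut{\clique{\ell}}{\onehopu} \\
&\le\frac{1}{\lvert W_\ell\rvert}\sum_{v\in V}\mcut{\clique{\ell}}{\onehopv},
\end{align*}
where the last inequality uses only that motif cuts are nonnegative.

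Finally I would apply \cref{lem:sum_nbr_cut}, which gives $\sum_{v\in V}\mcut{\clique{\ell}}{\onehopv}\le(1-\gccf{\ell})\lvert W_\ell\rvert$; dividing through by $\lvert W_\ell\rvert$ yields $\expect{X}\le 1-\gccf{\ell}$, as claimed. There is no real obstacle here once \cref{lem:sum_nbr_cut} is available — the argument is a single cancellation — so the only thing I would take care with is the bookkeeping for nodes that center no $\ell$-wedge, which is precisely what makes $X$ and the distribution $\p{\ell}{\cdot}$ well defined in the first place.
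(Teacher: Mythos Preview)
Your proposal is correct and follows essentially the same route as the paper: substitute $\p{\ell}{u}=\lvert W_\ell(u)\rvert/\lvert W_\ell\rvert$, cancel, and apply \cref{lem:sum_nbr_cut}. You are simply a bit more careful than the paper about the nodes with $\lvert W_\ell(u)\rvert=0$, inserting an extra (harmless) inequality to extend the sum to all of $V$; the paper just sums over all $u\in V$ from the start without commenting on this.
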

\begin{proof}
\begin{align*}
\expect{X} &= \sum_{u \in V} \p{\ell}{u} \frac{\mcut{\clique{\ell}}{\onehopu}}{\lvert W_\ell(u) \rvert} \\
&= \frac{1}{\lvert W_\ell \rvert} \sum_{u \in V} \mcut{\clique{\ell}}{\onehopu}  \\
&\leq \frac{1}{\lvert W_{\ell}\rvert}(1 - \gccf{\ell}) \cdot \lvert W_\ell \rvert \quad \text{(by \cref{lem:sum_nbr_cut})}\\
&= 1 - \gccf{\ell}.
\end{align*}
\end{proof}

\begin{lemma}\label{lem:existence}
There exists some node $u$ such that
\begin{equation}\label{eq:special_prop}
\mcut{\clique{\ell}}{\onehopu} \leq a(1 - \gccf{\ell}) \cdot \lvert W_\ell (u)\rvert \text{ and } \lccf{\ell}{u} \geq b
\end{equation}
\end{lemma}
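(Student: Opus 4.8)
The plan is to prove \cref{lem:existence} by two Markov-inequality tail bounds against the node distribution $\p{\ell}{\cdot}$ from \cref{eq:ccf_prob_dist}, glued together with a union bound; this is the node-selection step that \cref{thm:nbrhd_main} is built on. I would first dispense with the degenerate case $\gccf{\ell} = 1$: then \cref{cor:zerocond} already exhibits a node $u$ with $\mcut{\clique{\ell}}{\onehopu} = 0$, and \cref{prop:full_clustering} forces $\lccf{\ell}{u} = 1$ there, so \cref{eq:special_prop} holds for trivial reasons. For the rest I assume $\gccf{\ell} < 1$ and that the graph has at least one $\ell$-wedge, so $\p{\ell}{\cdot}$ is a genuine probability distribution, supported exactly on the nodes where $\lccf{\ell}{u}$ is defined.

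The cut half comes essentially for free from \cref{lem:expectation}: the random variable $X$ taking the value $\mcut{\clique{\ell}}{\onehopu}/\lvert W_\ell(u)\rvert$ with probability $\p{\ell}{u}$ has $\expect{X} \le 1 - \gccf{\ell}$, so for any $a > 1$, Markov gives $\prob{X > a(1-\gccf{\ell})} < 1/a$. The clustering half needs a companion identity. Introduce $Y$ taking the value $1 - \lccf{\ell}{u}$ with probability $\p{\ell}{u}$; I claim $\expect{Y} = 1 - \gccf{\ell}$. Indeed, by the probabilistic reading in \cref{eq:prob_interp1}, $(1 - \lccf{\ell}{u})\lvert W_\ell(u)\rvert$ is exactly the number of \emph{open} $\ell$-wedges centered at $u$; since every $\ell$-wedge has a unique center, summing over $u$ telescopes to the total number of open $\ell$-wedges, which by the definition of $\gccf{\ell}$ in \cref{eq:global_ccf_l} equals $(1-\gccf{\ell})\lvert W_\ell\rvert$. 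Dividing by $\lvert W_\ell\rvert$ gives the claim, and Markov then yields $\prob{\lccf{\ell}{u} \le 1 - c(1-\gccf{\ell})} \le 1/c$ for any $c > 1$.

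The last step is the union bound: whenever $a, c > 1$ satisfy $1/a + 1/c < 1$, the probability under $\p{\ell}{\cdot}$ that a node violates either inequality is strictly below $1$, so there is a node $u$ with $\p{\ell}{u} > 0$ --- hence a node that centers some $\ell$-wedge, so $\lccf{\ell}{u}$ makes sense --- satisfying both $\mcut{\clique{\ell}}{\onehopu} \le a(1-\gccf{\ell})\lvert W_\ell(u)\rvert$ and $\lccf{\ell}{u} \ge 1 - c(1-\gccf{\ell})$. Setting $b = 1 - c(1-\gccf{\ell})$ establishes \cref{eq:special_prop}. I would keep $a$ and $b$ coupled rather than fixing numbers, because the proof of \cref{thm:nbrhd_main} will combine this with the assumption $\mvol{\clique{\ell}}{\onehopu} \le \mvol{\clique{\ell}}{V}/2$ and the lower bound $\mvol{\clique{\ell}}{\onehopu} \ge \ell\lvert K_{\ell+1}(u)\rvert = \lccf{\ell}{u}\lvert W_\ell(u)\rvert \ge b\lvert W_\ell(u)\rvert$ (every $\ell$-clique contained in $\onehopu$ contributes $\ell$ to the volume, and there are at least $\lvert K_{\ell+1}(u)\rvert$ such cliques, namely the $\ell$-subsets of $\onehopu$ not meeting a node outside), so that $\mcond{\clique{\ell}}{\onehopu} = \mcut{\clique{\ell}}{\onehopu}/\mvol{\clique{\ell}}{\onehopu} \le a(1-\gccf{\ell})/b$, which one then minimizes over admissible $a$ --- this optimization is what produces the square-root expression in \cref{eq:cond_upper}.

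As for difficulty: there is no real obstacle once the companion identity $\expect{Y} = 1 - \gccf{\ell}$ is recognized, and that identity is the one genuinely new ingredient --- it hinges on the bookkeeping that $1 - \lccf{\ell}{u}$ is the per-center fraction of open $\ell$-wedges and that centers partition the wedge set. The only point that needs care is keeping the union-bound inequality $1/a + 1/c < 1$ strict, and checking the $\gccf{\ell} = 1$ boundary separately, so that a concrete good node genuinely exists rather than merely a favorable average.
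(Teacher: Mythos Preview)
Your proof is correct and follows essentially the same route as the paper: both apply Markov to $X$ via \cref{lem:expectation}, establish the identity $\sum_u \p{\ell}{u}\,\lccf{\ell}{u} = \gccf{\ell}$ (you phrase it as $\expect{Y} = 1-\gccf{\ell}$, the paper computes it directly), derive a tail bound on $\lccf{\ell}{u}$ from that identity, and finish with a union bound. The only cosmetic difference is that the paper fixes the specific relation $b = (a\gccf{\ell}-1)/(a-1)$ inside the proof (so the two tail probabilities sum to exactly $1$ and the strict inequality in the direct tail estimate does the work), whereas you leave the parameters free subject to $1/a + 1/c < 1$; these are equivalent once you substitute $c = (1-b)/(1-\gccf{\ell})$.
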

\begin{proof}
Consider the random variable $X$ defined in \cref{lem:expectation} with
$\expect{X} \le 1 - \gccf{\ell}$.  For any constant $a > 1$, by Markov's
inequality, we have that
\begin{equation}\label{eq:markov}
\prob{X > a(1 - \gccf{\ell})} \le 1 / a.
\end{equation}
Let $b = (a \gccf{\ell} - 1) / (a - 1) \le 1$, and
$p = \prob{\lccf{\ell}{u} < b}$.
Now,
\begin{align*}
\gccf{\ell} 
&= \frac{\ell}{\lvert W_\ell \rvert}\cdot (\ell + 1) \lvert \clique{\ell + 1} \rvert \\
&= \frac{\ell}{\lvert W_\ell \rvert}\cdot \sum_{u \in V} \lvert \cliqueL{\ell + 1}{u} \rvert \\
&= \sum_{u \in V} \frac{\lvert W_\ell(u)\rvert }{ \lvert W_\ell \rvert} \cdot \frac{\ell \cdot \lvert \cliqueL{\ell + 1}{u} \rvert }{{\lvert W_\ell(u)\rvert }} \\
&= \sum_{u\in V} \p{\ell}{u} \lccf{\ell}{u}  \\
&= \sum_{\lccf{\ell}{u} < b} \p{\ell}{u} \lccf{\ell}{u} + \sum_{\lccf{\ell}{u} \geq b} \p{\ell}{u} \lccf{\ell}{u} \\
&< b\sum_{\lccf{\ell}{u} < b} \p{\ell}{u}  + \sum_{\lccf{\ell}{u} \geq b} \p{\ell}{u} \\
&= bp + (1 - p).
\end{align*}
Thus,
\begin{equation}\label{eq:bound}
 \prob{\lccf{\ell}{u} < b} = p < (1 - \gccf{\ell}) / (1 - b) = 1 - 1 / a.
\end{equation}
By the union bound with the results in \cref{eq:markov,eq:bound},
\[
1 - \prob{\dfrac{\mcut{\clique{\ell}}{\onehopu}}{\lvert W_\ell(u) \rvert} > a(1 - \gccf{\ell})
\text{ or }
\lccf{\ell}{u} < b} > 0.
\]
The result holds by the probabilistic method.
\end{proof}

We now prove a final lemma, from which \cref{thm:nbrhd_main} will be a
corollary.
\begin{lemma}\label{lem:nbrhood_conductance}
For any node $u$ satisfying the bounds in \cref{eq:special_prop},
\[
\frac{\mcut{\clique{\ell}}{\onehopu}}{\mvol{\clique{\ell}}{\onehopu}} \leq
\frac{1 - \gccf{\ell}}{1 - \gccf{\ell} + (a\gccf{\ell} - 1) / (a-1)}.
\]
\end{lemma}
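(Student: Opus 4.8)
The plan is to reduce the claimed bound to a single inequality of the form $\mvol{\clique{\ell}}{\onehopu} - \mcut{\clique{\ell}}{\onehopu} \ge c\cdot\mcut{\clique{\ell}}{\onehopu}$, since $\mcut{\clique{\ell}}{\onehopu}/\mvol{\clique{\ell}}{\onehopu} = 1/\bigl(1 + (\mvol{\clique{\ell}}{\onehopu}-\mcut{\clique{\ell}}{\onehopu})/\mcut{\clique{\ell}}{\onehopu}\bigr)$ is decreasing in that ratio; so the whole argument is to lower-bound $\mvol{\clique{\ell}}{\onehopu} - \mcut{\clique{\ell}}{\onehopu}$ and then feed in the two hypotheses of \cref{eq:special_prop}. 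First I would expand the motif definitions on the vertex set $\onehopu$: each $\ell$-clique $C$ of $G$ contributes $\lvert C \cap \onehopu \rvert$ to the volume and contributes $1$ to the cut exactly when $0 < \lvert C \cap \onehopu \rvert < \ell$. Hence cliques lying entirely inside $\onehopu$ contribute $\ell$ to the volume and nothing to the cut, while cut cliques contribute $\lvert C \cap \onehopu\rvert - 1 \ge 0$ to $\mvol{\clique{\ell}}{\onehopu}-\mcut{\clique{\ell}}{\onehopu}$, so $\mvol{\clique{\ell}}{\onehopu} - \mcut{\clique{\ell}}{\onehopu} \ge \ell A$, where $A$ is the number of $\ell$-cliques of $G$ contained in $\onehopu$.

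Next I would lower-bound $A$ using $\onehopu = \{u\}\cup N(u)$. An $\ell$-clique inside $\onehopu$ either contains $u$ or lies entirely inside $N(u)$; keeping only the second type, the $\ell$-cliques of $G$ lying inside $N(u)$ are precisely the $\ell$-cliques of the neighborhood graph $\onehopnou$, and appending $u$ is a bijection between these and the $(\ell+1)$-cliques of $G$ through $u$ (the clique correspondence between $\onehopnou$ and neighborhoods of $u$ noted earlier in this chapter). So $A \ge \lvert K_{\ell+1}(u)\rvert$, whence $\mvol{\clique{\ell}}{\onehopu} - \mcut{\clique{\ell}}{\onehopu} \ge \ell\lvert K_{\ell+1}(u)\rvert$. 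Rewriting via the definition of the local higher-order clustering coefficient, $\ell\lvert K_{\ell+1}(u)\rvert = {\ell \choose \ell-1}\lvert K_{\ell+1}(u)\rvert = \lccf{\ell}{u}\cdot\lvert W_\ell(u)\rvert \ge b\cdot\lvert W_\ell(u)\rvert$ by the second hypothesis of \cref{eq:special_prop}.

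Finally I would bring in the first hypothesis, $\mcut{\clique{\ell}}{\onehopu} \le a(1-\gccf{\ell})\lvert W_\ell(u)\rvert$, equivalently $\lvert W_\ell(u)\rvert \ge \mcut{\clique{\ell}}{\onehopu}/[a(1-\gccf{\ell})]$; combined with the previous step this gives $\mvol{\clique{\ell}}{\onehopu} - \mcut{\clique{\ell}}{\onehopu} \ge \tfrac{b}{a(1-\gccf{\ell})}\,\mcut{\clique{\ell}}{\onehopu}$, that is, the desired reduction with $c = b/[a(1-\gccf{\ell})]$, and substituting $b = (a\gccf{\ell}-1)/(a-1)$ and simplifying $1/(1+c)$ yields the stated bound on $\mcut{\clique{\ell}}{\onehopu}/\mvol{\clique{\ell}}{\onehopu}$ (which, fed into \cref{thm:nbrhd_main} and optimized over $a$, gives the neighborhood-conductance bound there). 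The step demanding the most care is the clique counting: verifying that cut cliques are subtracted from the volume correctly, that the split of $\ell$-cliques of $\onehopu$ into ``contains $u$'' and ``inside $N(u)$'' neither misses nor double-counts a clique, and that the bijection with $(\ell+1)$-cliques through $u$ is exact. I would also keep straight that the cut hypothesis controls the numerator of the final fraction and the clustering hypothesis the denominator; swapping their roles changes the constant. Everything else is the Markov/union-bound setup already carried out in \cref{lem:existence} together with routine algebra.
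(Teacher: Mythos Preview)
Your approach is essentially the same as the paper's: both lower-bound $\mvol{\clique{\ell}}{\onehopu} - \mcut{\clique{\ell}}{\onehopu}$ by $\lccf{\ell}{u}\,\lvert W_\ell(u)\rvert$ via the correspondence between $\ell$-cliques inside $N(u)$ and $(\ell+1)$-cliques through $u$, and then feed in the two hypotheses of \cref{eq:special_prop}. One small note: if you carry your algebra through, $1/(1+c)$ with $c = b/[a(1-\gccf{\ell})]$ gives $\dfrac{1-\gccf{\ell}}{1-\gccf{\ell} + (a\gccf{\ell}-1)/[a(a-1)]}$, which is exactly what the paper's proof concludes---the lemma as stated has a missing factor of $a$, and the $a(a-1)$ version is what is actually used downstream in \cref{thm:nbrhd_main}.
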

\begin{proof}
We first provide a lower bound for $\mvol{\clique{\ell}}{\onehopu}$.  First, each
$\ell$-clique cut by the partition induced by $\onehopu$ contributes at least
one to $\mvol{\clique{\ell}}{\onehopu}$.  Second, each $(\ell + 1)$-clique in
$\cliqueL{\ell + 1}{u}$ uniquely corresponds to an $\ell$-clique in $\onehopu$
which is induced by the $\ell$ nodes in the $(\ell + 1)$-clique other than $u$,
thus will contribute $\ell$ into $\mvol{\clique{\ell}}{\onehopu}$.  Note that
each $(\ell + 1)$-clique in $\cliqueL{\ell + 1}{u}$ closes $\ell$ different
$\ell$-wedges in $W_\ell(u)$, and there are $\lccf{\ell}{u} \lvert
W_\ell(u)\rvert$ closed $\ell$-wedges.  Combining these observations,
\begin{align*}
\mvol{\clique{\ell}}{\onehopu} 
&\geq \mcut{\clique{\ell}}{\onehopu} + \ell \cdot \lccf{\ell}{u} \lvert W_\ell(u)\rvert / \ell \\
&\geq \mcut{\clique{\ell}}{\onehopu} + b \lvert W_\ell(u)\rvert \quad \text{(by \cref{eq:special_prop})}.
\end{align*}
Now,
\begin{align*}
\frac{\mcut{\clique{\ell}}{\onehopu}}{\mvol{\clique{\ell}}{\onehopu}}
&\leq
\frac{\mcut{\clique{\ell}}{\onehopu}}{\mcut{\clique{\ell}}{\onehopu} + b \lvert W_\ell(u)\rvert}  \\
&\leq
\frac{a(1 - \gccf{\ell}) \cdot \lvert W_\ell (u)\rvert}{a(1 - \gccf{\ell}) |W_\ell(u)| +  b |W_\ell(u)|} 
\quad \text{(by \cref{eq:special_prop})} \\
&=
\frac{1 - \gccf{\ell}}{1 - \gccf{\ell} + \frac{a \gccf{\ell} - 1}{a(a-1)} }.
\end{align*}
\end{proof}

Finally, \cref{thm:nbrhd_main} follows
from \cref{lem:existence,lem:nbrhood_conductance}, setting
\[
a = (1 + \sqrt{1-\gccf{\ell}}) / \gccf{\ell},
\]
and the fact that
$\mvol{\clique{\ell}}{\onehopu} \le \mvol{\clique{\ell}}{V} / 2$
implies that
\[
\mcond{\clique{\ell}}{\onehopu} = \frac{\mcut{\clique{\ell}}{\onehopu}}{\mvol{\clique{\ell}}{\onehopu}}.
\]

The theory in this section holds for cliques of any size, and the
results do not use the approximation to motif conductance in \cref{sec:fournode}
needed by our motif-based spectral clustering method for motifs with four or
more nodes.  Thus, our definitions of motif conductance are still meaningful for
motifs with at least 4 nodes.

\subsection{Experiments}
\label{sec:nbrhd_exps}

The goals of our experiments are
\begin{enumerate}
\item to demonstrate that there are 1-hop
neighborhood clusters of small motif conductance as a test of how
well \cref{thm:nbrhd_main} holds in practice; and \label{itm:goal1}
\item to use this idea to
quickly find many clusters with small motif conductance by running targeted
cluster expansion around a subset of the 1-hop neighborhood clusters. \label{itm:goal2}
\end{enumerate}
Regarding \cref{itm:goal1}, we find that real-world networks exhibit much better
results than predicted by the theory and the 1-hop neighborhood with minimal
motif conductance is competitive with spectral graph theory
approaches. Regarding \cref{itm:goal2}, we show that a subset of 1-hop
neighborhoods called locally minimal neighborhoods are better seeds than random
nodes. We use this insight to find the global structure of clique conductance
clusters more quickly than exhaustive enumeration.

We evaluate 1-hop neighborhood cluster quality in terms of motif conductance for
$2$-clique (edge), $3$-clique (triangle), and $4$-clique motifs using four
networks where we can exhaustively sample targeted clusters easily: $\condmat$,
a co-authorship network constructed from papers posted to the condensed matter
category on arXiv~\cite{leskovec2007graph}; $\harvard$, a snapshot of the
friendships network between Harvard students on Facebook in September
2005~\cite{traud2012social}; $\enron$, an e-mail communication network of the
employees of Enron Corporation and their contacts~\cite{klimt2004introducing};
and $\google$, a Web graph released by Google for a programming
contest~\cite{leskovec2009community}.  Summary statistics for the networks are
in \cref{tab:nbrhood_data_summary}.


\begin{table}[tb]
\centering
  \dualcaption{Summary statistics of datasets}{The $\enron$ and $\google$ networks
  are treated as undirected, even though the original datasets are directed.}
  \begin{tabular}{l @{\hskip 5mm} l l @{\hskip 6mm} c c c}
    \toprule
    Dataset       & $\lvert V \rvert$ & $\lvert E \rvert$ & $\gccf{2}$ & $\gccf{3}$ & $\gccf{4}$ \\
    \midrule
$\condmat$ & 14,788 & 187,278 & 0.33 & 0.33 & 0.36 \\ 
$\enron$ & 18,561 & 156,139 & 0.11 & 0.06 & 0.05 \\
$\harvard$ & 13,319 & 793,410 & 0.14 & 0.07 & 0.07\\
$\google$ & 393,582 & 2,905,337 & 0.07 & 0.06 & 0.07 \\
\bottomrule
\end{tabular}
\label{tab:nbrhood_data_summary}
\end{table}

\xhdr{There are 1-hop neighborhoods of low motif conductance}
Plugging the higher-order clustering coefficients from
\cref{tab:nbrhood_data_summary} into the bound from \cref{thm:nbrhd_main} yields
weak, albeit non-trivial bounds on the smallest neighborhood conductance (all
bounds are $\ge 0.9$ for the networks we consider).  However, the spirit of the
theorem rather than the bound itself motivates our experiments: with large
higher-order clustering, there should be a neighborhood with small motif
conductance for clique motifs.  We indeed find this to be true in our results.

\Cref{tab:nbrhood_exp_summary} compares the neighborhood with smallest motif
conductance for the $2$-clique, $3$-clique, and $4$-clique motifs with the
Fiedler cluster described in \cref{alg:motif_fiedler} in \cref{ch:honc}.  (For $4$-cliques, we actually
use the quadratic approximation from \cref{eqn:motif4_quadratic} because it
is easier to compute.)  Here, the Fiedler cluster
represents a method that uses the global structure of the network to compare
against the purely local neighborhood clusters.  In all cases, the best neighborhood
cluster has motif conductance far below the upper bound of
\cref{thm:nbrhd_main}.  For all clique orders, the best neighborhood cluster
always has conductance within a factor of 3.5 of the Fiedler cluster in
$\condmat$, $\enron$, and $\harvard$.  With $\google$, the conductances are much
smaller but the best neighborhood still has conductance within an order of
magnitude of the Fiedler set.  We conclude that the best neighborhood cluster in
terms of motif conductance, which comes from purely local constructs, is
competitive with the Fiedler vector that takes into account the global graph
structure.  This motivates our next set of experiments that uses nodes that
induce small neighborhood conductance as seeds for the APPR method developed in
\cref{sec:local}.

\begin{table}[t]
  \dualcaption{Fiedler and 1-hop neighborhood clusters}{We compare the cluster
    found by the spectral sweep cut algorithm (\cref{alg:motif_fiedler}; Fiedler) 
    to the 1-hop neighborhood with
    smallest motif conductance.  A star ($\star$) denotes when 1-hop
    neighborhood clusters are the same across different clique sizes, a dagger
    ($\dagger$) denotes when Fiedler clusters are the same, and a bullet
    ($\bullet$) denotes when the neighborhood and Fiedler clusters are the same.
    In $\condmat$, $\enron$, and $\harvard$, the best neighborhood cluster
    always has conductance within a factor of 3.5 of the Fiedler cluster. \\
    ($^*$) We report the quadratic form approximation to motif conductance
    for $4$-cliques (\cref{eqn:motif4_quadratic}) because it is easier to compute.}
\begin{tabular}{l c c c c}
\toprule
                         & $\condmat$            & $\enron$              & $\harvard$            & $\google$             \\ \midrule
$M$ = $2$-cliques                                                                                                \\ \cmidrule(r){1-1}
Neighborhood                                                                                                             \\
\phantom{XX}  \# nodes   & 23$^{\bullet}$        & 21$^{\star}$          & 4                     & 203                   \\
\phantom{XX} motif cond. & $1.1\!\cdot\!10^{-2}$ & $2.4\!\cdot\!10^{-2}$ & $2.0\!\cdot\!10^{-1}$ & $1.5\!\cdot\!10^{-3}$ \\
Fiedler                                                                                                                  \\
\phantom{XX} \# nodes    & 23$^{\bullet}$        & 25$^{\dagger}$        & 1,470                 & 5,335                 \\
\phantom{XX} motif cond. & $1.1\!\cdot\!10^{-2}$ & $1.7\!\cdot\!10^{-2}$ & $1.1\!\cdot\!10^{-1}$ & $9.2\!\cdot\!10^{-5}$ \\ \midrule 
$M$ = $3$-cliques                                                                                                \\ \cmidrule(r){1-1}
Neighborhood                                                                                                             \\
\phantom{XX} \# nodes    & 17$^{\star}$          & 21$^{\star}$          & 5$^{\star}$           & 62$^{\star}$          \\
\phantom{XX} motif cond. & $2.0\!\cdot\!10^{-3}$ & $1.3\!\cdot\!10^{-2}$ & $1.4\!\cdot\!10^{-1}$ & $1.8\!\cdot\!10^{-4}$ \\
Fiedler                                                                                                                  \\
\phantom{XX} \# nodes    & 18$^{\dagger}$        & 25$^{\dagger}$        & 1,429                 & 10,803                \\
\phantom{XX} motif cond. & $1.5\!\cdot\!10^{-3}$ & $6.7\!\cdot\!10^{-3}$ & $4.9\!\cdot\!10^{-2}$ & $5.9\!\cdot\!10^{-5}$ \\ \midrule       
$M$ = $4$-cliques                                                                                                \\ \cmidrule(r){1-1}
Neighborhood                                                                                                             \\
\phantom{XX} \# nodes    & 17$^{\star}$          & 8                     & 5$^{\star}$           & 62$^{\star}$          \\
\phantom{XX} motif cond.$^*$ & $1.4\!\cdot\!10^{-4}$ & $3.6\!\cdot\!10^{-3}$ & $7.7\!\cdot\!10^{-2}$ & $1.1\!\cdot\!10^{-5}$ \\
Fiedler                                                                                                                  \\
\phantom{XX} \# nodes    & 18$^{\dagger}$        & 25$^{\dagger}$        & 1,299                 & 8,001                 \\
\phantom{XX} motif cond.$^*$ & $1.1\!\cdot\!10^{-4}$ & $1.6\!\cdot\!10^{-3}$ & $2.2\!\cdot\!10^{-2}$ & $2.2\!\cdot\!10^{-6}$ \\ \bottomrule
\end{tabular}\label{tab:nbrhood_exp_summary}
\end{table}

\clearpage

\xhdr{Finding good seeds}
So far, we have used our theory to find a single node whose 1-hop neighborhood
has small motif conductance for clique motifs.  We examine this further by using
nodes whose neighborhoods induce good cuts as seeds for the motif-based APPR
method.  Following the terminology of \citet{gleich2012vertex},
we say that a node $u$ is a \emph{local minimum} if
$\mcond{M}{\onehopu} \leq \mcond{M}{\onehopv}$ for all neighbors $v$ of
$u$.  To test whether local minima are good seeds for
APPR, we first exhaustively compute APPR clusters using every node in each of
our networks as a seed.  Next, we used a one-sided Mann Whitney U test to test
the null hypothesis that the local minima yield APPR clusters with motif
conductances that are \emph{not less than} motif conductances from using
non-local minima as seeds (\cref{tab:mwu_test}).  The $p$-values from these
tests say that we can safely reject the null hypothesis at significance level
$<$ 0.003 for all cliques and networks considered except for $2$-cliques in
$\condmat$.  In other words, local minima are better seeds than non-local
minima.


\begin{table}[t]
\centering
  \dualcaption{Testing neighborhood centers as seeds}{We report the $p$-values
    from Mann-Whitney U tests of the null hypothesis that the motif conductances
    of sets from APPR seeded with local minima are \emph{not less than} the
    motif conductances of sets from APPR seeded with non-local minima.  In all
    but $\condmat$ with the standard edge motif, we reject the null at a
    significance level $<$ 0.003.}
  \begin{tabular}{c c c c c}
    \toprule
 motif   & $\condmat$ & $\enron$ & $\harvard$ & $\google$ \\ \midrule
edge  & 0.87 & $< 1\!\cdot\!10^{-16}$ & $8.17\!\cdot\!10^{-05}$ &  $< 1\!\cdot\!10^{-16}$ \\
triangle   & $2.07\!\cdot\!10^{-03}$ & $< 1\!\cdot\!10^{-16}$ & $4.32\!\cdot\!10^{-04}$ & $< 1\!\cdot\!10^{-16}$ \\
$4$-clique & $7.20\!\cdot\!10^{-10}$ & $< 1\!\cdot\!10^{-16}$ & $1.55\!\cdot\!10^{-05}$ & $< 1\!\cdot\!10^{-16}$ \\
\bottomrule
\end{tabular}
\label{tab:mwu_test}
\end{table}

Finally, we use these local minimum seeds to construct network community profile
(NCP) plots for different motifs.  NCP plots are defined as the optimal
conductance over all sets of a fixed size $k$ as a function of
$k$~\cite{leskovec2009community}.  The shapes of the curves reveal the cluster
structure of the networks.  In practice, these plots are generated by
exhaustively using every node in the network as a seed for the APPR
method~\cite{leskovec2009community}.  Here, we compare this approach to two
simpler ones: (i) using the neighborhood sizes and conductances and (ii) using
only local minima as seeds for APPR.  In the latter case, between 1\% and 15\%
of nodes are local minima, depending on the network, so this serves as an
economical alternative to the typical exhaustive approach.

\Cref{fig:NCPs} shows the NCP plots for $\condmat$ and $\harvard$ with the
triangle and $4$-clique motifs.  Seeding with local minima is sufficient for
capturing the major trends of the NCP plot.  In general, the curves constructed
from neighborhood information capture the first downward spike in the plot, but
do not capture larger sets with small conductance.  Finally, the triangle and
$4$-clique NCP plots are quite similar for both networks.  Thus, we suspect that
local minima for lower-order cliques could also be used as good seeds when
looking for sets based on higher-order cliques.


\definecolor{myblue}{RGB}{55,126,184}
\definecolor{mypurp}{RGB}{152,78,163}
\definecolor{mygreen}{RGB}{77,175,74}
\definecolor{myred}{RGB}{228,26,28}
\begin{figure}[tb]
\centering
\includegraphics[width=0.49\columnwidth]{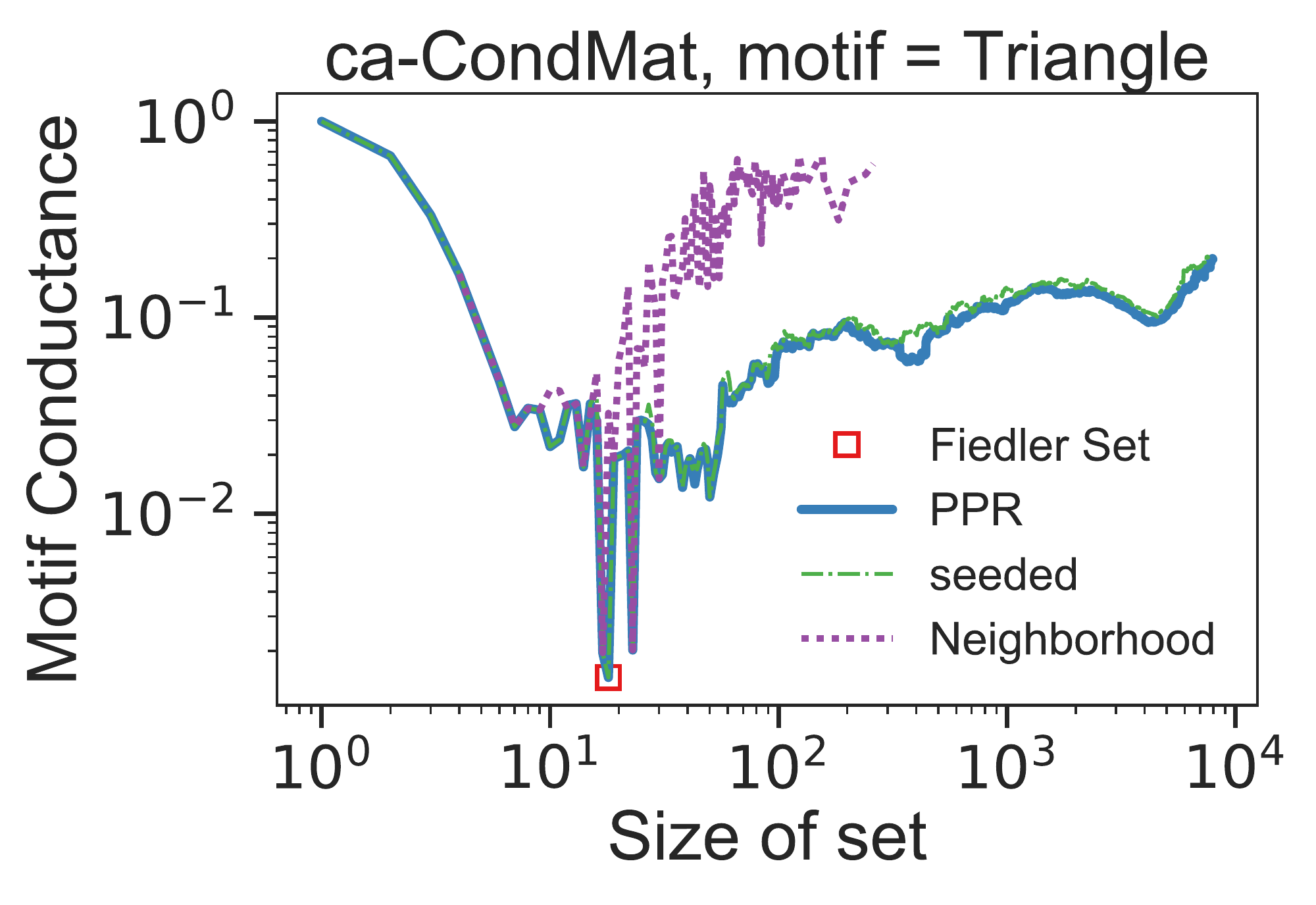}
\includegraphics[width=0.49\columnwidth]{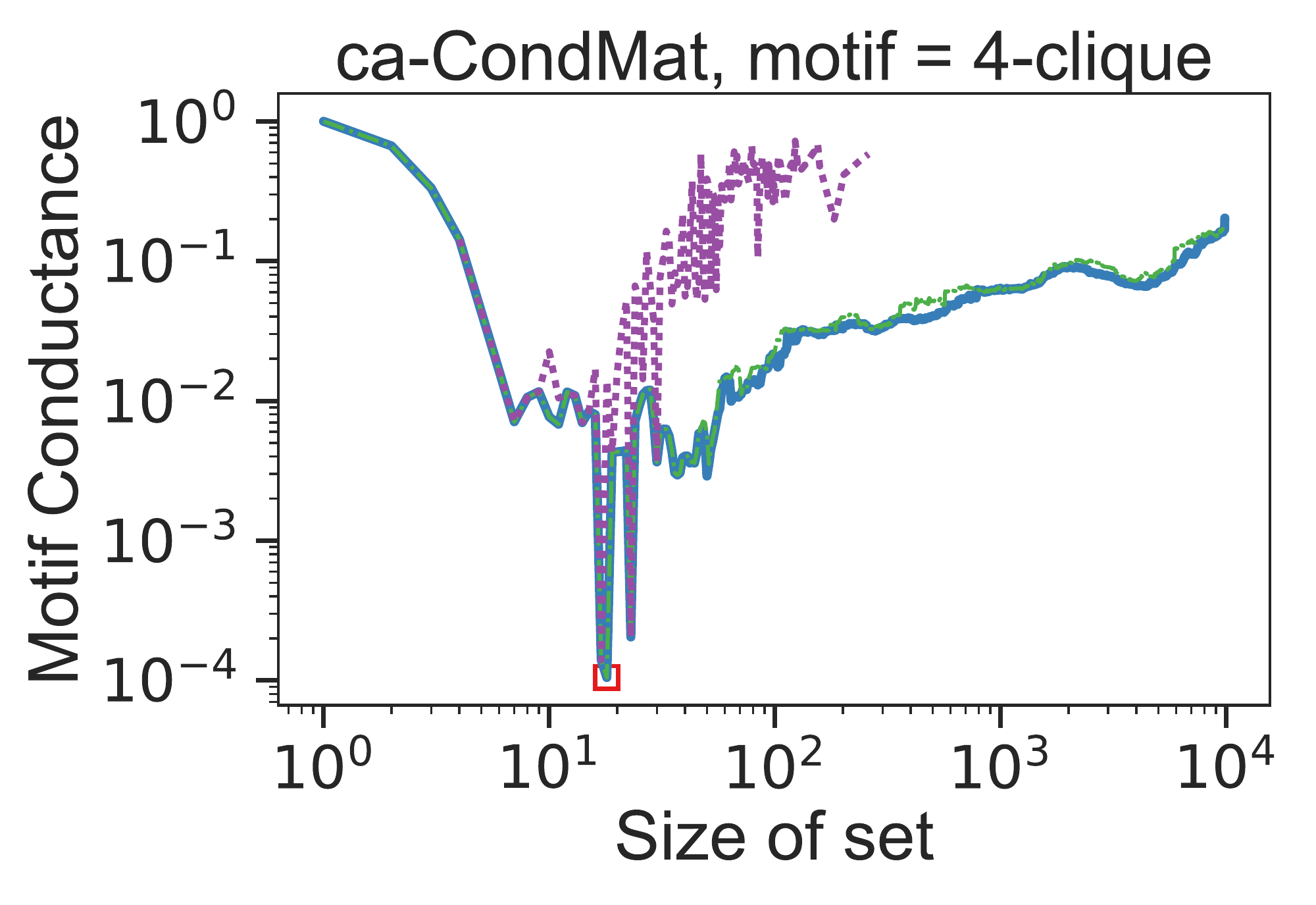}
\includegraphics[width=0.49\columnwidth]{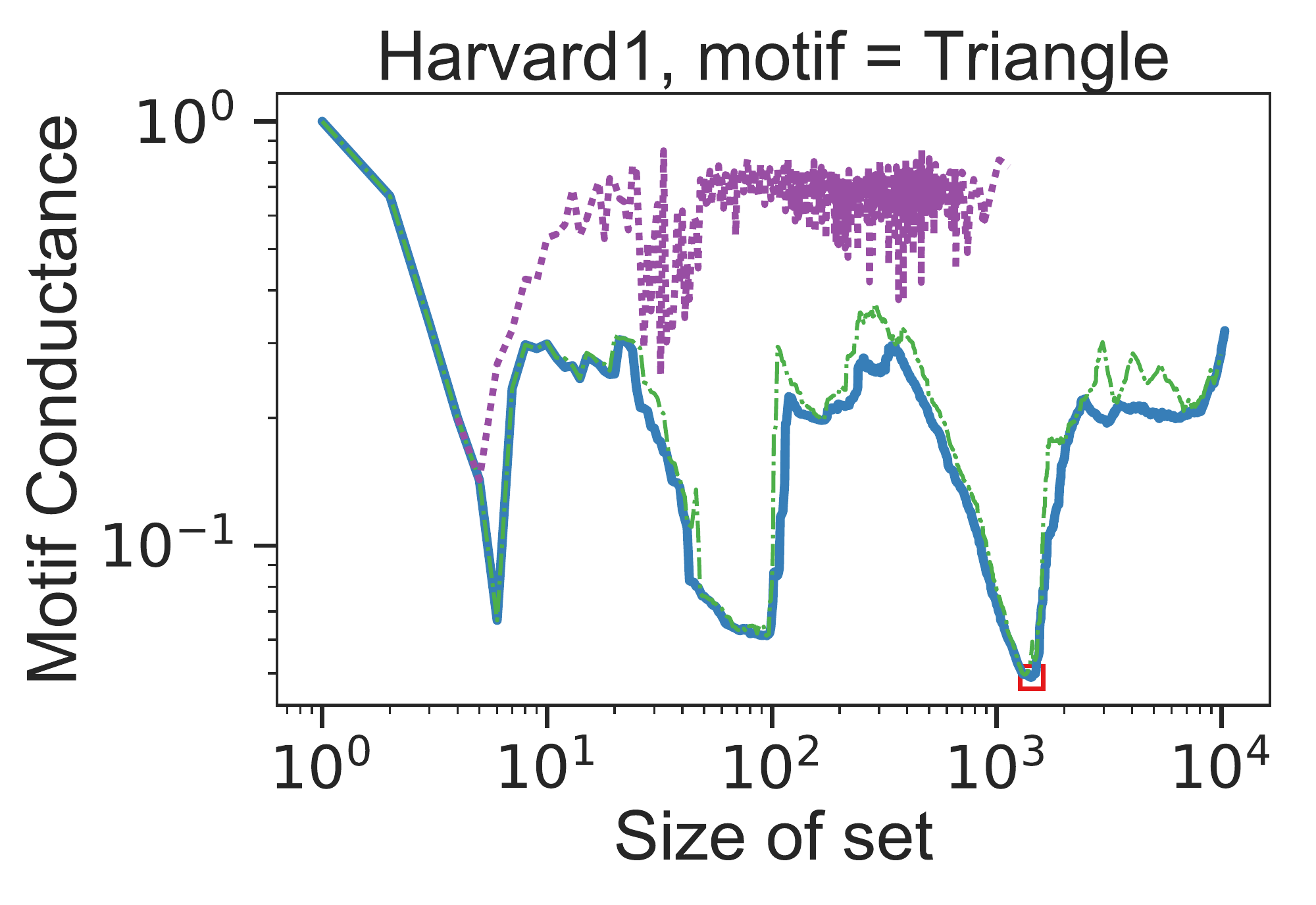}
\includegraphics[width=0.49\columnwidth]{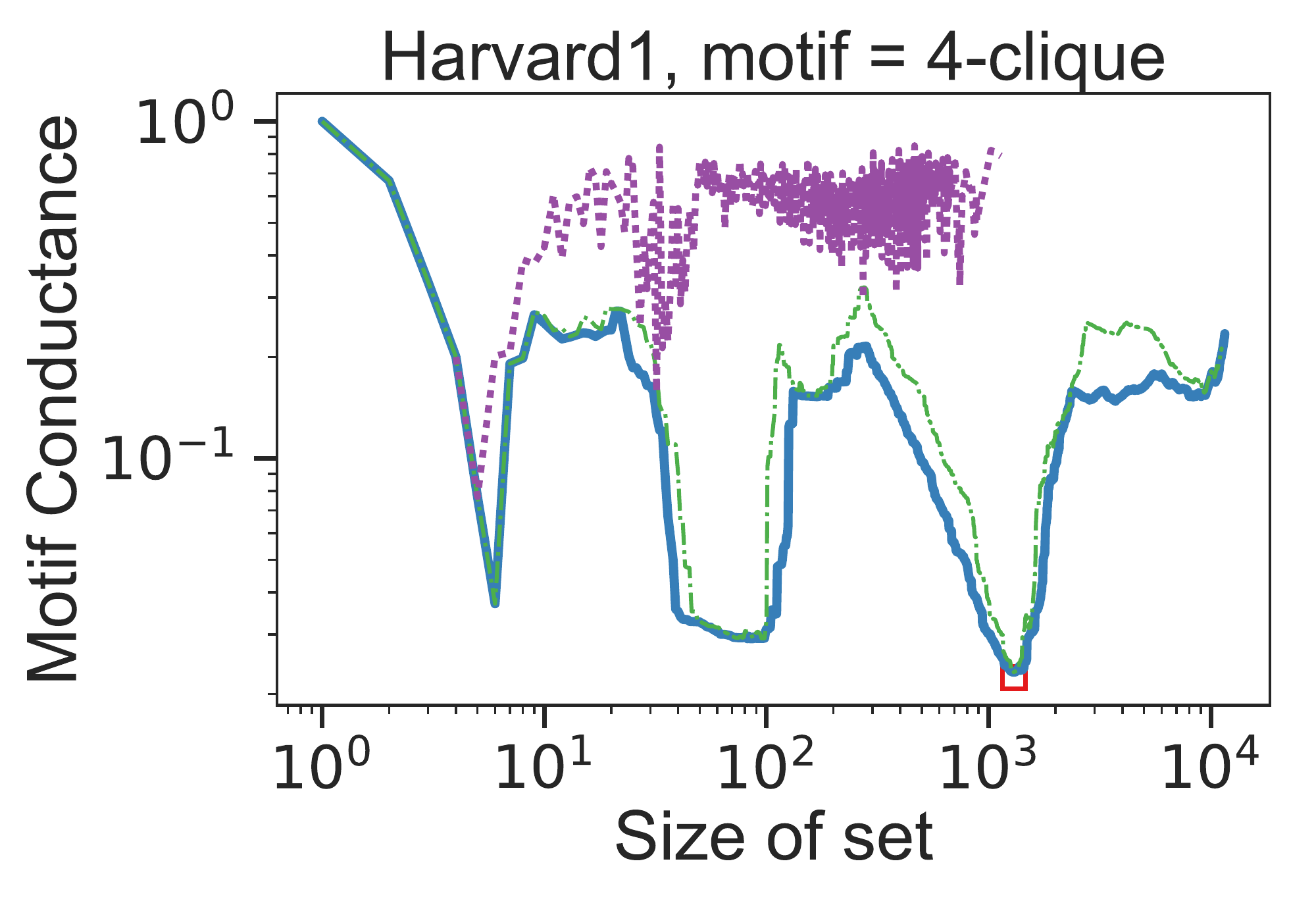}
\dualcaption{NCP plots for two networks with two different clique sizes}{Curves
  are constructed from APPR with all nodes as seeds (\textcolor{myblue}{blue}), APPR with just local
  minima as seeds (``seeded'', \textcolor{mygreen}{green}), and all 1-hop neighborhoods
  (``Neighborhood'', \textcolor{mypurp}{purple}).  Using local minima as seeds captures the trends of
  exhaustive PPR using only a fraction of the seeds.  A local minima seed also captures
  the Fiedler set (marked by \textcolor{myred}{$\square$}).}
\label{fig:NCPs}
\end{figure}

\clearpage

\section{Related work and discussion}

We have proposed a methodology for higher-order clustering coefficients to study
higher-order closure patterns in networks, which generalizes the widely used
clustering coefficient that measures triadic closure.  Our methodology gives new
insights into the clustering behavior of both real-world networks and random
graph models, and our theoretical analysis provides intuition for the way in
which higher-order clustering coefficients describe local clustering in graphs.

There have been a few prior efforts along these
lines.  \Citet{fronczak2002higher} define an $\ell$th-order local clustering
coefficient at node $u$ to be the fraction of neighbors of $u$ that have
shortest path distance $\ell$ in the 1-hop neighborhood graph of $u$ with $u$
removed (i.e., in $\onehopnou$).  In this definition, $\ell = 1$ corresponds to
the traditional local clustering coefficient.  We note that this generalization
does not have a natural global clustering
coefficient.  \Citet{andrade2006neighborhood} also look at shortest paths of
length $\ell$ in an effort to generalize graph measures and mention the
clustering coefficient, although they do not actually compute it.
\Citet{caldarelli2004structure} define the \emph{grid coefficient}
at node $u$ as the number of 4-cycles containing $u$ divided by the total
possible number of 4-cycles $u$ could participate in, given the degrees of $u$
and its neighbors.  This is a generalization of the classical clustering coefficient,
which measures the number of 3-cycles containing $u$ divided by the total
possible number of 3-cycles $u$ could participate in, given the degree of $u$.
\Citet{jiang2004topological} define an $\ell$th-order local clustering coefficient
of node $u$ as the edge density in the $\ell$-hop neighborhood of node $u$
with $u$ removed.  In this case, $\ell = 1$
is the classical clustering coefficient.  

None of the prior work captures the closure patterns of higher-order cliques,
which fits with the notion of triadic closure in sociology.  The prior
generalizations also do not discuss algorithms for the computations of the
proposed measurements.  For example, finding shortest paths in all neighborhood
graphs $\onehopnou$ could be expensive.  A nice property of our generalization
is that we only have to enumerate $\ell$-cliques and $(\ell - 1)$-cliques.

There are also a few other approaches to finding higher-order structure through
generalizations or relaxations of some notion of density.  These include the
$k$-clique densest subgraph problem~\cite{tsourakakis2015k}, triangle
cores~\cite{zhang2012extracting,rossi2013triangle}, trusses (relaxations of
cliques)~\cite{cohen2008trusses}, other generalizations of cores and
trusses~\cite{sariyuce2015finding}, and the $k$-plex~\cite{seidman1978graph}.
We view these analyses as complimentary to our own.  In contrast to other work, our
analysis in \cref{sec:ccfs_empirical}, shows how \emph{combining} existing tools
(the classical clustering coefficient) with new higher-order tools (the
higher-order clustering coefficient) can provide new analysis
(e.g., \cref{fig:ccfs23}).  We hope that this guides future research in the
area.

\chapter{\tmtitle}
\label{ch:tm}
\chaptermark{Motifs in temporal networks}

\section{Analyzing network data with timestamped edges}
\label{sec:tm_introduction}

Thus far, we have used networks to describe static relationships between things
(nodes) and connections between things (edges).  However, many systems are not
static as the links between objects dynamically change over
time~\cite{holme2012temporal}.  Such {\em temporal networks} can be represented
by a series of timestamped edges, or \emph{temporal edges}.  For example, a
network of email or instant message communication can be represented as a
sequence of timestamped directed edges, one for every message that is sent from
one person to another.  Similar representations can be used to model computer
networks, phone calls, financial transactions, and biological signaling
networks.  The goal of this chapter is to bring higher-order analysis to this
type of data---in particular, to develop motifs as a tools for studying temporal
network data.

Data in these systems often have two important properties:
\begin{enumerate}
\item There are often many temporal edges between the same set of nodes.
For example, two individuals may text back and forth several times. \label{itm:tprop1}
\item The time resolution of the edges is small compared to the time span
encompassed by the entire dataset.  In the $\emaileu$ dataset studied later,
we know the exact second that an e-mail is sent, and we have e-mails sent
over more than 2 years apart. \label{itm:tprop2}
\end{enumerate}

While temporal networks are ubiquitous, there are few tools for modeling and
characterizing the underlying structure of such datasets.  Current analytic
methods of temporal network data can be characterized by a few broad categories.
First, there are methods and models that study networks as if they are strictly
growing over time, where a pair of nodes connect once and stay connected
forever~\cite{barabasi1999emergence,jacobs2015assembling,leskovec2007graph,leskovec2008microscopic,lattanzi2009affiliation}.
Such methods fail to address \cref{itm:tprop1}.  The second approach is to
aggregate the temporal network into a sequence of static
snapshots~\cite{araujo2014com2,dunlavy2011temporal,tantipathananandh2007framework}.
These approaches throw away much of the rich, fine-grained temporal information
in the data (\cref{itm:tprop2}).  Third, there are models and analysis that
treat time as continuous.  Examples include studies on macroscopic
characteristics of the data, such as the distribution of the inter-event times
between a pair of nodes or of an individual in a
network~\cite{barabasi2005origin,malmgren2009universality}, and models for
information diffusion~\cite{farajtabar2015coevolve,du2015time}.  Work in this
area has not focused on measuring the interaction of time and structure in
temporal networks, although such joint analysis has been considered
with, e.g., temporal strong components~\cite{bhadra2003complexity,nicosia2012components,rossi2015parallel}.

\begin{figure}[t]
\centering
\phantomsubfigure{fig:intro1A}
\phantomsubfigure{fig:intro1B}
\phantomsubfigure{fig:intro1C}
\includegraphics[width=0.85\columnwidth]{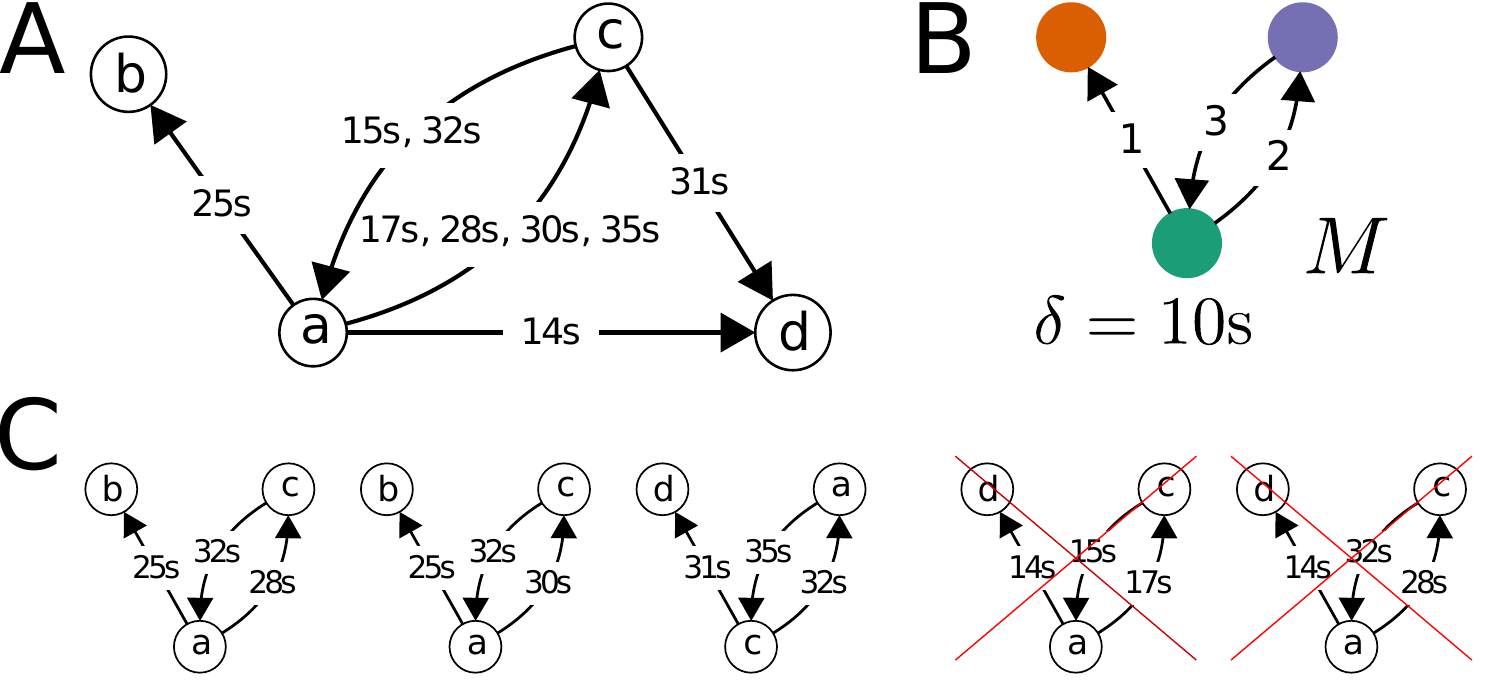}
\dualcaption{Temporal graphs and $\delta$-temporal motifs}{
\textnormal{A}: A temporal graph with nine temporal edges.  Each edge
has a timestamp (listed here in seconds).
\textnormal{B}: Example $3$-node, $3$-edge, $\delta$-temporal motif $M$.
The edge labels correspond to the ordering of the edges.
The time duration of the motif is $\delta = 10$ seconds.
\textnormal{C}:
Instances of the $\delta$-temporal motif $M$.
The crossed-out patterns are not instances of $M$ because either the
edge sequence is out of order or the edges do not all occur within the time
window $\delta$.
}
\label{fig:intro1}
\end{figure}

Our goal in this chapter to develop motifs as a tool for jointly analyzing time
and structure in temporal networks.  We extend static motifs to temporal
networks and define $\delta$-temporal motifs, where all the edges in a given
motif $M$ have to occur inside the time period of $\delta$ time units. These
$\delta$-temporal motifs simultaneously account for ordering of edges and a
temporal window in which edges can occur.  For example,
\cref{fig:intro1B} shows a motif on three nodes and three edges, where the
edge label denotes the order in which the edges appear.  While we focus on
directed edges with a single timestamp in this work, our methodology seamlessly
generalizes to common variations on this model.  For example, our methods can
incorporate timestamps with durations (common in telephone call networks),
colored edges that identify different types of connections, and temporal
networks with undirected edges.

We then consider the problem of counting how many times does each
$\delta$-temporal motif occur in a given temporal network.  We develop a general
algorithm for counting temporal network motifs defined by any number of nodes
and edges that avoids enumeration over subsets of temporal edges and whose
complexity depends on the structure of the static graph induced by the temporal
motif.  For motifs defined by a constant number of temporal edges between $2$
nodes, this general algorithm is optimal up to constant factors---it runs in
$O(m)$ time, where $m$ is the number of temporal edges.

Furthermore, we design fast variations of the algorithm that allow for counting
certain classes of $\delta$-temporal motifs including star and triangle
patterns.  These algorithms are based on a common framework for managing summary
counts in specified time windows.  For star motifs with $3$ nodes and $3$
temporal edges, we again achieve a running time linear in the input, i.e.,
$O(m)$ time. Given a temporal graph with $n$ nodes and $\tau$ induced triangles
in its induced static graph, our fast algorithm counts temporal triangle motifs
with $3$ temporal edges in $O(\min(m\tau^{1/2}, mn))$ worst-case time, after
the $\tau$ triangles have been identified.  In
contrast, any algorithm that processes triangles individually takes $O(\min(m\tau, mn))$
worst-case time.  In practice, our fast temporal triangle counting
algorithm is up to 56 times faster than a competitive baseline and runs in just
a couple of hours on a network with over two billion temporal edges.

Our algorithmic framework enables us to study the structure of several complex
systems.  For example, we explore the differences in human communication
patterns by analyzing motif frequencies in text message, Facebook wall post,
email and private online message network datasets.  Temporal network motif
counts reveal that text messaging and Facebook wall posting are dominated by
one-to-one communication, where a user only engages with one other user at a
time, whereas email is mostly characterized by one-to-many or broadcast
patterns as individuals send out several emails in a row.

Temporal network motifs can also be used to measure the frequency of patterns at
different time scales.  For example, the difference in $\delta$-temporal motif
counts for $\delta = 60$ minutes and $\delta = 30$ minutes counts only the
motifs that take at least 30 minutes and at most 60 minutes to form.  With this
type of analysis, we find that certain question-and-answer patterns on Stack
Overflow need at least 30 minutes to develop.  We also see that in online
private messaging, star patterns constructed by outgoing messages sent by one
user tend to increase in frequency from time scales of 1 to 20 minutes before
peaking and then declining in frequency.

In summary, our work defines a flexible notion of motifs in temporal networks
and provides efficient algorithms for counting them. It enables new analyses in
a variety of scientific domains and paves a new way for modeling dynamic complex
systems.


\section{Definitions of temporal networks and temporal motifs}
\label{sec:tm_preliminaries}

We now provide formal definitions of temporal graphs and $\delta$-temporal
motifs.  In \cref{sec:tm_algorithms}, we provide algorithms for counting the
number of $\delta$-temporal motifs in a given temporal graph.

We define a \emph{temporal edge} to be a timestamped directed edge between an
ordered pair of nodes.  We call a collection of temporal edges a \emph{temporal graph}
(\cref{fig:intro1A}), as formalized in the following definition.
\begin{definition}\label{def:temporal_graph}
A \emph{temporal graph} $T$ on a node set
$V$ is a collection of tuples $(u_i, v_i, t_i)$, $i = 1, \ldots, m$, where each
$u_i$ and $v_i$ are elements of $V$ and each $t_i$ is a timestamp in $\mathbb{R}$.
A tuple $(u_i, v_i, t_i)$ in $T$ is a \emph{temporal edge}.
\end{definition}
Note that there can be many temporal edges directed from $u$ to
$v$.  We assume that
the timestamps $t_i$ are unique so that the tuples may be strictly ordered. This
assumption makes the presentation of the definitions and algorithms clearer, but
our methods can be adapted to the case when timestamps are not unique.
When it is clear from context, we refer to a temporal edge as simply
an \emph{edge}. Finally, by ignoring timestamps and duplicate
edges, the temporal graph induces a standard (static) directed graph.
\begin{definition}
The \emph{static graph} $G$ of a temporal graph $T$ on node set $V$
is a graph with edge set $E = \{ (u, v) \;\vert\; \exists t : (u, v, t) \in T\}$.
An edge in $E$ is called a \emph{static edge} of $T$.
\end{definition}
Sometimes, it will be useful to think about all temporal edges $(u, v, t)$
associated with a specified static edge $(u, v)$.  We will sometimes refer
to these as the timestamps of these temporal edges as the \emph{timestamps
along the static edge $(u, v)$.}  For example, there are 4 timestamps
along the edge $(a, c)$ in \cref{fig:intro1A}.

We now formalize $\delta$-temporal motifs.
\begin{definition}\label{def:temporal_motif}
A \emph{$k$-node, $l$-edge, $\delta$-temporal motif} $M = (\multigraph, \sigma, \delta)$ 
consists of a multigraph $\multigraph = (V, E)$
with $k$ nodes and $l$ edges, an ordering $\sigma$ on the edges of $E$,
and a time span $\delta \in \mathbb{R}_+$.
We typically represent $(\multigraph, \sigma)$ by a sequence
$(u_1, v_1), (u_2, v_2) \ldots, (u_l, v_l)$, $u_i, v_i \in V$, $i = 1, \ldots, l$.
\end{definition}
We often just refer to a motif $M$ by just the multigraph $\multigraph$ and the ordering $\sigma$,
considering $\delta$ to be arbitrary.  \Cref{fig:intro1B} illustrates a particular $3$-node, $3$-edge $\delta$-temporal motif with $\delta$ specified to be a 10 second duration.

\Cref{def:temporal_motif} provides a template for a particular pattern, and we are
interested in how many times a given pattern occurs in a given temporal network.
Intuitively, a collection of edges in a given temporal graph is
an \emph{instance} of a $\delta$-temporal motif $M$ if it matches the same edge
pattern of the multigraph, the temporal edges occur in the specified order, and
all of the temporal edges occur with a $\delta$ time window
(\cref{fig:intro1C}).
\begin{definition}
In a temporal graph $T$, a time-ordered sequence of 
$S = (w_1, x_1, t_1)$, $\ldots$, $(w_l, x_l, t_l)$ of $l$ unique edges in $T$ ($t_1 < t_2 < \ldots < t_l$)
is an \emph{instance} of the temporal motif $M = (\multigraph, \sigma, \delta)$
with $(\multigraph, \sigma) = (u_1, v_1), \ldots, (u_l, v_l)$ if
\begin{enumerate}
\item There exists a bijection $f$ on the vertices such that
$f(w_i) = u_i$ and $f(x_i) = v_i$, $i = 1, \ldots, l$, and
\item the edges all occur within $\delta$ time, i.e., $t_l - t_1 \le \delta$
\end{enumerate}
\end{definition}

The central algorithmic goal of this chapter is to count the number of ordered
subsets of edges from a temporal graph $T$ that are instances of a particular
motif.  In other words, given a $k$-node, $l$-edge $\delta$-temporal motif, we
seek to find how many of the $l! {m \choose l}$ ordered length-$l$ sequences of
edges in the temporal graph $T$ are instances of the motif.  A naive approach to
this problem would be to simply enumerate all ordered subsets and then check if
it is an instance of the motif.  In modern datasets, the number of edges $m$ is
typically quite large (we analyze a dataset in \cref{sec:tm_experiments} with
over $m >$ 2 billion), and this approach is impractical even for $l = 2$.  In
the following section, we discuss several faster algorithms for counting the
number of instances of $\delta$-temporal motifs in a temporal graph.

In contrast to \cref{ch:honc}, here we define motifs by just the existence
of edges (a general subgraph) and not the non-existence of edges (an induced subgraph).
This is crucial for our development of fast counting algorithms.

\Cref{tab:notation} summarizes the notation introduced in this section and 
the next.

\begin{table}[tb]
\centering
\dualcaption{Notation for \cref{ch:tm}}{}
  \begin{tabular}{r l}
    \toprule
    $T$                            & temporal network                                             \\
    $e = ((u, v), t)$              & temporal edge                                                \\
    $m$                            & number of temporal edges in a temporal network               \\
    $n$                            & number of nodes in a temporal network                        \\
    $K$                            & multigraph component of a temporal motif             \\
    $\sigma$                       & ordering on the multigraph edges of a temporal motif \\
    $\delta$                       & time span of a temporal motif                        \\
    $M = (K, \sigma, \delta)$      & temporal motif                                       \\
    $M_{i,j}$ for $1 \le i, j, \le 6$ & one of the motifs illustrated in \cref{fig:three_edge_motifs}            \\
    $k$                            & number of nodes in a temporal motif                  \\
                                   & (number of nodes in the multigraph)                          \\
    $\ell$                         & number of edges in a temporal motif                  \\
                                   & (number of edges in the multigraph)                          \\
    $S$                            & ordered sequence of temporal edges                           \\
    $G$                            & static graph induced by a temporal network                   \\
    $\tau$                         & number of triangles in a static graph                        \\
    $H$                            & motif pattern in a static graph                              \\
    $H'$                           & instance of a static motif in a static graph                 \\
    \bottomrule
    \end{tabular}\label{tab:notation}
\end{table}  

\clearpage

\section{Counting algorithms}
\label{sec:tm_algorithms}

We now present several algorithms for counting the exact number of instances
of temporal motifs in a temporal graph.  We first present a general counting
algorithm in \cref{sec:tm_general_framework}, which can count instances of any
$k$-node, $l$-edge temporal motif faster than simply enumerating over all
size-$l$ ordered subsets of edges.  The computational complexity depends on the
multigraph structure of the motif, but the algorithm is optimal for counting $2$-node, $l$-edge
temporal motifs (for constant $l$) in the sense that it is linear in the number
of edges in the temporal graph.  In \cref{sec:tm_faster_algs}, we provide
faster, specialized algorithms for counting specific types of $3$-node, $3$-edge
temporal motifs (\cref{fig:three_edge_motifs}).  The computational complexities
of all of our algorithms are independent of $\delta$.

\subsection{General counting framework}\label{sec:tm_general_framework}

We begin with a general framework for counting the number of instances of a of a
temporal motif $M = (\multigraph, \sigma, \delta)$.  To start, consider $H$ to
be the static directed graph induced by $\multigraph$ (i.e., $H$ consists of all
unique edges in $\multigraph$).  A sequence of temporal edges is an instance
of $M$ if and only if
\begin{enumerate}
\item the static subgraph induced by the edges in the sequence is isomorphic to $H$
\item the ordering of the edges in the sequence matches $\sigma$,
\item all the edges in the sequence span a time window of at most $\delta$ time units.
\end{enumerate}
This leads to the following general algorithm for counting instances of $M$ in a
temporal graph $T$:
\begin{enumerate}
\item
Identify all instances $H'$ of the static motif $H$ within the static graph $G$
induced by the temporal graph $T$.  For example, there are three instances of
$H$ induced in \cref{fig:intro1} ($\{(a, b), (a, c), (c, a)\}$, $\{(a, d), (a, c), (c, a)\}$, and
$\{(c, d), (c, a), (a, c)\}$).
\item
For each static motif instance $H'$, gather all temporal edges between pairs of
nodes forming an edge in $H'$ into an ordered sequence $S =$ $(u_1, v_1, t_1)$,
$\ldots$, $(u_L, v_L, t_L)$.
\item
Count the number of (potentially non-contiguous) subsequences of edges in $S$
occurring within $\delta$ time units that match $(\multigraph, \sigma)$.
\end{enumerate}

The first step can use known algorithms for enumerating motifs in static graphs
such as those by \citet{wernicke2006fanmod}, and the second step is a simple
matter of fetching the appropriate temporal edges.  To perform the third step
efficiently, we develop a dynamic programming approach for counting the number
of subsequences (instances of motif $M$) that match a particular pattern within
a larger sequence ($S$).  The key idea is that, as we stream through an input
sequence of edges, the count of a given length-$l$ pattern (i.e., motif) with a
given final edge is computed from the current count of the length-($l-1$) prefix
of the pattern.  Inductively, we maintain auxiliary counters of all of the
prefixes of the pattern (motif).  Second, we also require that all edges in the
motif be at most $\delta$ time apart.  Thus, we use the notion of a moving time
window such that any two edges in the time window are at most $\delta$ time
apart.  The auxiliary counters now keep track of only the subsequences occurring
within the current time window.  Last, it is important to note that the
algorithm only \emph{counts} the number of instances of motifs rather
than \emph{enumerating} them.  This is critical to making the algorithm fast.

\begin{algorithm}[tb]\algoptions
  \SetKwFunction{increment}{IncrementCounts}
  \SetKwFunction{decrement}{DecrementCounts}  
  \KwIn{Sequence $S$ of edges $(e_1 = (u_1, v_1), t_1), \ldots,$ $(e_L, t_L)$ with
  $t_1 < \ldots < t_L$, time window $\delta$
  }
  \KwOut{Number of instances of each $l$-edge $\delta$-temporal motif $M$ contained in the sequence}
  $\tstart \leftarrow 1$,  $\counts \leftarrow$ Counter(default = 0)\;
  \For{$\textnormal{end} = 1, \ldots, L$}{
        \mycomment{Remove edges out of $\delta$ window}\;
  	\While{$t_{\tstart} + \delta < t_{\tend}$}{
            \decrement{$e_{\tstart}$},
	        $\tstart \pluseq 1$\;
	}
	\mycomment{Update counts with edge $(e_{\tend}, t_{\tend})$} \;
	\increment{$e_{\tend}$}\; \label{line:endfor}
  }
  \Return{$\counts$}\; \;
  \myproc{\decrement{$e$}}{
    $\counts[e] \minuseq 1$\;
      \For{\textnormal{$\suffix$ \KwTo $\counts.\keys$ of length $< l - 1$}}{
          $\counts[\concat(e, \suffix)] \minuseq \counts[\suffix]$\;
    }
 }
  \myproc{\increment{$e$}}{
      \For{$\prefix$ \KwTo $\counts.\keys.\reverse()$ of length $< l $}{
            $\counts[\concat(\prefix, e)] \pluseq \counts[\prefix]$
        }
      $\counts[e] \pluseq 1$\;
  }
  \dualcaption{General counting method for temporal motifs}{This routine
  counts the number of instances of all possible $l$-edge $\delta$-temporal
  motifs. The input is an ordered sequence of temporal edges.  We assume the
  keys of $\counts[\cdot]$ are accessed in order of length.}
  \label{alg:general}
\end{algorithm}

For simplicity of presentation, \cref{alg:general} counts \emph{all} possible
$l$-edge motifs that occur in a given sequence of edges.  The data structure
$\counts[\cdot]$ maintains auxiliary counts of all (ordered) patterns of length
at most $l$. Specifically, $\counts[e_1\cdots e_r]$ is the number of times the
subsequence $[e_1\cdots e_r]$ occurs in the current time window (if $r < l$) or
the number of times the subsequence has occurred within all time windows of
length $\delta$ (if $r = l$).  Crucially, we also assume the keys of $\counts[\cdot]$ are
accessed in order of length.  Moving the time window forward by adding a new
edge into the window, all edges $(e = (u, v), t)$ farther than $\delta$ time
from the new edge are removed from the window and the appropriate counts are
decremented (the \emph{DecrementCounts()} method)---first, the single edge
counts ($[e]$) are updated. Based on these updates, length-$2$ subsequences
formed with $e$ as its first edge are updated and so on, up through
length-($l-1$) subsequences. On the other hand, when an edge $e$ is added to the
window, similar updates take place, but in reverse order, from longest to
shortest subsequences, in order to increment counts in subsequences where $e$ is
the last edge (the \emph{IncrementCounts()} method). Importantly, length-$l$
subsequence counts are incremented in this step but never decremented.  As the
time window moves from the beginning to the end of the sequence of edges, the
algorithm accumulates counts of all length-$l$ subsequences in all possible time
windows of length $\delta$.

\definecolor{myred}{RGB}{228, 26, 28}
\definecolor{myblu}{RGB}{55, 126, 184}
\definecolor{mypur}{RGB}{152, 78, 163}
\newcommand{\inc}[1]{\textbf{\textcolor{myblu}{#1}}}
\newcommand{\dec}[1]{\textbf{\textcolor{myred}{#1}}}
\newcommand{\incdec}[1]{\textbf{\textcolor{mypur}{#1}}}
\begin{table}[tb]
\centering
\setlength{\tabcolsep}{3pt}
\begin{tabular}{lcccccccc}
  & 15s & 17s & 25s & 28s & 30s & 32s & 35s \\
  & $(c, a)$ & $(a, c)$ & $(a, b)$ & $(a, c)$ & $(a, c)$ & $(c, a)$ & $(a, c)$ \\ \midrule
  $\text{counts}[(a, b)]$ & 0 & 0 & \inc{1} & 1 & 1 & 1 & 1 \\
  $\text{counts}[(a, c)]$ & 0 & \inc{1} & 1 & \incdec{1} & \inc{2} & 2 & \inc{3} \\
  $\text{counts}[(c, a)]$ & \inc{1} & 1 & 1 & \dec{0} & 0 & \inc{1} & 1 \\
  $\text{counts}[(a,b)(a,c)]$ & 0 & 0 & 0 & \inc{1} & \inc{2} & 2 & \inc{3} \\
  $\text{counts}[(a,b)(a,c)(c,a)]$ & 0 & 0 & 0 & 0 & 0 & \inc{2} & 2 \\ \midrule
  start & 1 & 1 & 1 & \inc{3} & 3 & 3 & 3 \\
  end   & \inc{1} & \inc{2} & \inc{3} & \inc{4} & \inc{5} & \inc{6} & \inc{7} \\
\end{tabular}
\dualcaption{Example execution of \cref{alg:general} for counting instances of the
  $\delta$-temporal motif $M$ and network in \cref{fig:intro1}}{%
Each column shows the value of counters at the end of the for loop that
processes temporal edges (after \cref{line:endfor} in \cref{alg:general}).  Color indicates change in the variable: incremented
(\textcolor{myblu}{blue}), decremented (\textcolor{myred}{red}),
incremented and decremented (\textcolor{mypur}{purple}), or no change
(black).  At the end of execution, $\counts[(a,b)(a,c)(c,a)] = 2$ for the two
instances of the temporal motif $M$ with center node $a$.  Here we only show the
counters needed to count $M$, whereas \cref{alg:general} as presented would
count all 3-edge patterns in the sequence (maintaining 39 total counters).  However,
for any particular $l$-edge temporal network motif, \cref{alg:general} only
needs to maintain $O(l^2)$ counters.
}
\label{tab:general_alg_example}
\end{table}

While \cref{alg:general} maintains counts for all $l$-edge motifs, we could
simply maintain the $\counts[\cdot]$ data structure for the contiguous prefixes
of the motif $M$.  As an example, \cref{tab:general_alg_example} shows the
execution of \cref{alg:general} for a particular sequence of edges and a particular
motif, where counts of only the necessary contiguous subsequences of the motif are maintained.
In general, there are $O(l^2)$ contiguous subsequences of an $l$-edge motif $M$,
and there are $O(\lvert H \rvert^l)$ total keys in $\counts[\cdot]$, where
$\lvert H \rvert$ is the number of edges in the static subgraph $H$ induced by
$M$, in order to count all $l$-edge motifs in the sequence (i.e., not just motif
$M$).

We now analyze the complexity of the overall 3-step algorithm.  We assume that
the temporal graph $T$ has edges sorted by timestamps, which is reasonable if
edges are logged in their order of occurrence, and we also assume that we have pre-processed
$T$ so that we can access the sorted list of all edges between $u$ and $v$ in
$O(1)$ time.  Given an instance $H'$ of $H$, constructing the time-sorted
sequence $S$ in step 2 of the algorithm then takes
$O(\log({\lvert H \rvert}) {\lvert S \rvert})$
time (by merging sorted lists).  Each edge inputted to
\cref{alg:general} is processed exactly twice: once to increment counts when
it enters the time window and once to decrement counts when it exits the time
window.  As presented in \cref{alg:general}, each update changes $O(\vert H \vert^l)$
counters resulting in an overall complexity of $O(\vert H \vert^l {\lvert S \rvert})$.
However, one could modify \cref{alg:general} to only update counts for
contiguous subsequences of the sequence $M$, which would change $O(l^2)$
counters and have overall complexity $O(l^2 {\lvert S \rvert})$.  We are typically only
interested in small constant values of $\vert H \rvert$ and $l$ (for our
experiments in \cref{sec:tm_experiments}, $\lvert H \rvert \le 3$ and $l = 3$),
in which case the running time of \cref{alg:general} 
is linear in the size of the input to the algorithm, i.e., $O({\lvert S \rvert})$.
What remains is to identify all instances $H'$ of $H$.  Of course, this depends
on the structure of $H$.  For example, if $H$ is an edge, this can be done in
linear time.  If $H$ is a triangle, this can be done in $O(a \lvert G \rvert)$ time,
where $\lvert G \rvert$ is the number of edges in the static graph $G$ induced
by $T$ and $a$ is the arboricity of $G$~\cite{chiba1985arboricity}.

In the remainder of this section we analyze our 3-step algorithm with respect to
different types of motifs (2-node, stars, and triangles) and argue benefits and
deficiencies of the proposed framework.  We show that for $2$-node motifs, our
general counting framework takes time linear in the total number of edges $m$.
Since all the input data needs to be examined for computing exact counts, this
means the algorithm is optimal for $2$-node motifs. However, we also show that
for star and triangle motifs the algorithm is not optimal, which then motivates
us to design faster algorithms in \cref{sec:tm_faster_algs}.

\xhdr{General algorithm for 2-node motifs}
We first show how to map $2$-node motifs to the framework described above.  Any
induced graph $H$ of a $2$-node $\delta$-temporal motif is either a single or a
bidirectional edge.  In either case, it is straightforward to enumerate over all
instances of $H$ in the induced static graph.  This leads to the following procedure.
For each pair of nodes $u$ and $v$ for which there is at least one edge,
gather and sort the edges in either direction between $u$ and $v$
and call \cref{alg:general} with these edges.  To obtain the total motif count,
we simply need to sum the results from each call to \cref{alg:general}.

We only need to input each edge to \cref{alg:general} once, and
under the assumption that we can
access the sorted directed edges from one node to another in $O(1)$ time, the
merging of edges into sorted order takes linear time (merging 2 sorted lists).
Therefore, the total running time is $O(l^2m)$, where $l$ is the number
of edges in the motif, which is linear in the number of temporal edges $m$.  
We are mostly interested in small patterns, i.e., cases
when $l$ is a small constant.  Thus, this methodology is optimal (linear in the
input size, $m$) for counting $2$-node $\delta$-temporal motif instances.

\xhdr{General algorithm for star motifs}
Next, we consider $k$-node, $l$-edge star motifs $M$, whose induced static graph
$H$ consists of a center node and $k-1$ neighbors, where edges may occur in
either direction between the center node and a neighbor node.  For example, in
the top left corner of \cref{fig:three_edge_motifs}, $M_{1,1}$ is a star motif
with all edges pointing toward the center node (the orange node).  In such
motifs, the induced static graph $H$ contains at most $2(k - 1) = 2k - 2$ static
edges---one incoming and outgoing edge from the center node to each neighbor
node.  We have the following method for counting the number of instances of
$k$-node, $l$-edge star motifs. For each node $u$ in the static graph and for
each unique set of $k - 1$ neighbors, gather and sort the edges in either
direction between $u$ and the neighbors.  Then, count the number of instances of
$M$ using \cref{alg:general}.  The counts from each call to \cref{alg:general}
are summed over all center nodes and sets of $k - 1$ neighbors.

The major drawback of this approach is that we have to loop over each
size-$(k-1)$ neighbor set.  This can be prohibitively expensive even when $k =
3$ if the center node has large degree.  In \cref{sec:tm_faster_algs}, we will
design an algorithm that avoids this issue for the case when the star motif has
$k = 3$ nodes and $l = 3$ edges.


\definecolor{mygreen}{RGB}{27,158,119}%
\definecolor{myorange}{RGB}{217,95,2}%
\definecolor{mypurple}{RGB}{117,112,179}%
\begin{figure}[tb]
\centering
\resizebox {\columnwidth} {!} {\input{CH4-TKZ-embedded-three-edge-all.tex}}
\dualcaption{All $2$-node and $3$-node, $3$-edge $\delta$-temporal motifs}{%
The box in the bottom left contains the four $2$-node motifs, and the
two boxes in the top right contain the eight triangles.  The 24 other motifs are stars.
We index the 36 motifs $M_{i,j}$ by 6 rows and 6 columns.  The first edge in
each motif is from the \textcolor{mygreen}{green} node to the
\textcolor{myorange}{orange} node.  The second edge is the same
along each row, and the third edge is the same along each column.
We develop fast algorithms for counting the number of instances
of each of these motifs.
}
\label{fig:three_edge_motifs}
\end{figure}

\clearpage

\xhdr{General algorithm for triangle motifs}
With triangle motifs, the induced graph $H$ consists of 3 nodes and at least one
directed edge between any pair of nodes (see \cref{fig:three_edge_motifs}
for all eight of the $3$-edge triangle motifs).  The induced static graph $H$ of $M$
contains at least three and at most six static edges.  A straightforward
algorithm for counting $l$-edge triangle motifs in a temporal graph $T$ is:
\begin{enumerate}
\item Use a fast static graph triangle enumeration algorithm to find all
  triangles in the static graph $G$ induced by $T$ (using, e.g.,
  one of the methods analyzed by \citet{latapy2008main}).
\item For each triangle, merge all temporal edges from each pair of nodes to
  get a time-sorted list of edges.  Use \cref{alg:general} to count the
  number of instances of $M$.
\end{enumerate}
This approach is problematic as the edges between a pair of nodes may
participate in many triangles. \Cref{fig:worstcase} shows a worst-case example
for the motif $M = (w, u), (w, v), (u, v)$ with $\delta = \infty$.
In the figure, the timestamps are ordered by their index. There are
$m-2n$ edges between $u$ and $v$, and each of these edges forms an instance of
$M$ with every $w_i$. Thus, the overall worst-case running time of the algorithm
is $O(\triangleenum + \min(m\tau, mn))$, where $\triangleenum$ is the time to enumerate
the number of triangles $\tau$ in the static graph.  (The $mn$ running time comes
from the case that each edge can participate in at most $n$ triangles).
In the following section, we devise an algorithm that significantly reduces the dependency on $\tau$ from
linear to sub-linear (specifically, to $O(\triangleenum + \min(m\sqrt{\tau}, mn)$)
when there are $l = 3$ edges.


\begin{figure}[t]
\begin{center}
\definecolor{mygreen}{RGB}{27,158,119}
\definecolor{myorange}{RGB}{217,95,2}
\definecolor{mypurple}{RGB}{117,112,179}

\renewcommand*{\VertexInterMinSize}{20pt}
\renewcommand*{\VertexSmallMinSize}{20pt}

\begin{tikzpicture}
\renewcommand*{\VertexLineColor}{myorange}
\renewcommand*{\VertexLightFillColor}{myorange}
\renewcommand*{\VertexTextColor}{white}
\Vertex[x=3.5,y=4.5]{u}
\renewcommand*{\VertexLineColor}{mypurple}
\renewcommand*{\VertexLightFillColor}{mypurple}
\Vertex[x=8.5,y=4.5]{v}

\node[draw=none] (vd) at (7,2) {\large $\ldots$};

\renewcommand*{\VertexLineColor}{mygreen}
\renewcommand*{\VertexLightFillColor}{mygreen}
\node[VertexStyle](w1) at (2,2) {$w_1$};
\node[VertexStyle](w2) at (4.5,2) {$w_2$};
\node[VertexStyle](wn) at (10,2) {$w_n$};

\SetUpEdge[style={->,>=triangle 60, black, bend left=10},
  label={\footnotesize $t_{2n+1}, \ldots, t_{m}$}]
\Edge(u)(v)
\SetUpEdge[style={->,>=triangle 60, black}]

\Edge[label=$t_1$](w1)(u)
\Edge[label=$t_2$](w1)(v)
\Edge[label=$t_3$](w2)(u)
\Edge[label=$t_4$, style = {->, >=triangle 60,bend right=10}](w2)(v)
\Edge[label=$t_{2n-1}$, style={->,>=triangle 60,bend right=15}](wn)(u)
\Edge[label=$t_{2n}$](wn)(v)

\node[draw=none] (vd) at (7,2) {\large $\ldots$};

\end{tikzpicture}
\end{center}
\dualcaption{Worst-case example for counting triangular motifs with
\cref{alg:general}}{When using the general algorithm (\cref{alg:general})
 to count instances of motif
$M_{4,5}$ (\cref{fig:three_edge_motifs}) with $\delta = \infty$, we would need to analyze
the edges $t_{2n+1}, \ldots, t_m$ for each of the $n$ nodes $w_1, \ldots, w_n$
forming a triangle with nodes $u$ and $v$.}
\label{fig:worstcase}
\end{figure}

\subsection{Faster algorithms}\label{sec:tm_faster_algs}

The general counting algorithm from the previous section counts the number of
instances of any $k$-node, $l$-edge $\delta$-temporal motif, and is also optimal
for $2$-node motifs.  However, the computational cost may be expensive for other
motifs such as stars and triangles.  We now develop specialized algorithms that
count certain motif classes faster.  Specifically, we design faster algorithms
for counting all 3-node, 3-edge star and 3-node, 3-edge triangle motifs
(\cref{fig:three_edge_motifs} illustrates these motifs).  Our algorithm for
stars is linear in the input size, so it is optimal up to constant factors.

\xhdr{Fast algorithm for 3-node, 3-edge stars}
With $3$-node, $3$-edge star motifs, the drawback of the previous algorithmic
approach is the need to loop over all pairs of neighbors given a center node.
Instead, we will count all instances of star motifs for a given center node in
just a single pass over the edges adjacent to the center node.


\begin{figure}[t]
  \centering
  \begin{tikzpicture}

\colorlet{TufteRed}{red!80!black}
\newcommand{\ax}{0.0}
\newcommand{\ay}{0.0}
\newcommand{\bx}{-1.25}
\newcommand{\by}{1.5}
\newcommand{\cx}{1.25}
\newcommand{\cy}{1.5}
\newcommand{\tfirst}{1}
\newcommand{\tsecond}{2}
\newcommand{\tthird}{3}
\newcommand{\figfontsize}{\large}
\newcommand{\labelfontsize}{\normalsize}
\newcommand{\xshift}{4}

\renewcommand*{\VertexLineColor}{white}
\node[draw=none] (pre) at (\ax,\by) {\Large pre};
\node[draw=none] (pre) at (\xshift+\ax,\by) {\Large post};
\node[draw=none] (pre) at (2*\xshift+\ax,\by) {\Large peri};

\renewcommand*{\VertexInterMinSize}{11pt}
\renewcommand*{\VertexSmallMinSize}{11pt}
\definecolor{mygreen}{RGB}{27,158,119}
\definecolor{myorange}{RGB}{217,95,2}
\definecolor{mypurple}{RGB}{117,112,179}

\SetVertexNoLabel
\renewcommand*{\VertexLineColor}{mygreen}
\renewcommand*{\VertexLightFillColor}{mygreen}
\Vertex[x=\ax,y=\ay]{a}
\renewcommand*{\VertexLineColor}{myorange}
\renewcommand*{\VertexLightFillColor}{myorange}
\Vertex[x=\bx,y=\by]{b}
\renewcommand*{\VertexLineColor}{mypurple}
\renewcommand*{\VertexLightFillColor}{mypurple}
\Vertex[x=\cx,y=\cy]{c}
\SetUpEdge[style={-,ultra thick,black}, label={\labelfontsize $\tfirst,\tsecond$}]\Edge(a)(b)
\SetUpEdge[style={-,ultra thick,black}, label={\labelfontsize $\tthird$}]\Edge(a)(c)

\renewcommand*{\VertexLineColor}{mygreen}
\renewcommand*{\VertexLightFillColor}{mygreen}
\Vertex[x=\xshift+\ax,y=\ay]{a}
\renewcommand*{\VertexLineColor}{myorange}
\renewcommand*{\VertexLightFillColor}{myorange}
\Vertex[x=\xshift+\bx,y=\by]{b}
\renewcommand*{\VertexLineColor}{mypurple}
\renewcommand*{\VertexLightFillColor}{mypurple}
\Vertex[x=\xshift+\cx,y=\cy]{c}
\SetUpEdge[style={-,ultra thick,black}, label={\labelfontsize $\tfirst$}]\Edge(a)(b)
\SetUpEdge[style={-,ultra thick,black}, label={\labelfontsize $\tsecond,\tthird$}]\Edge(a)(c)

\renewcommand*{\VertexLineColor}{mygreen}
\renewcommand*{\VertexLightFillColor}{mygreen}
\Vertex[x=2*\xshift+\ax,y=\ay]{a}
\renewcommand*{\VertexLineColor}{myorange}
\renewcommand*{\VertexLightFillColor}{myorange}
\Vertex[x=2*\xshift+\bx,y=\by]{b}
\renewcommand*{\VertexLineColor}{mypurple}
\renewcommand*{\VertexLightFillColor}{mypurple}
\Vertex[x=2*\xshift+\cx,y=\cy]{c}
\SetUpEdge[style={-,ultra thick,black}, label={\labelfontsize $\tfirst,\tthird$}]\Edge(a)(b)
\SetUpEdge[style={-,ultra thick,black}, label={\labelfontsize $\tsecond$}]\Edge(a)(c)

\end{tikzpicture}
  \dualcaption{The three classes of $3$-node, $3$-edge star temporal motifs}{Each of the
    $3$-edges in each class can be directed outward from or inward towards the
    center node.  Thus, each class consists of $8$ motifs, and there are 24
    total $3$-node, $3$-edge star temporal motifs (see
    \cref{fig:three_edge_motifs} for illustrations of all of them).}
\label{fig:star_classes}
\end{figure}

We use a dynamic programming approach for counting star motifs.  First, note
that every temporal edge in a star with center $u$ is defined by
\begin{enumerate} 
\item a neighbor node,
\item a direction of the edge (outward from or inward toward $u$), and
\item the timestamp.
\end{enumerate}
With this characterization, there are 3 classes of star motifs with 3 nodes and 3 edges
(illustrated in \cref{fig:star_classes}).
\begin{enumerate}
\item ``pre'' -- the first two edges contain node $u$ and neighbor $v$, and the third
edge contains node $u$ and a different neighbor $w$.
\item ``post'' -- the last two edges contain node $u$ and neighbor $w$, and the first
edge contains node $u$ and a different neighbor $v$.
\item ``peri'' -- the first and third edges contain node $u$ and neighbor $v$, and the second
edge contains node $u$ and a different neighbor $w$.
\end{enumerate}
Each class consists of $2^3 = 8$ motifs corresponding to the different combinations
of the direction of the edges from $u$ to the neighbors (see \cref{fig:three_edge_motifs}, which shows all of the $3 \cdot 8 = 24$ 3-node, 3-edge stars).

Now, suppose we process the time-ordered sequence of all edges containing the center
node $u$.  We maintain the following counters when processing an edge with
timestamp $t_j$:
\begin{itemize}
\item $\presum[\dir1, \dir2]$ is the number of sequentially ordered pairs of
  edges in $[t_j - \delta, t_j)$ where the first edge points in direction
    $\dir1$ and the second edge points in direction $\dir2$.  We can think of a
    direction as being outward from $u$ or inward toward $u$.
    
\item $\postsum[\dir1, \dir2]$ is the number of sequentially ordered
pairs of edges in $(t_j, t _j + \delta]$ where the first edge points in
direction $\dir1$ and the second edge points in direction $\dir2$.

\item $\middlesum[\dir1, \dir2]$ is the number of pairs of edges where the first
  edge is in direction $\dir1$ and occurred at time $t < t_j$, the second edge
  is in direction $\dir2$ and occurred at time $t' > t_j$, and $t' - t \le \delta$.
  
\end{itemize}

If we are currently processing an edge, the ``pre'' class gets
$\presum[\dir1, \dir2]$ new motif instances for any choice of directions $\dir1$
and $\dir2$ (specifying the first two edge directions) and the current edge
serves as the third edge in the motif (hence specifying the third edge
direction).  Similar updates are made with the $\postsum[\cdot,\cdot]$ and
$\middlesum[\cdot,\cdot]$ counters, where the current edge serves as the first
or second edge in the motif, respectively.

In order for our algorithm to be fast, we must be able to efficiently update
these counters when processing edges.  To aid in this, we introduce two
additional counters:
\begin{itemize}
\item $\prenodecount[\dir, v_i]$ is the number of times node $v_i$ has appeared
  in an edge with $u$ with direction $\dir$ in the time window $[t_j - \delta, t_j)$.
  We can think of the direction as being outward from $u$ to $v_i$ or
  inward from $v_i$ toward $u$.
  
\item $\postnodecount[\dir, v_i]$ is the number of times node $v_i$ has appeared
  in an edge with $u$ with direction $\dir$ in the time window $(t_j, t_j + \delta]$.
\end{itemize}

Following the ideas of \cref{alg:general}, it is easy to update these counters
when we process a new edge.  Consequently, $\presum[\cdot,\cdot]$,
$\postsum[\cdot,\cdot]$, and $\middlesum[\cdot,\cdot]$ can be maintained when
processing an edge with just a few simple primitives:
\begin{itemize}
\item
  \emph{Push()} and \emph{Pop()} update the counts for
  $\prenodecount[\cdot,\cdot]$, $\postnodecount[\cdot,\cdot]$,
  $\presum[\cdot,\cdot]$ and $\postsum[\cdot,\cdot]$ when edges enter and leave
  the time windows $[t_j - \delta, t_j)$ and $(t_j, t_j + \delta]$.
\item
  \emph{ProcessCurrent()} updates motif counts involving the current edge and
  updates the counter $\middlesum[\cdot,\cdot]$.
\end{itemize}

We describe the general procedure in \cref{alg:fast_framework}, which calls the
subroutines \emph{Push()}, \emph{Pop()}, and
\emph{ProcessCurrent()}, and \cref{alg:fast_wedges} implements
these subroutines.  The $\globalpre[\cdot,\cdot,\cdot]$,
$\globalpost[\cdot,\cdot,\cdot]$, and $\globalmid[\cdot,\cdot,\cdot]$ counters
in \cref{alg:fast_wedges} maintain the counts of the three different classes of
stars described above.  The edges input to each call of \cref{alg:fast_wedges}
assumes that an edge consists of the following two pieces of information:
\begin{enumerate}
 \item the neighbor node $\nbr$ of $u$ and
 \item the direction $\dir$ of the edge.
\end{enumerate}
The timestamps are all handled in \cref{alg:fast_framework}.

We have intentionally separated the logic of
the \emph{Push()}, \emph{Pop()}, and \emph{ProcessCurrent()}
from \cref{alg:fast_framework}.  The reason for this is that our fast $3$-node,
$3$-edge triangle counting procedure follows the same logic
of \cref{alg:fast_framework}, only with different implementations of these
subroutines.  We will get to this idea shortly.

Finally, we note that our counting scheme incorrectly includes instances of
$2$-node motifs such as $M = (u, v_i)$, $(u, v_i)$, $(u, v_i)$, but we can use
our fast $2$-node motif counting algorithm to account for this.  Putting
everything together, we have the following procedure:
\begin{enumerate}
\item For each node $u$ in the temporal graph $T$, get a time-ordered
  list of all edges containing $u$.
  
\item Use \cref{alg:fast_framework,alg:fast_wedges} to count
  star motif instances.
  
\item For each neighbor $v$ of a star center $u$, subtract the 2-node motif
  counts using \cref{alg:general}.
\end{enumerate}

If the $m$ temporal edges of $T$ are time-sorted, the first step takes linear
time.  The second and third steps also run in linear time in the input size.
Each edge is used in steps (ii) and (iii) exactly twice: once for each end point
as the center node.  Thus, the overall complexity of the algorithm is $O(m)$.
This is optimal (linear in the size of the dataset).  We also need $O(n)$
memory (in addition to storing the graph) to maintain the counters in \cref{alg:fast_framework}.


\begin{algorithm}[tb]\algoptions
  \SetKwFunction{push}{Push}
  \SetKwFunction{pop}{Pop}
  \SetKwFunction{process}{ProcessCurrent}
  \KwIn{Sequence of edges $(e_1 = (u_1, v_1), t_1), \ldots, (e_L, t_L)$ with
  $t_1 < \ldots < t_L$, time window $\delta$}
  Initialize counters $\prenodecount$, $\postnodecount$, $\middlesum$, $\presum$, $\postsum$\;
  $\tstart \leftarrow 1$, $\tend \leftarrow 1$\;
  \For{$j = 1, \ldots, L$}{
      \mycomment{Adjust counts in time window $[t_j - \delta, t_j)$} \;
      \While{$t_{\tstart} + \delta < t_j$}{
         \pop{$\prenodecount$, $\presum$, $e_{\tstart}$}\;
         $\tstart \pluseq 1$
       }
      \mycomment{Adjust counts in time window $(t_j, t_j + \delta]$} \;
      \While{$t_{\tend} \le t_j + \delta$}{
          \push{$\postnodecount$, $\postsum$, $e_{\tend}$}\;
           $\tend \pluseq 1$\;
       }
      \mycomment{Handle current edge} \;
      \pop{$\postnodecount$, $\postsum$, $e_j$}\;
       \process{$e_j$}\;
       \push{$\prenodecount$, $\presum$, $e_j$}\;
  }
  \dualcaption{Framework for fast counting of $3$-node, $3$-edge star and
    triangle temporal motifs}{The fast star counting method
    (\cref{alg:fast_wedges}) and triangle counting method
    (\cref{alg:fast_triangles}) implement different versions of the
    \emph{Push()}, \emph{Pop()}, and \emph{ProcessCurrent()} subroutines.}
  \label{alg:fast_framework}
\end{algorithm}


\begin{algorithm}[tb]\algoptions
  \SetKwFunction{push}{Push}
  \SetKwFunction{pop}{Pop}
  \SetKwFunction{process}{ProcessCurrent}
  Initialize counters $\globalpre$, $\globalpost$, $\globalmid$\;
  \myproc{\push{$\nodecount$, $\sumtxt$, $e = (\nbr, \dir)$}}{
    $\sumtxt[:, \dir] \pluseq \nodecount[:, \nbr]$\;
    $\nodecount[\dir, \nbr] \pluseq 1$\;
  }
  \myproc{\pop{$\nodecount$, $\sumtxt$, $e = (\nbr, \dir)$}}{
    $\nodecount[\dir, \nbr] \minuseq 1$\;
    $\sumtxt[\dir, :] \minuseq \nodecount[:, \nbr]$\;
  }
  \myproc{\process{$e = (\nbr, \dir)$}}{
    $\middlesum[:, \dir] \minuseq \prenodecount[:, \nbr]$\;
    $\globalpre[:,:,\dir] \pluseq \presum[:,:]$\;
    $\globalpost[\dir, : , :] \pluseq \postsum[:,:]$\;
    $\globalmid[:, \dir, :] \pluseq \middlesum[:,:]$\;
    $\middlesum[\dir, :] \pluseq \postnodecount[:, \nbr]$\;    
  }
  \Return{$\globalpre$, $\globalpost$, $\globalmid$}
  \dualcaption{Fast counting of $3$-node, $3$-edge temporal star motifs}{These
    are implementations of the subroutines \emph{Push()}, \emph{Pop()}, and
   \emph{ProcessCurrent()} used in \cref{alg:fast_framework}.
   Temporal edges are specified by a neighbor $\nbr$ (the neighbor of
   some given node $u$) and a direction $\dir$ (incoming to $u$ or outgoing from $u$).
   The ``:'' notation represents a selection of all indices in an array.}
  \label{alg:fast_wedges}
\end{algorithm}

\clearpage

\xhdr{Fast algorithm for 3-edge triangle motifs}
While our fast 3-node, 3-edge star counting routine relied on counting motif instances for all
edges adjacent to a given \emph{node}, our fast triangle algorithm is based on
counting instances for all edges adjacent to a given \emph{pair of nodes}.
Specifically, given a pair of nodes $u$ and $v$ and a list of common neighbors
$w_1, \ldots, w_d$, we count the number of motif instances for triangles $(w_i, u, v)$,
$i = 1, \ldots, d$.  Given all of the temporal edges in these $d$ static triangles,
the counting procedures are nearly identical to the case of stars.
We use the same general counting method (\cref{alg:fast_framework}), but the behavior of the
subroutines \emph{Push()}, \emph{Pop()}, and \emph{ProcessCurrent()} depends on
whether or not the edge is between $u$ and $v$.

\begin{algorithm}[tb]\algoptions
  \SetKwFunction{push}{Push}
  \SetKwFunction{pop}{Pop}
  \SetKwFunction{process}{ProcessCurrent}
  Initialize counter $\globalall$\;
  \myproc{\push{$\nodecount$, $\sumtxt$, $e = (\nbr, \dir, \uorv)$}}{
    \lIf{$\nbr \in \{u, v\}$}{\Return}
      $\sumtxt[\muorv, :, \dir] \pluseq \nodecount[\muorv, :, \nbr]$\;
    $\nodecount[\uorv,\dir,\nbr] \pluseq 1$\;
  }
  \myproc{\pop{$\nodecount$, $\sumtxt$, $e = (\nbr, \dir, \uorv)$}}{
    \lIf{$\nbr \in \{u, v\}$}{\Return}
    $\nodecount[\uorv, \dir, \nbr] \minuseq 1$\;
    $\sumtxt[\uorv, \dir, :] \minuseq \nodecount[\muorv, :, \nbr]$
  }
  \myproc{\process{$e = (\nbr, \dir, \uorv)$}}{
    \uIf{$\nbr \notin \{u, v\}$}{
      $\middlesum[\muorv, :, \dir] \minuseq \prenodecount[\muorv, :, \nbr]$\;
    $\middlesum[\uorv, \dir, :] \pluseq \postnodecount[\muorv, :, \nbr]$\;   
    }\Else{
      $\utov = (\nbr==u)$ XOR $\dir$\;
      \For{$0 \le i, j, k \le 1$}{
        $\globalall[i, j, k] \pluseq \middlesum[j \text{ XOR } \utov, i, k]$\;
        \hspace{2.85cm}$+ \postsum[i \text{ XOR } \utov, j, 1 - k]$\;
        \hspace{2.85cm}$+ \presum[k \text{ XOR } \utov, 1- i, 1 - j]$\;
      }
    }
  }
  \mycomment{$\globalall$ key map to \cref{fig:three_edge_motifs}:}\;
  \mycomment{$[0, 0, 0] \mapsto M_{1,3}$, $[0, 0, 1] \mapsto M_{1, 4}$, $[0, 1, 0] \mapsto M_{2, 3}$, $[0, 1, 1] \mapsto M_{2,4}$}\;
  \mycomment{$[1, 0, 0] \mapsto M_{3, 5}$, $[1, 0, 1] \mapsto M_{3, 6}$, $[1, 1, 0] \mapsto M_{4, 5}$, $[1, 1, 1] \mapsto M_{4, 6}$}\;
\Return{$\globalall$}
\dualcaption{Fast temporal triangle motif counting subroutine}{These are implementations of the subroutines \emph{Push()}, \emph{Pop()}, and \emph{ProcessCurrent()} in \cref{alg:fast_framework}
  for counting $3$-edge triangle motifs containing a specified pair of nodes $u$ and $v$.  Temporal
  edges are specified by a neighbor $\nbr$, a direction $\dir$ (incoming or
  outgoing), an indicator ``$\uorv$'' denoting if the edge connects to $u$ or
  $v$.  The ``:'' notation represents a selection of all
  indices in an array.}
\label{alg:fast_triangles}
\end{algorithm}

These methods are implemented in \cref{alg:fast_triangles}.  The input
is a list of edges adjacent to a given pair of neighbors $u$ and $v$, where each
edge consists of three pieces of information:
\begin{enumerate}
 \item a neighbor node $\nbr$,
 \item an indicator of whether or not the node $\nbr$ connects to node $u$ or node $v$, and
 \item the direction $\dir$ of the edge.
\end{enumerate}

The node counters ($\prenodecount[\cdot,\cdot,\cdot]$ and
$\postnodecount[\cdot,\cdot,\cdot]$) in \cref{alg:fast_triangles} have an extra
dimension compared to \cref{alg:fast_wedges} to indicate whether the counts
correspond to edges containing node $u$ or node $v$ (denoted by ``$\uorv$'').
For example, $\prenodecount[u, \dir, w_i]$ is the number of times node $w_i$ has
appeared in an edge with $u$ with directed $\dir$ (incoming to $u$ or outgoing
from $u$) in the time window $[t_j - \delta, t_j)$.  Similarly, the sum counters
($\presum[\cdot,\cdot,\cdot]$, $\middlesum[\cdot,\cdot,\cdot]$ and
$\postsum[\cdot,\cdot,\cdot]$) have an extra dimension to denote if the first
edge is incident on node $u$ or node $v$.  For example,
$\presum[u, \dir1, \dir2]$ is the number of sequentially ordered pairs of edges
containing some node $w_i$ in the time window $[t_j - \delta, t_j)$, where the
first edge is adjacent to $u$ in direction $\dir1$ and the second edge is
adjacent to $v$ in direction $\dir2$.


\begin{algorithm}[t!]\algoptions
   Enumerate all triangles in the undirected static graph $G$ of $T$\;
   $\rho \leftarrow $ number of temporal edges on each static edge $e$ in $G$\;
   \ForEach{static triangle $\Delta = (e_1$, $e_2$, $e_3)$ \KwTo $G$}{
       $e_{\max} = \arg\max_{e \in \Delta}\{\rho_{e}\}$\;
       \ForEach{$e$ \KwTo $\Delta$}{
       \mycomment{$e' \in \ell_{e}$ if $e \in \Delta$ and $\Delta$ assigned to $e'$}\;
       Append $e_{\max}$ to edge list $\ell_{e}$\;
       }
   }
   \ForEach{temporal edge $(e=(u,v), t)$ \KwTo time-sorted $T$}{
       \ForEach{$e'$ \KwTo $\ell_{e}$}{
       \mycomment{$a_{e'}$ contains $(e,t)$ if $e$ belongs to a static triangle assigned to $e'$}\;
       Append $(e, t)$ to temporal-edge list $a_{e'}$\;
       }
   }
   \ForEach{undirected edge $e$ \KwTo $G$}{
       Update counts using \cref{alg:fast_triangles} with input $a_{e}$\;
   }
  \dualcaption{Fast temporal triangle motif counting}{This is a sketch of the fast algorithm
    for counting the number of $3$-edge $\delta$-temporal
    triangle motifs in a temporal graph $T$.}
  \label{alg:triangle}
\end{algorithm}

\clearpage

Recall that the problem with counting triangle motifs by the general framework
in \cref{alg:general} is that a pair of nodes with many edges might have to be
counted for many triangles in the graph.  However, with
\cref{alg:fast_triangles}, we can simultaneously count all triangles
adjacent to a given pair of nodes.  What remains is that we must assign each
triangle in the static graph to a pair of nodes.  Here, we propose to assign
each triangle to the pair of nodes in that triangle containing the largest
number of edges (this is sketched in \cref{alg:triangle}).  The core
idea of this assignment procedure is that we should simultaneously
process as many triangles as possible for pairs of nodes with many
edges. The following theorem says that this is faster than simply counting for
each triangle (described in \cref{sec:tm_general_framework}).  Specifically, we
reduce $O(\triangleenum + \min(m\tau, mn))$ complexity to $O(\triangleenum + \min(m\sqrt{\tau}, mn))$, where
$\tau$ is the number of triangles in the graph.

\begin{theorem}\label{thm:fast_triangles}
In the worse case, \cref{alg:triangle} runs in time
$O(\triangleenum + \min(m\sqrt{\tau}, mn))$
where $\triangleenum$ is the time to enumerate all triangles in the static graph
$G$, $m$ is the total number of temporal edges, $n$ is the number of nodes, and
$\tau$ is the number of static triangles in $G$.
\end{theorem}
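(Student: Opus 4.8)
\textbf{Proof plan for \Cref{thm:fast_triangles}.}

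The plan is to separately bound the time for the triangle-enumeration step and the time for the per-edge-pair counting step, and then combine. The first step calls a static triangle enumeration routine, which costs $\triangleenum$ by assumption, so the only real work is to bound the aggregate cost of running \cref{alg:fast_triangles} over all edge pairs to which triangles have been assigned. Each invocation of \cref{alg:fast_triangles} on the temporal edges adjacent to a pair $(u,v)$ runs in time linear in the number of temporal edges it receives: it is exactly the star/triangle framework of \cref{alg:fast_framework}, which makes two passes (push/pop) over its input with $O(1)$ work per edge (the inner loops in \emph{ProcessCurrent} are over the constant index set $\{0,1\}^3$). So the total running time after enumeration is, up to constants, $\sum_{(u,v)} \lvert a_{(u,v)} \rvert$, where $a_{(u,v)}$ is the temporal-edge list handed to the pair $(u,v)$.

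First I would decompose $\lvert a_{(u,v)} \rvert$: it contains all temporal edges along the static edge $(u,v)$ itself, plus, for each static triangle $\Delta$ assigned to $(u,v)$, all temporal edges along the other two static edges of $\Delta$. Summing the first part over all pairs gives $O(m)$. For the second part, the key quantity is how many static triangles get assigned to a given static edge $e=(u,v)$; call this $c_e$. Then the second part contributes $\sum_{e}\, c_e \cdot (\text{temporal edges on the two companion edges of each assigned triangle})$. Bounding each companion edge's temporal-edge count by $\rho_{e'} \le \rho_{e_{\max}}$ is \emph{not} directly available, but the assignment rule---assign $\Delta$ to the pair holding the largest number of temporal edges---means that the companion edges of any triangle assigned to $e$ each carry at most $\rho_e$ temporal edges. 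Hence the second part is at most $\sum_e c_e \cdot 2\rho_e$. Now I need two facts: (a) $\sum_e c_e = \tau$ (every triangle is assigned exactly once), and (b) $\sum_e \rho_e = m$ (every temporal edge lies on exactly one static edge). The remaining task is to show $\sum_e c_e \rho_e = O(\min(m\sqrt\tau,\, mn))$.

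For the $mn$ bound: a static edge $e=(u,v)$ is contained in at most $n-2$ triangles, so $c_e \le n$, giving $\sum_e c_e\rho_e \le n \sum_e \rho_e = nm$. For the $m\sqrt\tau$ bound, I expect to argue as follows. Because $\Delta$ is assigned to its edge of maximum temporal-degree, if $\Delta$ is assigned to $e$ then $\rho_e \ge \rho_{e'}$ for the other two edges $e'$ of $\Delta$; in particular $\rho_e$ is at least the average temporal-degree over the three edges of $\Delta$, so $3\rho_e$ is at least the total temporal-edge count on $\Delta$. This lets me charge $c_e\rho_e$ against triangles. The cleanest route is a counting argument: split edges into ``heavy'' ($c_e \ge \sqrt\tau$) and ``light'' ($c_e < \sqrt\tau$). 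There are at most $\sqrt\tau$ heavy edges (since $\sum_e c_e = \tau$), and for each the number of temporal edges on its companion edges is controlled; for light edges $c_e < \sqrt\tau$ gives $\sum_{\text{light}} c_e \rho_e < \sqrt\tau \sum_e \rho_e = m\sqrt\tau$. The main obstacle is handling the heavy edges carefully---specifically, showing the companion-edge temporal-degrees summed over the (few) heavy edges also contribute only $O(m\sqrt\tau)$, which is where the ``assign to the pair with most edges'' rule has to be invoked precisely, pairing each heavy edge's contribution with the triangles assigned to it and using $\rho_{e'} \le \rho_e$ together with $\sum_{e}\rho_e = m$. Once both regimes are bounded by $O(m\sqrt\tau)$ and the trivial $O(mn)$ bound is in hand, taking the minimum and adding $\triangleenum$ and the $O(m)$ from the first part finishes the proof.
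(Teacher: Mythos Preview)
Your decomposition starts well, but the reduction to bounding $\sum_e c_e\rho_e$ loses too much and cannot yield the $m\sqrt{\tau}$ bound. The step where you replace each companion edge's temporal count $\rho_{e'}$ by $\rho_e$ is a valid upper bound, but the resulting quantity $\sum_e c_e\rho_e$ is \emph{not} $O(m\sqrt{\tau})$ in general. Consider the paper's own worst-case fan: edge $e=(u,v)$ carries $m-2n$ temporal edges, and each of the $n$ nodes $w_i$ has single temporal edges to $u$ and $v$. All $\tau=n$ triangles are assigned to $e$, so $c_e=\tau$ and $\rho_e=m-2\tau$, giving $c_e\rho_e=\Theta(m\tau)$ when $\tau=o(m)$. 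Taking $\tau=\sqrt{m}$ makes this $\Theta(m^{3/2})$, whereas $m\sqrt{\tau}=m^{5/4}$. Yet the actual work in that single call is only $\rho_e+2\tau\cdot 1=m$. So the heavy-edge obstacle you flagged is not a detail to be filled in---the target inequality is simply false for the surrogate you chose.

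The paper avoids this by tracking the dual quantity: for each static edge $e$, let $z_e$ be the number of distinct calls in which $e$'s timestamps are processed. The total work is exactly $\sum_e\rho_e z_e$ (no slack from replacing $\rho_{e'}$ by $\rho_e$). The assignment rule then gives the structural constraint you are missing: if $e$'s timestamps are used in the call for $e'$, then $\rho_{e'}\ge\rho_e$, so after sorting edges as $\rho_1\ge\rho_2\ge\cdots$ one has $z_i\le i$. Combined with $\sum_i z_i\le 3\tau$ and $\sum_i\rho_i=m$, a rearrangement/optimization argument shows $\sum_i\rho_i z_i=O(m\sqrt{\tau})$: the worst case packs $z_i=i$ for $i\le J$ with $J=\Theta(\sqrt{\tau})$ and spreads $\rho$ uniformly. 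Your $c_e$ counts triangles assigned \emph{to} $e$; the paper's $z_e$ counts calls that \emph{use} $e$. Switching to $z_e$ and exploiting $z_i\le i$ is the missing idea.
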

\begin{proof}

Let $\rho_e$ be the number of temporal edges along the static edge $e$, and
let $z_e \ge 1$ be the number of times that timestamps along this static edge
are used in a call to \cref{alg:fast_triangles} by
\cref{alg:triangle}.  Since \cref{alg:fast_triangles} runs in
linear time in the number of edges in its input, the total running time is on
the order of $\sum_{e}\rho_ez_e$.  Each static edge appears in at
most $n$ triangles and has at most $m$ temporal edges, so $\sum_{e}\rho_ez_e \le mn$.
We now consider the case when $\sqrt{\tau} \le n$.

The $\rho_e$ are fixed, and we wish to find the values of $z_e$ that maximize
the summation.  Without loss of generality, write $\{\rho_e\}$ by
$\rho_1 \ge \rho_2 \ge \ldots \ge \rho_r$, where $r$ is the number of
static edges in the static graph induced by the temporal graph.  Consequently,
$z_i \le i$, $i = 1, \ldots, r$.  There are $\tau$ static triangles,
so \cref{alg:fast_triangles} will look at the timestamps along $3\tau$ static
edges.  Thus, $\sum_{i}z_i \le 3\tau$.  The summation
$\sum_{i=1}^{r}\rho_iz_i$ is maximized when $z_1 = 1$, $z_2 = 2$, and so on up
to some index $J = O(\sqrt{\tau})$ for which
$\sum_{i=1}^{J}z_i = \sum_{i=1}^{J}i = 3\tau$.
Now given that the $z_i$ are fixed and the $\rho_i$
are ordered, the summation is maximized when
$\rho_1 = \rho_2 = \ldots = \rho_j = m / j$.
In this case,
\[
\sum_{i}\rho_iz_i 
= \sum_{i=1}^{J}\frac{m}{J} \cdot i 
= \frac{m}{J}\sum_{i=1}^{J} 
= O\left(\frac{m}{J} \cdot J^2\right) 
= O(m\sqrt{\tau}).
\]
\end{proof}


\begin{figure}[t]
\begin{center}
\definecolor{mygreen}{RGB}{27,158,119}
\definecolor{myorange}{RGB}{217,95,2}
\definecolor{mypurple}{RGB}{117,112,179}

\renewcommand*{\VertexInterMinSize}{15pt}
\renewcommand*{\VertexSmallMinSize}{15pt}

\begin{tikzpicture}
\renewcommand*{\VertexLineColor}{myorange}
\renewcommand*{\VertexLightFillColor}{myorange}
\renewcommand*{\VertexTextColor}{white}
\Vertex[x=5,y=3.75]{u}

\renewcommand*{\VertexLineColor}{mygreen}
\renewcommand*{\VertexLightFillColor}{mygreen}
\node[VertexStyle](v1) at (1,1.5) {$v_1$};
\node[VertexStyle](v2) at (3,1.5) {$v_2$};
\node[VertexStyle](v3) at (5,1.5) {$v_3$};
\node[VertexStyle](vk) at (9,1.5) {$v_k$};
\node[draw=none] (vd) at (7,1.5) {\LARGE $\ldots$};

\SetUpEdge[style={->, >=triangle 60,black}]

\Edge[label=$\rho_1$](v1)(u)
\Edge[label=$\rho_2$](v2)(u)
\Edge[label=$\rho_3$](v3)(u)
\Edge[label=$\rho_k$](vk)(u)

\Edge(v1)(v2)
\Edge(v2)(v3)

\SetUpEdge[style={->, >=triangle 60,black}]
\Edge[style={->, >=triangle 60, bend right=20}](v1)(v3)
\Edge[style={->, >=triangle 60, bend right=25}](v1)(vk)
\Edge[style={->, >=triangle 60, bend right=20}](v2)(vk)
\Edge[style={->, >=triangle 60, bend right=15}](v3)(vk)

\end{tikzpicture}
\end{center}
\dualcaption{Graph structure of worst-case instance for \cref{alg:triangle}}{There
is one temporal edge along each static edge $(v_i, v_j)$.  Each 
static edge $(v_i, u)$ contains an equal number of temporal edges, and the number
of such temporal edges is strictly greater than 1.  The $\rho_i$
denote the ordering of edges in the proof of \cref{thm:fast_triangles}.}
\label{fig:fast_worstcase}
\end{figure}

We now show a worst-case example for the algorithm, which
is illustrated in \cref{fig:fast_worstcase}.
The graph consists of $k + 1$ nodes $u$, $v_1$, $\ldots$, $v_k$, and the
static edges are $(v_i, v_j)$, $1 \le i < j \le k$, and $(v_r, u)$, $1 \le r \le k$.
Each edge $(v_i, v_j)$ has a single temporal timestamp, and each edge $(v_i, u)$
has an equal number of timestamps.  Without loss of generality, the first
$k$ edges in the ordering of the static edges is $\rho_1 = (v_1, u)$, $\rho_2 = (v_2, u)$,
$\ldots$, $\rho_k = (v_k, u)$ (the remainder of the ordering is arbitrary).
Finally, suppose that we are counting the motif $M = (a, b), (a, c), (b, c)$.  

In this case, the call to the \cref{alg:fast_triangles} subroutine for edge $(v_r, u)$ ($1 \le r \le k$)
in \cref{alg:triangle} processes temporal edges along $(v_r, v_2)$, $\ldots$, $(v_r, v_k)$
and $(v_r, u)$, $\ldots$, $(v_k, u)$.  Each temporal edge along $(v_i, v_j)$ is processed
only once, and the temporal edges along $(v_r, u)$ are processed $r$ times.  Thus, the
total number of temporal edges processed is
\begin{align*}
{k \choose 2} + \sum_{r = 1}^{k}r\frac{m - {k \choose 2}}{k}
= O(k^2) + O(mk)
= O(\tau + m\sqrt{\tau}).
\end{align*}

In this case, \cref{alg:general} has the same asymptotic running time.  Each of
the $O(k^2)$ static triangles contains $O(m / k)$ temporal edges.

Finally, we emphasize that, if we ignore the pre-processing time in \cref{alg:fast_triangles},
the amount of computation in \cref{alg:fast_triangles}
is always less than the computation done by \cref{alg:general}
for counting triangles.  Thus, regardless of the asymptotics,
\cref{alg:fast_triangles} is still a strict improvement over the general algorithm.

\section{Experiments}
\label{sec:tm_experiments}

Next, we use our algorithms to reveal patterns in a variety of temporal network
datasets.  We find that the number of instances of various $\delta$-temporal
motifs reveal basic mechanisms of the networks.

\subsection{Data}\label{sec:tm_data}

\begin{table}[tb]
\centering
\dualcaption{Summary statistics of temporal network datasets}{The time span
is the time between latest and earliest timestamp available in the data.  All timestamps
are recorded at one second resolution.}
  \begin{tabular}{l c c c c}
    \toprule 
    dataset        & \# nodes & \# static edges & \# edges & time span (days) \\
    \midrule
    \emaileu       & 986      & 2.49K     & 332K     & 803       \\
    \phone         & 1.05M    & 2.74M     & 8.55M    & 7         \\
    \sms           & 44.1K    & 67.2K     & 545K     & 338       \\
    \messages      & 1.90K    & 20.3K     & 59.8K    & 193       \\
    \stackoverflow & 2.58M    & 34.9M     & 47.9M    & 2774      \\
    \bitcoin       & 24.6M    & 88.9M     & 123M     & 1811      \\
    \fbwall        & 45.8K    & 264K      & 856K     & 1560      \\
    \wikitalk      & 1.09M    & 3.13M     & 6.10M    & 2277      \\
    \phonecallme   & 18.7M    & 360M      & 2.04B    & 364       \\
    \smsme         & 6.94M    & 51.5M     & 800M     & 89        \\
    \bottomrule
    \end{tabular}\label{tab:datasets}
\end{table}

We gathered a variety of datasets in order to study the patterns of
$\delta$-temporal motifs in several domains.  The datasets are described below
and summary statistics are in \cref{tab:datasets}.  The time resolution of the
edges in all datasets is one second.

\begin{itemize}
\item $\emaileu$\footnote{This is the same network as $\emaileucore$
in \cref{sec:local_real}, but we are now using the temporal information.}
is a collection of emails between members of a European
research institution~\cite{leskovec2007graph}.  An edge $(u, v, t)$ signifies
that person $u$ sent person $v$ an email at time $t$.

\item $\phone$ was constructed from telephone call records for a major European
service provider.  An edge $(u, v, t)$ signifies that person $u$ called person
$v$ starting at time $t$.

\item $\sms$ is a collection of short messaging service (SMS) texting
records from a company charging account~\cite{wu2010evidence}.
In this dataset, an edge $(u, v, t)$ means that person $u$ sent
an SMS message to person $v$ at time $t$.

\item $\messages$ is comprised of private messages sent on an online social network
at the University of California, Irvine~\cite{panzarasa2009patterns}.  Users
could search the network for others and then initiate conversation based on
profile information.  An edge $(u, v, t)$ means that user $u$ sent a private
message to user $v$ at time $t$.

\item $\stackoverflow$ is constructed from communication on Stack Overflow.
On stack exchange web sites, users post questions and receive answers from other
users, and users may comment on both questions and answers.  We derive a
temporal network by creating an edge $(u, v, t)$ if, at time $t$, user $u$: (1)
posts an answer to user $v$'s question, (2) comments on user $v$'s question, or
(3) comments on user $v$'s answer.  We formed the temporal network from the
entirety of Stack Overflow's history up to March 6, 2016.

\item $\bitcoin$ consists of all payments made with the decentralized
digital currency and payment system Bitcoin up to October 19, 2014~\cite{kondor2014rich}.
Nodes in the network correspond to Bitcoin addresses, and an individual may have
several addresses.  An edge $(u, v, t)$ signifies that bitcoin was transferred
from address $u$ to address $v$ at time $t$.  

\item $\fbwall$ consists of wall posts between users on the social network
Facebook located in the New Orleans region~\cite{viswanath2009evolution}.  Any friend of
a given user can see all posts on that user's wall, so communication is public
among friends.  An edge $(u, v, t)$ means that user $u$ posted on user $v$'s
wall at time $t$.

\item $\wikitalk$ represents edits on user talk pages on
Wikipedia~\cite{leskovec2010governance}.  An edge $(u, v, t)$ means that
user $u$ edited user $v$'s talk page at time $t$.

\item $\phonecallme$ is constructed from phone call records of a large
telecommunications service provider in the Middle East.  An edge $(u, v, t)$ 
means that user $u$ initiated a call to user $v$ at time $t$.

\item $\smsme$ is constructed from SMS texting records from the same
telecommunications service provider in the Middle East.
An edge $(u, v, t)$ means that user $u$ sent an
SMS message to user $v$ at time $t$.
\end{itemize}

\subsection{Empirical observations of motif counts}\label{sec:tm_empirical}

We first examine the distribution of 2- and 3-node, 3-edge motif instance counts
from 8 of the datasets described in \cref{sec:tm_data} with $\delta = 1$ hour
(\cref{fig:raw_counts}).  We choose 1 hour for the time window as this is close
to the median time for a node to take part in three edges in most of our
datasets. We make a few empirical observations uniquely available due to
temporal motifs and provide possible explanations for these observations.

\begin{sidewaysfigure}[thb]
\centering \includegraphics[width=\columnwidth]{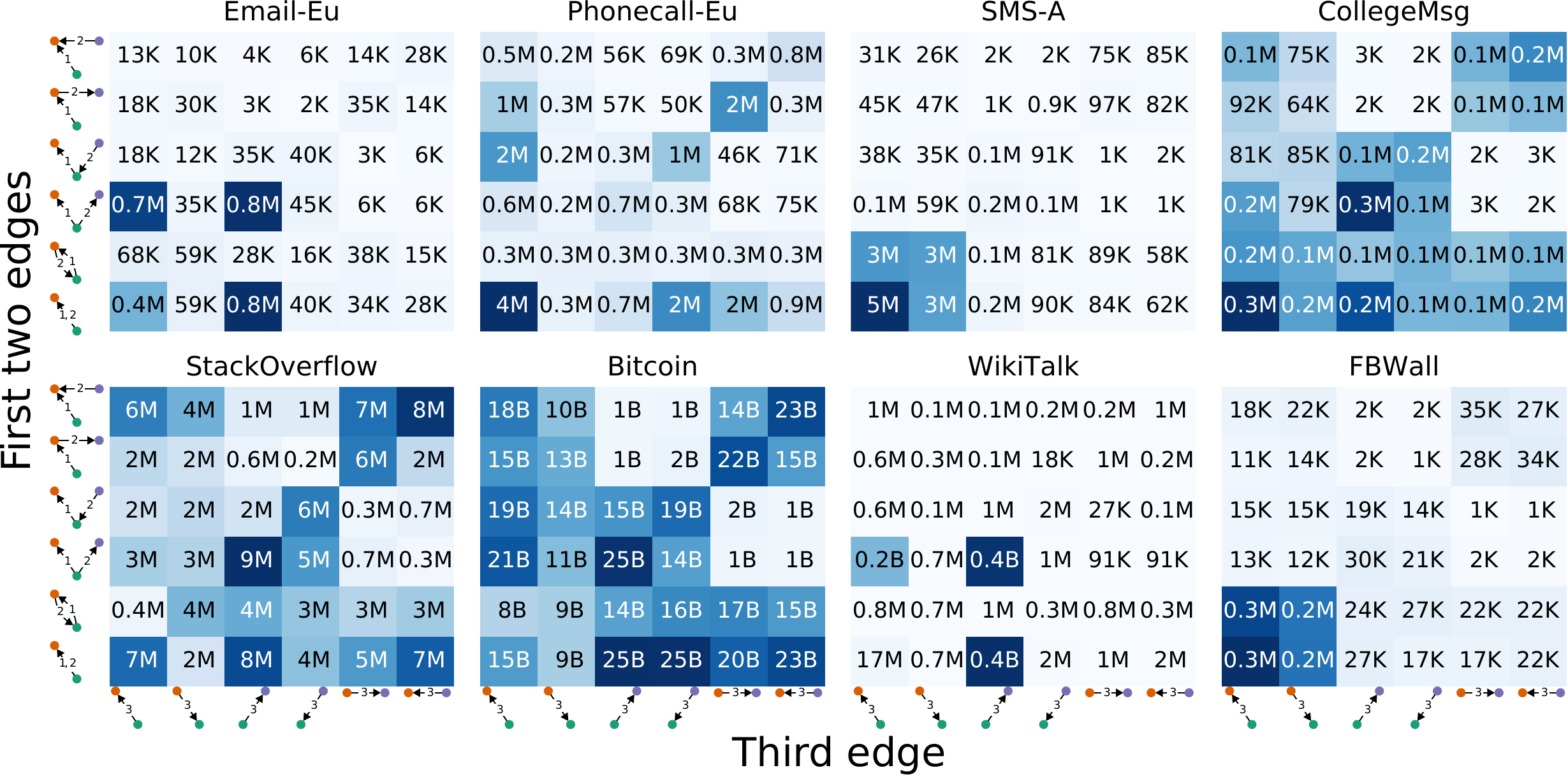}
\dualcaption{Counts of instances of all $2$- and $3$-node, $3$-edge
  $\delta$-temporal motifs with $\delta =$ 1 hour}{For each dataset, counts in
  the $i$th row and $j$th column is the number of instances of motif $M_{i, j}$
  (see \cref{fig:three_edge_motifs}); this motif is the union of the two edges
  in the row label and the edge in the column label.  For example, there are 0.7
  million instances of motif $M_{4,1}$ in the $\emaileu$ dataset. The color for
  the count of motif $M_{i,j}$ indicates the fraction over all $M_{i,j}$ on a
  linear scale---darker blue means a higher count.}
\label{fig:raw_counts}
\end{sidewaysfigure}

\clearpage

\begin{figure}[tb]
\centering
\includegraphics[width=0.9\columnwidth]{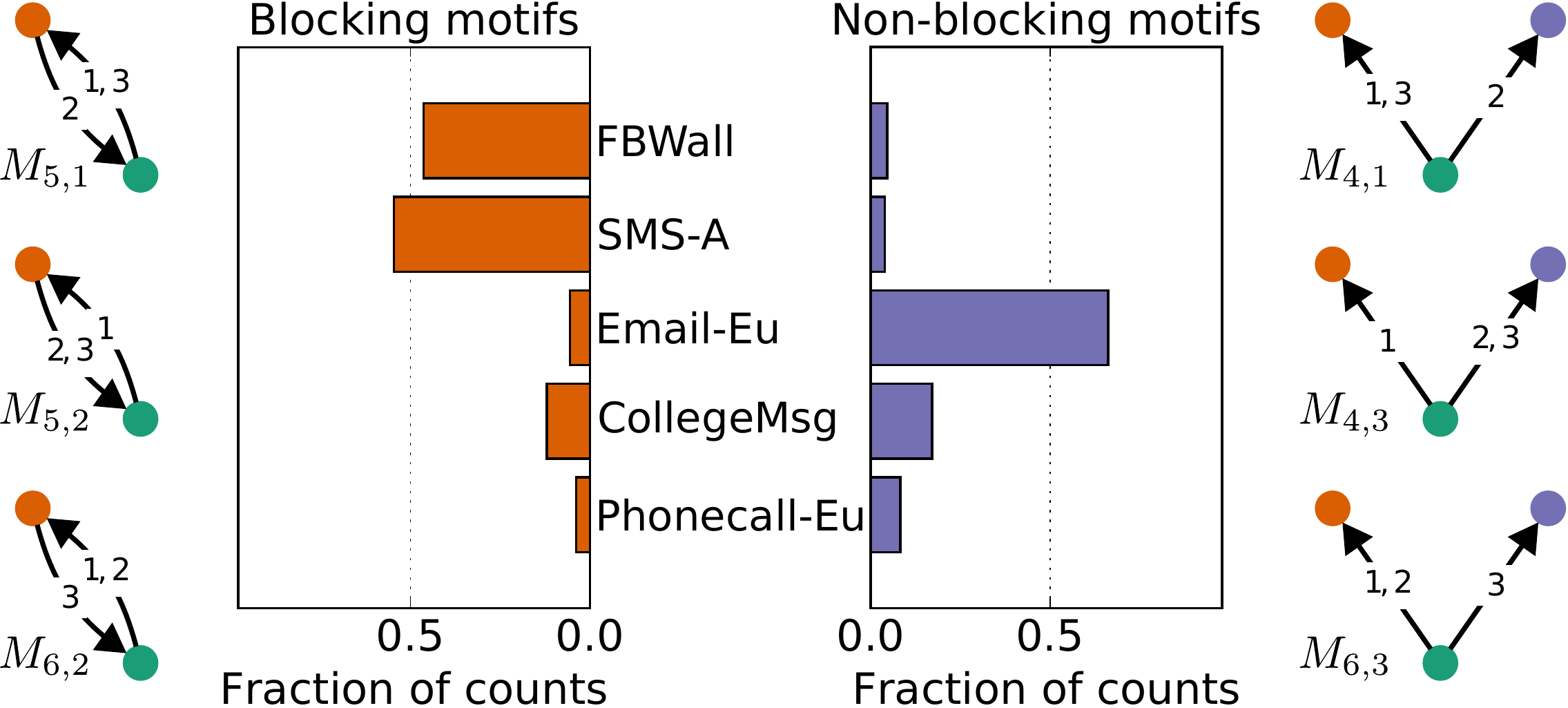}
\dualcaption{Blocking and non-blocking behavior}{%
Fraction of the $2$- and $3$-node, $3$-edge $\delta$-temporal motif instance
counts that correspond to two groups of motifs ($\delta =$ 1 hour).  Motifs on the left
capture ``blocking'' behavior, common in $\sms$ and $\fbwall$
posting, where most communication is one-to-one at this time scale.
Motifs on the right exhibit ``non-blocking'' behavior, common in
$\emaileu$, where broadcast patterns are common.
$\messages$ contains a mixture of the two behaviors at this time scale.
}
\label{fig:blocking}
\end{figure}

\xhdr{Blocking communication}
If an individual typically waits for a reply from one individual before
proceeding to communicate with another individual, we consider it a
\emph{blocking} form of communication. A typical conversation between two
individuals characterized by fast exchanges happening back and forth is blocking
as it requires complete attention of both individuals. We capture this
behavior in the ``blocking motifs'' $M_{5,1}$, $M_{5,2}$ and $M_{6,2}$, which
contain 3 edges between two nodes with at least one edge in either direction
(\cref{fig:blocking}, left).  However, if the reply doesn't arrive soon,
we might expect the individual to communicate with others without waiting for a
reply from the first individual. This is a non-blocking form of communication
and is captured by the ``non-blocking motifs'' $M_{4,1}$, $M_{4,3}$ and $M_{6,3}$
having edges originating from the same source but directed to different
destinations (\cref{fig:blocking}, right)

The fractions of counts corresponding to the blocking and non-blocking motifs
out of the counts for all 36 motifs in \cref{fig:three_edge_motifs}
uncover several interesting characteristics in communication networks ($\delta =
1$ hour; see \cref{fig:blocking}). In $\fbwall$ and $\sms$, blocking
communication is vastly more common, while in $\emaileu$ non-blocking
communication is prevalent.  Email is not a dynamic method of
communication and replies within an hour are rare.  Thus, we would expect
non-blocking behavior.  Interestingly, the $\messages$ dataset shows both
behaviors as we might expect individuals to engage in multiple conversations
simultaneously.  In complete contrast, the $\phone$ dataset shows neither
behavior.  A simple explanation is that that a single edge (a phone call)
captures an entire conversation and hence blocking behavior does not emerge.

\begin{figure}[tb]
\centering
\includegraphics[width=\columnwidth]{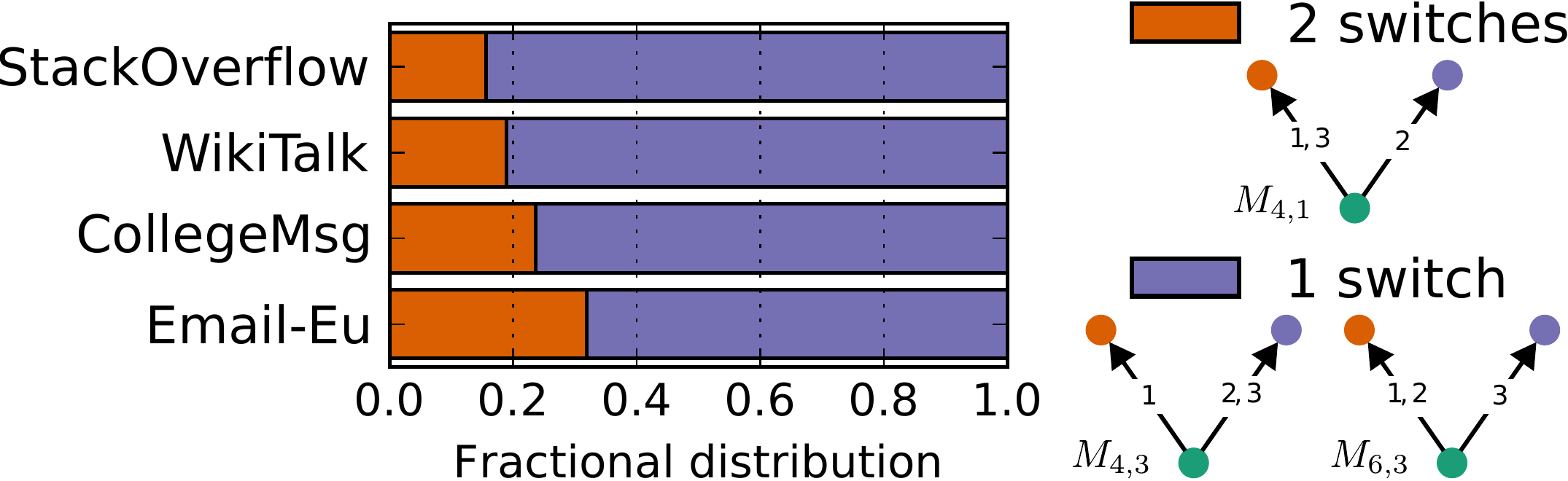}
\dualcaption{Switching behavior in non-blocking motifs}{%
A ``switch'' occurs with two temporal edges $(u, v, t_1)$ and $(u, w, t_2)$, $t_1 < t_2$,
where node $u$ switches an outgoing neighbor.  Switching is least common on
Stack Overflow and most common in email.  We also see differences in switching behavior
as we vary time scales in the $\messages$ dataset (see \cref{fig:IM_over_time}).
}
\label{fig:switching}
\end{figure}

\xhdr{Cost of switching}
Amongst the non-blocking motifs discussed above, $M_{4,1}$ captures two
consecutive ``switches'' between pairs of nodes whereas $M_{4,3}$ and $M_{6,3}$ each
have a single switch (\cref{fig:switching}, right). In communication networks, a switch
corresponds to a change in message destination for a node $u$.  Prevalence of
$M_{4,1}$ indicates a lower cost of switching targets, whereas prevalence of the
other two motifs are indicative of a higher cost.  We observe in
\cref{fig:switching} that the ratio of 2-switch to 1-switch motif counts
is the least in $\stackoverflow$, followed by $\wikitalk$, $\messages$ and then
$\emaileu$. On Stack Overflow and Wikipedia talk pages, there is a high
cost to switch targets because of peer engagement and depth of discussion.  On the other
hand, in the $\messages$ dataset there is less cost to switch because it
lacks depth of discussion within the time frame of $\delta = $ 1 hour.
In $\emaileu$, there is almost no peer engagement, and cost of switching is
negligible.

\begin{figure}[tb]
\centering
\includegraphics[width=\columnwidth]{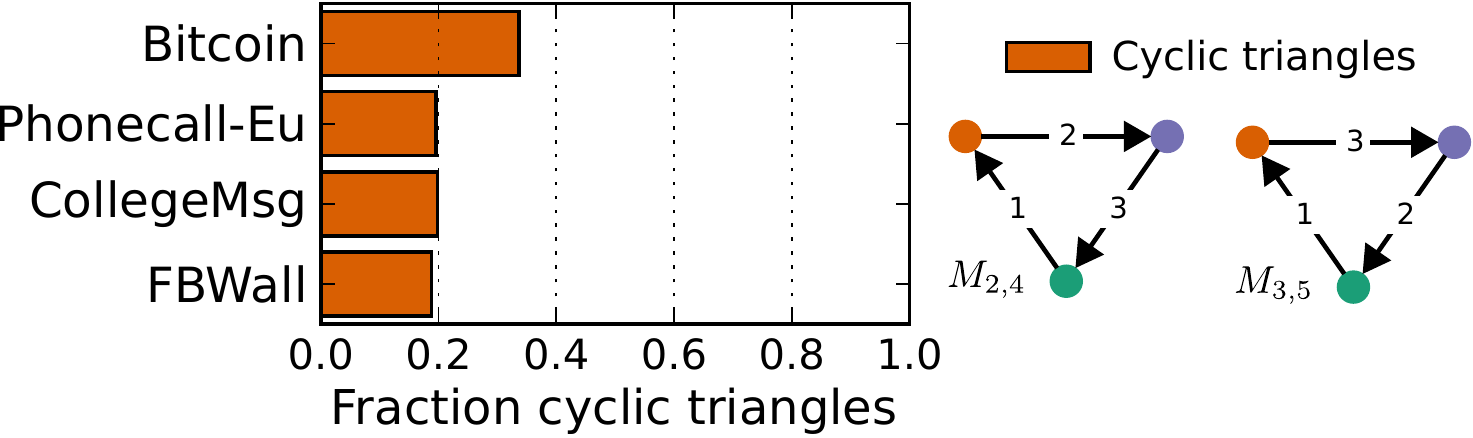}
\dualcaption{Cyclic temporal triangles}{%
Fraction of $3$-edge $\delta$-temporal triangle motif counts ($\delta =$ 1 hour)
corresponding to cyclic triangles (right) in the four datasets for which this
fraction is the largest.  $\bitcoin$ has a much higher fraction compared to all
other datasets.
}
\label{fig:cycles}
\end{figure}

\xhdr{Cycles in $\bitcoin$}
Of the eight $3$-edge triangle motifs, $M_{2,4}$ and $M_{3,5}$ are cyclic, i.e.,
the target of each edge serves as the source of another edge.\footnote{By ``cyclic''
we do not mean that the temporal edges must form a cycle.  Rather, the multigraph
$K$ in the formal motif definition is a cycle.}  We observe in
\cref{fig:cycles} that the fraction of triangles that are cyclic is much higher in
$\bitcoin$ compared to any other dataset.
This can be attributed to the transactional nature of $\bitcoin$ where
the total amount of bitcoin is limited. Since remittance (outgoing edges) is typically associated
with earnings (incoming edges), we should expect cyclic behavior.



\begin{figure}[tb]
\centering \includegraphics[width=1\columnwidth]{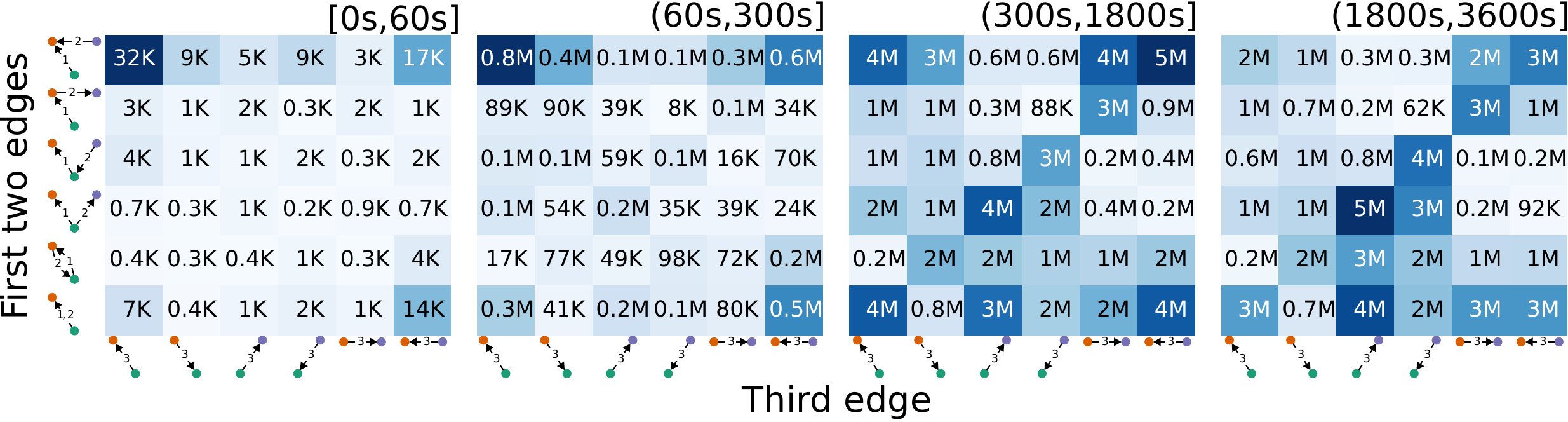}
\dualcaption{Counts for all $2$- and $3$-node, $3$-edge $\delta$-temporal motifs
  in four time intervals for the $\stackoverflow$ dataset}{%
For each interval, the count in the $i$th row and $j$th column is the number of
instances of motif $M_{i, j}$ (see \cref{fig:three_edge_motifs}).
}
\label{fig:so_counts_over_time}
\end{figure}

\xhdr{Motif counts at varying time scales}
We now explore how motif counts change at different time scales.  For the
$\stackoverflow$ dataset we counted the number of instances of $2$- and
$3$-node, $3$-edge $\delta$-temporal motifs for $\delta = $ 60, 300, 1800, and
3600 seconds (\cref{fig:so_counts_over_time}).  These counts determine the
number of motifs that completed in the intervals
[0, 60], (60, 300], (300, 1800s], and (1800, 3600] seconds
(e.g., subtracting 60 second counts from 300 second counts gives the interval (60, 300]).
Observations at smaller timescales reveal phenomenon which start to get eclipsed
at larger timescales.  For instance, on short time scales, motif $M_{1,1}$
(\cref{fig:so_counts_over_time}, top-left corner) is quite common.  We
suspect this arises from multiple, quick comments on the original question, so
the original poster receives many incoming edges.  At larger time scales, this
behavior is still frequent but relatively less so.  Now let us compare counts
for $M_{1,5}$, $M_{1,6}$, $M_{2,5}$, $M_{2,6}$ (the four in the top right
corner) with counts for $M_{3,3}$, $M_{3,4}$, $M_{4,3}$, $M_{4,4}$ (the four in
the center). The former counts likely correspond to conversations with the
original poster while the latter are constructed by the same user interacting with
multiple questions.  Between 300 and 1800 seconds (5 to 30 minutes),
the former counts are relatively more common while the latter counts only become
more common after 1800 seconds. A possible explanation is that the typical
length of discussions on a post is about 30 minutes, and later on, users answer
other questions.

\definecolor{msgblue}{RGB}{0, 114, 178}
\definecolor{msgorange}{RGB}{213,94,0}
\begin{figure}[tb]
\centering
\includegraphics[width=0.75\columnwidth]{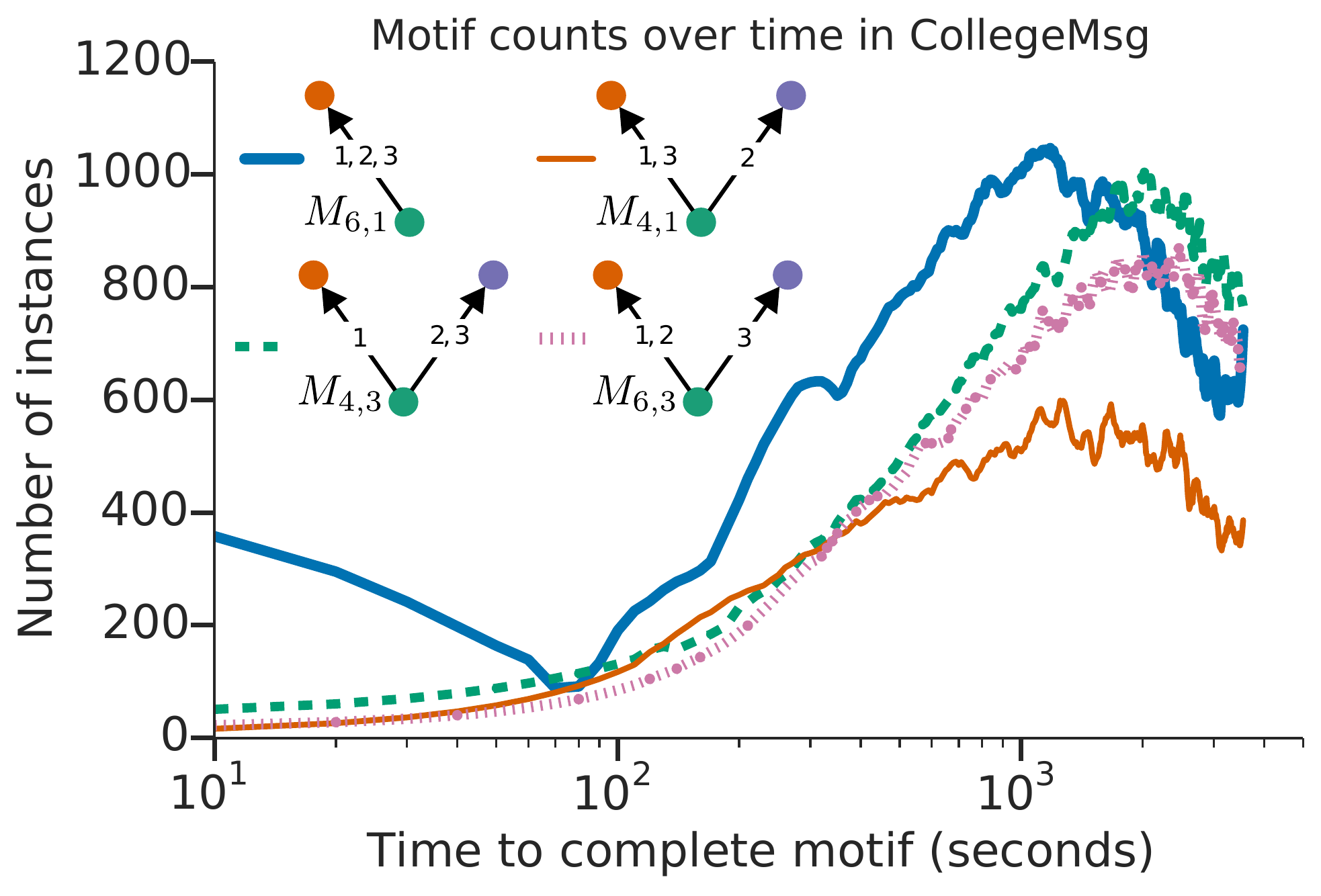}
\dualcaption{Motif counts over various time scales in the $\messages$ dataset}{The four motifs correspond to all motifs where a node sends $3$ outgoing
messages to $1$ or $2$ neighbors.  At early time scales, one-to-one
communication is the most common ($M_{1,1}$, \textcolor{msgblue}{blue} curve).
At later time scales, the motif with two switches
($M_{4,1}$, \textcolor{msgorange}{orange} curve) is less common than the motifs
with one or two switches.}
\label{fig:IM_over_time}
\end{figure}

Next, we examine messaging behavior in the $\messages$ dataset at fine-grained
time intervals.  We counted the number of motifs consisting of a single node
sending three outgoing messages to one or two neighbors (motifs $M_{6,1}$,
$M_{6,3}$, $M_{4,1}$, and $M_{4,3}$) in the time bins $[10(i-1), 10i)$ seconds,
$i = 1, \ldots, 500$ (\cref{fig:IM_over_time}).  We first notice that at
small time scales, the motif consisting of three edges to a single neighbor
($M_{6,1}$) occurs frequently.  This pattern could emerge from a succession of
quick introductory messages.  Overall, motif counts increase from roughly 1
minute to 20 minutes and then decline.  After 5 minutes, counts
for the three motifs with one switch in the target ($M_{6,1}$, $M_{6,3}$, and
$M_{4,3}$) grow at a faster rate than the counts for the motif with two switches
($M_{4,1}$).  As mentioned above, this pattern could emerge from a tendency to
send several messages in one conversation before switching to a conversation
with another friend.

\subsection{Algorithm scalability}\label{sec:tm_scalability}
  
Finally, we performed scalability experiments of our algorithms.  All algorithms
were implemented in C++, and all experiments ran using a single thread of a
2.4GHz Intel Xeon E7-4870 processor.  We did not measure the time to load
datasets into memory, but our timings include all pre-processing time needed by
the algorithms (e.g., the triangle counting algorithms first find triangles in
the static graph).  We emphasize that our implementation is single threaded, and
there is ample room for parallelism in our algorithms.


\begin{table}[tb]
\centering
\dualcaption{Time to count the eight 3-edge $\delta$-temporal triangle motifs}
{%
We compare the time of our general counting method (\cref{alg:general}) and our
fast counting method (\cref{alg:triangle}).  The fast algorithm provides
significant speedups for all datasets.  In these experiments, $\delta = 3600$,
but this does not signifcantly affect the running times.  All times are listed
in seconds.
}
\begin{tabular}{l c c c c}
    \toprule 
    dataset        & \# static triangles & \cref{alg:general} & \cref{alg:triangle} & speedup \\
    \midrule
    \wikitalk      & 8.11M               & 51.1               & 26.6                & 1.92x   \\
    \bitcoin       & 73.1M               & 27.3K              & 483                 & 56.5x   \\
    \smsme         & 78.0M               & 2.54K              & 1.11K               & 2.28x   \\
    \stackoverflow & 114M                & 783                & 606                 & 1.29x   \\
    \phonecallme   & 667M                & 12.2K              & 8.59K               & 1.42x   \\
    \bottomrule
    \end{tabular}\label{tab:scalability}
\end{table}

First, we used both the general counting method (\cref{alg:general}) and the
fast counting method (\cref{alg:triangle}) to count the number of all eight
3-edge $\delta$-temporal triangle motifs in our datasets ($\delta =$ 1 hour).
\Cref{tab:scalability} reports the running times of the algorithms for all
datasets with at least one million triangles in the static graph. For all of
these datasets, our fast temporal triangle counting algorithm provides
significant performance gains over the general counting method, ranging between
a 1.29x and a 56.5x speedup.  The gains of the fast algorithm are the largest
for $\bitcoin$, which is due to some pairs of nodes having many edges between
them and also participating in many triangles.


\begin{figure}[t]
\centering
\includegraphics[width=0.75\columnwidth]{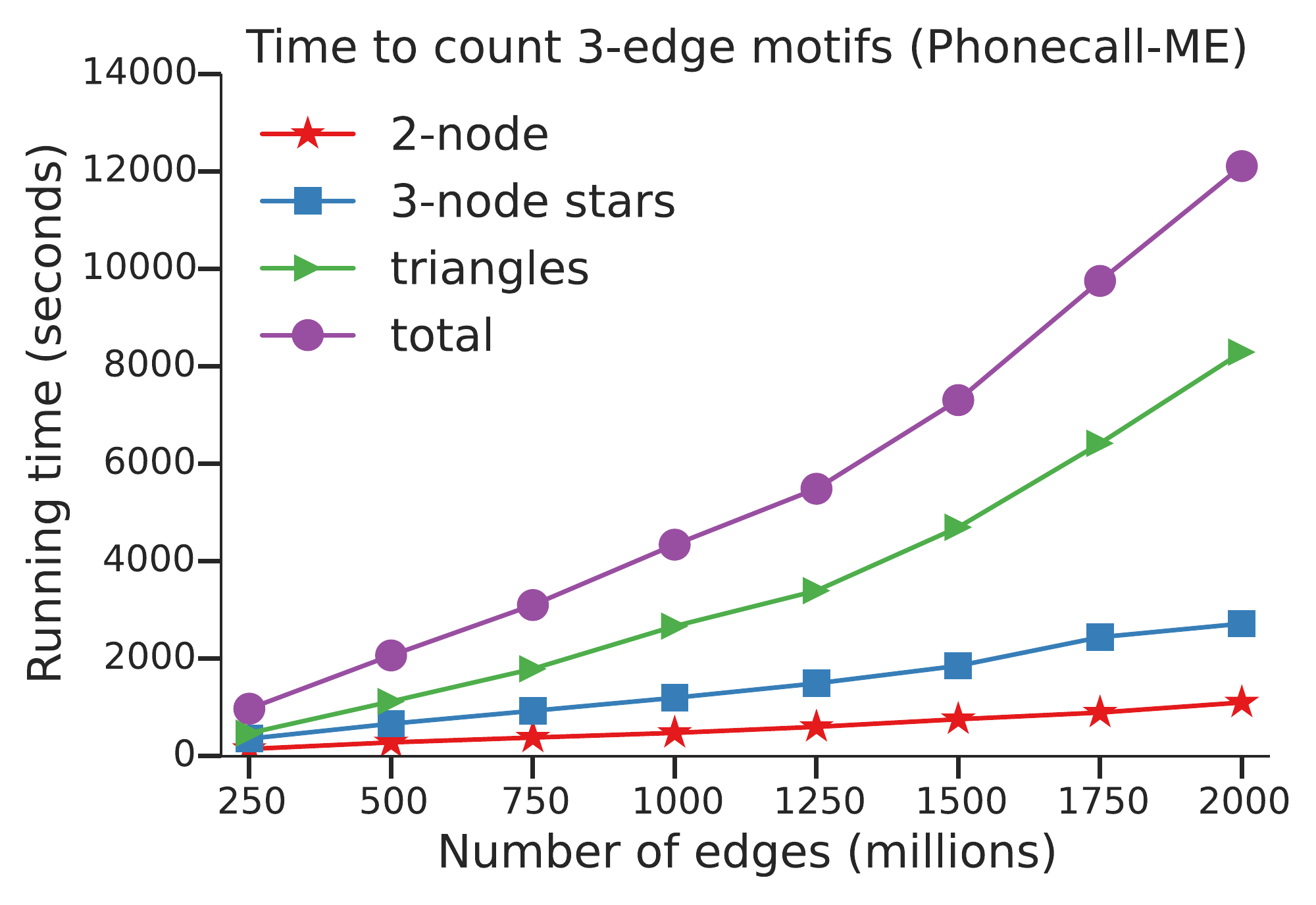}
\dualcaption{Scalability of temporal motif counting in}{The plot
reports the time to count $3$-edge motifs on the first $k$ temporal million edges
in $\phonecallme$ as a function of $k$.  The time to count star and edge patterns
scales linearly with the number of edges in correspondence with our computational
complexity analysis.}
\label{fig:scalability}
\end{figure}

Second, we measured the time to count various $3$-edge $\delta$-temporal motifs
in our largest dataset, $\phonecallme$.  Specifically, we measured the time to
compute (1) $2$-node motifs, (2) $3$-node stars, and (3) triangles on the first
$k$ million edges in the dataset for $k = 250, 500, \ldots, 2000$
(\cref{fig:scalability}).  The time to compute the $2$-node, $3$-edge motifs and
the $3$-node, $3$-edge stars scales linearly, as expected from our complexity
analysis.  The time to count triangle motifs grows superlinearly and becomes the
dominant cost when there is a large number of edges.  For practical purposes,
the running times are quite modest.  With two billion edges, our methods take
less than 3.5 hours to complete (executing sequentially).

\section{Prior definitions of temporal motifs and other related work}
\sectionmark{Prior definitions and related work}
\label{sec:tm_related}

There are several studies on pattern formation in growing networks where one
only considers the addition of edges to a static graph over time.  In this
context, motif-like patterns have been used to create evolution rules that
govern the ways that networks
develop~\cite{berlingerio2009mining,ugander2013subgraph}.  The way we consider
ordering of temporal edges in our definition of $\delta$-temporal motifs is
similar in spirit.  There are also several analyses on the formation of
triangles in a variety of social
networks~\cite{huang2014mining,kossinets2006empirical,leskovec2010signed,zignani2014link}.
In contrast, in the temporal graphs we study here, three nodes may form a
triangle several times.  On the algorithmic side, a large amount of research has
been devoted simply to exact counting and enumeration of triangles in undirected
static graphs (see, e.g., \cite{latapy2008main}) and to approximating the total
number of triangles in static
graphs~\cite{ahmed2015efficient,seshadhri2014wedge,de2016triest}.

There are some definitions of temporal motifs for sequences of snapshot
graphs~\cite{zhang2014dynamic,lahiri2007structure,jin2007trend}, but we
focus our discussion here on alternative definitions that target temporal graphs in
the way that we defined them in \cref{def:temporal_graph}.  To understand prior
definitions of temporal motifs, we need one more definition.  We say that two
temporal edges $(u, v, t_1)$ and $(w, x, t_2)$ are \emph{$\omega$-adjacent} if
$\lvert t_2 - t_1 \rvert \le \omega$ and $(u, v)$ and $(w, x)$ share at least one
node (i.e., ${\lvert \{u, v, w, x\} \rvert < 4}$).  With this connectivity, we
can partition the temporal edges of a temporal graph into $\omega$-connected
components where every edge in a component is able to reach every other by
following $\omega$-adjacent connections.

\Citet{zhao2010communication} induce static graphs from $\omega$-connected components
and then use the standard static graph motifs that we used in
in \cref{ch:honc}.  This definition is used to analyze the structure of communication data.
\Citet{gurukar2015commit,hulovatyy2015exploring}
both propose additional equivalent definitions of temporal motifs, and the
definition is similar to ours.  The motifs consist of ordered directed
multigraphs and a time span $\omega$.  An instance of the motif is a subset of events
that match the ordered multigraph, where every sequential edge in the ordering
is $\omega$-connected.  While the definition is similar, \citet{gurukar2015commit} do
not have theoretical guarantees on running time
and \citet{hulovatyy2015exploring} have running times that are as expensive as
temporal motif enumeration.

\Citet{kovanen2011temporal} use the same notion of $\omega$-adjacency of events
with an ordered multigraph but further restrict motif instances
to cases where the events are consecutive for the nodes.
In other words, in the time spanned
by the motif instance, there can be no other temporal edge adjacent to one of
the nodes.  While this is a severe
restriction, a major benefit of this definition is that it permits fast count algorithms.
For example, triangle motifs may be enumerated in linear time.  This motif
definition has been used to analyze phone call
records~\cite{kovanen2013temporal} and Wikipedia
edits~\cite{jurgens2012temporal}.  While \citet{kovanen2011temporal}
present their work for directed temporal graphs, the ideas have been extended to
undirected temporal networks and applied to the analysis of proximity and
contact networks~\cite{zhang2015human}.

Our definition of temporal motifs is a bit simpler by throwing out any requirements
for $\omega$-adjacency of edges.  The time duration is modeled at the level of the motif, rather
than at the level of individual edges.  This keeps with the spirit of
higher-order modeling and analysis.  

We also note that \citet{viard2016computing} developed algorithms
for counting what they call maximal $\Delta$-cliques.  Their definition requires that
all pairs of nodes in a clique interact at least once in sub-intervals of size $\Delta$.

\section{Discussion}
\label{sec:tm_discussion}

We have developed $\delta$-temporal network motifs as a tool for analyzing
temporal networks.  We introduced a general framework for counting instances of
any temporal motif as well as faster algorithms for certain classes of motifs
and found that motif counts reveal key structural patterns in a variety of
temporal network datasets.  Our fast algorithms are designed for $3$-node,
$3$-edge star and triangle motifs.  We expect that the same general techniques
can be used to count more complex temporal motifs.  There is also a host of
theoretical questions in this area for lower bounds on temporal motif counting.

Our fast algorithms only \emph{count} the number of instances of motifs rather
than \emph{enumerate} the instances, and this was crucial to the design of the
algorithms.  Temporal motif enumeration algorithms provide an additional
algorithmic design challenge.  The concept of avoiding enumeration has 
also been used to accelerate static motif counting~\cite{pinar2017escape}.  A general disadvantage of
counting algorithms is that such statistics cannot immediately be used to equip
nodes or edges with features, on which subsequent analysis can be performed.
However, there are still many features that we can use from the fast algorithms
developed here.  When using the general algorithm, we can still count how often
a node $u$ participates in a given position in a given motif.  For example, we
can easily count in how many motif instances of motif $M_{5,1}$, $M_{5,2}$,
$M_{6,1}$, and $M_{6,2}$ node $u$ serves as the source node (the green node
in \cref{fig:three_edge_motifs}).  With our fast $3$-node, $3$-edge star
counting algorithms, we can compute in how many motif instances a given node $u$
serves as the center for all 24 of these motifs.  Our fast $3$-edge triangle
counting algorithm are not amenable to per-node statistics.  However, one could
use the general algorithm for this purpose.

Although we did not run parallel performance experiments, there is ample room
for parallelism in these algorithms:
\begin{itemize}
\item The general algorithm parallelizes over all instances of the induced static motif.
\item The fast star counting algorithm parallelizes over all center nodes $u$.
\item The fast triangle counting algorithm parallelizes over all edges $(u, v)$
that are assigned at least one triangle.
\end{itemize}

Motif counts can also be measured with respect to a null
model~\cite{kovanen2011temporal,milo2002network} and such analysis may yield
additional discoveries.  Importantly, our algorithms will speed up such
computations, which use motif counts from many random instances of a generative
null model.

Finally, our definition of a temporal network motif and the associated counting
counting algorithms only rely on an \emph{ordering} on the edges in the temporal
graph and some way of measuring the \emph{distance} between edges in the
ordering.  While temporal network data is natural, our data does not necessarily
have to come from time.  We can count motifs for any collection of edges when
each edge is equipped with a (unique) element from an ordered field.  This
could lead to new applications in future work.

\setlength{\bibsep}{3pt}
\bibliographystyle{bib}
{\footnotesize \bibliography{refs}}

                  
\label{LastPage}
\end{document}